\documentclass[final,authoryear,12pt]{elsarticle}%
\usepackage[utf8]{inputenc}
\usepackage{natbib}
\usepackage{booktabs}
\usepackage{amsmath,amsfonts,amssymb,amsthm,enumitem}
\usepackage{graphicx}
\usepackage{rotating}
\usepackage{pdflscape}
\usepackage{caption}
\usepackage{subcaption}
\usepackage{multirow}
\usepackage{amsthm}
\usepackage{pgf}
\usepackage{setspace}
\usepackage[margin=1in]{geometry}
\usepackage{hyperref}
\usepackage{bbm}
\usepackage{float}
\usepackage{comment}
\usepackage{float}
\usepackage{tikz,calc}
\usepackage{accents}
\usepackage{amsmath}
\usepackage{amsfonts}
\usepackage{xcolor}
\usepackage{soul}
\usepackage{amssymb}
\usepackage{pgfplots}%
\usepackage{algorithm}
\usepackage{algpseudocode}
\providecommand{\U}[1]{\protect\rule{.1in}{.1in}}
\pgfplotsset{compat=1.18}
\makeatletter
\def\ps@pprintTitle{ \let\@oddhead\@empty
 \let\@evenhead\@empty
 \def\@oddfoot{} \let\@evenfoot\@oddfoot}
\makeatother
\usetikzlibrary{chains}
\usetikzlibrary{positioning}
\newtheorem{assumption}{Assumption}
\newtheorem{theorem}{Theorem}[section]

\newtheorem{lemma}{Lemma}[section]

\newtheorem{proposition}{Proposition}[section]
\newtheorem{remark}{Remark}[section]

\newcommand{\E}{\operatorname{E}}
\providecommand{\Op}{O_p}
\providecommand{\op}{o_p}

\begin{document}

\begin{frontmatter}
\title{Estimation and Inference for Latent Dual Networks Using High-Dimensional IV Screening\tnoteref{acknowledgements}}
\tnotetext[acknowledgements]{Acknowledgements: We would like to thank \'Aureo De Paula, Frank Kleibergen, Hashem Pesaran, Martin Weidner and seminar participants and attendees at the Tinbergen Institute, UC Davis, CFE 2025, the 2025 Conference on Sustainable Finance, EcoSta 2025, the 2025 Bari Panel and High Dimensional Data Conference, and the 2026 EFiC Conference for helpful comments and suggestions. Any remaining errors are our own.}
\author[1]{Art\={u}ras Juodis}
\ead{a.juodis@uva.nl}
\author[2]{George Kapetanios\corref{cor1}}
\ead{george.kapetanios@kcl.ac.uk}
\author[3]{Vasilis Sarafidis\corref{cor1}}
\ead{vasilis.sarafidis@brunel.ac.uk}
\cortext[cor1]{Corresponding authors.}
\affiliation[1]{University of Amsterdam}
\affiliation[2]{King's College London}
\affiliation[3]{Brunel University London}
\begin{abstract}
\singlespacing
\noindent
We develop a novel methodology for estimation and inference in high-dimensional panel network models with latent dual structures. The framework allows outcomes to be affected simultaneously by positive and negative interaction channels, accommodating settings in which some interactions reinforce outcomes while others generate competition and displacement effects.  The proposed method identifies and estimates the network directly from the structural model using observed data without the need to pre-specify the network. Network recovery is achieved through a sequential instrumental-variable screening procedure. We establish exact support recovery and oracle-equivalent post-selection inference. An application to U.S. corporate leverage data reveals the coexistence of reinforcing and displacement interactions in firms' financial decisions.
\end{abstract}
\begin{keyword}
\onehalfspacing
Dual latent networks, panel data, high-dimensional instrumental variables, multiple testing, post-selection inference.\\
\textit{JEL classification:} C23; C31; C33; C36; D85.
\end{keyword}
\end{frontmatter}
\clearpage
\section{Introduction}


Networks and strategic interactions play an important role in shaping economic outcomes across a wide range of settings, including innovation, finance, trade, and regional development. A fundamental feature of many economic environments is that interactions may reflect multiple and potentially opposing forces. The same network may simultaneously generate reinforcing effects through imitation, learning, and information transmission, as well as displacement effects arising from competition, market-share reallocation, and strategic differentiation. Consequently, spillovers may reflect qualitatively different forms of dependence. 

Motivated by this observation, we consider an econometric framework in which interactions are generated by two \emph{latent} networks, one capturing the reinforcing effects and the other capturing the displacement effects
\begin{equation}
y_{i,t}
=
\eta_i
+
\boldsymbol{\beta}_i'\boldsymbol{x}_{i,t}
+
\rho_1 \sum_{j\neq i} w_{1,i,j} y_{j,t}
+
\rho_2 \sum_{j\neq i} w_{2,i,j} y_{j,t}
+
\varepsilon_{i,t},
\quad i=1,\ldots,N,\quad t=1,\ldots,T.
\label{eq:dgp1}
\end{equation}
Here, $y_{i,t}$ is the outcome of unit $i$, $\boldsymbol{x}_{i,t}$ is a vector of observed covariates, $\eta_i$ is a unit-specific effect, and $\varepsilon_{i,t}$ is an idiosyncratic disturbance. A key departure from conventional network and spatial models is that Eq. \eqref{eq:dgp1} allows interactions to arise through two latent channels. These channels are represented by non-negative interaction/weighting matrices $\boldsymbol{W}_1=(w_{1,i,j})$ and $\boldsymbol{W}_2=(w_{2,i,j})$, with associated intensities $\rho_1$ and $\rho_2$. Together, they give rise to a dual-network framework.

Given that the two networks are latent, for the sake of identification and interpretation we assume $\rho_1>0$ and $\rho_2<0$,\footnote{Or, the observationally equivalent setting $\rho_1<0$ and $\rho_2>0$.} so that the two channels operate in opposite directions. The first channel captures positive spillovers, strategic complementarities, or clustering effects, whereas the second captures strategic substitution, competition, or spatial dispersion effects. 
Every nonzero bilateral interaction is assumed to belong to at most one of the reinforcing and displacement networks; a dyad may belong to neither. This restriction has a natural economic interpretation and is important for identification.\footnote{Without this restriction, the data would generally identify only the combined effect $\rho_1 w_{1,i,j}+\rho_2 w_{2,i,j}$ rather than the separate contributions of the two interaction channels.}


Models based on a single interaction network arise as special cases of Eq. \eqref{eq:dgp1}. In particular, when one interaction component is absent, or when $\rho_1$ and $\rho_2$ have the same sign, the model reduces to a conventional interaction structure in which all links generate effects of the same qualitative type. Such models are related to the literature on social interactions and the reflection problem, beginning with \citet{Manski1993}, to identification through network structure in \citet{BramoulleEtAl2009}, and to micro-founded linear social interaction models such as \citet{BlumeEtAl2015} and \citet{LewbelQuTang2023}. They are also closely connected to spatial econometric models, see e.g. \citet{Anselin1988}, \citet{Elhorst2003}, \citet{ArbiaBaltagi2009}, \citet{Baltagi2021Spatial}, and \citet{Elhorst2014Book}.

The majority of the existing literature assumes that the interaction matrix is known a priori and focuses on estimation and inference for the structural parameters conditional on the network structure, see e.g. \citet{KelejianPrucha2010,BaltagiEtAl2013,ShiLee2017,Yang2018,CuiEtAl2023,Elhorst2024TwoLaws}.\footnote{Related contributions include \citet{HuangEtAl2020}, who develop a two-mode network autoregressive model that allows interactions across different classes of nodes through observed network matrices; and \citet{ChenEtAl2025}, who allow for heterogeneous interaction effects while maintaining a known interaction structure.} More recently, a growing literature has considered estimating sparse network links directly from the data; see, among others, \citet{AhrensBhattacharjee2015}, \citet{Manresa2016}, \citet{Rose2017}, \citet{LamSouza2020JBES}, \citet{dePaula2020}, \citet{dePaulaRasulSouza2024}, and \citet{KrisztinPiribauer2023}. This literature builds upon available results on high-dimensional sparse estimation, where techniques such as the Lasso, the adaptive Lasso, and the elastic net regularisation are used to recover sparse dependence structures in large systems. These methods typically impose a common sign structure on network effects, effectively treating all links as manifestations of a single interaction mechanism.\footnote{Although penalised network-recovery procedures may permit negative first-stage interaction estimates, these are typically treated within a single-network framework rather than as a distinct interaction mechanism. For example, in the adaptive elastic-net procedure of \citet{dePaulaRasulSouza2024}, the first-stage step imposes only $|w_{ij}|\le 1$, while the adaptive step sets $\widetilde w_{i,j}=0.05$ whenever $\widetilde w_{i,j}<0.05$ when constructing the adaptive penalty weights. Thus negative first-stage estimates are treated as small positive values for penalisation purposes, rather than as evidence of a separate displacement network.} Identification is then achieved through the reduced-form representation of the model. 

\emph{The present paper} is related to this literature, but differs in two important respects.
First, we allow for two latent interaction channels with opposite signs, while treating both \(\boldsymbol W_1\) and \(\boldsymbol W_2\) as unknown. Therefore, the first object of estimation is the composite structural interaction matrix
\[
\boldsymbol A
=
\rho_1\boldsymbol W_1+\rho_2\boldsymbol W_2
=
(\alpha_{i,j}),
\qquad
\alpha_{i,j}
=
\rho_1w_{1,i,j}+\rho_2w_{2,i,j}.
\]

\noindent
Second, the proposed approach differs from existing data-driven network-recovery methods in the way $\boldsymbol A$ is recovered. Reduced-form approaches identify the network
by first recovering the equilibrium mapping from covariates or shocks to outcomes and then using this mapping to infer the underlying interaction structure. Such approaches are
necessarily global: the reduced form is an all-or-nothing object that depends on the joint behaviour of the entire system. As a result, conditions ensuring the existence, uniqueness,
and stability of the global reduced-form representation play a central role. By contrast, the estimation approach developed here is constructive and equation-specific. For each equation $i$, it asks which candidate outcomes $y_{j,t}$ enter the structural equation for $y_{i,t}$; that is, which coefficients $\alpha_{i,j}$ are nonzero. Estimation is therefore based on local incremental contributions within each equation, using instrumental-variable screening statistics and multiple-testing control.


A central feature of this row-wise recovery strategy is that the nonzero entries of
$\boldsymbol A$ may be positive or negative. The procedure does not impose a common
sign restriction on the interaction matrix. This allows the recovered structural interaction matrix to contain both reinforcing and displacement forces.  To this end, we develop a step-wise instrumental-variables (IV) screening procedure with multiple-testing control for recovering sparse latent interaction structures in panel
network models. The procedure builds on the least-squares boosting framework of
\citet{Kapetanios2026BMT}, but adapts it to structural network recovery with endogenous
contemporaneous outcomes. For each equation, every other unit is initially treated as a candidate link. The candidates
are evaluated using IV screening statistics. At each step, the procedure selects at most one
link: the candidate with the largest statistic, provided that it exceeds the multiple-testing
threshold. The screening statistics are then recomputed conditional on the links selected in previous steps. The algorithm stops when no remaining candidate exceeds the threshold. We refer to the resulting procedure as  \textbf{Boosting One-Link-at-a-Time with Multiple Testing} (BOLMT).

After $\boldsymbol A$ has
been estimated, positive and negative entries are assigned to distinct latent channels.
Together with a disjoint-support restriction and row normalisations, this yields a
decomposition of the recovered composite matrix into reinforcing and displacement
networks. Hence, the dual-network structure is not imposed through a pre-specified
interaction matrix; it is obtained as a structured decomposition of the estimated
structural interaction matrix.

 The structural network recovery problem we consider raises several challenges that do not arise in standard model selection problems, e.g. as in \citet{Kapetanios2026BMT}. First, the candidate regressors are endogenous, such that the algorithm must rely on fitted values obtained from the corresponding first-stage projections given some instruments. Second, network recovery requires solving \(N\) high-dimensional selection problems, one for each equation of the system. Finally, while in standard high-dimensional regression, the inclusion of a small number of irrelevant variables need not be particularly harmful (provided that the selected model delivers a good approximation to the underlying signal), in the present context the support itself has a structural interpretation. Consequently, distinguishing direct links from indirect or spurious links that arise through common sources of dependence becomes important for recovery of the underlying interaction structure and its economic interpretation. In this respect, step-wise conditioning is particularly valuable in the present framework because the explanatory power of proxy links is expected to diminish once the relevant direct links have been selected.


To the best of our knowledge, this is the first framework to recover latent reinforcing and displacement interaction networks and to establish exact support recovery together with post-selection IV inference in a high-dimensional panel setting.

The remainder of the paper is organised as follows. Section~\ref{sec:BOLMT} introduces the dual-network framework, presents the BOLMT procedure, and discusses post-selection estimation and network decomposition. 
Section~\ref{sec:theory} establishes exact network recovery, recovery of the dual-network structure, and oracle-equivalent post-selection inference. Section~\ref{sec:discussion} discusses extensions, including block screening statistics and latent factor structures, as well as implementation issues. Section~\ref{sec:mc} reports Monte Carlo evidence and  Section~\ref{sec:empirical} provides an empirical illustration based on corporate financial policies. A final section concludes.

\section{Model and estimation}\label{sec:BOLMT}

\subsection{Network representation}

We rewrite the dual-network model in Eq.~\eqref{eq:dgp1} in the equivalent form
\begin{equation}
y_{i,t}
=
\eta_i
+
\boldsymbol{\beta}_i'\boldsymbol{x}_{i,t}
+
\sum_{j\neq i}\alpha_{i,j}y_{j,t}
+
\varepsilon_{i,t},
\qquad
i=1,\ldots,N,\quad t=1,\ldots,T,
\label{eq:structural_model}
\end{equation}
where we define
\[
\alpha_{i,j}
:=
\rho_1 w_{1,i,j}
+
\rho_2 w_{2,i,j}.
\]
The matrix of the corresponding interaction coefficients, $\boldsymbol{A}=(\alpha_{i,j})$, may be interpreted as a directed weighted network. There is an edge $j\rightarrow i$ whenever $\alpha_{i,j}\neq0$, since the outcome of unit $j$ enters the equation for unit $i$. The network need not be symmetric: the presence of $j\rightarrow i$ does not imply the presence of $i\rightarrow j$.

The representation in Eq.~\eqref{eq:structural_model} highlights two important features
of the recovery problem with latent $\alpha_{i,j}$. First, the interaction intensities and
the corresponding network weights are not separately identified from the composite
coefficients $\alpha_{i,j}$. The decomposition into reinforcing
$(\rho_1,\boldsymbol W_1)$ and displacement $(\rho_2,\boldsymbol W_2)$ networks is
therefore a second-stage operation performed only after recovery and estimation of the
structural interaction matrix $\boldsymbol A=(\alpha_{i,j})$. Second, the interaction
structure is inherently heterogeneous across equations. The supports need not be symmetric,
may differ in size, and may involve coefficients of different magnitudes. As a result, both the support and the coefficients of each row of $\boldsymbol A$ are individual-specific. The support-recovery problem therefore consists of $N$ separate high-dimensional selection
problems, one for each row of $\boldsymbol A$.

For each unit $i$, denote by $\mathcal S_i^n$ the set of units whose outcomes enter the
structural equation of unit $i$ through nonzero interaction coefficients
\begin{equation}
\mathcal S_i^{n}
=
\left\{
j\neq i:\alpha_{i,j}\neq 0
\right\}.
\label{eq:true_support}
\end{equation}
This support set is unobserved. The researcher observes only the set of potential candidate
links, $\{1,\ldots,N\}\setminus\{i\}$, and seeks to determine which of these candidates
belong to $\mathcal S_i^n$.
The framework developed in this paper builds upon the decomposition
\begin{equation}
\label{eq:set_decomposition}
\{1,\ldots,N\}\setminus\{i\}
=
\mathcal S_i^{n}
\cup
\mathcal S_i^{p}
\cup
\mathcal S_i^{d},
\end{equation}
where the three (non-overlapping) sets are defined as:
\begin{itemize}
\item $\mathcal S_i^{n}$ is the set of true links;

\item $\mathcal S_i^{p}$ denotes the set of \textit{proxy} links; namely, units for which $\alpha_{i,j}=0$ but whose outcomes $y_{j,t}$ remain correlated with the structural component of $y_{i,t}$ through indirect network propagation (or common factors);

\item $\mathcal S_i^{d}$ is the set of irrelevant units -- namely, units for which $\alpha_{i,j}=0$ and whose population screening signal is asymptotically negligible..
\end{itemize}
\noindent
The distinction between the three types of units is central for network recovery. In standard high-dimensional regression, the inclusion of a small number of proxy
variables may still yield a useful approximating model. In the present context, however, the support itself has a structural interpretation. Selecting proxy links or irrelevant units may
distort the estimated interaction structure and, consequently, the implied network propagation mechanism. Moreover, when the reduced form is well defined, reduced-form dependence may still be misleading about the sign and strength of the underlying structural interaction coefficient.\footnote{For example, suppose \(N=4\), and the reinforcing network contains links \(2\rightarrow 1\) and \(3\rightarrow 2\), while the displacement network contains links \(3\rightarrow 1\) and \(4\rightarrow 2\). Let \(\rho_1=0.5\) and \(\rho_2=-0.2\). Then unit 3 has a negative immediate structural interaction coefficient in the equation for unit 1 through the displacement link \(3\rightarrow 1\). At the same time, unit 3 also affects unit 1 indirectly through the path \(3\rightarrow 2\rightarrow 1\), generating a positive
second-order effect. The latter dominates the former, so that reduced-form dependence between units 1 and 3 has the opposite sign from the immediate structural interaction coefficient.
}

Figure~\ref{fig:true_pseudo_distant} provides a simple illustration of these concepts for equation $i$, where unit $j_1 \in \mathcal S_i^n$, unit $j_2 \in \mathcal S_i^p$, and unit $j_3 \in \mathcal S_i^d$.

\begin{figure}[ht]
\centering
\begin{tikzpicture}[>=stealth,scale=1.05]

\node[circle,draw,minimum size=0.8cm,thick] (i) at (4.0,0) {$i$};
\node[circle,draw,minimum size=0.8cm] (j1) at (2.2,0) {$j_1$};
\node[circle,draw,minimum size=0.8cm] (j2) at (0.4,0) {$j_2$};

\node[circle,draw,minimum size=0.8cm] (j3) at (5.8,0) {$j_3$};
\node[circle,draw,minimum size=0.8cm] (ip) at (7.6,0) {$i'$};

\draw[->,thick] (j1) -- (i);
\draw[->] (j2) -- (j1);
\draw[->] (j3) -- (ip);

\node[below=0.25cm of j1] {\footnotesize true link};
\node[below=0.25cm of j2] {\footnotesize proxy link};
\node[below=0.25cm of j3] {\footnotesize irrelevant unit};

\end{tikzpicture}
\caption{True, proxy, and irrelevant candidates for equation \(i\).}
\label{fig:true_pseudo_distant}
\end{figure}

\subsection{The BOLMT algorithm}\label{BOLMT}
In what follows, we formally introduce the BOLMT algorithm. 
The contemporaneous nature of the model implies that the network-related regressors are endogenous. In particular, if unit $j$ responds contemporaneously to shocks elsewhere in the system, then $y_{j,t}$ can be correlated with $\varepsilon_{i,t}$. Consequently, the screening statistics are constructed using instrumental variables and fitted regressors obtained from the corresponding first-stage projections. The instruments may consist of exogenous covariates, predetermined variables, or other application-specific instruments satisfying the usual relevance and exogeneity conditions.

At any stage of the algorithm, the objective is to test whether a remaining candidate unit
\(j\) contributes additional information about the outcome of unit \(i\), conditional on the controls and the links selected in previous stages. Since the previously selected contemporaneous neighbour outcomes are endogenous in the network system, the stagewise statistic is computed
as a partial IV statistic. Thus the previously selected link outcomes and the current candidate are instrumented jointly, and the candidate is tested after partialling out the fitted selected
controls.

Let \(H_{i,m-1}\subseteq \{1,\ldots,N\}\setminus\{i\}\) denote the set of links already selected before a
generic stage of the algorithm, $m$. Let \(\mathbf C_{i,H_{i,m-1}}\) denote the matrix containing all always-included regressors together with the outcomes of the units in \(H_{i,m-1}\).\footnote{Depending on the application, the always-included regressors may contain observed covariates, time lags of
the dependent variable, deterministic trends, seasonal effects, fixed effects, factor proxies, or other controls that are not subject to selection.}

For a remaining candidate \(j\notin H_{i,m-1}\), let \(\mathbf v_j:= \mathbf y_j\) denote the candidate
outcome vector. Let \(\boldsymbol{\mathcal Z}_{i,j,H_{i,m-1}}\) denote the full stagewise instrument matrix used for
both \(\mathbf C_{i,H_{i,m-1}}\) and \(\mathbf v_j\). Thus \(\boldsymbol{\mathcal Z}_{i,j,H_{i,m-1}}\) contains instruments for the
previously selected neighbour outcomes, instruments for the current candidate, and any
exogenous controls used as their own instruments.

Define the fitted selected controls and fitted candidate by
\[
\widehat{\mathbf C}_{i,j,H_{i,m-1}}
=
\boldsymbol{\mathcal Z}_{i,j,H_{i,m-1}}
(\boldsymbol{\mathcal Z}_{i,j,H_{i,m-1}}'\boldsymbol{\mathcal Z}_{i,j,H_{i,m-1}})^{-1}
\boldsymbol{\mathcal Z}_{i,j,H_{i,m-1}}'\mathbf C_{i,H_{i,m-1}},
\]
and
\[
\overline{\mathbf v}_{i,j,H_{i,m-1}}
=
\boldsymbol{\mathcal Z}_{i,j,H_{i,m-1}}
(\boldsymbol{\mathcal Z}_{i,j,H_{i,m-1}}'\boldsymbol{\mathcal Z}_{i,j,H_{i,m-1}})^{-1}
\boldsymbol{\mathcal Z}_{i,j,H_{i,m-1}}'\mathbf v_j.
\]
Let
\begin{equation}
\widehat{\mathbf M}^{C}_{i,j,H_{i,m-1}}
=
\mathbf I_T
-
\widehat{\mathbf C}_{i,j,H_{i,m-1}}
(\widehat{\mathbf C}_{i,j,H_{i,m-1}}'\widehat{\mathbf C}_{i,j,H_{i,m-1}})^{-1}
\widehat{\mathbf C}_{i,j,H_{i,m-1}}' .
\label{eq:ivfwl_residualmaker}
\end{equation}
The partial IV/FWL residuals entering the scalar screening statistic are
\begin{equation}
\widetilde{\mathbf y}_{i,j,H_{i,m-1}}
=
\widehat{\mathbf M}^{C}_{i,j,H_{i,m-1}}\mathbf y_i,
\qquad
\widehat{\mathbf v}_{i,j,H_{i,m-1}}
=
\widehat{\mathbf M}^{C}_{i,j,H_{i,m-1}}\overline{\mathbf v}_{i,j,H_{i,m-1}}.
\label{eq:ivfwl_scalar_objects}
\end{equation}
Thus \(\widehat{\mathbf v}_{i,j,H_{i,m-1}}\) is the fitted candidate residual after projecting the fitted
candidate on the fitted selected controls. 

The corresponding IV estimator is given by define
\begin{equation}
\widehat{\theta}_{i,j,H_{i,m-1}}
=
\frac{
\widehat{\mathbf v}_{i,j,H_{i,m-1}}'
\widetilde{\mathbf y}_{i,j,H_{i,m-1}}
}
{
\widehat{\mathbf v}_{i,j,H_{i,m-1}}'
\widehat{\mathbf v}_{i,j,H_{i,m-1}}
},
\end{equation}
with the corresponding screening statistic
\begin{equation}
t_{i,j,H_{i,m-1}}
=
\frac{
T^{-1/2}
\widehat{\mathbf v}_{i,j,H_{i,m-1}}'
\widetilde{\mathbf y}_{i,j,H_{i,m-1}}
}
{
\widehat{\sigma}_{i,j,H_{i,m-1}}
\left(
T^{-1}
\widehat{\mathbf v}_{i,j,H_{i,m-1}}'
\widehat{\mathbf v}_{i,j,H_{i,m-1}}
\right)^{1/2}
}.
\label{eq:tstat}
\end{equation}
Here \(\widehat{\sigma}_{i,j,H_{i,m-1}}^{2}\) is the residual variance estimator from the same full
stagewise IV regression. Only the values of $j$ such that:
\begin{equation}
|t_{i,j,H_{i,m-1}}|> c_{N},
\end{equation}
are included in the next stage of the algorithm. The full algorithm is summarized in Algorithm \ref{alg:bolmt}, where procedure is repeated for every unit \(i=1,\ldots,N\). 

\begin{algorithm}[H]
\caption{BOLMT support selection algorithm}
\label{alg:bolmt}
\begin{algorithmic}[1]
\State Initialize $H_{i,0} \gets \varnothing$.
\For{$m=1,2,\ldots$}
    \State Compute $t_{i,j,H_{i,m-1}}$ for every remaining candidate $j\notin H_{i,m-1}$.
    \State Let
    \[
    j^\ast \in \arg\max_{j\notin H_{i,m-1}} |t_{i,j,H_{i,m-1}}|.
    \]
    \If{$|t_{i,j^\ast,H_{i,m-1}}|>c_N$}
        \State Set $H_{i,m} \gets H_{i,m-1}\cup\{j^\ast\}$.
    \Else
        \State Set $\widehat{\mathcal S}_i\gets H_{i,m-1}$.
        \State \textbf{stop}.
    \EndIf
\EndFor
\end{algorithmic}
\end{algorithm}

In practice, we implement the threshold using
\begin{equation}
c_{N}
=
\Phi^{-1}
\left(
1-\frac{p}{2f(N,\delta)}
\right),
\qquad
f(N,\delta)
=
cN^\delta,
\label{eq:cvf}
\end{equation}
where \(p\in(0,1)\) is a tuning parameter and \(c,\delta>0\) are user-specified
constants, and \(\Phi\) is the standard normal CDF. Under standard tail
approximations, \(c_N\asymp\sqrt{\log N}\), so that \(c_p(N,\delta)\) provides a
feasible tuning rule of the same order as the theoretical threshold sequence \(c_N\).
The recovery theory covers this rule only when its leading constant is sufficiently large.
Interpreting Eq. \eqref{eq:cvf} as a nominal Gaussian multiple-testing critical value would
additionally require asymptotic standard normality, or appropriate long-run-variance
studentisation, neither of which is imposed in the selection theory below.

\noindent
Since the selected supports need not be symmetric, the resulting network is directed.

\noindent
If all stagewise regressors are exogenous and the instrument matrix coincides with the regressor matrix, this construction reduces to ordinary least-squares partialling out, as in standard BMT of \citet{Kapetanios2026BMT}. 

\subsection{Post-selection dual network decomposition}\label{sec:psel_estimation}

Once BOLMT has been applied to all cross-sectional units, the estimated support matrix is constructed from the selected supports $\{\widehat{\mathcal{S}}_i\}_{i=1}^{N}$, with elements
\begin{equation}
\widehat a_{i,j}^{\,s}
=
\mathbf 1
\left(
j\in \widehat{\mathcal{S}}_i
\right).
\end{equation}
Conditional on the selected support, we estimate model
\begin{equation}
y_{i,t}
=
\eta_i
+
\boldsymbol{\beta}_i'
\boldsymbol{x}_{i,t}
+
\sum_{j\in\widehat{\mathcal{S}}_i}
\alpha_{i,j}y_{j,t}
+
u_{i,t}(\widehat{\mathcal S}_i),
\label{eq:postselection}
\end{equation}
for every $i$ using the corresponding IV regression, where
$u_{i,t}(\widehat{\mathcal S}_i)$ includes any omitted interaction terms and equals
$\varepsilon_{i,t}$ on the exact-recovery event.
The post-selection instrument matrix is taken to be a fixed function of the selected support;
the oracle estimator uses the same rule evaluated at $\mathcal S_i^n$.

Let $\widehat{\boldsymbol{\beta}}_i$ and $\widehat{\alpha}_{i,j}$ denote the
post-selection IV estimates obtained from Eq. \eqref{eq:postselection}, with
$\widehat\alpha_{i,j}=0$ for $j\notin\widehat{\mathcal S}_i$. Define
\begin{equation}
\widehat{\mathcal{S}}_{1,i}
=
\{j\in\widehat{\mathcal{S}}_i:\widehat\alpha_{i,j}>0\},
\qquad
\widehat{\mathcal{S}}_{2,i}
=
\{j\in\widehat{\mathcal{S}}_i:\widehat\alpha_{i,j}<0\}.
\end{equation}
Using this definition, the estimated interaction coefficients $\widehat{\alpha}_{i,j}$ are decomposed according to their sign. Define
\begin{equation}
\widehat\rho_{1,i}
:=
\sum_{j\in\widehat{\mathcal S}_{1,i}}
\widehat\alpha_{i,j},
\qquad
\widehat\rho_{2,i}
:=
\sum_{j\in\widehat{\mathcal S}_{2,i}}
\widehat\alpha_{i,j},
\end{equation}
and define the row-normalised weights, for $\ell=1,2$, by
\[
\widehat w_{\ell,i,j}
=
\begin{cases}
\widehat\alpha_{i,j}/\widehat\rho_{\ell,i},
&j\in\widehat{\mathcal S}_{\ell,i}\ \text{and}\
 \widehat{\mathcal S}_{\ell,i}\neq\varnothing,\\
0,&\text{otherwise}.
\end{cases}
\]
Thus an empty signed row of $\widehat{\mathbf W}_{\ell}$ is set equal to zero and no division by zero is made.

As a final step, we aggregate all unit $i$ specific estimates $\widehat{\boldsymbol{\theta}}_i
=
\left(
\widehat{\boldsymbol{\beta}}_i',
\widehat\rho_{1,i},
\widehat\rho_{2,i}
\right)'$ using the Mean Group (MG) aggregation, as \citet{Pesaran199579}:
\begin{equation}
\widehat{\boldsymbol{\theta}}_{MG}
=
\frac{1}{N}
\sum_{i=1}^{N}
\widehat{\boldsymbol{\theta}}_i.
\label{eq:mg_estimator}
\end{equation}

The implementation is fundamentally local and proceeds equation by equation. It does not
require global graph restrictions such as symmetry, connectedness, or particular topological
features of the interaction network. Such restrictions may be useful in specific applications,
but they are not needed for the row-wise selection arguments developed here.

The primary object recovered by BOLMT is the structural interaction matrix
$\boldsymbol A=(\alpha_{i,j})$. The coefficient $\alpha_{i,j}$ indicates whether the
outcome of unit $j$ enters the structural equation for unit $i$, and with what sign and
magnitude. It is therefore an immediate interaction coefficient, not a direct effect in the
usual spatial-econometric sense. Direct, indirect, and total effects are defined from the
relevant multiplier matrix. For example, when the full system is stable, the effect matrix
for covariate $\ell$ is based on
\[
(\boldsymbol I_N-\boldsymbol A)^{-1}
\operatorname{diag}(\beta_{1\ell},\ldots,\beta_{N\ell}),
\]
with direct effects obtained from the diagonal elements and indirect effects from the
off-diagonal elements. If only a retained subsystem is stable, the same logic applies to the
corresponding subsystem multiplier, conditional on the excluded outcomes being treated as
observed external inputs.

The dual-network representation is therefore a structured decomposition of the recovered
matrix $\boldsymbol A$. Without additional restrictions, the data identify the composite
coefficients $\alpha_{i,j}$, not the separate objects $\rho_1w_{1,i,j}$ and
$\rho_2w_{2,i,j}$. The sign decomposition, disjoint-support restriction, and row
normalisations deliver the reinforcing and displacement weights and the corresponding
row-specific interaction intensities.

\section{Theoretical Results}
\label{sec:theory}

This section collects the assumptions used for recovery of the structural interaction matrix $\boldsymbol{A}=(\alpha_{i,j})$ and for post-selection estimation. Afterwards, based on these assumptions consistency of the proposed screening procedure, exact recovery of the latent interaction network, and oracle properties of the resulting post-selection estimators is established.
Throughout, this section all asymptotic statements are understood as jointly $N,T\to\infty$ unless otherwise stated.

\subsection{Assumptions}\label{sec:assumptions}

The assumptions in this section are stated as population restrictions on a deterministic class
of low-dimensional conditioning sets.  They are not restrictions on a realised random path of
the algorithm.  The connection with the realised BOLMT path is made in the appendix by an
induction argument: with probability tending to one, the selected set remains in the class of
conditioning sets defined below.

Recall that \(\mathcal S_i^n\) denotes the set of true links. For the theoretical
analysis, the remaining candidates are partitioned into two groups.
The set \(\mathcal S_i^p\) contains \emph{proxy links}, namely candidates for which
\(\alpha_{i,j}=0\) but whose outcomes remain correlated with the structural
component of \(y_{i,t}\) through indirect network propagation or common exposure.
The remaining candidates are termed \emph{irrelevant links}. Accordingly, define
\[
\mathcal K_i
:=
\mathcal S_i^n
\cup
\mathcal S_i^p
\qquad
k_i^K:=|\mathcal K_i|,
\qquad
\bar K:=\sup_i k_i^K,
\]
and
\[
\mathcal S_i^d
=
\{1,\ldots,N\}\setminus(\{i\}\cup\mathcal K_i).
\]

Let
\[
\mathfrak U_i=\{U:U\subseteq\mathcal K_i\}
\]
be the deterministic class of low-dimensional conditioning sets.  In the theorem statements
and appendix we write \(H\) for a generic element of this class and set
\[
\mathcal H_i:=\mathfrak U_i.
\]
Also define
\[
\mathcal H_i^0=\{H\in\mathcal H_i:\mathcal S_i^n\not\subseteq H\},
\qquad
\mathcal H_i^1=\{H\in\mathcal H_i:\mathcal S_i^n\subseteq H\}.
\]
We call \(H\in\mathcal H_i^0\) a pre-completion conditioning set and
\(H\in\mathcal H_i^1\) a post-completion conditioning set. Completion means that
all true links have been included in \(H\); it does not necessarily mean that the
screening algorithm has stopped.
Thus \(\mathcal H_i^0\) contains low-dimensional conditioning sets that do not yet contain all
true links, while \(\mathcal H_i^1\) contains conditioning sets that contain the true support.
Since \(\bar K<\infty\),
\[
|\mathcal H_i|\le 2^{\bar K}
\]
uniformly in \(i\).  Hence the high-dimensional probability bounds below are required to be
uniform over equations and candidates, but not over all subsets of the \(N-1\) candidate set.

For any \(H\in\mathcal H_i\), let \(\mathbf C_{i,H,t}\) collect the always-included controls and
the outcomes of units in \(H\). For a candidate \(j\notin H\), the scalar candidate outcome is
\(y_{j,t}\). Let $\boldsymbol{\mathcal Z}_{i,j,H,t}$ denote the full stagewise instrument vector used jointly for \(\mathbf C_{i,H,t}\) and \(y_{j,t}\).
It contains instruments for the conditioning variables, instruments for the current candidate,
and any exogenous regressors used as their own instruments.

Define
\[
\mathbf Q^{\boldsymbol{\mathcal Z}}_{i,j,H}
:=
\E(\boldsymbol{\mathcal Z}_{i,j,H,t}\boldsymbol{\mathcal Z}_{i,j,H,t}').
\]
The population fitted controls and fitted candidate are
\[
\mathbf C^*_{i,j,H,t}
:=
\boldsymbol\Pi_{C,i,j,H}'\boldsymbol{\mathcal Z}_{i,j,H,t},
\qquad
\bar v^*_{i,j,H,t}
:=
\boldsymbol\pi_{v,i,j,H}'\boldsymbol{\mathcal Z}_{i,j,H,t},
\]
where
\[
\boldsymbol\Pi_{C,i,j,H}
:=
(\mathbf Q^{\boldsymbol{\mathcal Z}}_{i,j,H})^{-1}
\E(\boldsymbol{\mathcal Z}_{i,j,H,t}\mathbf C_{i,H,t}'),
\qquad
\boldsymbol\pi_{v,i,j,H}
:=
(\mathbf Q^{\boldsymbol{\mathcal Z}}_{i,j,H})^{-1}
\E(\boldsymbol{\mathcal Z}_{i,j,H,t}y_{j,t}).
\]
Let
\[
\boldsymbol\Gamma_{vC,i,j,H}
:=
\{\E(\mathbf C^*_{i,j,H,t}\mathbf C_{i,j,H,t}^{*\prime})\}^{-1}
\E(\mathbf C^*_{i,j,H,t}\bar v^*_{i,j,H,t}),
\]
and define the population partial fitted candidate
\begin{equation}
v^*_{i,j,H,t}
:=
\bar v^*_{i,j,H,t}
-
\boldsymbol\Gamma_{vC,i,j,H}'\mathbf C^*_{i,j,H,t}.
\label{eq:v_star_def}
\end{equation}
Here, \(\bar v^*_{i,j,H,t}\) is the population first-stage fitted value of
\(y_{j,t}\), while \(v^*_{i,j,H,t}\) is the residual obtained after partialling
this fitted value with respect to the population fitted controls.

For $H\in\mathcal H_i$, define the omitted structural component
\[
\mu_{i,H,t}
=
\sum_{k\in\mathcal S_i^n\setminus H}\alpha_{i,k}y_{k,t},
\qquad
\mu_{i,j,H,t}:=\mu_{i,H,t},
\]
where the second notation is retained for compatibility with candidate-specific score arrays.
Also let
\[
\sigma_{i,j,H}^2
:=
\operatorname*{plim}_{T\to\infty}\widehat\sigma_{i,j,H}^2
\]
denote the population limit of the residual-variance estimator from the corresponding
stagewise IV regression.

The tail and dependence conditions below are imposed uniformly on the centred
versions of the following scalar processes, over \(i\), \(j\), and
\(H\in\mathcal H_i\):
\[
v^*_{i,j,H,t}\varepsilon_{i,t},
\qquad
v^*_{i,j,H,t}\mu_{i,j,H,t},
\qquad
(v^*_{i,j,H,t})^2,
\]
\[
\boldsymbol{\mathcal Z}_{i,j,H,t,a}
\boldsymbol{\mathcal Z}_{i,j,H,t,b},
\qquad
\boldsymbol{\mathcal Z}_{i,j,H,t,a}
y_{j,t},
\qquad
\boldsymbol{\mathcal Z}_{i,j,H,t,a}
\mathbf C_{i,H,t,b},
\qquad
\varepsilon_{i,t}^{2}.
\]
These are the scalar score, signal, Gram-matrix, and first-stage arrays
controlled by the probability bounds.

For a scalar array \(\xi_t\), write \(\xi_t\in E(s)\), \(s>0\), if
\(\sup_t\E \exp(a|\xi_t|^s)<\infty\) for some \(a>0\).  Write \(\xi_t\in H(\vartheta)\),
\(\vartheta>2\), if \(\sup_t\E|\xi_t|^\vartheta<\infty\).  For \(\zeta>0\), define
\begin{align}
f_T(\gamma_1,\gamma_2,c,\zeta)
&=
c_0
\left\{
\exp(-c_1\zeta^{\gamma_1})
+
\exp
\left[
-c_2
\left(
\frac{\zeta\sqrt T}
{\log^2T}
\right)^{\gamma_2}
\right]
\right\},
\label{eq:fT}
\\
g_T(\gamma,\vartheta,c,\zeta)
&=
c_0
\left\{
\exp(-c_1\zeta^\gamma)
+
\zeta^{-\vartheta}
T^{-(\vartheta/2-1)}
\right\}.
\label{eq:gT}
\end{align}
The functions \(f_T(\cdot)\) and \(g_T(\cdot)\) are the thin-tail and heavy-tail concentration
envelopes of \citet{DendramisEtal2022}. All population moments used below are assumed to be time invariant, unless stated otherwise.

\begin{assumption}
\label{ass:sampling}
For every array \(\{\xi_{i,j,H,t}\}\) generated by the scalar processes listed
above, the centred process
\[
\widetilde{\xi}_{i,j,H,t}:=\xi_{i,j,H,t}-\E(\xi_{i,j,H,t})
\]
is exponentially \(\alpha\)-mixing in \(t\), uniformly over \(i\), \(j\), and \(H\in\mathcal H_i\):
\[
\alpha(k)\le C\varphi^k,
\qquad
0<\varphi<1.
\]
\end{assumption}

\begin{assumption}
\label{ass:network}
The network is row sparse.  Specifically,
\[
0\le |\mathcal S_i^n|\le \bar k<\infty,
\qquad
|\mathcal S_i^p|\le \bar k_p<\infty,
\]
uniformly in \(i\).  Hence \(\bar K=\sup_i|\mathcal K_i|<\infty\).
\end{assumption}

\begin{assumption}\label{ass:iv}
The dimensions of the stagewise instrument and control vectors are uniformly bounded.
For every \(i\), \(H\in\mathcal H_i\), and \(j\notin H\), the full stagewise instrument vector
satisfies
\[
\E(\boldsymbol{\mathcal Z}_{i,j,H,t}\varepsilon_{i,t})=\mathbf 0.
\]
Moreover,
\[
0<c\le
\lambda_{\min}\left\{\E(\boldsymbol{\mathcal Z}_{i,j,H,t}\boldsymbol{\mathcal Z}_{i,j,H,t}')\right\}
\le
\lambda_{\max}\left\{\E(\boldsymbol{\mathcal Z}_{i,j,H,t}\boldsymbol{\mathcal Z}_{i,j,H,t}')\right\}
\le C<\infty
\]
uniformly over \(i,j,H\). The fitted-control Gram matrix also satisfies
\[
0<c\le
\lambda_{\min}\!\left\{\E(\mathbf C^*_{i,j,H,t}\mathbf C_{i,j,H,t}^{*\prime})\right\}
\le
\lambda_{\max}\!\left\{\E(\mathbf C^*_{i,j,H,t}\mathbf C_{i,j,H,t}^{*\prime})\right\}
\le C<\infty
\]
uniformly over $i,j,H$. The scalar partial fitted candidate is nondegenerate:
\[
\inf_{i,j,H}\E\left[(v^*_{i,j,H,t})^2\right]>0.
\]
Finally,
\[
0<c\le\inf_{i,j,H}\sigma_{i,j,H}^2
\le\sup_{i,j,H}\sigma_{i,j,H}^2\le C<\infty.
\]
\end{assumption}

\begin{assumption}
\label{ass:first_stage}
Uniformly over \(i\), \(j\), and \(H\in\mathcal H_i\), the sample first-stage moments converge to their
population counterparts:
\[
\left\|
T^{-1}\sum_{t=1}^T
\left(\boldsymbol{\mathcal Z}_{i,j,H,t}\boldsymbol{\mathcal Z}_{i,j,H,t}'
-
\E(\boldsymbol{\mathcal Z}_{i,j,H,t}\boldsymbol{\mathcal Z}_{i,j,H,t}')\right)
\right\|=o_p(1),
\]
\[
\left\|
T^{-1}\sum_{t=1}^T
\boldsymbol{\mathcal Z}_{i,j,H,t}\mathbf C_{i,H,t}'
-
\E(\boldsymbol{\mathcal Z}_{i,j,H,t}\mathbf C_{i,H,t}')
\right\|=o_p(1),
\]
\[
\left\|
T^{-1}\sum_{t=1}^T
\boldsymbol{\mathcal Z}_{i,j,H,t}y_{j,t}
-
\E(\boldsymbol{\mathcal Z}_{i,j,H,t}y_{j,t})
\right\|=o_p(1).
\]
In addition, the induced feasible-to-population partial fitted replacement used in the
proofs satisfies:
\[
\sup_{i,j,H}
T^{-1}\sum_{t=1}^T(\widehat v_{i,j,H,t}-v^*_{i,j,H,t})^2=o_p(1),
\]
\[
\sup_{i,j,H}
\left|
T^{-1/2}\sum_{t=1}^T
(\widehat v_{i,j,H,t}-v^*_{i,j,H,t})\varepsilon_{i,t}
\right|=o_p(\sqrt{\log N}),
\]
\[
\sup_{i,j,H}
\left|
T^{-1/2}\sum_{t=1}^T
(\widehat v_{i,j,H,t}-v^*_{i,j,H,t})\mu_{i,j,H,t}
\right|=o_p(\sqrt{\log N}),
\]
and
\[
\sup_{i,j,H}
|\widehat\sigma_{i,j,H}^2-\sigma_{i,j,H}^2|
=o_p(1).
\]
\end{assumption}

\begin{assumption}
\label{ass:tails}
One of the following regimes holds.

\medskip
\noindent Regime E. Every centred scalar process listed above belongs uniformly to an \(E(s)\) class.
Let \(\gamma_E\in(0,1]\) denote the smallest DGK exponent induced by these processes.  For all
sufficiently large constants \(C\),
\begin{equation}
N^2
f_T(2,\gamma_E,c,C\sqrt{\log N})
\to0.
\label{eq:growth_E}
\end{equation}

\medskip
\noindent Regime H. Every centred scalar process listed above belongs uniformly to an \(H(\vartheta)\)
class, where \(\vartheta\) is the moment exponent of the relevant score and Gram arrays.  For
some \(2<\vartheta_0<\vartheta\),
\begin{equation}
N^2
(\log N)^{-\vartheta_0/2}
T^{-(\vartheta_0/2-1)}
\to0.
\label{eq:growth_H}
\end{equation}
\end{assumption}

\noindent
For each equation \(i\), let
\[
U_i(H)=\mathcal S_i^n\setminus H,
\qquad
P_i(H)=\mathcal S_i^p\setminus H,
\]
denote the sets of true and proxy links not yet selected under conditioning set \(H\).
We use the convention that a maximum over an empty set is zero.
For any remaining candidate \(j\notin H\), set
\(\widetilde y_{j,H,t}:=v^*_{i,j,H,t}\), with the equation index $i$ suppressed on the
left-hand side. Define
\[
d_{i,j,H}
=
\sigma_{i,j,H}
\left\{
\E[(\widetilde y_{j,H,t})^2]
\right\}^{1/2}.
\]
Define the normalised IV alignment
\[
\bar\gamma_{i,j,k}(H)
=
\frac{
\E(\widetilde y_{j,H,t}y_{k,t})
}{
d_{i,j,H}
},
\qquad k\in U_i(H),
\]
and the normalised population IV screening signal
\[
m_{i,j}(H)
=
\frac{
\E(\widetilde y_{j,H,t}\mu_{i,j,H,t})
}{
d_{i,j,H}
}.
\]
Because $v^*_{i,j,H,t}$ is a linear combination of
$\boldsymbol{\mathcal Z}_{i,j,H,t}$,
\[
\E(v^*_{i,j,H,t}\mathbf C_{i,H,t}')
=\E(v^*_{i,j,H,t}\mathbf C_{i,j,H,t}^{*\prime})=\mathbf0,
\qquad
\E(v^*_{i,j,H,t}\varepsilon_{i,t})=0.
\]
Consequently,
$\E(\widetilde y_{j,H,t}y_{i,t})=\E(\widetilde y_{j,H,t}\mu_{i,j,H,t})$.
The normalised population screening signal admits the decomposition
\begin{equation}
m_{i,j}(H)
=
\sum_{k\in U_i(H)}
\alpha_{i,k}\bar\gamma_{i,j,k}(H),
\label{eq:scalar_signal_decomp}
\end{equation}
which follows directly from the structural equation.
This decomposition links the primitive alignment quantities $\bar\gamma_{i,j,k}(H)$ to the population screening signal $m_{i,j}(H)$, and underlies the subsequent detectability and dominance assumptions.
\begin{assumption}[True-link detectability]
\label{ass:t_signal}
For scalar screening, the following population restrictions hold.

\begin{enumerate}[label=(\roman*)]

\item There exists a sequence \(b_T>0\) such that, over all nonzero structural links,
\[
\inf_{\{(i,j):\,j\in\mathcal S_i^n\}}
|\alpha_{i,j}|
\ge b_T,
\qquad
\frac{\sqrt T\,b_T}{\sqrt{\log N}}
\to\infty.
\]
If \(\mathcal S_i^n=\varnothing\) for some equation, this condition is void for that equation.

\item At every pre-completion conditioning set, at least one remaining true link has an own
population IV alignment that dominates the contribution induced by the other remaining true links:
there exists \(c_t>0\) such that
\[
\inf_{\{(i,H):\,H\in\mathcal H_i^0\}}
\max_{j\in U_i(H)}
\left\{
|\alpha_{i,j}|\,|\bar\gamma_{i,j,j}(H)|
-
\sum_{\substack{k\in U_i(H)\\ k\neq j}}
|\alpha_{i,k}|\,|\bar\gamma_{i,j,k}(H)|
\right\}
\ge c_t b_T.
\]
If \(\mathcal H_i^0=\varnothing\), the condition is void for that equation.

\item Irrelevant candidates have negligible scalar population IV signal:
\[
\max_i\max_{H\in\mathcal H_i}\max_{j\in\mathcal S_i^d}
\sum_{k\in U_i(H)}
|\alpha_{i,k}|\,|\bar\gamma_{i,j,k}(H)|
=
 o\!\left(\sqrt{\frac{\log N}{T}}\right).
\]
\end{enumerate}
\end{assumption}
Assumption~\ref{ass:t_signal} formalises the population conditions under which true links remain distinguishable from irrelevant candidates throughout the screening procedure. Together with Assumption~\ref{ass:t_dominance}, it implies the screening properties required for the subsequent selection proofs, as established in Lemma~\ref{lem:app_primitive_t_signal}.

\begin{assumption}[Primitive proxy-alignment condition]
\label{ass:t_dominance}
For exact recovery, the following conditions hold.

First, before completion, proxy links are uniformly weaker than the strongest
remaining true-link signal. There exists a constant \(\lambda\in[0,1)\) such
that, uniformly over \(i\) and \(H\in\mathcal H_i^0\),
\[
\max_{p\in P_i(H)}
\left|
\sum_{k\in U_i(H)}
\alpha_{i,k}\bar\gamma_{i,p,k}(H)
\right|
\le
\lambda
\max_{s\in U_i(H)}
\left|
\sum_{k\in U_i(H)}
\alpha_{i,k}\bar\gamma_{i,s,k}(H)
\right|.
\]

Second, after completion, remaining proxy links have negligible normalised
population screening signal:
\[
\max_{p\in P_i(H)}
|m_{i,p}(H)|
=
o\!\left(\sqrt{\frac{\log N}{T}}\right)
\]
uniformly over \(i\) and \(H\in\mathcal H_i^1\).
\end{assumption}

\noindent
The first part of Assumption~\ref{ass:t_dominance} places a primitive restriction on the population IV alignment of proxy links relative to true links. A true link contributes directly to the structural equation for \(y_i\), whereas a proxy link can generate a screening signal only through its association with omitted true links. The assumption therefore requires proxy links to be uniformly less aligned with the remaining structural component than the strongest remaining true link. The appendix shows that, together with Assumption~\ref{ass:t_signal}, this primitive restriction implies that, before completion, the strongest remaining true-link screening signal dominates the strongest remaining proxy-link screening signal. The second part imposes the corresponding post-completion requirement: once all true links have been selected, proxy links should have no remaining structural source of population screening signal and therefore should not cross the stopping threshold.

To see the intuition, suppose \(N=5\), \(i=1\),
\(\mathcal S_1^n=\{2,3\}\), \(\mathcal S_1^p=\{4\}\), and
\(\mathcal S_1^d=\{5\}\). Then
\[
y_{1t}
=
\boldsymbol{\beta}_1'\mathbf x_{1t}
+
\alpha_{12}y_{2t}
+
\alpha_{13}y_{3t}
+
\varepsilon_{1t}.
\]
For convenience, define the normalised population IV screening signal
\begin{equation}
m_{i,j}(H)
:=
\frac{
E\!\left(
\widetilde y_{j,H,t}
y_{i,t}
\right)
}{
\sigma_{i,j,H}
\left\{
E\!\left[
(\widetilde y_{j,H,t})^2
\right]
\right\}^{1/2}
}.
\label{eq:m_def}
\end{equation}
Also define the corresponding normalised IV alignment
\[
\bar\gamma_{i,j,k}(H)
=
\frac{
E(\widetilde y_{j,H,t}y_{k,t})
}{
\sigma_{i,j,H}
\left\{
E\!\left[
(\widetilde y_{j,H,t})^2
\right]
\right\}^{1/2}
}.
\]

At the first stage, the normalised population IV screening signals are
\[
m_{12}
=
\alpha_{12}\bar\gamma_{1,2,2}
+
\alpha_{13}\bar\gamma_{1,2,3},
\qquad
m_{13}
=
\alpha_{12}\bar\gamma_{1,3,2}
+
\alpha_{13}\bar\gamma_{1,3,3},
\]
whereas the proxy signal is
\[
m_{14}
=
\alpha_{12}\bar\gamma_{1,4,2}
+
\alpha_{13}\bar\gamma_{1,4,3}.
\]
Thus the required dominance condition is
\[
\max\{|m_{12}|,|m_{13}|\}>|m_{14}|.
\]
A proxy link may be correlated with several omitted true links simultaneously. When this occurs, its population screening signal combines information from multiple true neighbours and can, in principle, exceed the marginal signal of an individual omitted true link. If there is only one omitted true link, dominance reduces to the explicit normalised-alignment comparison below; it is not automatic with candidate-specific instruments and normalisations.\footnote{
To see this, suppose that the omitted true-link set is
\(\mathcal S_i^n\setminus H=\{s\}\) and let
\(p\in\mathcal S_i^p\setminus H\) be a proxy link. The structural equation is
\[
y_{i,t}
=
\boldsymbol{\beta}_i'\mathbf x_{i,t}
+
\alpha_{i,s}y_{s,t}
+
\varepsilon_{i,t}.
\]
Hence the normalised population screening signals are
\[
m_{i,s}
=
\frac{
\alpha_{i,s}E\!\left(\widetilde y_{s,H,t}y_{s,t}\right)
}{
\sigma_{i,s,H}
\{E[(\widetilde y_{s,H,t})^2]\}^{1/2}
},
\qquad
m_{i,p}
=
\frac{
\alpha_{i,s}E\!\left(\widetilde y_{p,H,t}y_{s,t}\right)
}{
\sigma_{i,p,H}
\{E[(\widetilde y_{p,H,t})^2]\}^{1/2}
}.
\]
Therefore,
\[
\frac{|m_{i,p}|}{|m_{i,s}|}
=
\frac{
\left|
E\!\left(\widetilde y_{p,H,t}y_{s,t}\right)
\right|
}{
\left|
E\!\left(\widetilde y_{s,H,t}y_{s,t}\right)
\right|
}
\cdot
\frac{
\sigma_{i,s,H}
\left\{
E[(\widetilde y_{s,H,t})^2]
\right\}^{1/2}
}{
\sigma_{i,p,H}
\left\{
E[(\widetilde y_{p,H,t})^2]
\right\}^{1/2}
}.
\]
Thus, in the single-link case, signal-to-proxy dominance reduces to comparing
the normalised population IV alignment of the proxy with that of the true link.
Dominance holds precisely when the displayed ratio is uniformly below one.
}

\begin{remark}
The appendix first establishes uniform probability bounds over the deterministic class of low-dimensional conditioning sets \(\mathcal H_i\), and then shows that
the realised BOLMT path remains in this class with probability tending to one. The tail assumption is imposed on score and Gram arrays, rather than only on
primitive variables, because the screening statistics involve products such as
\(v^*_{i,j,H,t}\varepsilon_{i,t}\). Finite moments of primitive variables do not generally imply the same moment order for such products.
\end{remark}

\begin{assumption}\label{ass:panel}

For each equation $i$, let
$\mathbf c^o_{i,t}$ denote the $d_i\times1$ oracle regressor vector constructed using the true link set $\mathcal S_i^n$, and let $\mathbf z^o_{i,t}$ denote the corresponding $m_i\times1$ oracle instrument vector, with $m_i\ge d_i$. The dimensions $d_i$ and $m_i$ are uniformly bounded in $i$. The oracle equation is
\(
y_{i,t}
=
\mathbf c_{i,t}^{o\prime}
\boldsymbol{\gamma}_i
+
\varepsilon_{i,t}.
\)
There exists a fixed matrix $\mathbf R_i$, with uniformly bounded norm, such that the
equation-specific parameter is
\[
\boldsymbol\theta_i
=
\mathbf R_i\boldsymbol\gamma_i
=
(\boldsymbol\beta_i',\rho_{1,i},\rho_{2,i})'.
\]
The sequence $\{\boldsymbol\theta_i\}_{i=1}^N$ is treated as fixed, and the finite-population
mean-group target is
\[
\boldsymbol\theta_N
=
N^{-1}\sum_{i=1}^N\boldsymbol\theta_i.
\]
The following conditions hold.
\begin{enumerate}[label=(\roman*)]
\item
For every $i$,
\(
\E(
\mathbf z^o_{i,t}\varepsilon_{i,t}
)
=
\mathbf 0.
\)
\item
The process
$\{(\mathbf c^o_{i,t},\mathbf
z^o_{i,t},\varepsilon_{i,t})\}_{t\ge1}$
is strictly stationary and exponentially $\alpha$-mixing uniformly in $i$. There exists $\delta>0$ such that
\(
\sup_i
\E
\|
\mathbf z^o_{i,t}\varepsilon_{i,t}
\|^{2+\delta}
<\infty,
\)
\(
\sup_i
\E
\|
\mathbf z^o_{i,t}\mathbf c_{i,t}^{o\prime}
\|^{1+\delta}
<\infty,
\qquad
\sup_i
\E
\|
\mathbf z^o_{i,t}\mathbf z_{i,t}^{o\prime}
\|^{1+\delta}
<\infty.
\)
\item
Define
\(
\boldsymbol{A}_i
=
\E(
\mathbf z^o_{i,t}\mathbf c_{i,t}^{o\prime}
),
\qquad
\mathbf B_i
=
\E(
\mathbf z^o_{i,t}\mathbf z_{i,t}^{o\prime}
),
\)
\(
\mathbf Q_i
=
\boldsymbol{A}_i'
\mathbf B_i^{-1}
\boldsymbol{A}_i.
\)
There exists $c>0$ such that uniformly in $i$,
\(
\lambda_{\min}(\mathbf B_i)\ge c,
\qquad
\lambda_{\min}(\mathbf Q_i)\ge c,
\)
and
\(
\lambda_{\max}(\mathbf B_i)\le c^{-1},
\qquad
\lambda_{\max}(\mathbf Q_i)\le c^{-1}.
\)
\item
The sample matrices satisfy, uniformly in $i$,
\(
\left\|
T^{-1}
\sum_{t=1}^T
\mathbf z^o_{i,t}\mathbf c_{i,t}^{o\prime}
-
\boldsymbol{A}_i
\right\|
=
o_p(1),
\)
and
\(
\left\|
T^{-1}
\sum_{t=1}^T
\mathbf z^o_{i,t}\mathbf z_{i,t}^{o\prime}
-
\mathbf B_i
\right\|
=
o_p(1).
\)
Writing the two sample matrices as $\widehat{\boldsymbol A}_i$ and
$\widehat{\mathbf B}_i$, respectively, define
\[
\widehat{\mathbf L}_i
=
(\widehat{\boldsymbol A}_i'\widehat{\mathbf B}_i^{-1}
\widehat{\boldsymbol A}_i)^{-1}
\widehat{\boldsymbol A}_i'\widehat{\mathbf B}_i^{-1},
\qquad
\mathbf L_i
=
\mathbf Q_i^{-1}\boldsymbol A_i'\mathbf B_i^{-1}.
\]
The aggregate linearisation remainder satisfies
\[
N^{-1/2}\sum_{i=1}^N
\mathbf R_i(\widehat{\mathbf L}_i-\mathbf L_i)
T^{-1/2}\sum_{t=1}^T\mathbf z^o_{i,t}\varepsilon_{i,t}
=o_p(1).
\]
\item
Let
\(
\mathbf \Omega_i
=
\sum_{\ell=-\infty}^{\infty}
\E
\left(
\mathbf z^o_{i,t}
\varepsilon_{i,t}
\varepsilon_{i,t-\ell}
\mathbf z_{i,t-\ell}^{o\prime}
\right).
\)
The series is absolutely summable uniformly in $i$, and
\(
0<c
\le
\lambda_{\min}(\mathbf\Omega_i)
\le
\lambda_{\max}(\mathbf\Omega_i)
\le
c^{-1}
<
\infty
\)
uniformly in $i$.
\item
Let
\(
\mathbf \Xi_i
=
\mathbf R_i
\mathbf Q_i^{-1}
\boldsymbol{A}_i'
\mathbf B_i^{-1}
\mathbf \Omega_i
\mathbf B_i^{-1}
\boldsymbol{A}_i
\mathbf Q_i^{-1}
\mathbf R_i'.
\)
The average variance satisfies
\(
\mathbf V_N
=
N^{-1}
\sum_{i=1}^N
\mathbf \Xi_i
\to
\mathbf V,
\)
where $\mathbf V$ is finite and positive definite.

Moreover, either the oracle influence functions are independent across $i$, or they are weakly cross-sectionally dependent and satisfy
\[
N^{-1/2}
\sum_{i=1}^N
\mathbf R_i
\mathbf Q_i^{-1}
\boldsymbol{A}_i'
\mathbf B_i^{-1}
T^{-1/2}
\sum_{t=1}^T
\mathbf z^o_{i,t}
\varepsilon_{i,t}
\Rightarrow
\mathcal N(\mathbf 0,\mathbf V).
\]

\end{enumerate}
\end{assumption}

\begin{remark}
Assumption~\ref{ass:panel} concerns only the oracle finite-dimensional IV estimator, that is, the estimator that would be computed if the true link set $\mathcal S_i^n$ were known. Conditions (i)--(iv) are standard IV exogeneity, relevance, and sample-moment conditions. Condition (v) accommodates serial dependence through the long-run covariance matrix of the IV score, while condition (vi) provides the cross-sectional averaging condition required for mean-group inference.
Under Assumption~\ref{ass:panel}, the oracle equation-level estimator admits an asymptotic linear representation, and the corresponding oracle mean-group estimator is asymptotically normal around the finite-population target $\boldsymbol\theta_N$. The role of the network-selection theory is to establish that the feasible post-selection estimator coincides with its oracle counterpart with probability tending to one. Consequently, selection does not alter the oracle asymptotic distribution; feasible inference also requires a consistent estimator of $\mathbf V$.
\end{remark}


\subsection{Exact Network Recovery}

Let $\widehat{\mathcal S}_i^t$ denote the set selected by the BOLMT procedure based on the scalar screening statistic $t_{i,j,H}$ and threshold \(c_N^t=C_t\sqrt{\log N}\), where $C_t$ is a sufficiently large fixed constant. Define the exact-recovery event
\begin{equation}
\mathcal A_1^t
=
\left\{
\widehat{\mathcal S}_i^t
=
\mathcal S_i^n
\text{ for all } i
\right\}.
\label{eq:A1t}
\end{equation}

\begin{theorem}
\label{thm:t_exact}
Suppose Assumptions~\ref{ass:sampling}--\ref{ass:tails},
\ref{ass:t_signal}, and \ref{ass:t_dominance} hold. Then
\[
\Pr(\mathcal A_1^t)
\to
1.
\]
\end{theorem}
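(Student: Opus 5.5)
The plan is to reduce the random BOLMT path to the deterministic class $\mathcal H_i$ and argue by induction over stages. The first step is a uniform decomposition of the feasible statistic. For any $i$, $H\in\mathcal H_i$ and $j\notin H$, write
\[
t_{i,j,H}=\sqrt T\,m_{i,j}(H)\,\frac{d_{i,j,H}}{\widehat d_{i,j,H}}+\frac{T^{-1/2}\sum_t v^*_{i,j,H,t}\varepsilon_{i,t}}{\widehat d_{i,j,H}}+\frac{T^{-1/2}\sum_t\{v^*_{i,j,H,t}\mu_{i,H,t}-\E(v^*_{i,j,H,t}\mu_{i,H,t})\}}{\widehat d_{i,j,H}}+R_{i,j,H},
\]
where $\widehat d_{i,j,H}=\widehat\sigma_{i,j,H}(T^{-1}\widehat{\mathbf v}'\widehat{\mathbf v})^{1/2}$. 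This uses $\widetilde{\mathbf y}=\widehat{\mathbf M}^C\mathbf y_i$, the structural equation $y_i=\mathbf C_{i,H}'\cdot+\mu_{i,H}+\varepsilon_i$, and the fact that $\widehat{\mathbf v}$ is orthogonal to $\widehat{\mathbf C}$. The remainder $R_{i,j,H}$ collects the feasible-to-population replacement terms. By Assumption~\ref{ass:first_stage} it is $o_p(\sqrt{\log N})$, and $\widehat d/d\to1$, both uniformly.

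Next, the centred score and signal sums are controlled by the DGK concentration envelopes $f_T,g_T$ under Assumptions~\ref{ass:sampling} and \ref{ass:tails}. A union bound over at most $N^2 2^{\bar K}$ triples $(i,j,H)$ then gives
\[
\max_{i,H,j}|t_{i,j,H}-\sqrt T m_{i,j}(H)(1+o_p(1))|\le C'\sqrt{\log N}
\]
with probability tending to one. Here $C'$ is a constant from the growth conditions \eqref{eq:growth_E}/\eqref{eq:growth_H}, and $C_t$ is chosen larger than $C'$.

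The population signals are then classified using Lemma~\ref{lem:app_primitive_t_signal}. For $H\in\mathcal H_i^0$, Assumption~\ref{ass:t_signal}(ii) and the decomposition \eqref{eq:scalar_signal_decomp} give $\max_{j\in U_i(H)}|m_{i,j}(H)|\ge c_tb_T$, so $\sqrt T|m|/\sqrt{\log N}\to\infty$ by (i). Irrelevant candidates have $\sqrt T|m|=o(\sqrt{\log N})$ by (iii). Proxies satisfy $|m_{i,p}|\le\lambda\max_s|m_{i,s}|$ before completion by Assumption~\ref{ass:t_dominance}, and $\sqrt T|m_{i,p}|=o(\sqrt{\log N})$ after completion. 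On the high-probability event, the following therefore hold:
\begin{enumerate}[label=(\alph*)]
\item before completion, the maximising statistic exceeds $c_N^t$ and belongs to a true link;
\item after completion, every remaining statistic, whether proxy or irrelevant, is below $c_N^t$;
\item irrelevant units never exceed $c_N^t$.
\end{enumerate}

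The induction starts from $H_{i,0}=\varnothing\in\mathcal H_i^0$, or $\mathcal H_i^1$ if the support is empty. Because one high-probability event covers all $H\in\mathcal H_i$ and all $i$ simultaneously, each step adds a true link while $H\in\mathcal H_i^0$. This keeps $H\subseteq\mathcal S_i^n\subseteq\mathcal K_i$, so $H$ stays in the class. After at most $\bar k$ steps, $H=\mathcal S_i^n$ and the algorithm stops. Hence $\widehat{\mathcal S}_i^t=\mathcal S_i^n$ for all $i$ on this event.

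The main obstacle is the dominance comparison in (a) with sampling noise present. Proxy and true signals both diverge at rate $\sqrt T b_T$, and the noise is only $O_p(\sqrt{\log N})$, which is of smaller order. The factor $1-\lambda>0$ therefore separates the two, but the gap is $(1-\lambda)\sqrt T\max|m|$ and must be compared with the uniform noise. Care is also needed because the $\widehat d/d$ ratios multiply diverging signals, so $o_p(1)$ relative errors must be uniform. I would handle this by showing $\sup|\widehat d/d-1|=o_p(1)$ uniformly over the finite class, which follows from Assumption~\ref{ass:first_stage} and the lower bounds in Assumption~\ref{ass:iv}. Since $\max|m|$ dominates a proxy's $|m|$ by a factor $\lambda<1$, relative errors of size $o(1)$ preserve the ordering.
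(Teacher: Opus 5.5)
Your proposal is correct and follows essentially the same route as the paper's proof. The shared steps are: a uniform expansion $t_{i,j,H}=\sqrt T\,m_{i,j}(H)\{1+o_p(1)\}+O_p(\sqrt{\log N})$ over the $O(N^2)$ triples in the finite class $\mathcal H_i$ via DGK and a union bound (Lemmas~\ref{lem:app_uniform_dgk}, \ref{lem:app_t_null}, \ref{lem:app_t_signal}); the classification of true, proxy and irrelevant signals (Lemmas~\ref{lem:app_primitive_t_signal}, \ref{lem:app_t_proxy}); and an induction keeping the realised path inside $\mathcal H_i$ (Lemma~\ref{lem:app_path_induction}), where your version slightly sharpens this by noting the path actually stays inside subsets of $\mathcal S_i^n$.

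Two small details to fill in. First, the score identity $\widehat{\mathbf v}'\widetilde{\mathbf y}=\widehat{\mathbf v}'(\boldsymbol\mu_{i,H}+\boldsymbol\varepsilon_i)$ needs $\widehat{\mathbf v}\in\operatorname{col}(\boldsymbol{\mathcal Z})$, not only $\widehat{\mathbf v}\perp\widehat{\mathbf C}$, so that $\widehat{\mathbf v}'\mathbf C_{i,H}=\mathbf 0$. Second, uniform $\widehat d/d\to1$ also uses the DGK bound on $T^{-1}\sum_t(v^*_{i,j,H,t})^2$, as in Lemma~\ref{lem:app_t_denom}.
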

Theorem~\ref{thm:t_exact} establishes exact recovery of the structural support of every unit. Assumption~\ref{ass:t_signal} prevents premature stopping by requiring that, before
the true support has been fully recovered, at least one omitted true link is sufficiently
detectable to cross the high-dimensional screening threshold. Assumption~\ref{ass:t_dominance} ensures correct selection and stopping: true links dominate
proxy candidates before completion, while remaining proxy links do not cross the threshold
after the true support has been selected.

\begin{remark}\label{rem:approximation}
Exact recovery requires both components of Assumption~\ref{ass:t_dominance}. If the
pre-completion dominance condition is maintained but the post-completion proxy-control
condition is relaxed, the procedure may continue to select proxy links after all true links
have entered. In this case, the selected support remains contained in
$\mathcal K_i=\mathcal S_i^n\cup\mathcal S_i^p$ with probability approaching one and
contains the true support asymptotically, although some proxy links may also be included.
This support-containment property remains useful for estimation. Since proxy links do not
enter the population equation directly, post-selection IV estimation continues to identify
the nonzero interaction coefficients associated with the true links, while the coefficients
on selected proxy links converge to zero. Proposition~\ref{prop:containment} formalises
this result. The Monte Carlo results reported in Section~\ref{sec:mc} illustrate that
estimation can remain accurate even when exact recovery is not achieved in every replication.
\end{remark}

For Proposition~\ref{prop:containment}, assume that every augmented selected model satisfying
\[
\mathcal S_i^n\subseteq\widehat{\mathcal S}_i\subseteq \mathcal S_i^n\cup\mathcal S_i^p
\]
satisfies, uniformly over this finite class, the same IV exogeneity, rank, sample-moment, and long-run variance conditions as the
oracle model, with dimensions uniformly bounded.

\begin{proposition}
\label{prop:containment}
Suppose that
\[
\Pr\!\left(
\mathcal S_i^n \subseteq \widehat{\mathcal S}_i^t
\subseteq\mathcal S_i^n\cup\mathcal S_i^p
\text{ for all }i
\right)\to1.
\]
Then the post-selection IV estimator is consistent for the coefficients
associated with the true links. Moreover, for any selected proxy link
$j\in\widehat{\mathcal S}_i^t\setminus\mathcal S_i^n$,
\[
\widehat{\alpha}_{i,j}
\stackrel{p}{\to}
0.
\]
\end{proposition}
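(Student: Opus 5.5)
The plan is to reduce the random-support problem to a finite collection of deterministic augmented models, handle each of these by the usual IV argument, and then return to the realised support by conditioning on the containment event. Let
\[
\mathcal E_N=\{\mathcal S_i^n\subseteq\widehat{\mathcal S}_i^t\subseteq\mathcal S_i^n\cup\mathcal S_i^p \text{ for all } i\},
\]
which satisfies $\Pr(\mathcal E_N)\to1$ by hypothesis. By Assumption~\ref{ass:network}, for each $i$ the class $\mathcal G_i=\{G:\mathcal S_i^n\subseteq G\subseteq\mathcal S_i^n\cup\mathcal S_i^p\}$ has at most $2^{\bar k_p}$ elements, and this bound is uniform in $i$. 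Because the post-selection instrument matrix is a fixed function of the selected support, on $\mathcal E_N$ the estimator $\widehat\alpha_{i,j}$ equals $\widehat\alpha_{i,j}(G)$ for the random $G=\widehat{\mathcal S}_i^t\in\mathcal G_i$. Here $\widehat\alpha_{i,j}(G)$ is the deterministic-support IV estimator computed from the regressors $\mathbf c_{i,t}(G)$ and instruments $\mathbf z_{i,t}(G)$. It therefore suffices to show, for every $\epsilon>0$,
\[
\max_{G\in\mathcal G_i}\Pr\big(|\widehat\alpha_{i,j}(G)-\alpha_{i,j}|>\epsilon\big)\to0 \quad\text{for each } j\in G .
\]
We then bound
\[
\Pr(|\widehat\alpha_{i,j}-\alpha_{i,j}|>\epsilon)\le \Pr(\mathcal E_N^c)+\sum_{G\in\mathcal G_i}\Pr\big(|\widehat\alpha_{i,j}(G)-\alpha_{i,j}|>\epsilon\big),
\]
and the sum has boundedly many terms.

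\textbf{Key step: a correctly specified augmented equation.} For $G\in\mathcal G_i$, the structural equation \eqref{eq:structural_model} can be written as
\[
y_{i,t}=\mathbf c_{i,t}(G)'\boldsymbol\gamma_i(G)+\varepsilon_{i,t}.
\]
In $\boldsymbol\gamma_i(G)$ the components on the true links are $\alpha_{i,j}$, those on $j\in G\setminus\mathcal S_i^n$ are zero, and the covariate and fixed-effect coefficients are unchanged. This holds because $\alpha_{i,j}=0$ for proxy units, so $G\supseteq\mathcal S_i^n$ omits no structural term and the error is exactly $\varepsilon_{i,t}$. The IV estimator then decomposes as
\[
\widehat{\boldsymbol\gamma}_i(G)-\boldsymbol\gamma_i(G)=\widehat{\mathbf L}_i(G)\,T^{-1}\sum_t\mathbf z_{i,t}(G)\varepsilon_{i,t}.
\]
The maintained conditions imposed before the proposition are the analogues of Assumption~\ref{ass:panel}(i)--(iv) for every such $G$: exogeneity, rank of $\mathbf Q_i(G)$ bounded away from zero, and sample-moment convergence. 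From them we get $\widehat{\mathbf L}_i(G)\stackrel{p}{\to}\mathbf L_i(G)$ with $\|\mathbf L_i(G)\|$ bounded. Mixing together with the $2+\delta$ moment bound then gives
\[
T^{-1}\sum_t\mathbf z_{i,t}(G)\varepsilon_{i,t}=O_p(T^{-1/2})
\]
via a Chebyshev bound using the absolutely summable long-run variance, and hence $o_p(1)$. Consequently $\widehat{\boldsymbol\gamma}_i(G)\stackrel{p}{\to}\boldsymbol\gamma_i(G)$, which covers both conclusions. For true links the limit is $\alpha_{i,j}$. For proxies in $G$ the limit is $0$, so on $\mathcal E_N$ any selected $j\in\widehat{\mathcal S}_i^t\setminus\mathcal S_i^n$ has $\widehat\alpha_{i,j}\stackrel{p}{\to}0$.

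\textbf{Main obstacle.} The delicate point is that the support is data-dependent, so one cannot directly apply a limit theorem to $\widehat\alpha_{i,j}(\widehat{\mathcal S}_i^t)$. In particular, the selection event may be correlated with the sample moments. The finite-class union bound above sidesteps this, since no independence between selection and estimation is needed: it controls each fixed $G$ separately and adds the boundedly many error probabilities. If a statement uniform over $i$ is desired, I would strengthen the per-$G$ rate. I would first try a maximal inequality of the \citet{DendramisEtal2022} type under Assumption~\ref{ass:tails}, applied to $\max_i\max_{G}\|T^{-1}\sum_t\mathbf z_{i,t}(G)\varepsilon_{i,t}\|$, and combine it with the uniform-in-$i$ version of the moment convergence in the maintained conditions. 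Beyond this bookkeeping, the remaining step is simply verifying that the zero components of $\boldsymbol\gamma_i(G)$ are identified, which follows from the rank condition on the augmented model.
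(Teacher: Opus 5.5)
Your proposal is correct and follows the same route as the paper. On the containment event the selected equation is a correctly specified augmented model with zero population coefficients on the selected proxies; the maintained IV regularity conditions, imposed uniformly over the finite class of such models, give consistency; and $\Pr(\mathcal C)\to1$ transfers the conclusion to the unconditional statement. Your union bound over the at most $2^{\bar k_p}$ deterministic supports in $\mathcal G_i$, together with the decomposition through $\widehat{\mathbf L}_i(G)$, just spells out what the paper compresses into ``consistency of the IV estimator over all admissible augmented equations.''
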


\subsection{Recovery of the Dual-Network Structure}

The preceding results establish recovery of the structural interaction matrix
\(\boldsymbol{A}\). As discussed previously, recovery of $\boldsymbol{A}$
is the primary objective of BOLMT and is sufficient in conventional single-network settings. To recover the reinforcing and displacement networks separately, additional structure is required.

\begin{assumption}
\label{ass:stability}
The spectral radius of $\boldsymbol{A}$ satisfies
\[
r(\boldsymbol{A})<1.
\]
\end{assumption}
Assumption~\ref{ass:stability} ensures that the higher-order feedback effects generated by
$(\mathbf I_N-\boldsymbol A)^{-1}
= \mathbf I_N+\boldsymbol A+\boldsymbol A^2+\cdots$
are represented by a convergent Neumann series for each $N$. This assumption is not used
for support recovery or for the sign decomposition below.


\begin{assumption}
\label{ass:dual}
The interaction matrix admits the decomposition
\[
\boldsymbol{A}
=
\rho_1\mathbf W_1
+
\rho_2\mathbf W_2,
\]
where

\begin{enumerate}[label=(\roman*)]
\item $\rho_1>0$ and $\rho_2<0$;

\item $\mathbf W_1$ and $\mathbf W_2$ are non-negative matrices;

\item the supports of $\mathbf W_1$ and $\mathbf W_2$ are disjoint, namely
\[
w_{1,i,j}w_{2,i,j}=0,
\qquad
i\neq j;
\]

\item each nonzero row of $\mathbf W_1$ and $\mathbf W_2$ sums to one.
\end{enumerate}
\end{assumption}

For $\ell=1,2$, define the population signed support and row intensity by
\[
\mathcal S_{1,i}=\{j:\alpha_{i,j}>0\},
\qquad
\mathcal S_{2,i}=\{j:\alpha_{i,j}<0\},
\qquad
\rho_{\ell,i}=\sum_{j\in\mathcal S_{\ell,i}}\alpha_{i,j},
\]
and set
\[
w_{\ell,i,j}
=
\begin{cases}
\alpha_{i,j}/\rho_{\ell,i},
&j\in\mathcal S_{\ell,i}\ \text{and}\ \mathcal S_{\ell,i}\neq\varnothing,\\
0,&\text{otherwise}.
\end{cases}
\]
Under Assumption~\ref{ass:dual}, $\rho_{\ell,i}=\rho_\ell$ on every active row of
$\mathbf W_\ell$, while inactive rows are zero.

\begin{theorem}
\label{thm:dualrecover}
Suppose the conditions of Theorem~\ref{thm:t_exact} hold together with Assumption
\ref{ass:dual}. Suppose further that the post-selection coefficient estimates satisfy
\[
\max_i\max_{j\in\mathcal S_i^n}
|\widehat\alpha_{i,j}-\alpha_{i,j}|=o_p(b_T\wedge 1).
\tag{PS}
\label{eq:post_sign_rate}
\]
Then
\[
\Pr
\left(
\widehat{\mathcal S}_{\ell,i}=\mathcal S_{\ell,i}
\text{ for all }i\text{ and }\ell=1,2
\right)
\to
1.
\]
Moreover,
\[
\max_i|\widehat\rho_{\ell,i}-\rho_{\ell,i}|\to_p0,
\qquad
\max_i\sum_{j\ne i}|\widehat w_{\ell,i,j}-w_{\ell,i,j}|\to_p0,
\qquad \ell=1,2.
\]
\end{theorem}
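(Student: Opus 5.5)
We work on the intersection of two events. The first is the exact-recovery event $\mathcal A_1^t$ of Theorem~\ref{thm:t_exact}, whose probability tends to one. The second is the event
\[
\mathcal E_T=\Big\{\max_i\max_{j\in\mathcal S_i^n}|\widehat\alpha_{i,j}-\alpha_{i,j}|\le \tfrac12 (b_T\wedge 1)\Big\},
\]
whose probability tends to one by (PS). On $\mathcal A_1^t$ we have $\widehat{\mathcal S}_i=\mathcal S_i^n$ for all $i$, so $\widehat\alpha_{i,j}=0=\alpha_{i,j}$ for $j\notin\mathcal S_i^n$. Hence no index outside the true support can enter either $\widehat{\mathcal S}_{1,i}$ or $\widehat{\mathcal S}_{2,i}$. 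For $j\in\mathcal S_i^n$, Assumption~\ref{ass:t_signal}(i) gives $|\alpha_{i,j}|\ge b_T$. On $\mathcal E_T$ we then have $|\widehat\alpha_{i,j}-\alpha_{i,j}|\le b_T/2<|\alpha_{i,j}|$, so $\operatorname{sign}(\widehat\alpha_{i,j})=\operatorname{sign}(\alpha_{i,j})$ uniformly in $(i,j)$. Because $\mathcal S_{1,i}\cup\mathcal S_{2,i}=\mathcal S_i^n$ by definition, this gives $\widehat{\mathcal S}_{\ell,i}=\mathcal S_{\ell,i}$ for all $i$ and $\ell$ on $\mathcal A_1^t\cap\mathcal E_T$. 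Since $\Pr(\mathcal A_1^t\cap\mathcal E_T)\to1$, the first claim follows.

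For the intensities, on the same event we have
\[
|\widehat\rho_{\ell,i}-\rho_{\ell,i}|\le\sum_{j\in\mathcal S_{\ell,i}}|\widehat\alpha_{i,j}-\alpha_{i,j}|\le \bar k\,\max_{j\in\mathcal S_i^n}|\widehat\alpha_{i,j}-\alpha_{i,j}|.
\]
This bound uses Assumption~\ref{ass:network}. By (PS) the right-hand side is $o_p(1)$ uniformly in $i$, and the uniformity carries over because the complement of the event has vanishing probability.

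For the weights, inactive rows ($\mathcal S_{\ell,i}=\varnothing$) give zero on both sides. On active rows, Assumption~\ref{ass:dual} says $\rho_{\ell,i}=\rho_\ell$, a fixed nonzero constant. This point is where Assumption~\ref{ass:dual} enters, and it is the key step for controlling the denominator uniformly. Without it, row intensities could drift toward zero, for example when positive entries are of order $b_T\to0$. With it, $|\rho_{\ell,i}|=|\rho_\ell|>0$ is bounded away from zero uniformly in $i$. Combined with the uniform intensity bound, $\min_i|\widehat\rho_{\ell,i}|\ge|\rho_\ell|/2$ with probability tending to one. Writing
\[
\widehat w_{\ell,i,j}-w_{\ell,i,j}=\frac{\widehat\alpha_{i,j}-\alpha_{i,j}}{\widehat\rho_{\ell,i}}+\alpha_{i,j}\frac{\rho_{\ell,i}-\widehat\rho_{\ell,i}}{\widehat\rho_{\ell,i}\rho_{\ell,i}}
\]
and summing over the at most $\bar k$ elements of $\mathcal S_{\ell,i}$, we obtain
\[
\sum_{j\ne i}|\widehat w_{\ell,i,j}-w_{\ell,i,j}|\le\frac{2\bar k}{|\rho_\ell|}\max_j|\widehat\alpha_{i,j}-\alpha_{i,j}|+\frac{2}{|\rho_\ell|}|\widehat\rho_{\ell,i}-\rho_{\ell,i}|.
\]
The second term uses $\sum_{j}|w_{\ell,i,j}|=1$ on active rows. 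Both terms are $o_p(1)$ uniformly in $i$, which completes the argument. The step I expect to need care is exactly this uniform lower bound on $|\widehat\rho_{\ell,i}|$: it relies on the homogeneous intensity in Assumption~\ref{ass:dual}(iv) and on the support equality from the first step, so that every active row is matched correctly and no denominator comes from a spurious sign.
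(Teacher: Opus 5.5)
Your proposal is correct and follows the paper's argument: exact recovery together with (PS) and the beta-min condition in Assumption~\ref{ass:t_signal}(i) gives sign recovery; the $\bar k$ row-sparsity bound gives uniform convergence of $\widehat\rho_{\ell,i}$; and $\rho_{\ell,i}=\rho_\ell$ on active rows under Assumption~\ref{ass:dual} keeps the denominators bounded away from zero. Where the paper simply invokes the continuous-mapping theorem for the normalised weights, your explicit bound on the high-probability event, $\sum_{j\ne i}|\widehat w_{\ell,i,j}-w_{\ell,i,j}|\le 2|\rho_\ell|^{-1}\{\bar k\max_j|\widehat\alpha_{i,j}-\alpha_{i,j}|+|\widehat\rho_{\ell,i}-\rho_{\ell,i}|\}$, makes uniformity over the growing number of rows explicit, which is a small but genuine gain in rigour.
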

Theorem~\ref{thm:dualrecover} shows that exact support recovery, together with uniform coefficient accuracy, implies exact recovery of the signed supports and consistent estimation of the reinforcing and displacement weights. The sign restrictions and disjoint-support conditions ensure that each recovered link can be assigned uniquely to one of the two interaction channels.
The decomposition theorem is stated under common interaction intensities $(\rho_1,\rho_2)$ for expositional simplicity. In empirical applications, however, the interaction intensities
and slope coefficients may vary across equations. The post-selection estimation step therefore permits equation-specific parameters and aggregates them using Mean Group estimation. Since support recovery depends only on the interaction matrix $\boldsymbol{A}$, this extension does not affect
the preceding recovery theory.

The results above are asymptotic in their nature and are stated under exact sparsity of the relevant local interaction structure. Approximate sparsity can be accommodated by allowing sufficiently small interaction effects to enter the proxy set or approximation error. In that case, however, the target becomes an approximating interaction structure rather than exact support recovery. The present formulation keeps the structural target explicit.

\subsection{Post-Selection Estimation}

Let $\widehat{\boldsymbol\theta}_i$ be the equation-level post-selection IV estimator
and let $\widehat{\boldsymbol\theta}_{MG}$ denote the corresponding mean-group estimator, both defined in Eq. \eqref{eq:mg_estimator}.

\begin{theorem}
\label{thm:eqoracle}
Suppose the assumptions of Theorem~\ref{thm:t_exact} and condition Eq. \eqref{eq:post_sign_rate} hold.
Then, uniformly in $i$,
\[
\widehat{\boldsymbol\theta}_i
-
\widehat{\boldsymbol\theta}_i^o
=
o_p(T^{-1/2}),
\]
where $\widehat{\boldsymbol\theta}_i^o$ denotes the oracle estimator based on the true support set $\mathcal S_i^n$.
\end{theorem}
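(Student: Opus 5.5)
Theorem~\ref{thm:eqoracle} is a coincidence-on-a-high-probability-event statement: on $\mathcal A_1^t$ the feasible procedure and the oracle use the same inputs, hence give numerically identical estimates. I will not linearise anything. Instead I will show that $\widehat{\boldsymbol\theta}_i-\widehat{\boldsymbol\theta}_i^o$ vanishes identically outside an event whose probability tends to zero.

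\textbf{Step 1: the IV regressions coincide on $\mathcal A_1^t$.} By Theorem~\ref{thm:t_exact}, $\Pr(\mathcal A_1^t)\to1$. On $\mathcal A_1^t$ we have $\widehat{\mathcal S}_i=\mathcal S_i^n$ simultaneously for all $i$. The post-selection instrument matrix is a fixed function of the selected support, and the oracle uses the same rule at $\mathcal S_i^n$. Hence the regressor matrix, the instrument matrix and the dependent variable in Eq.~\eqref{eq:postselection} coincide with their oracle counterparts. So do the resulting IV estimates $(\widehat{\boldsymbol\beta}_i,\widehat\alpha_{i,j})$, for every $i$ at once.

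\textbf{Step 2: the sign decomposition coincides as well.} The components $\widehat\rho_{1,i},\widehat\rho_{2,i}$ of $\widehat{\boldsymbol\theta}_i$ depend on the signed sets $\widehat{\mathcal S}_{\ell,i}$, so I must also check these. The oracle version uses the oracle estimates $\widehat\alpha^o_{i,j}$ and their signs; by Step 1 these equal $\widehat\alpha_{i,j}$ on $\mathcal A_1^t$. Therefore the signed sets, and hence $\widehat{\boldsymbol\theta}_i$, coincide with the oracle quantities.

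If instead the oracle $\boldsymbol\theta_i^o$ is understood as $\mathbf R_i\widehat{\boldsymbol\gamma}_i^o$, built from the population signed supports $\mathcal S_{\ell,i}$, I will invoke Theorem~\ref{thm:dualrecover}. Its hypotheses are exactly those assumed here: the conditions of Theorem~\ref{thm:t_exact}, condition \eqref{eq:post_sign_rate}, and the sign structure. It gives $\Pr(\widehat{\mathcal S}_{\ell,i}=\mathcal S_{\ell,i}\ \forall i,\ell)\to1$. On the intersection $\mathcal B$ of this event with $\mathcal A_1^t$, $\widehat\rho_{\ell,i}=\sum_{j\in\mathcal S_{\ell,i}}\widehat\alpha_{i,j}$ is exactly the fixed linear map $\mathbf R_i$ applied to $\widehat{\boldsymbol\gamma}_i=\widehat{\boldsymbol\gamma}_i^o$. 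Either way, the feasible and oracle estimators are equal on an event $\mathcal B$ with $\Pr(\mathcal B)\to1$. This event is uniform over $i$, because Theorems~\ref{thm:t_exact} and \ref{thm:dualrecover} are stated for all $i$ simultaneously.

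\textbf{Step 3: conclude.} For any $\epsilon>0$,
\[
\Pr\Big(\max_i T^{1/2}\|\widehat{\boldsymbol\theta}_i-\widehat{\boldsymbol\theta}_i^o\|>\epsilon\Big)\le \Pr(\mathcal B^c)\to0,
\]
which is the uniform $o_p(T^{-1/2})$ statement. The argument needs no rate for the off-event difference, because the difference is controlled purely through the probability of the bad event.

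\textbf{Main obstacle.} The delicate point is Step 2: ensuring that the sign-based aggregation does not break the exact coincidence. Near-zero estimated coefficients could flip sign, but this is ruled out by \eqref{eq:post_sign_rate} together with $|\alpha_{i,j}|\ge b_T$, via Theorem~\ref{thm:dualrecover}. I would first make explicit which convention defines $\widehat{\boldsymbol\theta}_i^o$ and then apply the appropriate branch above.
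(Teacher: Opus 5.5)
Your proposal is correct and follows the paper's proof. The feasible and oracle estimators coincide exactly on $\mathcal A_1^t\cap\mathcal B_1^t$, where $\mathcal B_1^t$ is the event that every true-link coefficient estimate has the correct sign, and the uniform $o_p(T^{-1/2})$ conclusion follows because this event has probability tending to one.

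One small correction: Theorem~\ref{thm:eqoracle} does not assume Assumption~\ref{ass:dual}, so you should not cite Theorem~\ref{thm:dualrecover}. Obtain sign recovery directly from \eqref{eq:post_sign_rate} and the beta-min condition in Assumption~\ref{ass:t_signal}(i), as you already note in your final paragraph and as the paper does.
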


\begin{theorem}\label{thm:t_panel}
Suppose the assumptions of Theorem~\ref{thm:eqoracle} and Assumption~\ref{ass:panel} hold. Then
\[
\sqrt{NT}(\widehat{\boldsymbol{\theta}}_{MG}-\boldsymbol{\theta}_N)
\Rightarrow
\mathcal N(0,\mathbf{V}),
\]
with the same limiting variance as the oracle mean-group estimator.
\end{theorem}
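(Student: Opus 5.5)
The proof splits the mean-group estimator into an oracle part and a selection part:
\[
\sqrt{NT}(\widehat{\boldsymbol\theta}_{MG}-\boldsymbol\theta_N)
=\sqrt{NT}(\widehat{\boldsymbol\theta}_{MG}^o-\boldsymbol\theta_N)
+\sqrt{NT}\,N^{-1}\sum_{i=1}^N(\widehat{\boldsymbol\theta}_i-\widehat{\boldsymbol\theta}_i^o),
\]
where $\widehat{\boldsymbol\theta}_{MG}^o=N^{-1}\sum_i\widehat{\boldsymbol\theta}_i^o$ averages the oracle IV estimators computed on the true supports $\mathcal S_i^n$. We show the first term has the $\mathcal N(\mathbf 0,\mathbf V)$ limit and the second is $o_p(1)$. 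For the second term we do not use the $o_p(T^{-1/2})$ rate of Theorem~\ref{thm:eqoracle}. Averaged over $N$ and scaled by $\sqrt{NT}$, that rate only yields $o_p(\sqrt N)$, which is insufficient. Instead we use the exact-recovery event $\mathcal A_1^t$ of Theorem~\ref{thm:t_exact}. The post-selection instrument matrix is a fixed function of the selected support, and the oracle applies the same rule at $\mathcal S_i^n$. Hence on $\mathcal A_1^t$ we have $\widehat{\boldsymbol\theta}_i=\widehat{\boldsymbol\theta}_i^o$ identically for all $i$: the regressors, instruments and sign decomposition all coincide. The signed sets also coincide with probability tending to one by Theorem~\ref{thm:dualrecover}, which uses (PS). Therefore
\[
\Pr\Bigl(\sqrt{NT}\,N^{-1}\textstyle\sum_i(\widehat{\boldsymbol\theta}_i-\widehat{\boldsymbol\theta}_i^o)\neq\mathbf 0\Bigr)\le\Pr\bigl((\mathcal A_1^t)^c\bigr)\to0,
\]
so the second term is $o_p(1)$ regardless of scaling.

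For the oracle term, the first step is the equation-level linearisation. On the oracle equation $y_{i,t}=\mathbf c^{o\prime}_{i,t}\boldsymbol\gamma_i+\varepsilon_{i,t}$, the IV estimator satisfies
\[
\widehat{\boldsymbol\gamma}_i^o-\boldsymbol\gamma_i=\widehat{\mathbf L}_i\,T^{-1}\sum_t\mathbf z^o_{i,t}\varepsilon_{i,t},
\]
which holds exactly. Since $\widehat{\boldsymbol\theta}_i^o=\mathbf R_i\widehat{\boldsymbol\gamma}_i^o$, this gives
\[
\sqrt{NT}(\widehat{\boldsymbol\theta}_{MG}^o-\boldsymbol\theta_N)
=N^{-1/2}\sum_i\mathbf R_i\mathbf L_iT^{-1/2}\sum_t\mathbf z^o_{i,t}\varepsilon_{i,t}
+N^{-1/2}\sum_i\mathbf R_i(\widehat{\mathbf L}_i-\mathbf L_i)T^{-1/2}\sum_t\mathbf z^o_{i,t}\varepsilon_{i,t}.
\]
The second piece is $o_p(1)$ directly by Assumption~\ref{ass:panel}(iv). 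Invertibility of $\widehat{\mathbf B}_i$ and of $\widehat{\boldsymbol A}_i'\widehat{\mathbf B}_i^{-1}\widehat{\boldsymbol A}_i$ with probability approaching one, uniformly in $i$, follows from (iii)--(iv) and the eigenvalue bounds.

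The leading term is the aggregated oracle influence function $\mathbf R_i\mathbf Q_i^{-1}\boldsymbol A_i'\mathbf B_i^{-1}T^{-1/2}\sum_t\mathbf z^o_{i,t}\varepsilon_{i,t}$, and its convergence is the main obstacle. Under weak cross-sectional dependence the limit is assumed in (vi). Under cross-sectional independence we would argue as follows.
\begin{itemize}
\item By (i), (ii) and (v), a mixing CLT together with the $2+\delta$ moments gives that $\operatorname{Var}(T^{-1/2}\sum_t\mathbf z^o_{i,t}\varepsilon_{i,t})\to\mathbf\Omega_i$ uniformly in $i$.
\item The normalised summands are then independent across $i$ with covariances converging to $\mathbf\Xi_i$, whose average converges to $\mathbf V$.
\item A Lindeberg condition for the triangular array in $(N,T)$ follows from uniform $2+\delta$ moment bounds on $T^{-1/2}\sum_t\mathbf z^o_{i,t}\varepsilon_{i,t}$. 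These bounds come from a Rosenthal-type inequality for exponentially mixing sequences.
\item The Lindeberg--Feller theorem, combined with the Cramér--Wold device, then yields $\mathcal N(\mathbf 0,\mathbf V)$.
\end{itemize}
Finally, Slutsky's theorem combines the oracle limit with the $o_p(1)$ selection and linearisation terms. This gives the stated result with the oracle variance $\mathbf V$.
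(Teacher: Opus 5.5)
Your argument is correct and is essentially the paper's proof: the feasible and oracle mean-group estimators coincide exactly on the exact-recovery and sign-recovery event $\mathcal A_1^t\cap\mathcal B_1^t$, whose probability tends to one, so the selection term vanishes at any scaling. The oracle limit from Assumption~\ref{ass:panel} plus Slutsky then finishes, and your linearisation and Lindeberg--Feller steps only spell out what the paper takes directly from Assumption~\ref{ass:panel}. One small fix: because your linearisation writes $\widehat{\boldsymbol\theta}_i^o=\mathbf R_i\widehat{\boldsymbol\gamma}_i^o$ with the population sign map, the probability bound should be on the complement of $\mathcal A_1^t\cap\mathcal B_1^t$ rather than of $\mathcal A_1^t$ alone, and you already control $\mathcal B_1^t$ through (PS) and the beta-min condition.
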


Theorems~\ref{thm:eqoracle} and~\ref{thm:t_panel} show that network selection does not affect first-order inference for the fixed finite-population target $\boldsymbol\theta_N$. Since exact recovery holds with probability approaching one, the feasible post-selection IV estimator coincides asymptotically with the oracle estimator, and the mean-group estimator has the same limiting distribution as if the true support sets were known. Feasible inference additionally requires a consistent estimator of $\mathbf V$ under the maintained cross-sectional dependence conditions.

\section{Extensions and discussions}
\label{sec:discussion}

\subsection{Block Screening for General Network Models}
\label{sec:block_screening}

The framework developed above extends naturally to settings in which each
candidate link contributes multiple regressors. Such specifications arise in a
variety of network and spatial models. For example, a single interaction link
\((i,j)\) may generate both contemporaneous and lagged regressors,
\[
\alpha_{i,j}y_{j,t}
+
\gamma_{i,j}y_{j,t-1},
\]
while spatial Durbin specifications may augment interaction effects with
neighbours' covariates,
\[
\alpha_{i,j}y_{j,t}
+
\boldsymbol{\delta}_{i,j}'\mathbf x_{j,t}.
\]
More generally, a candidate link may contribute a fixed-dimensional block of
regressors rather than a single interaction term. In such settings, the object
of interest is not an individual coefficient but the presence or absence of a
link. Consequently, screening is performed at the block level by jointly
testing all regressors associated with a candidate link.

For candidate link \(j\), let
\[
\mathbf v_{i,j,H,t}
=
(v_{i,j,H,1,t},\ldots,v_{i,j,H,d,t})',
\]
denote the corresponding block of regressors, where the block dimension
\(d\) is fixed. Examples include
\[
\mathbf v_{i,j,H,t}
=
(y_{j,t},y_{j,t-1})'
\]
in dynamic network models, and
\[
\mathbf v_{i,j,H,t}
=
(y_{j,t},\mathbf x_{j,t}')'
\]
in spatial Durbin specifications.

The theoretical analysis closely parallels the scalar case. The key difference
is that signal strength is now measured by a population Wald criterion rather
than a scalar IV screening signal.

Define the population fitted block
\[
\bar{\mathbf V}^{*}_{i,j,H,t}
=
\boldsymbol{\Pi}_{V,i,j,H}'
\boldsymbol{\mathcal Z}_{i,j,H,t},
\qquad
\boldsymbol{\Pi}_{V,i,j,H}
=
(\mathbf Q^{\boldsymbol{\mathcal Z}}_{i,j,H})^{-1}
\E\!\left(
\boldsymbol{\mathcal Z}_{i,j,H,t}
\mathbf v_{i,j,H,t}'
\right),
\]
and let
\[
\boldsymbol{\Gamma}_{VC,i,j,H}
=
\left\{
\E\!\left(
\mathbf C^*_{i,j,H,t}
\mathbf C_{i,j,H,t}^{*\prime}
\right)
\right\}^{-1}
\E\!\left(
\mathbf C^*_{i,j,H,t}
\bar{\mathbf V}_{i,j,H,t}^{*\prime}
\right).
\]
The corresponding population partial fitted block is
\[
\mathbf V^*_{i,j,H,t}
=
\bar{\mathbf V}^*_{i,j,H,t}
-
\boldsymbol{\Gamma}_{VC,i,j,H}'
\mathbf C^*_{i,j,H,t},
\]
namely the fitted block after projecting out the fitted selected controls.

In this subsection, let $\boldsymbol\vartheta_{i,k}$ denote the coefficient vector on
the regressor block contributed by link $k$, and let
$\mathcal S_i^n=\{k:\|\boldsymbol\vartheta_{i,k}\|>0\}$ denote the corresponding
group support; the candidate partition and conditioning-set notation are understood
groupwise. Reinterpret the omitted component as
\[
\mu_{i,j,H,t}:=\mu^F_{i,H,t}
=
\sum_{k\in\mathcal S_i^n\setminus H}
\boldsymbol\vartheta_{i,k}'\mathbf v_{i,k,H,t}.
\]
Likewise, $\widehat\sigma_{i,j,H}^2$ and $\sigma_{i,j,H}^2$ denote, respectively,
the residual-variance estimator from the corresponding block-stage IV regression and
its population limit.

Define
\[
\mathbf G^*_{i,j,H}
=
\E\!\left(
\mathbf V^*_{i,j,H,t}
\mathbf V_{i,j,H,t}^{*\prime}
\right),
\qquad
\mathbf m^F_{i,j,H}
=
\E\!\left(
\mathbf V^*_{i,j,H,t}
\mu_{i,j,H,t}
\right),
\]
where \(\mathbf m^F_{i,j,H}\) is the population block screening score vector.
Define the normalised population block screening signal
\begin{equation}
Q^F_{i,j,H}
=
\frac{1}{\sigma_{i,j,H}^{2}}
\mathbf m_{i,j,H}^{F\prime}
(\mathbf G^*_{i,j,H})^{-1}
\mathbf m^F_{i,j,H}.
\label{eq:QF}
\end{equation}

Let
\[
\mathbf S_{i,j,H}=T^{-1/2}\widehat{\mathbf V}_{i,j,H}'\widetilde{\mathbf y}_{i,j,H},
\qquad
\mathbf G_{i,j,H}=T^{-1}\widehat{\mathbf V}_{i,j,H}'\widehat{\mathbf V}_{i,j,H}.
\]
The sample block screening quadratic is
\begin{equation}
F_{i,j,H}
=
\widehat\sigma_{i,j,H}^{-2}
\mathbf S_{i,j,H}'\mathbf G_{i,j,H}^{-1}\mathbf S_{i,j,H},
\label{eq:Fstat}
\end{equation}
which corresponds to the homoskedastic fitted-regressor covariance normalisation. Under
serial dependence it is used as a screening quadratic, not as a finite-sample Wald statistic.

For the block results, Assumptions~\ref{ass:sampling} and~\ref{ass:tails} are
imposed also on the centred scalar components of
\[
V^*_{i,j,H,t,a}\varepsilon_{i,t},
\qquad
V^*_{i,j,H,t,a}\mu_{i,j,H,t},
\qquad
V^*_{i,j,H,t,a}V^*_{i,j,H,t,b},
\]
where \(\mathbf V^*_{i,j,H,t}\) denotes the population partial fitted block.

\begin{assumption}
\label{ass:block_first_stage}
The Gram and sample-moment conditions in Assumptions~\ref{ass:iv}
and~\ref{ass:first_stage} are also imposed blockwise. In addition,
\[
\inf_{i,j,H}
\lambda_{\min}\left\{\E(\mathbf V^*_{i,j,H,t}
\mathbf V_{i,j,H,t}^{*\prime})\right\}>0,
\]
and the same uniform population variance bounds imposed in
Assumption~\ref{ass:iv} hold for the corresponding block-stage IV regressions.
Moreover, uniformly over low-dimensional conditioning sets $H\in\mathcal H_i$ and candidates
$j\notin H$,
\[
\sup_{i,j,H}
\left\|
T^{-1}\sum_{t=1}^T
\boldsymbol{\mathcal Z}_{i,j,H,t}\mathbf v_{i,j,H,t}'
-
\E(\boldsymbol{\mathcal Z}_{i,j,H,t}\mathbf v_{i,j,H,t}')
\right\|=o_p(1),
\]
\[
\sup_{i,j,H}
T^{-1}\sum_{t=1}^T
\|\widehat{\mathbf V}_{i,j,H,t}-\mathbf V^*_{i,j,H,t}\|^2
=o_p(1),
\]
and, componentwise,
\[
\sup_{i,j,H}\max_{1\le a\le d}
\left|
T^{-1/2}\sum_{t=1}^T
(\widehat V_{i,j,H,t,a}-V^*_{i,j,H,t,a})\varepsilon_{i,t}
\right|
=o_p(\sqrt{\log N}).
\]
The same componentwise bound holds with $\varepsilon_{i,t}$ replaced by
$\mu_{i,j,H,t}$.
In addition,
\[
\sup_{i,j,H}
|\widehat\sigma_{i,j,H}^2-\sigma_{i,j,H}^2|
=o_p(1).
\]
\end{assumption}

\begin{assumption}[Block true-link detectability]
\label{ass:F_signal}
For block screening, true links are detectable before completion:
\[
\inf_{\{(i,H):\,H\in\mathcal H_i^0\}}
\max_{j\in U_i(H)}
\frac{TQ^F_{i,j,H}}{\log N}
\to\infty.
\]
If \(\mathcal H_i^0=\varnothing\) for some equation \(i\), the condition is
vacuous for that equation.
In addition, irrelevant candidate blocks have negligible population screening signal:
\[
\sup_i\sup_{H\in\mathcal H_i}\max_{j\in\mathcal S_i^d}
\frac{TQ^F_{i,j,H}}{\log N}\to0.
\]
\end{assumption}

\begin{assumption}[Primitive block proxy-alignment condition]
\label{ass:F_dominance}
For block exact recovery, the following two conditions hold.

First, before completion, proxy block signals are uniformly bounded relative to
the strongest remaining true-link block signal. There exists a constant
\(\lambda_F\in[0,1)\) such that, uniformly over \(i\) and
\(H\in\mathcal H_i^0\),
\[
\max_{p\in P_i(H)} Q^F_{i,p,H}
\le
\lambda_F
\max_{s\in U_i(H)} Q^F_{i,s,H}.
\]

Second, after completion, remaining proxy links have no residual block
screening signal:
\[
\max_{p\in P_i(H)}
\frac{TQ^F_{i,p,H}}{\log N}
\to0
\]
uniformly over \(i\) and \(H\in\mathcal H_i^1\).
\end{assumption}

For the block-screening results, Assumption~\ref{ass:tails} is understood to apply also to the centred block score and Gram arrays entering the Wald statistic.

Let $\widehat{\mathcal S}_i^F$ denote the set selected by the block BOLMT procedure based on the statistic $F_{i,j,H}$ and threshold $c_N^F=C_F\log N$, where $C_F$ is a sufficiently large fixed constant.
Define the exact-recovery event
\begin{equation}
\mathcal A_1^F
=
\left\{
\widehat{\mathcal S}_i^F
=
\mathcal S_i^n
\text{ for all } i
\right\}.
\label{eq:A1F}
\end{equation}

\begin{theorem}
\label{thm:F_exact}
Suppose Assumptions~\ref{ass:sampling}--\ref{ass:tails},
\ref{ass:block_first_stage},
\ref{ass:F_signal},
and
\ref{ass:F_dominance}
hold.
Then
\[
\Pr(\mathcal A_1^F)
\to
1.
\]
\end{theorem}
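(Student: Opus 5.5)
The plan follows the architecture used for Theorem~\ref{thm:t_exact}, with the scalar $t$-statistic replaced by the block quadratic $F_{i,j,H}$ and the threshold $c_N^t=C_t\sqrt{\log N}$ replaced by $c_N^F=C_F\log N$. The proof has three stages: a uniform approximation of $F_{i,j,H}$ by its population counterpart over the deterministic class $\{(i,j,H):H\in\mathcal H_i,\,j\notin H\}$; a deterministic selection argument on the high-probability event where this approximation holds; and an induction showing that the realised BOLMT path stays inside $\mathcal H_i$.

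\textbf{Step 1 (uniform approximation).} Decompose the score as
\[
\mathbf S_{i,j,H}=T^{-1/2}\sum_t\mathbf V^*_{i,j,H,t}\varepsilon_{i,t}+T^{-1/2}\sum_t\mathbf V^*_{i,j,H,t}\mu_{i,j,H,t}+\mathbf r_{i,j,H}.
\]
This uses the fact that $\widehat{\mathbf M}^C$ annihilates the selected controls, so $\widetilde{\mathbf y}$ loads only on $\mu$, $\varepsilon$ and fitted-control residuals. The remainder $\mathbf r_{i,j,H}$ collects the feasible-to-population replacement terms, and it is $o_p(\sqrt{\log N})$ uniformly by Assumption~\ref{ass:block_first_stage}.

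Apply the DGK concentration envelopes (Regime E or H of Assumption~\ref{ass:tails}) componentwise to the centred arrays. Take a union bound over at most $N\cdot N\cdot 2^{\bar K}\cdot d$ indices; since $\bar K$ and $d$ are fixed, this is $O(N^2)$, which is exactly what \eqref{eq:growth_E} and \eqref{eq:growth_H} control. This gives, uniformly:
\begin{itemize}
\item $\|T^{-1/2}\sum_t\mathbf V^*\varepsilon\|\le C\sqrt{\log N}$;
\item $\|T^{-1/2}\sum_t(\mathbf V^*\mu-\mathbf m^F)\|\le C\sqrt{\log N}$;
\item $\|\mathbf G_{i,j,H}-\mathbf G^*_{i,j,H}\|=o_p(1)$ and $\widehat\sigma^2-\sigma^2=o_p(1)$.
\end{itemize}
Using the eigenvalue bounds on $\mathbf G^*$ and $\sigma^2$, it follows that uniformly
\[
\sqrt{F_{i,j,H}}=\sqrt{T\,Q^F_{i,j,H}}\,(1+o_p(1))+O_p(\sqrt{\log N}).
\]
This is a triangle inequality in the $(\mathbf G^*)^{-1}$-norm: $\sqrt{F}$ is a norm of $\mathbf S$ divided by $\widehat\sigma$, and $\sqrt{TQ^F}$ is the same norm of $\sqrt T\,\mathbf m^F$ divided by $\sigma$.

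\textbf{Step 2 (deterministic argument on the good event).} Work on the event from Step 1, and consider the three types of candidate.
\begin{itemize}
\item \emph{Irrelevant candidates.} For $j\in\mathcal S_i^d$ and every $H$, $TQ^F=o(\log N)$ by Assumption~\ref{ass:F_signal}, so $F\le C'\log N<C_F\log N$ once $C_F$ is large.
\item \emph{Pre-completion, true links.} For $H\in\mathcal H_i^0$, the best true candidate has $TQ^F/\log N\to\infty$, so its $F$ diverges relative to $\log N$ and exceeds the threshold.
\item \emph{Pre-completion, proxies.} By Assumption~\ref{ass:F_dominance}, $\sqrt{Q^F_p}\le\sqrt{\lambda_F}\sqrt{Q^F_{s^*}}$ with $\sqrt{\lambda_F}<1$. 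Because $\sqrt{TQ^F_{s^*}}$ dominates $\sqrt{\log N}$, the multiplicative gap survives the additive $O_p(\sqrt{\log N})$ error. Hence the argmax is a true link, since irrelevant candidates are below threshold.
\item \emph{Post-completion.} For $H\in\mathcal H_i^1$, proxies have $TQ^F=o(\log N)$ and irrelevant units are below threshold, so no candidate exceeds $C_F\log N$ and the algorithm stops.
\end{itemize}

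\textbf{Step 3 (induction along the path).} Start from $H_{i,0}=\varnothing\in\mathcal H_i^0$, or in $\mathcal H_i^1$ if $\mathcal S_i^n=\varnothing$. Step 2 shows that each pre-completion step adds an element of $\mathcal S_i^n$, so the path stays in $\mathcal H_i$. After at most $\bar k$ steps it reaches $\mathcal S_i^n$ and stops there. Since the good event holds simultaneously for all $i$ with probability tending to one, $\Pr(\mathcal A_1^F)\to1$.

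\textbf{Main obstacle.} I expect the hardest part to be Step 1's treatment of $\mathbf r_{i,j,H}$ and of the projection structure of $\widetilde{\mathbf y}$. The partialling uses fitted controls, so $\widehat{\mathbf M}^C\mathbf y_i$ contains terms like $(\mathbf C-\widehat{\mathbf C})\boldsymbol\gamma$ that are not annihilated. I would handle these by expressing $\widehat{\mathbf V}'\widehat{\mathbf M}^C(\mathbf C-\widehat{\mathbf C})$ through the instrument projection. Because $\widehat{\mathbf V}$ lies in the column space of $\boldsymbol{\mathcal Z}$ and is orthogonal to $\widehat{\mathbf C}$, it satisfies $\widehat{\mathbf V}'\mathbf C=\widehat{\mathbf V}'\widehat{\mathbf C}=0$ exactly. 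This reduces $\mathbf r_{i,j,H}$ to the replacement terms already controlled by Assumption~\ref{ass:block_first_stage}.
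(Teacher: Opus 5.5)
Your proposal is correct and follows essentially the same route as the paper: uniform DGK and first-stage bounds over the $O(N^2)$ deterministic class, the exact score identity $\widehat{\mathbf V}'\mathbf C_{i,H}=\mathbf 0$, an expansion of $F_{i,j,H}$ around $TQ^F_{i,j,H}$, the case analysis for irrelevant, true and proxy candidates before and after completion, and induction along the selection path. Working with $\sqrt{F_{i,j,H}}$ through the triangle inequality is simply a tidier way of absorbing the cross term $O_p(\sqrt{TQ^F_{i,j,H}\log N})$, which the paper's Lemma~\ref{lem:app_F_signal} carries explicitly.
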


Theorem~\ref{thm:F_exact} establishes exact recovery of the structural support under block screening. The result applies to dynamic network models, spatial Durbin specifications, and more general settings in which each link contributes a fixed-dimensional block of regressors.

The block results can be viewed as tests of a candidate group.  The population noncentrality \(Q^F_{i,j,H}\) aggregates information across the components of the block.  A block may be powerful when the signal is spread across several variables or transformations, even if no single component has a dominant scalar signal.  Because the statistic is quadratic, the threshold is of order \(\log N\), and the proxy-dominance condition must dominate both the \(\log N\) stochastic component and the quadratic cross term in the block expansion.

\subsection{Common Factors}

Many applications involve pervasive common shocks that generate cross-sectional dependence beyond the network structure itself. Suppose the disturbance admits the factor representation
\[
\varepsilon_{i,t}
=
\boldsymbol{\lambda}_i'
\mathbf f_t
+
u_{i,t},
\]
where the number of factors is fixed.

A factor-augmented implementation includes observed factors or estimated factor proxies among the always-included regressors and in the first-stage projections. Let $\widehat{\mathbf F}$ denote the matrix of observed or estimated factors and let $\mathbf M_{\widehat F}$ denote the corresponding residual-maker matrix. The residual factor component must be negligible on the screening-score scale. For scalar screening this requires
\[
\sup_{i,j,H}
\left|
T^{-1/2}
\sum_{t=1}^T
v^*_{i,j,H,t}
(\mathbf M_{\widehat F}\mathbf F\boldsymbol{\lambda}_i)_t
\right|
=
o_p(\sqrt{\log N}),
\]
and for block screening it requires
\[
\sup_{i,j,H}\max_{1\le a\le d}
\left|
T^{-1/2}
\sum_{t=1}^T
V^*_{i,j,H,t,a}
(\mathbf M_{\widehat F}\mathbf F\boldsymbol{\lambda}_i)_t
\right|
=
o_p(\sqrt{\log N}).
\]
If, in addition, the factor-purged idiosyncratic errors satisfy the same mixing, tail,
first-stage, and signal conditions as the baseline disturbances, and the feasible
factor-replacement bounds in Lemma~\ref{lem:app_factor} hold, then the scalar and block
selection results apply to the purged variables.

The role of factor adjustment is particularly important in the present setting because latent common factors provide a natural source of proxy links. Even when a candidate unit does not enter the structural equation directly, it may exhibit a nonzero screening signal if its outcome co-moves with the true links through common macroeconomic, sectoral, or market-wide shocks. In this sense, factor-induced dependence can create spurious network connections that resemble genuine interaction effects.

\begin{remark}
Suppose that the outcomes of different units are driven partly by a common factor. Then candidate links that do not belong to the true support of equation $i$ may nevertheless exhibit nonzero screening signals because they co-move with the true links through the common factor.
In this setting, Assumption~\ref{ass:t_dominance} requires that the direct contribution of at least one omitted true link remains larger than the indirect contribution generated by any proxy link through the common factor. This condition is plausible whenever the common factor does not induce stronger dependence between a proxy link and the outcome than the dependence generated by the underlying structural interaction itself. In such cases, factor adjustment further strengthens the separation between true and proxy links by removing a common source of cross-sectional dependence.
\end{remark}

The preceding argument establishes that the screening and recovery results continue to hold after factor adjustment, provided that the residual factor component is asymptotically negligible and the purged idiosyncratic errors satisfy the maintained regularity conditions. The extension is particularly relevant in economic and financial applications where latent network interactions coexist with pervasive common shocks.


\subsection{Scalar versus Block Screening}

Scalar screening is natural when each candidate link contributes a single interaction regressor. This includes the dual-network model developed in Sections~2--4, where each candidate contributes a contemporaneous interaction term.
Block screening is appropriate when each candidate link contributes multiple regressors simultaneously. Examples include dynamic network models involving contemporaneous and lagged interaction effects, spatial Durbin specifications involving links' covariates, and more general contextual-effect models.

The two procedures need not select identical interaction structures in finite samples. A scalar statistic may select a link because one component generates a strong signal. By contrast, a block statistic may select a link whose joint contribution is substantial even when each individual component is only moderately informative. In empirical applications it may therefore be useful to examine the robustness of the selected interaction structure across scalar and block implementations.

\subsection{Scope of the Theory}

The results are asymptotic and are stated under exact sparsity of the relevant local interaction structure. Approximate sparsity can be accommodated by allowing sufficiently small interaction effects to enter the proxy set or approximation error. In that case, however, the target becomes an approximating interaction structure rather than exact support recovery. The present formulation keeps the structural target explicit.

The theory is fundamentally local and proceeds equation by equation. It does not require global graph restrictions such as symmetry, bounded degree, connectedness, or particular topological features of the interaction network. Such restrictions may be useful in specific applications, but they are not needed for the row-wise selection arguments developed here.

Finally, the framework is agnostic regarding the economic mechanism generating the interaction structure. The same methodology applies to spatial, social, financial, production, trade, or other network environments, provided suitable instruments are available and the required signal conditions are satisfied.

\subsection{Implementation}

In empirical applications, it is useful to report the estimated degree distribution, the sequence of selected links for representative equations, and diagnostics for the associated first-stage regressions. Weak instruments affect both screening and post-selection estimation and may compromise network recovery. In particular, if the fitted-regressor variance is close to zero, the relevance conditions underlying the screening statistics may not be credible for that candidate.

The scalar critical value can be implemented using the threshold $c_p(N,\delta)$ defined in
Eq. \eqref{eq:cvf}. For block screening, the analogous critical value is based on Wald-type
quantiles with tail probability proportional to $N^{-\delta}$. These choices correspond to
thresholds of order $\sqrt{\log N}$ for scalar statistics and $\log N$ for block statistics.
The theory does not rely on exact finite-sample normal or Wald distributions. The thresholds
should instead be interpreted as high-dimensional screening thresholds; uniform false-selection
control requires the sufficiently large leading constants specified in the recovery theorems.

\section{Monte Carlo Study}\label{sec:mc}

\subsection{Design}

We examine the finite-sample performance of the proposed BOLMT
procedure through Monte Carlo experiments based on the following
dual-network interaction model:
\begin{equation}
y_{i,t}
=
\eta_i
+
\beta_1 x_{1,i,t}
+
\beta_2 x_{2,i,t}
+
\rho_1
\sum_{j\neq i}
w_{1,i,j}y_{j,t}
+
\rho_2
\sum_{j\neq i}
w_{2,i,j}y_{j,t}
+
\varepsilon_{i,t},
\label{eq:MCdgp1}
\end{equation}
for
$i=1,\ldots,N$
and
$t=1,\ldots,T$.
The disturbance term satisfies
$\varepsilon_{i,t} \sim
i.i.d. N(0,\sigma_{\varepsilon}^{2})$,
while the individual effects $\eta_i$ are generated independently from a standard normal distribution.

The covariates follow the processes
\begin{equation}
x_{\ell,i,t}
=
\eta_{\ell,i}
+
\rho_x x_{\ell,i,t-1}
+
\sqrt{1-\rho_x^2}\,
v_{\ell,i,t},
\qquad
\ell=1,2,
\label{eq:MCx}
\end{equation}
where the innovation vector
$\boldsymbol v_{\ell,t}
= (v_{\ell,1,t},\ldots,v_{\ell,N,t})'$
is cross-sectionally correlated according to
\[
\boldsymbol v_{\ell,t}
\sim
N(\boldsymbol 0,\boldsymbol\Sigma_v),
\qquad
(\Sigma_v)_{ij}
=
\sigma_v^2\rho_x^{|i-j|}.
\]
Thus, $\rho_x$ controls both the persistence of the covariates through the autoregressive component and the cross-sectional correlation of the innovations through $\boldsymbol\Sigma_v$.
The individual-specific covariate effects are correlated with the
fixed effects according to
\begin{equation}
\eta_{\ell,i}
=
\rho_{\eta}\eta_i
+
\sqrt{1-\rho_{\eta}^2}\,
\xi_{\ell,i},
\label{eq:MCeta}
\end{equation}
where
$\xi_{\ell,i}\sim i.i.d.N(0,1)$.
Accordingly, $\rho_{\eta}$ determines the degree of correlation between the covariates and the
individual effects.

We consider three alternative dual-network designs.

\medskip

\noindent
\textbf{Design 1: Circular network.}
Each unit is connected to four linking units arranged on a
circular lattice (see e.g., \citet{KapoorKelejianPrucha2007}).
Two links are assigned to the displacement network
$\boldsymbol W_1$
and two links to the reinforcing network
$\boldsymbol W_2$.
The displacement network is asymmetric, with row weights
$(0.75,0.25)$,
whereas the reinforcing network is symmetric with weights
$(0.5,0.5)$.

\medskip

\noindent
\textbf{Design 2: Block-diagonal network.}
Units are partitioned into independent groups of size five.
Within each block, each unit interacts with four other units.
Two links are assigned to
$\boldsymbol W_1$
and two links to
$\boldsymbol W_2$.
As in Design 1, the displacement network is asymmetric and the
reinforcing network is symmetric.

\medskip

\noindent
\textbf{Design 3: Random network with a dominant unit.}
For each unit, links are generated randomly subject to sparsity. Each unit is connected to two displacement links and two reinforcing links. One unit acts as a dominant hub in the reinforcing network, thereby generating heterogeneous network centrality and stronger dependence propagation.

In all designs, the supports of $\boldsymbol W_1$ and $\boldsymbol W_2$ are disjoint and each row is normalised to sum to unity within each network separately.

Let
\[
\boldsymbol A
=
\rho_1\boldsymbol W_1
+
\rho_2\boldsymbol W_2.
\]
The outcome process is generated according to
\[
\boldsymbol y_t
=
(\boldsymbol I-\boldsymbol A)^{-1}
(
\boldsymbol\eta
+
\beta_1\boldsymbol x_{1,t}
+
\beta_2\boldsymbol x_{2,t}
+
\boldsymbol\varepsilon_t
).
\]
The signal-to-noise ratio is calibrated using the spectral radius of
$\boldsymbol A$:
\begin{equation}
SNR
=
\frac{\psi_A}{\sigma_{\varepsilon}^{2}},
\qquad
\psi_A
=
\max
|\lambda(\boldsymbol A)|.
\label{eq:SNR}
\end{equation}
Throughout the experiments we fix $SNR=5$ and set $\sigma_{\varepsilon}^{2}=\psi_A/SNR$. The parameter values are fixed at $\beta_1=2, \beta_2=-1, \rho_1=0.5, \rho_2=-0.4$. We consider all combinations of $N\in\{25,50,100,200\}, T\in\{25,50,100,200\}$. Finally, for the critical value function in Eq.~(\ref{eq:cvf}), we set $\delta=1, c=1, p=0.05$. All experiments are based on $5,000$ Monte Carlo replications.

\subsection{Estimation}

All variables are first transformed using the within transformation in
order to eliminate the individual-specific fixed effects. For notational
simplicity, the transformed variables are denoted using the original
notation throughout.

In addition to the proposed BOLMT algorithm developed in this paper, we consider two alternative link-selection methods, based on the OCMT and MOCMT procedures of \citet{chu2018}. These procedures have recently been adapted to network estimation by \citet{LiBhattacharjee2024}. OCMT applies a single-stage multiple-testing selection rule, whereby all candidate links whose associated $t$-statistics exceed the critical value threshold are selected simultaneously. MOCMT extends this procedure to multiple iterations by updating the conditioning set after each selection round and then re-testing the remaining candidate links. Thus, both OCMT and MOCMT admit all statistically significant candidate links at a given iteration. By contrast, BOLMT proceeds in a more sequential and less greedy manner. At each step, if at least one remaining candidate exceeds the threshold, BOLMT selects only the single link associated with the largest absolute $t$-statistic before updating the conditioning set and continuing to the next step. If no remaining candidate exceeds the threshold, the procedure stops and the current selected set is retained. We index the three procedures by $h=1,2,3$, where $h=1$ corresponds to BOLMT, $h=2$ to OCMT, and $h=3$ to MOCMT.

For each unit $i$, the selected support set obtained from procedure $h$
is denoted by $\widehat{\mathcal S}^{(h)}_i$, where $h=1$ corresponds to BOLMT, $h=2$ to OCMT, and $h=3$ to MOCMT, with cardinality
$\widehat{k}^{(h)}_i
=
|\widehat{\mathcal S}^{(h)}_i|$.
Conditional on the selected support, we estimate the restricted
structural equation
\begin{equation}
\mathbf y_i
=
\mathbf X_i\boldsymbol\beta_i
+
\sum_{j\in\widehat{\mathcal S}^{(h)}_i}
\alpha_{i,j}\mathbf y_j
+
\boldsymbol\varepsilon_i,
\label{eq:MCpost}
\end{equation}
by instrumental variables.

Define the regressor matrix
\begin{equation}
\widehat{\mathbf C}^{(h)}_i
=
\left(
\mathbf X_i,
\{\mathbf y_j:j\in\widehat{\mathcal S}^{(h)}_i\}
\right),
\label{eq:Cihat}
\end{equation}
and the instrument matrix
\begin{equation}
\widehat{\mathbf Z}^{(h)}_i
=
\left(
\mathbf X_i,
\{\mathbf X_j:j\in\widehat{\mathcal S}^{(h)}_i\}
\right).
\label{eq:Zihat}
\end{equation}

The corresponding individual-specific IV estimator is given by
\begin{equation}
\widehat{\boldsymbol\theta}^{(h)}_{i}
=
\left(
\widehat{\mathbf G}^{(h)\prime}_{i,T}
(\widehat{\mathbf Q}^{(h)}_{i,T})^{-1}
\widehat{\mathbf G}^{(h)}_{i,T}
\right)^{-1}
\widehat{\mathbf G}^{(h)\prime}_{i,T}
(\widehat{\mathbf Q}^{(h)}_{i,T})^{-1}
\widehat{\mathbf g}^{(h)}_{i,T},
\label{eq:IVi}
\end{equation}
where
\begin{equation}
\widehat{\mathbf G}^{(h)}_{i,T}
=
\frac{1}{T}
\widehat{\mathbf Z}^{(h)\prime}_i
\widehat{\mathbf C}^{(h)}_i; \quad
\widehat{\mathbf Q}^{(h)}_{i,T}
=
\frac{1}{T}
\widehat{\mathbf Z}^{(h)\prime}_i
\widehat{\mathbf Z}^{(h)}_i; \quad
\widehat{\mathbf g}^{(h)}_{i,T}
=
\frac{1}{T}
\widehat{\mathbf Z}^{(h)\prime}_i
\mathbf y_i.
\end{equation}
The vector $\widehat{\boldsymbol\theta}^{(h)}_{IV,i}$ contains the IV estimates of the slope coefficients together with the estimated interaction coefficients associated with the selected links. Following the decomposition introduced in Section~\ref{sec:psel_estimation}, the estimated interaction coefficients are assigned to the reinforcing and displacement networks according to the sign restrictions imposed in
the dual-network representation, yielding the corresponding estimates of $\widehat\rho^{(h)}_{1,i}$ and $\widehat\rho^{(h)}_{2,i}$, together with the associated row-normalised interaction weights.

Finally, the mean-group estimator is constructed by averaging the
individual-specific estimates across cross-sectional units:
\begin{equation}
\widehat{\boldsymbol\theta}^{(h)}_{MG}
=
\frac{1}{N}
\sum_{i=1}^{N}
\widehat{\boldsymbol\theta}^{(h)}_i.
\label{eq:MG}
\end{equation}

\subsection{Results}

\subsubsection{Recovery of the network structure}
Table \ref{tab:summary_results} and Figure \ref{fig:MAD} report the finite-sample performance of BLOMT and the aforementioned alternative link-selection procedures across the various network designs. Figures \ref{fig:TPR}--\ref{fig:FDR} in Appendix~B provide additional visual evidence on the decomposition of network recovery performance into true and false link selection rates.

The recovery measures are defined in terms of the true and estimated support matrices.
Let
\[
s_{i,j}
=
\mathbf 1(\alpha_{i,j}\neq 0),
\qquad
\widehat s_{i,j}
=
\mathbf 1(\widehat\alpha_{i,j}\neq 0),
\qquad i\neq j,
\]
and let $\boldsymbol S=(s_{i,j})$ and $\widehat{\boldsymbol S}=(\widehat s_{i,j})$ denote the
corresponding true and estimated support matrices. The mean absolute deviation is
\[
\mathrm{MAD}
=
\frac{1}{N(N-1)}
\sum_{i=1}^N
\sum_{j\neq i}
\left|
\widehat s_{i,j}-s_{i,j}
\right|.
\]
Equivalently, MAD is the fraction of pairwise link decisions that are incorrectly classified.

Let
\[
\mathrm{TP}
=
\sum_{i=1}^N\sum_{j\neq i}
\mathbf 1(\widehat s_{i,j}=1,s_{i,j}=1),
\qquad
\mathrm{FP}
=
\sum_{i=1}^N\sum_{j\neq i}
\mathbf 1(\widehat s_{i,j}=1,s_{i,j}=0),
\]
\[
\mathrm{FN}
=
\sum_{i=1}^N\sum_{j\neq i}
\mathbf 1(\widehat s_{i,j}=0,s_{i,j}=1),
\qquad
\mathrm{TN}
=
\sum_{i=1}^N\sum_{j\neq i}
\mathbf 1(\widehat s_{i,j}=0,s_{i,j}=0).
\]
The true positive rate, false positive rate, true discovery rate, and false discovery rate are
then defined as
\[
\mathrm{TPR}
=
\frac{\mathrm{TP}}{\mathrm{TP}+\mathrm{FN}},
\qquad
\mathrm{FPR}
=
\frac{\mathrm{FP}}{\mathrm{FP}+\mathrm{TN}},
\]
and
\[
\mathrm{TDR}
=
\frac{\mathrm{TP}}{\mathrm{TP}+\mathrm{FP}},
\qquad
\mathrm{FDR}
=
\frac{\mathrm{FP}}{\mathrm{TP}+\mathrm{FP}}.
\]
Thus, TPR measures the fraction of true links that are recovered, FPR measures the fraction of non-links that are incorrectly selected, TDR measures the fraction of selected links that are true links, and FDR measures the fraction of selected links that are false links.

The proposed BOLMT procedure performs remarkably well across nearly all Monte Carlo designs. In particular, BOLMT achieves small misclassification rates, with median MAD values close to zero in both the full sample and the restricted sample with $T>25$. The procedure also delivers near-perfect true positive rates and true discovery rates, especially for moderate and large values of $T$. At the same time, both the false positive rate and the false discovery rate remain very small across almost all designs. These findings indicate that the proposed sequential selection mechanism is highly effective in
recovering the underlying sparse interaction structure while avoiding spurious link selection.

By contrast, both OCMT and MOCMT exhibit substantially weaker recovery performance. Although the two procedures often attain relatively high true positive rates, this comes at the cost of substantially larger false positive and false discovery rates. In particular, the FDR results indicate that a considerable proportion of the selected links are spurious across most designs. This problem persists even in larger samples and is especially pronounced in the more challenging network structures.

Figure~\ref{fig:MAD} reports MAD values across all 48 Monte Carlo designs, grouped by network structure and different $(N,T)$ combinations. As shown,  BOLMT performs consistently well across all three network structures. Recovery accuracy improves as $T$ increases, and by $T=50$ the
procedure delivers very low mis-classification rates in most cases. This pattern remains stable even under the random network with a dominant unit, where higher-order
interaction effects are expected to propagate more strongly through the network.

On the other hand, OCMT and MOCMT display noticeably higher MAD values, indicating less accurate
support recovery. OCMT performs worst in most designs, especially when $T=25$, while MOCMT usually improves on OCMT but remains well above BOLMT. The deterioration is particularly pronounced under the random network with a dominant unit, where the denser selection rules appear more vulnerable to spurious links generated by indirect dependence propagation.

The differences between the procedures appear to reflect their distinct selection mechanisms.
OCMT and MOCMT admit all candidates whose $t$-statistics exceed the multiple-testing
threshold at a given iteration. This aggressive rule allows pseudo links generated by indirect
dependence propagation to enter alongside true structural links and to retain spurious
explanatory power in later iterations. BOLMT, by contrast, selects at most one link at each
iteration and updates the conditioning set before reconsidering the remaining candidates.
This less greedy strategy reduces premature admission of pseudo links and improves
discrimination between direct and indirect dependence.

\begin{table}[p]
\centering
\caption{Summary performance measures across Monte Carlo designs}
\label{tab:summary_results}
\small
\renewcommand{\arraystretch}{1.15}
\setlength{\tabcolsep}{4pt}
\begin{tabular}{lccccc|ccccc}
\hline
& \multicolumn{5}{c|}{\textbf{Panel A: All $T$}}
& \multicolumn{5}{c}{\textbf{Panel B: $T>25$}} \\
\cline{2-11}

& \multicolumn{10}{c}{\textbf{MAD}} \\
\hline
 & Median & IQR & Min & Max & $\Pr(\text{MAD}>0.1)$
 & Median & IQR & Min & Max & $\Pr(\text{MAD}>0.1)$ \\
\hline
BOLMT & 0.00 & 0.01 & 0.00 & 0.04 & 0.0\%
       & 0.00 & 0.00 & 0.00 & 0.01 & 0.0\% \\
OCMT  & 0.04 & 0.06 & 0.04 & 0.23 & 20.8\%
       & 0.04 & 0.06 & 0.01 & 0.23 & 19.4\% \\
MOCMT & 0.04 & 0.05 & 0.04 & 0.27 & 14.6\%
       & 0.04 & 0.06 & 0.01 & 0.27 & 16.7\% \\

\hline
& \multicolumn{10}{c}{\textbf{TPR}} \\
\hline
 & Median & IQR & Min & Max & $\Pr(\text{TPR}>0.8)$
 & Median & IQR & Min & Max & $\Pr(\text{TPR}>0.8)$ \\
\hline
BOLMT & 1.00 & 0.02 & 0.55 & 1.00 & 89.6\%
       & 1.00 & 0.00 & 0.98 & 1.00 & 100.0\% \\
OCMT  & 0.71 & 0.36 & 0.71 & 0.99 & 33.3\%
       & 0.78 & 0.20 & 0.47 & 0.99 & 44.4\% \\
MOCMT & 1.00 & 0.05 & 1.00 & 1.00 & 83.3\%
       & 1.00 & 0.00 & 0.94 & 1.00 & 100.0\% \\

\hline
& \multicolumn{10}{c}{\textbf{FPR}} \\
\hline
 & Median & IQR & Min & Max & $\Pr(\text{FPR}>0.1)$
 & Median & IQR & Min & Max & $\Pr(\text{FPR}>0.1)$ \\
\hline
BOLMT & 0.00 & 0.00 & 0.00 & 0.03 & 0.0\%
       & 0.00 & 0.00 & 0.00 & 0.01 & 0.0\% \\
OCMT  & 0.03 & 0.04 & 0.03 & 0.25 & 10.4\%
       & 0.03 & 0.05 & 0.01 & 0.25 & 13.9\% \\
MOCMT & 0.03 & 0.05 & 0.03 & 0.32 & 12.5\%
       & 0.04 & 0.06 & 0.01 & 0.32 & 16.7\% \\

\hline
& \multicolumn{10}{c}{\textbf{TDR}} \\
\hline
 & Median & IQR & Min & Max & $\Pr(\text{TDR}>0.8)$
 & Median & IQR & Min & Max & $\Pr(\text{TDR}>0.8)$ \\
\hline
BOLMT & 0.98 & 0.05 & 0.63 & 0.99 & 93.8\%
       & 0.98 & 0.01 & 0.93 & 0.99 & 100.0\% \\
OCMT  & 0.59 & 0.14 & 0.59 & 0.80 & 0.0\%
       & 0.58 & 0.13 & 0.36 & 0.79 & 0.0\% \\
MOCMT & 0.62 & 0.17 & 0.62 & 0.83 & 4.2\%
       & 0.60 & 0.17 & 0.34 & 0.82 & 2.8\% \\

\hline
& \multicolumn{10}{c}{\textbf{FDR}} \\
\hline
 & Median & IQR & Min & Max & $\Pr(\text{FDR}>0.1)$
 & Median & IQR & Min & Max & $\Pr(\text{FDR}>0.1)$ \\
\hline
BOLMT & 0.02 & 0.05 & 0.01 & 0.37 & 18.8\%
       & 0.02 & 0.01 & 0.01 & 0.07 & 0.0\% \\
OCMT  & 0.41 & 0.14 & 0.41 & 0.64 & 100.0\%
       & 0.42 & 0.13 & 0.21 & 0.64 & 100.0\% \\
MOCMT & 0.38 & 0.17 & 0.38 & 0.66 & 100.0\%
       & 0.40 & 0.17 & 0.18 & 0.66 & 100.0\% \\

\hline
\end{tabular}
\vspace{0.2cm}
\begin{minipage}{0.98\textwidth}
\footnotesize
\textit{Notes:} MAD denotes the mean absolute deviation between the estimated and true
support matrices. TPR is the true positive rate, defined as the fraction of true links that
are selected. FPR is the false positive rate, defined as the fraction of true non-links that
are selected. TDR is the true discovery rate, defined as the fraction of selected links that
are true links. FDR is the false discovery rate, defined as the fraction of selected links
that are false links. Panel A summarises results across all $48$ designs, while Panel B
restricts attention to designs with $T>25$, leading to $36$ designs.
\end{minipage}
\end{table}

\clearpage
\begingroup
\pagestyle{empty}
\thispagestyle{empty}
\begin{figure}[p]
\centering
\begin{minipage}{0.8\linewidth}
\centering
\includegraphics[width=\linewidth]{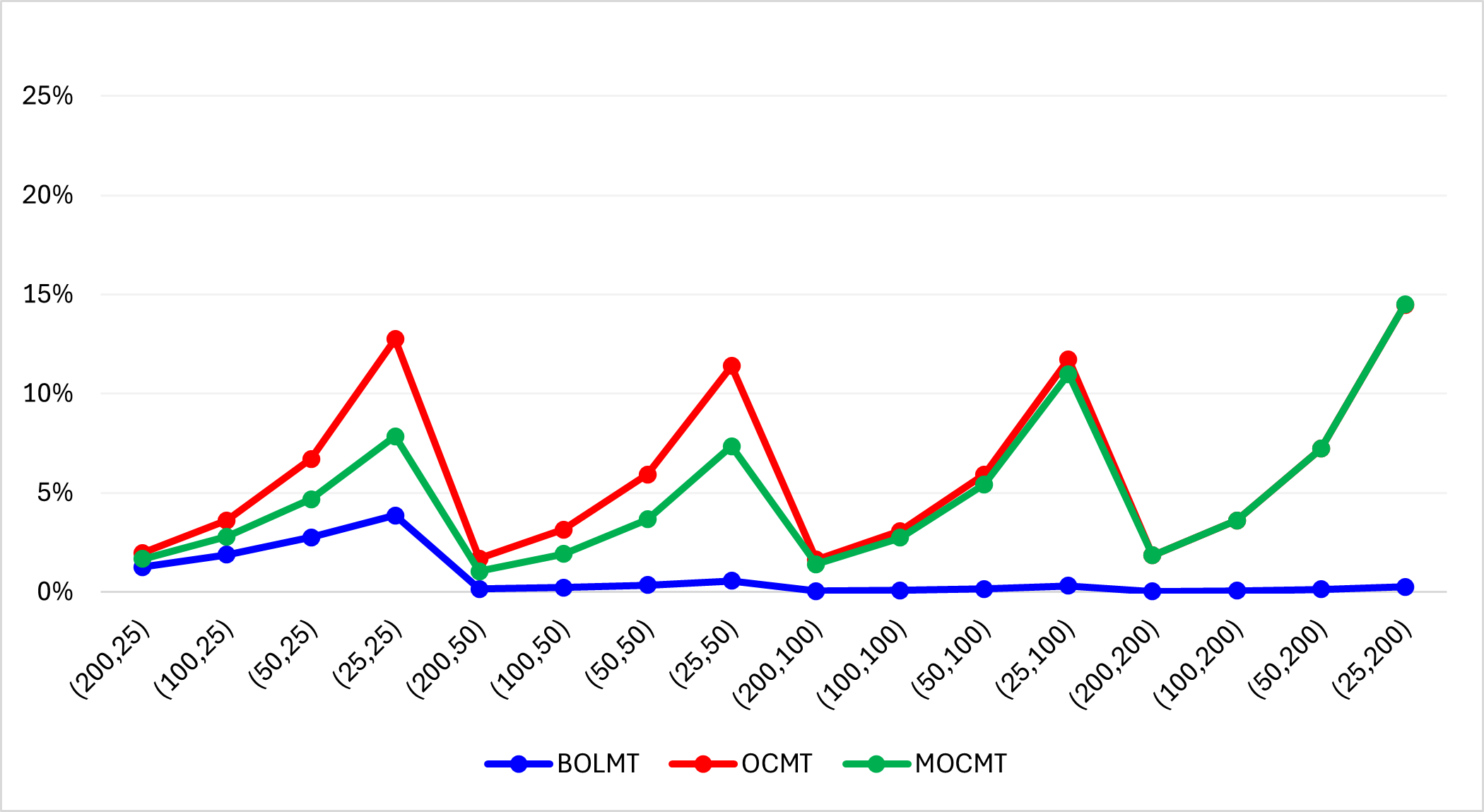}\\
\par\vspace{-0.5em}
\textbf{Circular $\boldsymbol{W}$}
\end{minipage}
\par\vspace{1.1em}
\begin{minipage}{0.8\linewidth}
\centering
\includegraphics[width=\linewidth]{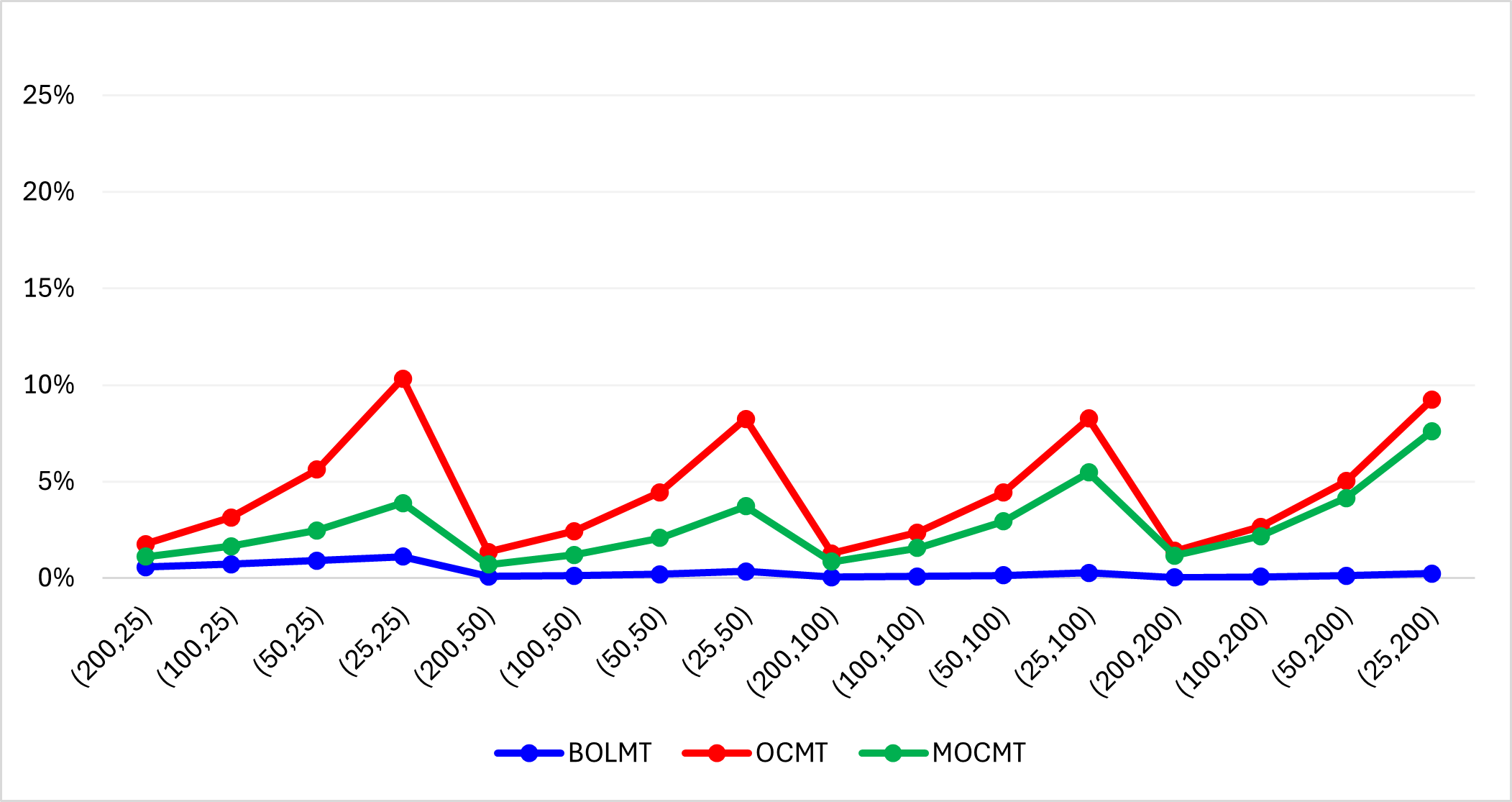}\\
\par\vspace{-0.5em}
\textbf{Block-Diagonal $\boldsymbol{W}$}
\end{minipage}
\par\vspace{1.1em}
\begin{minipage}{0.8\linewidth}
\centering
\includegraphics[width=\linewidth]{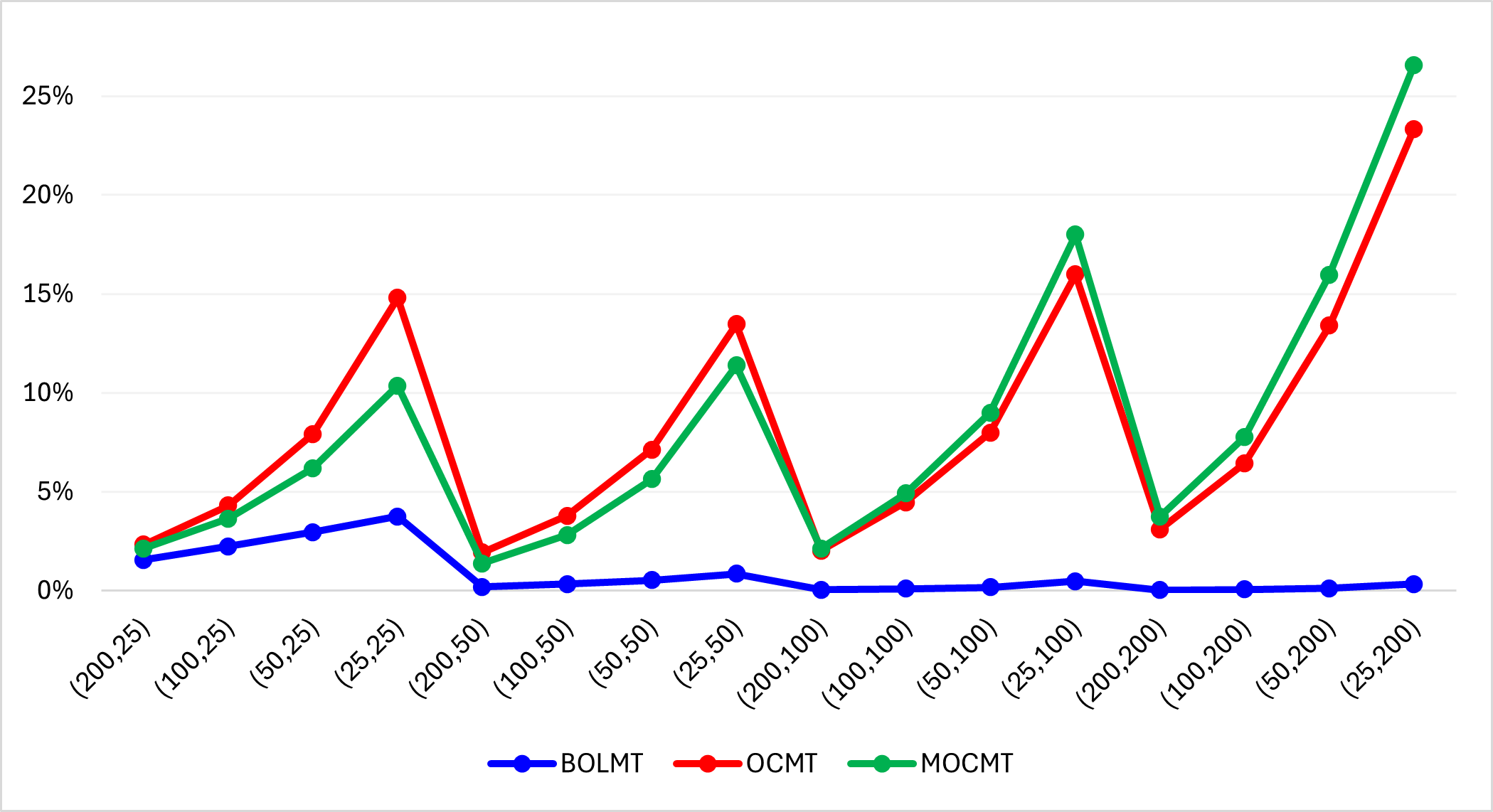}\\
\par\vspace{-0.5em}
\textbf{Random $\boldsymbol{W}$ with Dominant Unit}
\end{minipage}
\vspace{-0.4em}
\captionsetup{font=large}
\caption{MAD performance evaluation over different $(T,N)$ values}
\label{fig:MAD}
\end{figure}
\clearpage
\endgroup
\pagestyle{plain}

Furthermore, for BOLMT, we also examined the accuracy with which the weights associated with the asymmetric displacement network are recovered. In the data generating process, the asymmetric component is constructed using row weights $(0.75,0.25)$, which determine the relative contribution of the two directional components within the displacement network. The BOLMT estimates are tightly concentrated around the true values across all network designs. The largest deviations arise when $T=25$, particularly under the circular network structure, but these discrepancies decline rapidly as the time dimension increases. Overall, the results indicate that the proposed procedure not only recovers the sparse network structure accurately, but
also estimates the relative importance of the asymmetric displacement components with high precision.

\begin{table}[H]
\centering
\caption{Recovery of asymmetric component weights using BOLMT}
\label{tab:weight_summary}
\small
\renewcommand{\arraystretch}{1.1}

\begin{tabular}{lcc|cc}
\hline
& \multicolumn{2}{c|}{$\hat{\pi}_1$}
& \multicolumn{2}{c}{$\hat{\pi}_2$} \\
\cline{2-5}

Network
& Median
& Max. abs. error
& Median
& Max. abs. error \\
\hline

Circular
& 0.750
& 0.017
& 0.250
& 0.017 \\

Block
& 0.750
& 0.001
& 0.250
& 0.001 \\

Random
& 0.750
& 0.004
& 0.250
& 0.004 \\

\hline
\end{tabular}

\vspace{0.2cm}
\footnotesize
\textit{Notes:} The true asymmetric component weights are
$\pi_1=0.75$ and $\pi_2=0.25$. Max. abs. error denotes
$\max |\hat{\pi}_k-\pi_k|$ across all Monte Carlo designs.
\end{table}

Finally, Table \ref{tab:times} reports median computation times per Monte Carlo
replication for BOLMT and MOCMT across different network structures and
cross-sectional dimensions. As expected, computation times increase
substantially with $N$ for both procedures, reflecting the higher-dimensional
nature of the network recovery problem. Nevertheless, BOLMT is consistently
faster than MOCMT across all designs considered. This finding is noteworthy,
since MOCMT admits multiple links simultaneously at each stage, whereas
BOLMT proceeds sequentially by selecting only the single most significant
link at a time. The computational advantage of BOLMT is particularly
pronounced under the random network structure, where the BOLMT-to-MOCMT
time ratio falls below 0.5 for all values of $N$.

\begin{table}[H]
\centering
\caption{Median computation times per Monte Carlo replication}
\label{tab:times}
\small
\renewcommand{\arraystretch}{1.1}
\setlength{\tabcolsep}{4pt}

\begin{tabular}{c|ccc|ccc|ccc}
\hline
& \multicolumn{3}{c|}{Circular}
& \multicolumn{3}{c|}{Block}
& \multicolumn{3}{c}{Random} \\
\cline{2-10}

$N$
& BOLMT & MOCMT & Ratio
& BOLMT & MOCMT & Ratio
& BOLMT & MOCMT & Ratio \\
\hline

25
& 0.18 & 0.31 & 0.64
& 0.21 & 0.39 & 0.56
& 0.19 & 0.43 & 0.43 \\

50
& 0.84 & 1.26 & 0.65
& 0.81 & 1.40 & 0.58
& 0.86 & 1.97 & 0.42 \\

100
& 3.85 & 5.52 & 0.68
& 3.77 & 5.95 & 0.62
& 3.78 & 8.90 & 0.44 \\

200
& 16.70 & 23.46 & 0.74
& 16.80 & 25.47 & 0.66
& 16.39 & 35.45 & 0.48 \\

\hline
\end{tabular}

\vspace{0.2cm}
\footnotesize
\textit{Notes:} The table reports median computation times per Monte Carlo
replication (in minutes). The ratio column reports the ratio of BOLMT to
MOCMT computation times.
\end{table}

\subsubsection{Structural parameter estimation conditional on network recovery}

The results reported in Table \ref{tab:MC_summary} provide a concise overview of the finite sample performance of the competing estimators across the 48 Monte Carlo designs considered in the study. The table reports median absolute bias, median RMSE, and RMSE ratios relative to the Oracle IV estimator across all designs.

For each parameter, absolute bias is computed as the absolute difference between the Monte Carlo mean of the estimator and the true parameter value, multiplied by 100. In particular, for a vector of mean-group
estimates $\widehat{\boldsymbol\beta}_{MG}$ and true values $\boldsymbol\beta$, this is computed as
\[
\left|
\operatorname{mean}\left(\widehat{\boldsymbol\beta}_{MG}\right)
-
\boldsymbol\beta'
\right|
\times 100 .
\]

The RMSE ratio measures the efficiency loss relative to the infeasible benchmark that treats the true network structure as known. Detailed simulation results for all parameters and estimators are reported in Tables \ref{tab:beta1_BOLMT}--\ref{tab:rho2_MOCMT} in Appendix~B.

First, the proposed BOLMT estimator performs well across all parameters. Median biases remain small for both slope coefficients and spatial parameters, while the corresponding RMSE values are consistently close to those of the Oracle IV benchmark. In particular, the RMSE ratios are close to unity for all parameters, ranging from 1.074 to 1.784. This indicates
that the loss in overall estimation accuracy resulting from estimating the
underlying network structure is relatively modest. The results therefore suggest that BOLMT is able to recover the relevant network information sufficiently accurately to deliver estimation performance close to the infeasible Oracle estimator that treats the true network structure as known.

Second, the results reveal substantial differences between BOLMT and the alternative model selection procedures. OCMT exhibits considerably larger median biases and RMSE values across all parameters. The deterioration is particularly pronounced for the spatial autoregressive coefficients $\rho_1$ and $\rho_2$, where RMSE ratios exceed 80 and 100, respectively. These findings indicate that OCMT frequently fails to recover the relevant network interactions in finite samples, leading to substantial estimation inaccuracies relative to the Oracle benchmark. MOCMT performs considerably better than OCMT and remains competitive for the slope coefficients $\beta_1$ and $\beta_2$. In particular, the RMSE ratios for these parameters remain close to one, indicating relatively small losses in overall estimation accuracy. This improvement is not  surprising, since MOCMT applies the OCMT procedure iteratively, thereby allowing links omitted
in earlier stages to be subsequently admitted. However, this additional flexibility may also increase the likelihood of selecting spurious links.
While the iterative procedure improves performance relative to OCMT, the results indicate that MOCMT still encounters difficulties in accurately
recovering the spatial dependence structure. In particular, the RMSE ratios for $\rho_1$ and $\rho_2$ remain substantially larger than those obtained by
BOLMT, which is also reflected in the larger median biases for the spatial coefficients.


\begin{table}[H]
\centering
\caption{Summary of finite-sample estimation performance}
\label{tab:MC_summary}
\small
\renewcommand{\arraystretch}{1.15}
\setlength{\tabcolsep}{5pt}

\begin{tabular}{lcccc|cccc|cccc}
\hline
& \multicolumn{4}{c|}{\textbf{Median Bias}}
& \multicolumn{4}{c|}{\textbf{Median RMSE}}
& \multicolumn{4}{c}{\textbf{RMSE Ratio}} \\
\cline{2-13}

& $\beta_1$ & $\beta_2$ & $\rho_2$ & $\rho_1$
& $\beta_1$ & $\beta_2$ & $\rho_2$ & $\rho_1$
& $\beta_1$ & $\beta_2$ & $\rho_2$ & $\rho_1$ \\
\hline

BOLMT
& 0.012 & 0.008 & 0.067 & 0.064
& 0.003 & 0.002 & 0.002 & 0.002
& 1.141 & 1.074 & 1.784 & 1.721 \\

OCMT
& 2.013 & 1.009 & 5.577 & 9.409
& 0.033 & 0.021 & 0.057 & 0.095
& 18.426 & 11.781 & 87.395 & 103.058 \\

MOCMT
& 0.062 & 0.028 & 1.029 & 0.963
& 0.003 & 0.002 & 0.011 & 0.010
& 1.227 & 1.117 & 11.310 & 9.248 \\

\hline
\end{tabular}

\vspace{0.2cm}
\footnotesize
\textit{Notes:} The table reports median absolute bias and median RMSE across 48 Monte Carlo designs. Absolute bias is computed as the absolute difference between the Monte Carlo mean and the true parameter value, multiplied by 100. RMSE ratios are computed relative to the Oracle IV estimator, which uses the true network structure.
\end{table}

Figure \ref{fig:RMSE_ratio_rho2} reports the ratio of the RMSE of BOLMT relative to MOCMT for the estimation of the spatial parameter $\rho_2$ across all combinations of $(N,T)$ and the three network structures. Values below unity indicate that BOLMT outperforms MOCMT in terms of estimation accuracy. The solid, dotted, and dashed lines
correspond to $\mathbf W_1$, $\mathbf W_2$, and $\mathbf W_3$, respectively.

Several patterns emerge. First, BOLMT consistently dominates MOCMT across all designs, with RMSE
ratios well below one throughout. This finding complements the results reported in Table \ref{tab:MC_summary} and suggests that the step-wise, one-link-at-a-time selection rule associated with BOLMT, which limits the inclusion of
spurious interaction terms, contributes to the lower estimation error observed across the Monte Carlo designs.

Second, the gains from BOLMT are particularly pronounced in designs with relatively large cross-sectional dimensions. For a given value of $T$, the RMSE ratios generally decline as $N$ increases, indicating that BOLMT becomes especially advantageous when the candidate network is more high-dimensional. In several large-$N$ designs, the ratios fall below 0.1,
implying that the RMSE of BOLMT is less than one tenth of that obtained by MOCMT.

Third, the size of the gains depends on network topology. The largest improvements are
typically observed under the random network with a dominant unit, $\mathbf W_3$, where MOCMT appears most affected by highly asymmetric and heterogeneous interaction patterns.
BOLMT also delivers sizeable gains under the block-diagonal network, $\mathbf W_2$, while
the circular network, $\mathbf W_1$, produces smoother patterns and somewhat smaller
relative improvements. This is consistent with regular network structures being easier to
recover for both procedures.


\begin{figure}[H]
\centering
\includegraphics[width=0.9\textwidth]{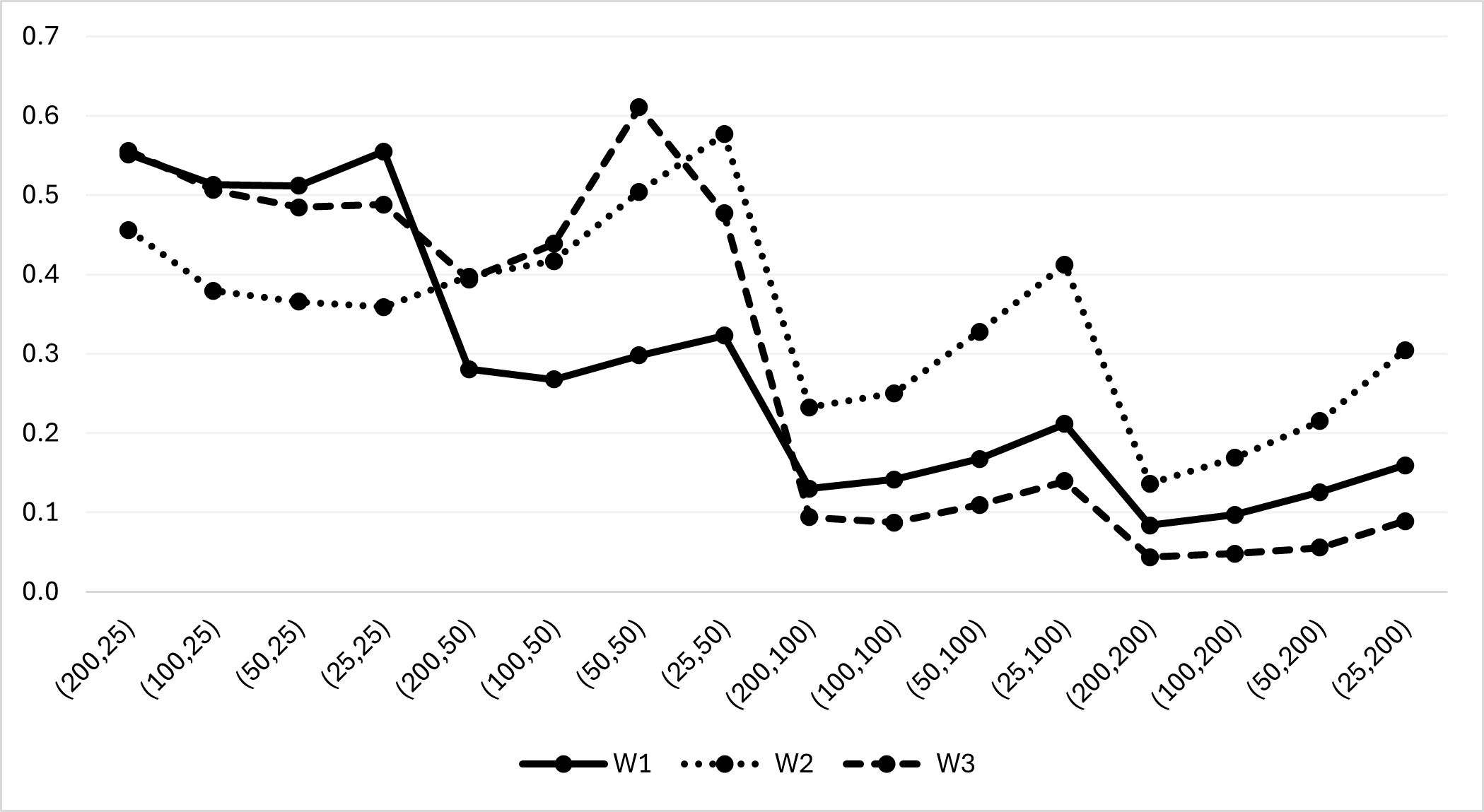}
\caption{RMSE ratios of BOLMT relative to MOCMT for the estimation of $\rho_2$ across different combinations of $(N,T)$ and network structures. W1 corresponds to the circular network, W2 to the block-diagonal network, and W3 to the random network with a dominant unit. Values below unity indicate superior performance of BOLMT relative to MOCMT.}
\label{fig:RMSE_ratio_rho2}
\end{figure}


\section{Empirical Application: Dual Network Effects in Corporate Financial Policies} \label{sec:empirical}

This section illustrates the practical usefulness of BOLMT by examining network effects in corporate financial policies. The analysis is motivated by the corporate finance literature on peer effects, which shows that firms do not make financial decisions
in isolation. Capital structure, liquidity management, investment, and other financial
policies may be influenced by economically related firms operating in similar product markets, giving rise to strategic interactions and cross-firm dependence.

A substantial body of empirical work has documented such interactions. Early contributions
include \citet{LearyRoberts2014}, who provide evidence of peer effects in capital-structure
decisions using instrumental-variable methods. Subsequent studies, including
\citet{KaustiaRantala2015}, \citet{AdhikariAgrawal2018}, \citet{Grennan2019}, and
\citet{GrieserEtAl2022}, employ richer peer-group or network structures to model strategic
interactions among firms. In particular, \citet{GrieserEtAl2022} develop a spatial
econometric framework based on product-market networks constructed from textual similarity
measures and find economically important interactions in leverage decisions.

Although these studies differ in how peer groups are defined, they share the common feature that the interaction network is externally specified. In practice, networks are typically constructed using industry classifications,  geographic proximity, social connections, or product-market similarity measures, and are then treated as fixed throughout the analysis. The empirical focus is therefore on estimating the strength and implications of interactions within a predetermined network, rather than recovering the network itself. Moreover, interactions are usually modelled through a single channel, so that all links are governed by the same qualitative mechanism. This restriction may be limiting:
the relevant network is rarely observed directly, and different links may reflect distinct economic forces. Some interactions may reinforce firms' decisions through imitation,
information transmission, learning, or common opportunities, whereas others may generate displacement effects through competition, market-share reallocation, or strategic
differentiation.

The methodology proposed in this paper addresses both limitations. BOLMT estimates the network directly from the data by identifying the economically relevant links associated with each firm, rather than imposing a pre-specified interaction structure. The recovered interaction matrix is then decomposed into reinforcing and displacement components,
allowing the structure of the network and the nature of the interactions to be determined empirically. In contrast to approaches where inference is conditional on a predetermined
network, the present framework treats network recovery as part of the econometric problem and allows positive and negative interaction effects to coexist across different links.

The dataset consists of a panel of 55 publicly listed U.S. firms observed between 1990 and 2023. Firm-level accounting and market variables are obtained from the CRSP/Compustat Merged database, following \citet{AsimakopoulosEtAl2026}. We consider the following dynamic network
specification:
\begin{equation}
y_{i,t}
=
\eta_i
+
\alpha_i y_{i,t-1}
+
\boldsymbol{\beta}_i'
\mathbf x_{i,t}
+
\rho_1
\sum_{j\neq i}
w_{1,i,j}y_{j,t}
+
\rho_2
\sum_{j\neq i}
w_{2,i,j}y_{j,t}
+
\varepsilon_{i,t}.
\label{eq:empirical_model}
\end{equation}
Equivalently, the procedure first recovers the composite interaction coefficients $\alpha_{i,j}=\rho_1w_{1,i,j}+\rho_2w_{2,i,j}$ and then assigns the estimated links to the reinforcing or displacement component according to the sign of $\widehat{\alpha}_{i,j}$, as described in Section~\ref{sec:psel_estimation}.
The dependent variable $y_{i,t}$ denotes the firm's leverage ratio, and the vector
\[
\mathbf x_{i,t}
=
\left(
\text{Cash}_{i,t},
\text{MB}_{i,t},
\text{Tang}_{i,t},
\text{Prof}_{i,t},
\text{Size}_{i,t}
\right)'
\]
contains the firm-specific characteristics commonly used in the capital-structure literature. Following \citet{GrieserEtAl2022}, these variables measure cash holdings relative to assets, the market-to-book ratio, asset tangibility, profitability, and firm size, respectively. The parameters $\eta_i$ capture firm-specific fixed effects, while the slope coefficients $\boldsymbol{\beta}_i$ are allowed to differ across firms. The matrices $\mathbf W_1$ and $\mathbf W_2$ represent reinforcing and displacement interaction networks, respectively.

Table~\ref{tab:network_results} reports Mean Group IV estimates obtained using four alternative network-selection procedures. Our discussion focuses primarily on BOLMT, while the remaining methods are considered for comparison purposes.

Under BOLMT, the estimated autoregressive coefficient is $0.257$ and highly significant, indicating persistence in firms' leverage decisions. The magnitude suggests gradual
adjustment over time, although the implied speed of adjustment remains relatively high.

The firm-specific controls generally have the expected signs. Cash holdings enter negatively and are statistically significant at the 10\% level, suggesting that firms with greater internal liquidity rely less on leverage. Market-to-book ratios enter positively and are also significant at the 10\% level, consistent with firms with stronger growth opportunities adopting more intensive financing policies. Tangibility has a positive and statistically significant coefficient, in line with the collateral role of tangible assets. Profitability enters negatively and is highly significant, consistent with the view that more profitable
firms rely less on external financing. Firm size has a positive and significant effect, suggesting that larger firms may benefit from better access to capital markets and greater
financial flexibility. These patterns are broadly stable across the alternative network-selection procedures.

The estimated network effects reveal the presence of two distinct interaction mechanisms. The reinforcing effect is positive and highly significant ($\widehat{\rho}_1=0.715$), while the displacement effect is negative and highly significant ($\widehat{\rho}_2=-0.606$). The comparable magnitudes of the two coefficients suggest that both mechanisms are quantitatively important in explaining cross-sectional dependence in firms' financial decisions.

The positive estimate of $\rho_1$ implies that firms connected through the reinforcing network exhibit positively associated leverage adjustments, after controlling for firm-specific characteristics and fixed effects. This is consistent with strategic complementarities arising from imitation, information transmission, learning, or exposure to common opportunities. By contrast, the negative estimate of $\rho_2$ points to a displacement mechanism, under which connected firms move in opposite directions. Such effects may reflect competitive pressures, market-share reallocation, or strategic
differentiation.  A conventional single-network specification would compress these opposing mechanisms into a single aggregate interaction effect, potentially masking important heterogeneity in the nature of cross-firm dependence. The dual-network framework therefore provides a richer description of strategic interactions in corporate financial policies.

The lower panel of Table~\ref{tab:network_results} reports network diagnostics. Under BOLMT, the estimated network contains 137 directed links, corresponding to a density of 4.6\%. The average degree is 2.5 and the median degree equals 2, indicating a sparse network in which most firms interact with only a small number of economically relevant counterparts. The relatively low density suggests that only a small fraction of all potential interactions are supported by the data.  The estimated spectral radius is 0.861, which is well below unity and therefore consistent with the stability condition imposed by the model.
The estimated network is also balanced in terms of positive and negative links. Approximately 55\% of connections belong to the reinforcing network and 45\% belong to the displacement network. This balance provides additional evidence that both interaction mechanisms are empirically relevant.

The remaining columns of Table~\ref{tab:network_results} provide useful comparisons. BOLMTP is a pruning refinement of BOLMT in which the selected support is re-assessed during
the boosting procedure. At each stage, previously selected links are re-tested conditional on the current selected set, and links that no longer remain significant are removed.  BOLMTP produces results that are very similar to BOLMT, implying that the stage-wise pruning step has little effect on the recovered support. OCMT identifies a considerably denser network, with 228 links and an average degree exceeding four. MOCMT produces the densest network by a substantial margin, selecting 451 links and yielding an average degree above eight.

The greater density of the OCMT and MOCMT networks is accompanied by larger estimated interaction effects. In particular, MOCMT yields estimates of $\rho_1$ and $\rho_2$ whose absolute magnitudes exceed unity. Moreover, the estimated interaction matrix associated with MOCMT has a spectral radius of 1.334, violating the stability condition imposed by the model. By contrast, the BOLMT estimates imply a substantially sparser interaction structure and a spectral radius below unity, consistent with the theoretical framework developed in this paper.

\begin{table}[!htbp]
\centering
\caption{Mean Group Estimates and Network Diagnostics}
\label{tab:network_results}
\small
\setlength{\tabcolsep}{5pt}
\renewcommand{\arraystretch}{1.0}
\begin{tabular}{lcccc}
\toprule
 & BOLMT & BOLMTP & OCMT & MOCMT \\
\midrule

$\alpha$ ($y_{i,t-1}$)
& 0.257$^{***}$ & 0.250$^{***}$ & 0.296$^{***}$ & 0.187$^{***}$ \\
& (0.039) & (0.038) & (0.036) & (0.040) \\

\text{Cash over deposits}
& -0.185$^{*}$ & -0.191$^{*}$ & -0.319$^{***}$ & -0.284$^{***}$ \\
& (0.107) & (0.101) & (0.075) & (0.099) \\

\text{Market-to-book}
& 0.003$^{*}$ & 0.004$^{***}$ & 0.002 & 0.003 \\
& (0.002) & (0.002) & (0.002) & (0.002) \\

\text{Tangibility}
& 0.187$^{*}$ & 0.204$^{*}$ & 0.134 & 0.225$^{*}$ \\
& (0.106) & (0.110) & (0.117) & (0.121) \\

\text{Profit}
& -0.339$^{***}$ & -0.295$^{***}$ & -0.306$^{***}$ & -0.361$^{***}$ \\
& (0.094) & (0.099) & (0.098) & (0.095) \\

\text{Log(assets)}
& 0.027$^{**}$ & 0.027$^{**}$ & 0.025$^{**}$ & 0.041$^{***}$ \\
& (0.011) & (0.011) & (0.010) & (0.011) \\

$\rho_1$ (reinforcing)
& 0.715$^{***}$ & 0.660$^{***}$ & 0.685$^{***}$ & 1.368$^{***}$ \\
& (0.095) & (0.094) & (0.134) & (0.204) \\

$\rho_2$ (displacement)
& -0.606$^{***}$ & -0.583$^{***}$ & -0.590$^{***}$ & -1.129$^{***}$ \\
& (0.117) & (0.110) & (0.150) & (0.178) \\

\hline
\multicolumn{5}{c}{\textbf{Network Diagnostics}}\\
\hline

Total Links
& 137 & 124 & 228 & 451 \\

Density (\%)
& 4.61 & 4.18 & 7.68 & 15.19 \\

Spectral Radius
& 0.861 & 0.794 & 0.754 & 1.334 \\

Average Degree
& 2.49 & 2.25 & 4.15 & 8.20 \\

Median Degree
& 2 & 2 & 3 & 6 \\

Maximum Degree
& 8 & 6 & 20 & 54 \\

95th Percentile Degree
& 5.0 & 4.0 & 12.8 & 19.5 \\

Positive Links (\%)
& 54.75 & 54.03 & 52.63 & 49.45 \\

Negative Links (\%)
& 45.25 & 45.97 & 47.37 & 50.55 \\

\bottomrule
\end{tabular}
\vspace{0.5em}

\begin{minipage}{\textwidth}
\footnotesize
\textit{\textbf{Notes}:}
Mean Group IV estimates are reported. Standard errors are shown in parentheses.
Significance levels:
$^{*}$ 10\%,
$^{**}$ 5\%,
$^{***}$ 1\%.
The bottom panel reports network diagnostics based on the estimated interaction matrices.
Density is computed as the percentage of nonzero links relative to the maximum number of directed links.
The spectral radius refers to the largest modulus eigenvalue of the estimated interaction matrix
$\widehat{\boldsymbol{A}}=\widehat{\rho}_1\widehat{\mathbf W}_1+\widehat{\rho}_2\widehat{\mathbf W}_2$.
\end{minipage}
\end{table}

Table~\ref{tab:effects_bmt} reports the decomposition of covariate effects into direct, indirect, and total components. Because the model contains contemporaneous interactions, a change in a firm's characteristics affects not only its own outcome but also the outcomes of other firms connected through the estimated networks. These feedback effects propagate through the interaction structure and therefore modify the overall impact of firm-level characteristics.
Since the reduced-form representation of the model can be written as $\mathbf y_t =
(\mathbf I_N-\boldsymbol{A})^{-1}
\left( \boldsymbol{\eta} +
\mathbf X_t\boldsymbol{\beta}
+ \mathbf u_t \right)$, with $\boldsymbol A = \rho_1\mathbf W_1+\rho_2\mathbf W_2$, the matrix of marginal effects associated with covariate $k$ is
\[
\mathbf S_k
=
(\mathbf I_N-\boldsymbol{A})^{-1}\beta_k.
\]
Following the spatial econometrics literature \citep{DebarsyLeSagePace2012,LeSagePace2009}, the average direct effect (DE) is computed as the average of the diagonal elements of $\mathbf S_k$, measuring the impact of a change in a firm's own characteristic on its own outcome. The average indirect effect (IE) is computed as the average row sum of the off-diagonal elements of $\mathbf S_k$, capturing the impact transmitted through the network to other firms. The total effect (TE) is the sum of the direct and indirect effects. Because the interaction matrix is decomposed into reinforcing and displacement components, the indirect effect can be further separated into positive and negative contributions associated with $\mathbf W_1$ and $\mathbf W_2$, respectively.

Several findings emerge from Table~\ref{tab:effects_bmt}. First, the direct effects closely resemble the estimated slope coefficients reported in Table~\ref{tab:network_results}, indicating that network feedback does not substantially alter the immediate impact of firm characteristics on financial policies. The indirect effects are nevertheless economically
meaningful, increasing the magnitude of the direct effects by approximately 14\% on average.

A notable feature of the results is that the indirect effects preserve the sign of the corresponding direct effects. Variables with positive direct effects, namely market-to-book
ratios, tangibility, and firm size, also generate positive indirect effects. Similarly, cash holdings and profitability have negative direct and indirect effects. Thus, network propagation reinforces rather than offsets the underlying influence of firm characteristics. Profitability exhibits the largest effect in absolute value. Its direct effect equals
$-0.348$, while the total effect reaches $-0.406$, indicating that network interactions strengthen the role of profitability in leverage decisions. More generally, the gap between direct and total effects is non-negligible across all variables, suggesting that part of the overall effect operates indirectly through the recovered network structure.

Overall, the results indicate that direct effects remain the dominant channel, but network propagation is quantitatively relevant. Approximately 14\% of the total effect of firm
characteristics operates indirectly through the estimated interaction networks.

\begin{table}[!htbp]
\centering
\caption{Direct and Indirect Effects}
\label{tab:effects_bmt}
\small
\setlength{\tabcolsep}{5pt}
\renewcommand{\arraystretch}{1.0}

\begin{tabular}{lccccc}
\toprule
& DE & IE & Positive IE & Negative IE & TE \\
\midrule

Cash over deposits
& -0.190 & -0.031 & 0.459 & -0.491 & -0.221 \\

Market-to-book
& 0.003 & 0.001 & 0.008 & -0.007 & 0.004 \\

Tangibility
& 0.192 & 0.032 & 0.496 & -0.464 & 0.223 \\

Profit
& -0.348 & -0.058 & 0.843 & -0.901 & -0.406 \\

Log(assets)
& 0.028 & 0.005 & 0.073 & -0.068 & 0.033 \\

\bottomrule
\end{tabular}

\vspace{0.5em}

\begin{minipage}{\textwidth}
\footnotesize
\textit{\textbf{Notes}:}
\textit{\textbf{Notes}:}
The table reports average direct effects (DE), indirect effects (IE), and total effects (TE) implied by the estimated dual-network model under BOLMT. Direct effects measure the average impact of a change in a firm's characteristic on its own outcome, while indirect effects capture the average impact transmitted through the estimated interaction networks. Positive and negative indirect effects report the contributions associated with reinforcing and displacement interactions, respectively. Total effects are computed as the sum of direct and indirect effects.
\end{minipage}
\end{table}

To investigate whether the estimated network merely reflects similarities in firm size, we compare the distribution of pairwise asset distances for linked and unlinked firm pairs. Let
\[
d_{i,j}
=
\left|
\log(\text{Assets}_i)
-
\log(\text{Assets}_j)
\right|
\]
denote the absolute difference in log assets between firms $i$ and $j$. If the estimated network simply connects firms of similar size, linked pairs should exhibit systematically smaller values of $d_{i,j}$ than unlinked pairs.

Following the nonparametric approach in \citet{AsimakopoulosEtAl2026} and \citet{KripfganzSarafidis2026}, we employ a Wilcoxon rank-sum test to compare the two distributions. We also report the Relative Homophily Index (RHI),
\[
\text{RHI}
=
\frac{
\operatorname{Median}(d_{i,j}\mid \widehat a_{i,j}\neq 0)
}{
\operatorname{Median}(d_{i,j}\mid \widehat a_{i,j}=0)
},
\]
which compares the median asset distance among linked firms with the corresponding median distance among unlinked firms. Values below unity indicate that linked firms tend to be more similar in size than unlinked firms, whereas values close to unity suggest little evidence of size-based homophily.

The results provide little evidence that the recovered network is driven by firm size. The median asset distance among linked pairs equals 9.055, compared with 9.061 among unlinked pairs, yielding an RHI of 1.001. Moreover, the Wilcoxon rank-sum test also fails to reject equality of the two distributions ($p=0.680$). These findings suggest that the links identified by BOLMT do not simply connect firms of similar size and that the recovered network captures information beyond firm-size similarity.

\section{Concluding Remarks}\label{sec:conclusions}

This paper proposes a step-wise IV screening procedure for panel network models with unknown interaction matrices. The procedure estimates local supports one equation at a time,
using scalar $t$-statistics or block $F$-statistics constructed from fitted regressors. This allows network links to be recovered directly from the structural equations, rather than being imposed a priori or inferred from reduced-form dependence.

A central feature of the framework is that it accommodates dual interaction structures. The recovered structural interaction matrix can be decomposed into reinforcing and
displacement components, allowing positive and negative interaction mechanisms to coexist within the same system. This distinction is important in economic applications where some
links may reflect imitation, learning, or common  opportunities, while others may arise from competition, market-share reallocation, or strategic differentiation.

The theory is centred on exact structural recovery, but it also clarifies the weaker support-containment property that obtains when the post-completion proxy-control condition is relaxed. This distinction is useful because non-link proxies may generate nonzero population screening criteria through indirect network propagation or common sources of dependence. Exact recovery requires both detectability of omitted true links and sufficient separation between true links and proxy candidates along the selection path. Once exact recovery holds, post-selection IV estimators are asymptotically equivalent to oracle IV estimators that know the true supports.

The probability theory is developed under weak temporal dependence and allows either thin-tailed or heavy-tailed score arrays. The distinction affects the admissible growth of
$N$ relative to $T$: thin tails support substantially higher-dimensional candidate sets, whereas heavy tails require polynomial growth restrictions. The framework also extends
naturally to block screening, covering dynamic network specifications and spatial Durbin-type models in which each candidate link contributes multiple regressors.

The Monte Carlo experiments show that the proposed BOLMT procedure recovers sparse network structures accurately and delivers post-selection estimates close to the oracle benchmark. Compared with more aggressive selection rules such as OCMT and MOCMT, the step-wise, one-link-at-a-time strategy substantially reduces the inclusion of spurious links and leads
to lower estimation error in the designs considered.

The empirical application to U.S. corporate leverage decisions illustrates the practical value of the approach. The results reveal the coexistence of reinforcing and displacement
effects in corporate financial policies, suggesting that firms' leverage decisions are shaped by both complementarities and competitive forces. These findings highlight the usefulness of estimating the interaction structure directly and allowing different types of network effects to operate simultaneously.

Overall, the proposed methodology provides a transparent and flexible approach for recovering latent network interactions in high-dimensional panel settings. It offers a way
to move beyond pre-specified interaction matrices and single-channel network models, while retaining support-recovery guarantees and oracle-equivalent post-selection inference.

\bibliographystyle{apalike}
\bibliography{networks}

@article{Kapetanios2026BMT,
  title={Model Selection in High-Dimensional Linear Regression using Boosting with Multiple Testing},
  author={Kapetanios, George and Sarafidis, Vasilis and Ventouri, Alexia},
  journal={arXiv preprint arXiv:2602.19705},
  year={2026}
}

@article{Yang2018,
  author    = {Zhenlin Yang},
  title     = {Unified M-estimation of Fixed-Effects Spatial Dynamic Models with Short Panels},
  journal   = {Journal of Econometrics},
  year      = {2018},
  volume    = {205},
  number    = {2},
  pages     = {423--447},
  publisher = {Elsevier},
  doi       = {10.1016/j.jeconom.2017.08.019}
}

@unpublished{Manresa2016,
  author = {Manresa, Elena},
  title  = {Estimating the Structure of Social Interactions Using Panel Data},
  year   = {2016},
  note   = {Unpublished manuscript, CEMFI, Madrid}
}

@article{Rose2017,
  author    = {Rose, Christiern D.},
  title     = {Identification of Peer Effects through Social Networks Using Variance Restrictions},
  journal   = {The Econometrics Journal},
  year      = {2017},
  volume    = {20},
  number    = {3},
  pages     = {S47--S60},
  publisher = {Oxford University Press}
}

@Article{Pesaran199579,
  author   = {M.Hashem Pesaran and Ron Smith},
  journal  = {Journal of Econometrics},
  title    = {Estimating Long-run Relationships From Dynamic Heterogeneous Panels},
  year     = {1995},
  issn     = {0304-4076},
  number   = {1},
  pages    = {79 - 113},
  volume   = {68},
  doi      = {http://dx.doi.org/10.1016/0304-4076(94)01644-F},
  url      = {http://www.sciencedirect.com/science/article/pii/030440769401644F},
}

@incollection{Baltagi2021Spatial,
  author    = {Baltagi, Badi H.},
  title     = {Spatial Panel Data Models},
  booktitle = {Econometric Analysis of Panel Data},
  edition    = {6},
  pages      = {391--424},
  publisher  = {Springer},
  address    = {Cham},
  year       = {2021},
  series     = {Springer Texts in Business and Economics}
}

@article{LamSouza2020JBES,
  author  = {Lam, Clifford and Souza, Pedro C. L.},
  title   = {Estimation and Selection of Spatial Weight Matrix in a Spatial Lag Model},
  journal = {Journal of Business \& Economic Statistics},
  year    = {2020},
  volume  = {38},
  number  = {3},
  pages   = {693--710},
  doi     = {10.1080/07350015.2019.1569526}
}

@book{LeSagePace2009,
  title     = {Introduction to Spatial Econometrics},
  author    = {LeSage, James P. and Pace, R. Kelley},
  publisher = {CRC Press},
  address   = {Boca Raton, FL},
  year      = {2009},
  series    = {Statistics: A Series of Textbooks and Monographs},
  isbn      = {9781420064254},
  doi       = {10.1201/9781420064254}
}

@article{DebarsyLeSagePace2012,
  title   = {The Interpretation and Presentation of Results in Spatial Econometrics},
  author  = {Debarsy, Nicolas and LeSage, James P. and Pace, R. Kelley},
  journal = {The Review of Regional Studies},
  year    = {2012},
  volume  = {42},
  number  = {2},
  pages   = {107--142},
  doi     = {10.52324/001c.8135}
}

@article{dePaula2020,
  author  = {de Paula, {\'A}ureo},
  title   = {Econometric Models of Network Formation},
  journal = {Annual Review of Economics},
  year    = {2020},
  volume  = {12},
  number  = {1},
  pages   = {775--799},
  doi     = {10.1146/annurev-economics-093019-113859}
}

@article{Elhorst2003,
  title={Specification and estimation of spatial panel data models},
  author={Elhorst, J. Paul},
  journal={International Regional Science Review},
  volume={26},
  number={3},
  pages={244--268},
  year={2003}
}

@article{KelejianPrucha2010,
  title={Specification and estimation of spatial autoregressive models with autoregressive and heteroskedastic disturbances},
  author={Kelejian, Harry H and Prucha, Ingmar R},
  journal={Journal of Econometrics},
  volume={157},
  number={1},
  pages={53--67},
  year={2010}
}

@book{Elhorst2014Book,
  title={Spatial Econometrics: From Cross-Sectional Data to Spatial Panels},
  author={Elhorst, J. Paul},
  year={2014},
  publisher={Springer}
}

@article{CuiEtal2023,
    author = {Cui, Guowei and Sarafidis, Vasilis and Yamagata, Takashi},
    title = "{IV Estimation of Spatial Dynamic Panels with Interactive Effects: Large Sample Theory and an Application on Bank Attitude Toward Risk}",
    journal = {The Econometrics Journal},
    volume = {26},
    number = {2},
    pages = {124-146},
    year = {2023},    
}

@article{Manski1993,
  author  = {Manski, Charles F.},
  title   = {Identification of Endogenous Social Effects: The Reflection Problem},
  journal = {Review of Economic Studies},
  year    = {1993},
  volume  = {60},
  number  = {3},
  pages   = {531--542}
}

@article{dePaulaRasulSouza2024,
  author  = {de Paula, {\'A}ureo and Rasul, Imran and Souza, Pedro C. L.},
  title   = {Identifying Network Ties from Panel Data: Theory and
an application to tax competition},
  journal = {Review of Economic Studies},
  year    = {2025},
  volume  = {92},
  number  = {4},
  pages   = {2691--2729}
}

@article{BlumeEtal2015,
   author = {Lawrence E. Blume and William A. Brock and Steven N. Durlauf and Rajshri Jayaraman},
   journal = {Journal of Political Economy},
 number = {2},
 pages = {444--496},
 publisher = {The University of Chicago Press},
 title = {Linear Social Interactions Models},
 volume = {123},
 year = {2015},
}

@article{LearyRoberts2014,
   author = {Mark T. Leary and Michael R. Roberts},
   journal = {Journal of Finance},
   number = {1},
   pages = {139--178},
   publisher = {Wiley},
   title = {Do Peer Firms Affect Corporate Financial Policy?},
   volume = {69},
   year = {2014},
}

@article{KaustiaRantala2015,
   author = {Markku Kaustia and Ville Rantala},
   journal = {Journal of Financial Economics},
   number = {3},
   pages = {653--669},
   publisher = {Elsevier},
   title = {Social Learning and Corporate Peer Effects},
   volume = {117},
   year = {2015},
}

@article{Grennan2019,
   author = {Jillian Grennan},
   journal = {Journal of Financial Economics},
   number = {3},
   pages = {549--570},
   publisher = {Elsevier},
   title = {Dividend Payments as a Response to Peer Influence},
   volume = {131},
   year = {2019},
}

@article{GrieserEtAl2022,
   author = {William Grieser and Charles J. Hadlock and James P. LeSage and Morad Zekhnini},
   journal = {Journal of Financial Economics},
   number = {1},
   pages = {247--272},
   publisher = {Elsevier},
   title = {Network Effects in Corporate Financial Policies},
   volume = {144},
   year = {2022},
}

@article{AdhikariAgrawal2018,
   author = {Binay Kumar Adhikari and Anup Agrawal},
   journal = {Journal of Corporate Finance},
   pages = {615--637},
   publisher = {Elsevier},
   title = {Peer Influence on Payout Policies},
   volume = {48},
   year = {2018},
}

@article{KrisztinPiribauer2023,
  author  = {Krisztin, Tam{\'a}s and Piribauer, Philipp},
  title   = {A Bayesian Approach for the Estimation of Weight Matrices in Spatial Autoregressive Models},
  journal = {Spatial Economic Analysis},
  year    = {2023},
  volume  = {18},
  number  = {3},
  pages   = {377--399},
  doi     = {10.1080/17421772.2022.2157456}
}

@article{AhrensBhattacharjee2015,
  author  = {Ahrens, Achim and Bhattacharjee, Arnab},
  title   = {Two-Step Lasso Estimation of the Spatial Weights Matrix},
  journal = {Econometrics},
  year    = {2015},
  volume  = {3},
  number  = {1},
  pages   = {128--155},
  doi     = {10.3390/econometrics3010128}
}

@article{Elhorst2024TwoLaws,
  author  = {Elhorst, J. Paul},
  title   = {Raising the Bar in Spatial Economic Analysis: Two Laws of Spatial Economic Modelling},
  journal = {Spatial Economic Analysis},
  year    = {2024},
  volume  = {19},
  number  = {2},
  pages   = {115--132},
  doi     = {10.1080/17421772.2024.2334845}
}

@article{LewbelQuTang2023,
   author = {A. Lewbel and X. Qu and X.Tang},
   journal = {Journal of Political Economy},
 number = {4},
 pages = {898--946},
 publisher = {The University of Chicago Press},
 title = {Social Networks with Unobserved Links},
 volume = {131},
 year = {2023},
}

@article{ShiLee2017,
  title   = {Spatial Dynamic Panel Data Models with Interactive Fixed Effects},
  author  = {Shi, Wei and Lee, Lung-fei},
  journal = {Journal of Econometrics},
  year    = {2017},
  volume  = {197},
  number  = {2},
  pages   = {323--347},
  doi     = {10.1016/j.jeconom.2016.12.001}
}

@article{BramoulleEtAl2009,
  author  = {Bramoull{\'e}, Yann and Djebbari, Habiba and Fortin, Bernard},
  title   = {Identification of Peer Effects through Social Networks},
  journal = {Journal of Econometrics},
  year    = {2009},
  volume  = {150},
  number  = {1},
  pages   = {41--55},
  doi     = {10.1016/j.jeconom.2008.12.021}
}

@article{KapoorKelejianPrucha2007,
  title   = {Panel Data Models with Spatially Correlated Error Components},
  author  = {Kapoor, Mudit and Kelejian, Harry H. and Prucha, Ingmar R.},
  journal = {Journal of Econometrics},
  year    = {2007},
  volume  = {140},
  number  = {1},
  pages   = {97--130},
  doi     = {10.1016/j.jeconom.2006.09.004}
}

@article{AsimakopoulosEtAl2026,
   author = {Stylianos Asimakopoulos and George Kapetanios and Vasilis Sarafidis and Alexia Ventouri},
   journal = {Journal of Money, Credit and Banking},
   pages = {1--49},
   title = {Network Effects in Corporate Emissions: Evidence from a Data-Dependent Spatial Panel Model},
   year = {2026},
   note = {Forthcoming},
}

@article{HuangEtAl2020,
  author  = {Wei Huang and Feng Jin and Yi Li and Xinyu Liu},
  title   = {The Two-Mode Network Autoregressive Model},
  journal = {Journal of Business \& Economic Statistics},
  volume  = {38},
  number  = {4},
  pages   = {1042--1056},
  year    = {2020}
}

@article{DendramisEtal2022, 
title={Estimation of Time-Varying Covariance Matrices for Large Datasets}, 
author={Dendramis, Yiannis and Giraitis, Liudas and Kapetanios, George},
volume={37}, 
number={6}, 
journal={Econometric Theory}, 
 year={2021}, 
pages={1100–1134}}

@article{KripfganzSarafidis2026,
   author = {Sebastian Kripfganz and Vasilis Sarafidis},
   journal = {Spatial Economic Analysis},
   number = {1},
   pages = {61--79},
   publisher = {Taylor \& Francis},
   title = {Chasing Opportunity: Spillovers and Drivers of US State Population Growth},
   volume = {21},
   year = {2026},
}

@book{ArbiaBaltagi2009,
  title     = {Spatial Econometrics: Methods and Applications},
  editor    = {Arbia, Giuseppe and Baltagi, Badi H.},
  year      = {2009},
  publisher = {Physica-Verlag},
  address   = {Heidelberg},
  series    = {Studies in Empirical Economics},
  isbn      = {9783790820690}
}

@article{BaltagiEtAl2013,
  title   = {A Generalized Spatial Panel Data Model with Random Effects},
  author  = {Baltagi, Badi H. and Egger, Peter and Pfaffermayr, Michael},
  journal = {Econometric Reviews},
  year    = {2013},
  volume  = {32},
  number  = {5--6},
  pages   = {650--685},
  doi     = {10.1080/07474938.2012.742342}
}

@book{Anselin1988,
  author    = {Anselin, Luc},
  title     = {Spatial Econometrics: Methods and Models},
  publisher = {Kluwer Academic Publishers},
  address   = {Dordrecht},
  year      = {1988}
}

@Article{chu2018,
  Title                    = {A One Covariate at a Time, Multiple Testing Approach to Variable
Selection in High-Dimensional Linear Regression Models},
  Author                   = {A. Chudik and G. Kapetanios and M. H. Pesaran},
  Journal                  = {Econometrica},
  Year                     = {2018},
  Pages                    = {1479-1512},
  Volume                   = {86}
}

@unpublished{ChenEtAl2025,
  author = {Chen, Jing and Cui, Guodong and Sarafidis, Vasilis and Yamagata, Takashi},
  title  = {I{V} Estimation of Heterogeneous Spatial Dynamic Panel Models with Interactive Effects},
  year   = {2025},
  note   = {Working paper, available at arXiv:2501.18467}
}

@unpublished{LiBhattacharjee2024,
  author       = {Li, Haoyang and Bhattacharjee, Arnab},
  title        = {Estimating Spatial Weights in High Dimensions: A CCE-OCMT Variable Selection Procedure},
  note         = {Work in Progress},
  year         = {2024},
  institution  = {Heriot-Watt University}
}

\clearpage

\appendix

\renewcommand{\thesection}{A}
\renewcommand{\thesubsection}{A.\arabic{subsection}}

\section*{Appendix A: Proofs}
\addcontentsline{toc}{section}{Appendix A: Proofs}

\setcounter{subsection}{0}
\setcounter{equation}{0}

\renewcommand{\theequation}{A.\arabic{equation}}

\setcounter{lemma}{0}
\renewcommand{\thelemma}{A.\arabic{lemma}}

\setcounter{proposition}{0}
\renewcommand{\theproposition}{A.\arabic{proposition}}

\setcounter{theorem}{0}
\renewcommand{\thetheorem}{A.\arabic{theorem}}

Throughout the appendix, \(C\) denotes a finite positive constant that may vary from line to line.  All limits are as \(N,T\to\infty\).  The proofs use the notation of Sections~\ref{sec:assumptions}--\ref{sec:block_screening}.  In particular,
\[
\mathcal K_i=\mathcal S_i^n\cup\mathcal S_i^p,
\qquad
\mathcal H_i=\{H:H\subseteq\mathcal K_i\}.
\]
Since \(\bar K=\sup_i|\mathcal K_i|<\infty\),
\begin{equation}
|\mathcal H_i|\le 2^{\bar K}.
\label{eq:app_H_count}
\end{equation}
The appendix is written for the scalar and block procedures jointly where possible.  The only difference between the two cases is the order of the screening statistic: the scalar score is linear and has stochastic order \(\sqrt{\log N}\), while the block statistic is quadratic in a fixed-dimensional score vector and has stochastic order \(\log N\).

\subsection{\textbf{Admissible conditioning sets and realised selected sets}}

The realised selected set is random.  The proof avoids taking a union bound over all subsets of the \(N-1\) candidates.  It first establishes uniform bounds over admissible conditioning sets \(H\in\mathcal H_i\), and then shows by induction that the realised path remains in this low-dimensional conditioning-set class with probability tending to one.

\begin{lemma}\label{lem:app_path_induction}
Suppose that, for a fixed equation \(i\), whenever the current selected set \(H\) belongs to \(\mathcal H_i\), the next selected candidate, if any, belongs to \(\mathcal K_i\).  Then every selected set reached by the algorithm for equation \(i\) belongs to \(\mathcal H_i\).
\end{lemma}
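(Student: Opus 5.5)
The argument is a straightforward induction on the stage index \(m\) of Algorithm~\ref{alg:bolmt}, applied to the fixed equation \(i\). The claim to prove is that \(H_{i,m}\in\mathcal H_i\) for every stage \(m\) the algorithm reaches, and hence also for the terminal set \(\widehat{\mathcal S}_i\). Recall that \(\mathcal H_i=\{H:H\subseteq\mathcal K_i\}\).

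\emph{Base case.} The initial set is \(H_{i,0}=\varnothing\). Since \(\varnothing\subseteq\mathcal K_i\), we have \(H_{i,0}\in\mathcal H_i\), and no probabilistic input is needed.

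\emph{Inductive step.} Suppose \(H_{i,m-1}\in\mathcal H_i\) and the algorithm does not stop at stage \(m\). Then \(H_{i,m}=H_{i,m-1}\cup\{j^\ast\}\) for the selected candidate \(j^\ast\). By the hypothesis of the lemma, applied with \(H=H_{i,m-1}\in\mathcal H_i\), we get \(j^\ast\in\mathcal K_i\). Combined with \(H_{i,m-1}\subseteq\mathcal K_i\), this gives \(H_{i,m}\subseteq\mathcal K_i\), that is, \(H_{i,m}\in\mathcal H_i\).

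\emph{Stopping case.} If the algorithm stops at stage \(m\), the returned set is \(\widehat{\mathcal S}_i=H_{i,m-1}\). This already belongs to \(\mathcal H_i\) by the inductive hypothesis.

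\emph{Termination.} Each non-stopping step adds a new element \(j^\ast\notin H_{i,m-1}\) of \(\mathcal K_i\). Since \(|\mathcal K_i|\le\bar K<\infty\), the path can have at most \(\bar K\) selections before all of \(\mathcal K_i\) is exhausted. The induction therefore covers every selected set actually reached, with no issue of an infinite path.

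The only point needing care is the status of the hypothesis. It is a deterministic, path-wise implication: "current set in \(\mathcal H_i\) \(\Rightarrow\) next selection in \(\mathcal K_i\)". The induction only ever invokes it at sets already shown to lie in \(\mathcal H_i\), so there is no circularity.

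In later use the hypothesis will hold only on a high-probability event. There, the conclusion should be read on the event where the implication holds simultaneously for all \(H\in\mathcal H_i\). That event is a finite intersection of at most \(2^{\bar K}\) events by \eqref{eq:app_H_count}, which is what makes the subsequent union bound over \(i\) feasible. For the present lemma, however, the deterministic induction above suffices.
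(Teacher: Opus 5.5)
Your proposal is correct and follows the paper's own proof: the same induction over stages, with base case \(H_{i,0}=\varnothing\subseteq\mathcal K_i\), nothing to prove at a stop, and the premise giving \(j^\ast\in\mathcal K_i\) at each non-stopping step. Your remarks on termination and on reading the hypothesis on a high-probability event are not part of the paper's proof here, but they are consistent with how the lemma is used later.
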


\begin{proof}
The initial selected set is \(H_{i,0}=\varnothing\), and \(\varnothing\subseteq\mathcal K_i\).  Suppose by induction that \(H_{i,m}\subseteq\mathcal K_i\).  If the algorithm stops, there is nothing to prove.  If it selects another candidate, the premise implies that the selected candidate \(j_{m+1}\) belongs to \(\mathcal K_i\).  Hence
\[
H_{i,m+1}=H_{i,m}\cup\{j_{m+1}\}\subseteq\mathcal K_i.
\]
The result follows by induction over stages.
\end{proof}

\begin{lemma}\label{lem:app_count}
The number of null scalar score arrays that must be controlled uniformly over equations, irrelevant candidates, and admissible conditioning sets is \(O(N^2)\).  The same order applies to each fixed component of a block score.
\end{lemma}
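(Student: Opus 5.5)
The count is a direct product bound over three indices. A null scalar score array is indexed by a triple \((i,j,H)\): an equation \(i\in\{1,\ldots,N\}\), an irrelevant candidate \(j\in\mathcal S_i^d\), and an admissible conditioning set \(H\in\mathcal H_i\) with \(j\notin H\). For each such triple, the arrays to be controlled are the centred processes \(v^*_{i,j,H,t}\varepsilon_{i,t}\), \(v^*_{i,j,H,t}\mu_{i,j,H,t}\) and \((v^*_{i,j,H,t})^2\), together with the associated Gram and first-stage components listed before Assumption~\ref{ass:sampling}. I would bound the number of each factor in turn and then multiply.

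\emph{Equations and candidates.} There are \(N\) equations. For each \(i\), \(|\mathcal S_i^d|\le N-1\), so there are at most \(N(N-1)\) pairs \((i,j)\).

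\emph{Conditioning sets.} By Assumption~\ref{ass:network}, \(\bar K<\infty\). By \eqref{eq:app_H_count}, \(|\mathcal H_i|\le 2^{\bar K}\) uniformly in \(i\). It is important that \(\mathcal H_i\) consists only of subsets of \(\mathcal K_i\), not arbitrary subsets of the \(N-1\) candidates. This is exactly the restriction that Lemma~\ref{lem:app_path_induction} makes legitimate for the realised path.

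\emph{Array types.} By Assumption~\ref{ass:iv}, the instrument and control dimensions are uniformly bounded, say by \(\bar d\). So the number of distinct scalar arrays per triple is at most a constant \(L\) depending only on \(\bar d\). This includes the entries \(\boldsymbol{\mathcal Z}_{\cdot,a}\boldsymbol{\mathcal Z}_{\cdot,b}\), \(\boldsymbol{\mathcal Z}_{\cdot,a}\mathbf C_{\cdot,b}\) and \(\boldsymbol{\mathcal Z}_{\cdot,a}y_{j,t}\). The arrays \(\varepsilon_{i,t}^2\) contribute only \(N\) terms.

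\emph{Product bound.} Multiplying gives a total of at most \(L\,2^{\bar K}N(N-1)+N=O(N^2)\). This uses that \(L\) and \(2^{\bar K}\) do not depend on \(N\) or \(T\).

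\emph{Block case.} The block dimension \(d\) is fixed, so each fixed component \(a\le d\) of \(\mathbf V^*_{i,j,H,t}\) gives the same triple indexing. The total count is at most \(d\) times the scalar count, which is again \(O(N^2)\).

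The only point requiring care is the second step. The count would explode to \(N\cdot 2^{N}\) if \(H\) ranged over all subsets of candidates. The lemma is stated over the deterministic class \(\mathcal H_i\), and the passage from the realised path to this class is handled separately by Lemma~\ref{lem:app_path_induction}. This \(O(N^2)\) count is what the \(N^2\) factor in the growth conditions \eqref{eq:growth_E}–\eqref{eq:growth_H} of Assumption~\ref{ass:tails} is designed to absorb in the subsequent union bound.
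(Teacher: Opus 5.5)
Your proposal is correct and matches the paper's proof, which bounds the count by $\sum_{i}|\mathcal S_i^d|\,|\mathcal H_i|\le N(N-1)2^{\bar K}=O(N^2)$ and notes that the fixed block dimension changes only the constant. You also count explicitly the bounded number of array types per triple (from the uniformly bounded dimensions in Assumption~\ref{ass:iv}) and the $N$ arrays $\varepsilon_{i,t}^2$; this spells out a constant factor that the paper leaves implicit.
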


\begin{proof}
For each equation \(i\), there are at most \(N-1\) candidates and at most \(2^{\bar K}\) admissible conditioning sets.  Therefore
\[
\sum_{i=1}^N |\mathcal S_i^d|\,|\mathcal H_i|
\le
N(N-1)2^{\bar K}=O(N^2).
\]
The block dimension is fixed, so multiplication by the number of block components affects only the constant.
\end{proof}

\subsection{\textbf{DGK inequalities and tail closure}}

For a scalar array \(\xi_t\), recall the classes \(E(s)\) and \(H(\theta)\) from Section~\ref{sec:assumptions}.  The appendix uses the DGK inequalities for dependent unbounded variables.  These inequalities are applied to population score arrays such as \(v^*_{i,j,H,t}\varepsilon_{i,t}\), not directly to sample fitted regressors with full-sample random weights.

\begin{lemma}\label{lem:app_dgk}
Let
\[
S_T=T^{-1/2}\sum_{t=1}^T(\xi_t-\E\xi_t),
\]
and suppose that \(\xi_t-\E\xi_t\) is exponentially \(\alpha\)-mixing.  If \(\xi_t\in E(s)\), then, with \(\gamma=s/(s+1)\),
\[
\Pr(|S_T|>\zeta)\le f_T(2,\gamma,c,\zeta).
\]
If \(\xi_t\in H(\vartheta)\), then for every \(2<\vartheta_0<\vartheta\),
\[
\Pr(|S_T|>\zeta)\le g_T(2,\vartheta_0,c,\zeta).
\]
\end{lemma}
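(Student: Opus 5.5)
The statement is essentially a restatement of the Dendramis--Giraitis--Kapetanios (DGK) concentration inequalities. The plan is therefore to reduce it to those results rather than re-derive a Bernstein-type bound from scratch. Write $\widetilde\xi_t=\xi_t-\E\xi_t$. By Assumption~\ref{ass:sampling}, $\widetilde\xi_t$ is exponentially $\alpha$-mixing with $\alpha(k)\le C\varphi^k$. The first step is to check that centring preserves the tail class. If $\xi_t\in E(s)$, then $\sup_t|\E\xi_t|<\infty$, since exponential moments bound all moments. Together with the elementary bound $|a+b|^s\le 2^{s}(|a|^s+|b|^s)$, this gives $\widetilde\xi_t\in E(s)$ with a possibly smaller constant $a$. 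The same argument via Minkowski's inequality shows that $\xi_t\in H(\vartheta)$ implies $\widetilde\xi_t\in H(\vartheta)$. This is the ``tail closure'' part of the subsection title, and it is routine.

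For the thin-tailed case, I would invoke the exponential inequality of \citet{DendramisEtal2022} for sums of mixing $E(s)$ variables. The argument runs as follows. Truncate $\widetilde\xi_t$ at a level $L_T$ and recentre the truncated part. Apply a Bernstein inequality for strongly mixing bounded sequences, using blocking with blocks of length of order $\log T$; this is the source of the $\log^2T$ in the second exponent of $f_T$. Control the truncation remainder by Markov's inequality applied to $\exp(a|\xi_t|^s)$. The Gaussian-type term $\exp(-c_1\zeta^2)$ comes from the variance part of Bernstein, since the long-run variance is bounded under exponential mixing and bounded moments. The term $\exp[-c_2(\zeta\sqrt T/\log^2T)^{\gamma}]$ comes from balancing the truncation level against the Bernstein linear term. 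Optimising $L_T$ yields the exponent $\gamma=s/(s+1)$. Constants $c_0,c_1,c_2$ depend only on $(C,\varphi,a,s)$ and are therefore uniform over the arrays indexed by $(i,j,H)$.

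For the heavy-tailed case, the plan is to use a Fuk--Nagaev-type inequality for mixing sequences with finite $\vartheta$-th moments, again as stated in \citet{DendramisEtal2022}. Truncate at level proportional to $\zeta\sqrt T$. The bounded part gives the $\exp(-c_1\zeta^2)$ term by Bernstein with blocking. The tail part is bounded by $T\cdot\Pr(|\widetilde\xi_t|>c\zeta\sqrt T)\le T\,\E|\widetilde\xi_t|^{\vartheta_0}(\zeta\sqrt T)^{-\vartheta_0}$, which equals $C\zeta^{-\vartheta_0}T^{-(\vartheta_0/2-1)}$. Working with $\vartheta_0<\vartheta$ rather than $\vartheta$ leaves slack to absorb the logarithmic losses from blocking and mixing coefficients.

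The main obstacle is bookkeeping rather than conceptual. The constants must be uniform over all $(i,j,H)$ arrays, which requires that the mixing rate and moment bounds in Assumptions~\ref{ass:sampling} and~\ref{ass:tails} are uniform. Those assumptions grant exactly this, so I would state explicitly that the DGK constants depend only on these uniform quantities. I would then cite the corresponding theorems of \citet{DendramisEtal2022} rather than reproduce the blocking argument.
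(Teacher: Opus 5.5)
Your proposal follows the same route as the paper, whose proof likewise only invokes the Dendramis--Giraitis--Kapetanios Bernstein-type inequalities for truncated unbounded $\alpha$-mixing variables. Your remark that the constants depend only on the uniform mixing and moment bounds is exactly what Lemma~\ref{lem:app_uniform_dgk} later needs. One caution about your sketch of Regime H: plain Bernstein for summands truncated at $M\asymp\zeta\sqrt T$ gives an exponent of order $\zeta^2T/(T+\zeta\sqrt T\,M)\asymp 1$, so it cannot produce $\exp(-c_1\zeta^2)$ for large $\zeta$. A Fuk--Nagaev argument has to use more than the sup bound and variance of the truncated summands (their higher moments enter the exponential-moment bound); this does not affect your proof, since it rests on the citation.
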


\begin{proof}
This is the Bernstein-type inequality of Dendramis, Giraitis and Kapetanios for dependent unbounded variables.  In their argument, bounded-variable inequalities are used only after truncating the variables; the final bound applies to unbounded variables satisfying the stated \(E(s)\) or \(H(\theta)\) condition.
\end{proof}

\begin{lemma}\label{lem:app_uniform_dgk}
Let \(\{\xi_{\ell,t}\}_{t=1}^T\), \(\ell=1,\ldots,L_N\), satisfy Lemma~\ref{lem:app_dgk} uniformly, with \(L_N=O(N^2)\).  Under Regime E of Assumption~\ref{ass:tails},
\[
\max_{\ell\le L_N}
\left|T^{-1/2}\sum_{t=1}^T(\xi_{\ell,t}-\E\xi_{\ell,t})\right|
=O_p(\sqrt{\log N}).
\]
Under Regime H of Assumption~\ref{ass:tails}, the same conclusion holds.
Moreover, under either regime there is a finite $C_0$ such that, for every fixed
$C>C_0$,
\[
\Pr\!\left(
\max_{\ell\le L_N}
\left|T^{-1/2}\sum_{t=1}^T(\xi_{\ell,t}-\E\xi_{\ell,t})\right|
>C\sqrt{\log N}
\right)\to0.
\]
\end{lemma}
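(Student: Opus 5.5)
The proof is a union bound over the $L_N=O(N^2)$ arrays combined with the pointwise DGK bounds of Lemma~\ref{lem:app_dgk}, evaluated at $\zeta=C\sqrt{\log N}$. The probability statement for large fixed $C$ is the main claim. The $O_p(\sqrt{\log N})$ conclusion follows from it, since convergence of the tail probability to zero for some finite $C$ is exactly stochastic boundedness of $\max_\ell|S_{\ell,T}|/\sqrt{\log N}$. Write $S_{\ell,T}=T^{-1/2}\sum_t(\xi_{\ell,t}-\E\xi_{\ell,t})$. Because Lemma~\ref{lem:app_dgk} holds uniformly in $\ell$ with common constants $c_0,c_1,c_2$, we have
\[
\Pr\Bigl(\max_{\ell\le L_N}|S_{\ell,T}|>C\sqrt{\log N}\Bigr)\le L_N\,\sup_\ell\Pr\bigl(|S_{\ell,T}|>C\sqrt{\log N}\bigr).
\]
Since $L_N\le KN^2$ for a fixed $K$, it suffices to show that $N^2$ times the relevant envelope tends to zero.

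\emph{Regime E.} The envelope is $f_T(2,\gamma,c,C\sqrt{\log N})$, with $\gamma\ge\gamma_E$ for every array. It has two pieces. The first is $c_0\exp(-c_1C^2\log N)=c_0N^{-c_1C^2}$, and $N^2$ times this vanishes once $C^2>2/c_1$. This fixes $C_0=\sqrt{2/c_1}$ plus a margin. The second is $c_0\exp[-c_2(C\sqrt{\log N}\sqrt T/\log^2T)^{\gamma}]$. It is decreasing in $C$, and its exponent only becomes more negative as $\gamma$ moves from $\gamma_E$ up to $\gamma$ whenever the base exceeds one, which holds for large $T$. So it is dominated by the corresponding $\gamma_E$ term. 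Condition \eqref{eq:growth_E} then gives $N^2$ times the whole envelope $\to0$ for all large $C$. If \eqref{eq:growth_E} is only assumed for $C$ beyond some level, we simply enlarge $C_0$ accordingly. Monotonicity in $C$ makes the bound hold for every fixed $C>C_0$.

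\emph{Regime H.} Fix $\vartheta_0\in(2,\vartheta)$ as in \eqref{eq:growth_H}. The envelope is $g_T(2,\vartheta_0,c,C\sqrt{\log N})$. Its exponential piece is handled exactly as in Regime E. Its polynomial piece is
\[
c_0\,C^{-\vartheta_0}(\log N)^{-\vartheta_0/2}T^{-(\vartheta_0/2-1)}.
\]
Multiplied by $N^2$, this is $C^{-\vartheta_0}$ times the sequence in \eqref{eq:growth_H}, which tends to zero for any fixed $C>0$.

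The only delicate point is the uniformity bookkeeping. The DGK constants $c_0,c_1,c_2$ must be the same across all $L_N$ arrays, and this is guaranteed by the uniform mixing rate in Assumption~\ref{ass:sampling} and the uniform $E(s)$ or $H(\vartheta)$ bounds in Assumption~\ref{ass:tails}. In Regime E one must also use the smallest exponent $\gamma_E$ when comparing with \eqref{eq:growth_E}. I expect this uniformity step, rather than the union bound itself, to be where care is needed. In particular, $C_0$ must be chosen depending only on these uniform constants so that the threshold constant $C_t$ in Theorem~\ref{thm:t_exact} can be taken larger than it.
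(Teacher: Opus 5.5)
Your proposal is correct and follows the paper's proof: you take a union bound over the $O(N^2)$ arrays and evaluate the DGK envelope at $\zeta=C\sqrt{\log N}$. The exponential piece $N^{2-c_1C^2}$ is removed by taking $C$ large, and the remaining piece by \eqref{eq:growth_E} or \eqref{eq:growth_H}. You also make explicit two details the paper leaves implicit: the comparison of array-specific exponents $\gamma\ge\gamma_E$ (valid once the base exceeds one), and the deduction of the $O_p(\sqrt{\log N})$ claim from the tail bound, which needs only that implication rather than the equivalence you assert.
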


\begin{proof}
Set \(\zeta=C\sqrt{\log N}\).  Under Regime E, Lemma~\ref{lem:app_dgk} and the union bound give
\[
\Pr\left(
\max_{\ell\le L_N}\left|T^{-1/2}\sum_t(\xi_{\ell,t}-\E\xi_{\ell,t})\right|>\zeta
\right)
\le
C N^2 f_T(2,\gamma_E,c,C\sqrt{\log N}),
\]
which tends to zero by Assumption~\ref{ass:tails} for sufficiently large \(C\).

Under Regime H,
\[
\Pr\left(
\max_{\ell\le L_N}\left|T^{-1/2}\sum_t(\xi_{\ell,t}-\E\xi_{\ell,t})\right|>\zeta
\right)
\]
\[
\le
C N^2\left\{\exp(-cC^2\log N)
+
(C\sqrt{\log N})^{-\vartheta_0}T^{-(\vartheta_0/2-1)}\right\}.
\]
The exponential component tends to zero for \(C\) sufficiently large.  The polynomial component tends to zero by Assumption~\ref{ass:tails}.
\end{proof}

\begin{lemma}\label{lem:app_tail_products}
If \(a_t\in E(s_a)\) and \(b_t\in E(s_b)\), then \(a_tb_t\in E(s_{ab})\), where
\[
s_{ab}=\frac{s_as_b}{s_a+s_b}.
\]
If \(a_t\in H(\vartheta_a)\) and \(b_t\in H(\vartheta_b)\), then \(a_tb_t\in H(\vartheta_{ab})\) for every
\[
0<\vartheta_{ab}<\frac{\vartheta_a\vartheta_b}{\vartheta_a+\vartheta_b}.
\]
If \(a_t\in E(s_a)\) and \(b_t\in H(\vartheta_b)\), then \(a_tb_t\in H(\vartheta_0)\) for every \(0<\vartheta_0<\vartheta_b\).
\end{lemma}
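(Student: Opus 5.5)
The plan is to treat the three cases separately, all by elementary convex inequalities applied pointwise in \(t\), and then take \(\sup_t\) at the end. Uniformity in \(t\) is automatic because every bound will depend only on \(\sup_t\E\exp(a|a_t|^{s_a})\), \(\sup_t\E|b_t|^{\vartheta_b}\), and similar constants.

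\emph{Case \(E\times E\).} Set \(s=s_{ab}=s_as_b/(s_a+s_b)\), \(p=s_a/s\) and \(q=s_b/s\). Then \(1/p+1/q=s/s_a+s/s_b=1\). Young's inequality gives
\[
|a_tb_t|^{s}=|a_t|^{s}|b_t|^{s}\le \tfrac1p|a_t|^{s_a}+\tfrac1q|b_t|^{s_b}\le |a_t|^{s_a}+|b_t|^{s_b}.
\]
Let \(a_a,a_b>0\) be the constants in the definitions of \(a_t\in E(s_a)\) and \(b_t\in E(s_b)\), and put \(c=\min(a_a,a_b)/2\). Then
\[
\exp(c|a_tb_t|^{s})\le \exp(c|a_t|^{s_a})\exp(c|b_t|^{s_b}).
\]
By Cauchy--Schwarz,
\[
\E\exp(c|a_tb_t|^{s})\le \{\E\exp(2c|a_t|^{s_a})\}^{1/2}\{\E\exp(2c|b_t|^{s_b})\}^{1/2}.
\]
Since \(2c\) does not exceed either defining constant, both factors are bounded uniformly in \(t\), so \(a_tb_t\in E(s_{ab})\).

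\emph{Case \(H\times H\).} Fix \(0<\vartheta_{ab}<\vartheta_a\vartheta_b/(\vartheta_a+\vartheta_b)\). Take \(p=\vartheta_a/\vartheta_{ab}>1\) and its conjugate \(q=p/(p-1)\). Hölder's inequality gives
\[
\E|a_tb_t|^{\vartheta_{ab}}\le (\E|a_t|^{\vartheta_a})^{1/p}(\E|b_t|^{\vartheta_{ab}q})^{1/q}.
\]
A direct computation shows that \(\vartheta_{ab}q=\vartheta_a\vartheta_{ab}/(\vartheta_a-\vartheta_{ab})\) is at most \(\vartheta_b\) exactly when \(\vartheta_{ab}\le\vartheta_a\vartheta_b/(\vartheta_a+\vartheta_b)\). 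Lyapunov's inequality then bounds \(\E|b_t|^{\vartheta_{ab}q}\) by \((\E|b_t|^{\vartheta_b})^{\vartheta_{ab}q/\vartheta_b}\), which is bounded uniformly in \(t\). The strict inequality in the statement leaves slack, so the endpoint needs no separate treatment.

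\emph{Mixed case.} First note that \(a_t\in E(s_a)\) implies \(\sup_t\E|a_t|^r<\infty\) for every \(r>0\). This holds because \(x^r\le C_r\exp(a x^{s_a})\) for all \(x\ge0\), with \(C_r\) depending only on \(r\), \(a\) and \(s_a\). Now fix \(0<\vartheta_0<\vartheta_b\) and use Hölder's inequality with \(p=\vartheta_b/\vartheta_0>1\) and conjugate \(q<\infty\):
\[
\E|a_tb_t|^{\vartheta_0}\le(\E|b_t|^{\vartheta_b})^{1/p}(\E|a_t|^{\vartheta_0q})^{1/q}.
\]
Both factors are uniformly bounded, the second by the preceding remark.

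There is no genuine obstacle in this argument. The only points requiring care are the conjugate-exponent arithmetic in the \(H\times H\) case, and checking that the \(H(\cdot)\) class as defined in Section~\ref{sec:assumptions} (which requires an exponent above \(2\)) is interpreted here simply as the uniform moment bound; the lemma allows exponents below \(2\), which is harmless.
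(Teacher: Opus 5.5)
Your proof is correct. In the $H\times H$ and mixed cases it follows the paper's argument: H\"older with conjugate exponents, plus the fact that $E(s_a)$ gives uniformly bounded moments of all orders, which you justify via $x^r\le C_r\exp(a x^{s_a})$ where the paper only asserts it. Your choice $p=\vartheta_a/\vartheta_{ab}$ is even slightly sharper than the paper's $r_a,r_b$, which are chosen with strict inequalities. Because your condition $\vartheta_{ab}q\le\vartheta_b$ is non-strict, your argument also covers the endpoint $\vartheta_{ab}=\vartheta_a\vartheta_b/(\vartheta_a+\vartheta_b)$.

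The genuine difference is in the $E\times E$ case. The paper uses the same conjugate pair $p=s_a/s_{ab}$, $q=s_b/s_{ab}$, but applies H\"older to every power moment, $\E|a_tb_t|^{ks_{ab}}\le\{\E|a_t|^{ks_a}\}^{1/p}\{\E|b_t|^{ks_b}\}^{1/q}$, and then sums the exponential series. Making that series converge needs an unstated factorial bound, $\E|a_t|^{ks_a}\le k!\,a_a^{-k}\sup_t\E\exp(a_a|a_t|^{s_a})$. With it the terms are geometric, and the series converges for $c<a_a^{1/p}a_b^{1/q}$.

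You instead apply Young's inequality pointwise and bound the exponential moment directly. This bypasses the series entirely and gives an explicit admissible constant. Using H\"older with the same $(p,q)$ on $\exp(c|a_t|^{s_a}/p)\exp(c|b_t|^{s_b}/q)$, instead of Cauchy--Schwarz, would even allow $c=\min(a_a,a_b)$. Since $E(s)$ only requires some positive constant, these differences in constants are immaterial. Your route is shorter and fully self-contained. The paper's moment-by-moment route is the natural one if you view $E(s)$ through the growth of its moments.
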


\begin{proof}
For the \(E(s)\) statement, let \(p=(s_a+s_b)/s_b\) and \(q=(s_a+s_b)/s_a\).  Then \(p^{-1}+q^{-1}=1\), \(s_{ab}p=s_a\), and \(s_{ab}q=s_b\).  H\"older's inequality gives
\[
\E |a_tb_t|^{ks_{ab}}
\le
\{\E |a_t|^{ks_a}\}^{1/p}
\{\E |b_t|^{ks_b}\}^{1/q}.
\]
Expanding the exponential moment as a power series yields \(\E\exp(c|a_tb_t|^{s_{ab}})<\infty\) for sufficiently small \(c>0\).

For the \(H(\theta)\) statement, choose \(r_a,r_b>1\) such that \(r_a^{-1}+r_b^{-1}=1\), \(r_a\vartheta_{ab}<\vartheta_a\), and \(r_b\vartheta_{ab}<\vartheta_b\).  Then
\[
\E |a_tb_t|^{\vartheta_{ab}}
\le
\{\E |a_t|^{r_a\vartheta_{ab}}\}^{1/r_a}
\{\E |b_t|^{r_b\vartheta_{ab}}\}^{1/r_b}<\infty.
\]
The mixed case follows because \(E(s_a)\) implies finite moments of all orders.
\end{proof}

\subsection{\textbf{Sample first stages}}

\begin{lemma}\label{lem:app_first_stage}
Under Assumption~\ref{ass:first_stage},
\[
\sup_{i,j,H\in\mathcal H_i}
T^{-1}\sum_{t=1}^T(\widehat v_{i,j,H,t}-v^*_{i,j,H,t})^2=\op(1),
\]
and
\[
\sup_{i,j,H\in\mathcal H_i}
\left|T^{-1/2}\sum_{t=1}^T(\widehat v_{i,j,H,t}-v^*_{i,j,H,t})\varepsilon_{i,t}\right|=\op(\sqrt{\log N}).
\]
The same bound holds with $\varepsilon_{i,t}$ replaced by $\mu_{i,j,H,t}$, and
\[
\sup_{i,j,H\in\mathcal H_i}
|\widehat\sigma_{i,j,H}^2-\sigma_{i,j,H}^2|=\op(1).
\]
Under Assumption~\ref{ass:block_first_stage}, the analogous block statements
hold componentwise for
\(\widehat{\mathbf V}_{i,j,H,t}-\mathbf V^*_{i,j,H,t}\).
\end{lemma}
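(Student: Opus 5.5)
The plan is to read the lemma off Assumption~\ref{ass:first_stage} and Assumption~\ref{ass:block_first_stage}. Those assumptions already impose, as high-level conditions, exactly the four displayed uniform bounds. The only real work is to check that the index set in the lemma is the one covered by the assumptions, and to note how the block case reduces to the scalar one.

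\textbf{First step.} I would recall that Assumption~\ref{ass:first_stage} states its bounds "uniformly over \(i\), \(j\), and \(H\in\mathcal H_i\)". Here \(\mathcal H_i=\mathfrak U_i\) is the deterministic class of subsets of \(\mathcal K_i\), with \(|\mathcal H_i|\le 2^{\bar K}\) by Eq.~\eqref{eq:app_H_count}. The suprema in the lemma run over the same deterministic class, not over a realised random selected set. Hence the four bounds follow verbatim:
\begin{itemize}
\item the mean-square replacement bound for \(\widehat v_{i,j,H,t}-v^*_{i,j,H,t}\);
\item the two cross-product bounds of order \(o_p(\sqrt{\log N})\), one with \(\varepsilon_{i,t}\) and one with \(\mu_{i,j,H,t}=\mu_{i,H,t}\);
\item the uniform consistency of \(\widehat\sigma^2_{i,j,H}\).
\end{itemize}
I would also stress that the lemma is stated over \(\mathcal H_i\) so that it can later be combined with Lemma~\ref{lem:app_path_induction}. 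On the event that the realised path stays in \(\mathcal H_i\), the same bounds apply to the realised stagewise objects.

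\textbf{Second step: block case.} Assumption~\ref{ass:block_first_stage} gives the blockwise mean-square bound \(\sup T^{-1}\sum_t\|\widehat{\mathbf V}_{i,j,H,t}-\mathbf V^*_{i,j,H,t}\|^2=o_p(1)\). Each component satisfies \(|a_a|\le\|\mathbf a\|\), so this implies the componentwise mean-square bound. The assumption directly states the componentwise cross-product bounds, with both \(\varepsilon_{i,t}\) and \(\mu_{i,j,H,t}\), and the variance bound. The block dimension \(d\) is fixed, so taking a maximum over \(a\le d\) does not change the stochastic orders.

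\textbf{Optional strengthening.} The only place with any content is if one wanted to derive the replacement bounds from the moment convergence conditions in the first part of Assumption~\ref{ass:first_stage}, rather than assume them. In that case I would proceed as follows.
\begin{itemize}
\item Write \(\widehat v-v^*\) as \(\boldsymbol{\mathcal Z}'(\widehat{\boldsymbol\pi}-\boldsymbol\pi)\) minus fitted-control terms. The coefficient errors are \(o_p(1)\) uniformly, by continuity of matrix inversion under the eigenvalue bounds of Assumption~\ref{ass:iv}.
\item Bound the cross term by \(\|\widehat{\boldsymbol\pi}-\boldsymbol\pi\|\cdot\max_a|T^{-1/2}\sum_t\mathcal Z_{t,a}\varepsilon_{i,t}|\).
\item The second factor is \(O_p(\sqrt{\log N})\) by Lemma~\ref{lem:app_uniform_dgk}, using the \(O(N^2)\) count of Lemma~\ref{lem:app_count} and \(\E(\boldsymbol{\mathcal Z}\varepsilon)=\mathbf 0\) from Assumption~\ref{ass:iv}.
\end{itemize}
This yields \(o_p(\sqrt{\log N})\). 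The main obstacle on this route is that the sup over \(O(N^2)\) triples needs the coefficient errors to be uniformly \(o_p(1)\). That is exactly what the uniform moment convergence in Assumption~\ref{ass:first_stage} supplies. Since the lemma as stated relies on the assumed replacement bounds, the direct citation suffices.
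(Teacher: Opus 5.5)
Your proposal is correct and matches the paper's proof. The paper likewise just observes that the scalar bounds are exactly the first-stage replacement conditions imposed in Assumption~\ref{ass:first_stage}, and that the block bounds are those imposed in Assumption~\ref{ass:block_first_stage}; it records them as a lemma only so that bounds proved for $v^*_{i,j,H,t}$ can later be transferred to $\widehat v_{i,j,H,t}$. Your optional primitive-level derivation is not needed, and it would need more than you state: tail and mixing conditions on $\boldsymbol{\mathcal Z}_{i,j,H,t,a}\varepsilon_{i,t}$, which are not among the arrays listed before Assumption~\ref{ass:sampling}; and, for the $\mu_{i,j,H,t}$ analogue, a rate on the coefficient errors rather than mere $o_p(1)$, since $\E(\boldsymbol{\mathcal Z}_{i,j,H,t}\mu_{i,j,H,t})$ is generally nonzero and contributes a term of order $\sqrt T$.
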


\begin{proof}
The scalar statements are exactly the sample first-stage approximation conditions imposed in
Assumption~\ref{ass:first_stage}, and the block statement is imposed in
Assumption~\ref{ass:block_first_stage}. The result is recorded here because the subsequent
probability bounds are proved for the population partial fitted objects and then transferred
to the feasible fitted objects using these approximations.
\end{proof}

\begin{lemma}\label{lem:primitive_stagewise_exogeneity}
Under Assumption~\ref{ass:iv}, for every admissible \((i,j,H)\),
\[
\E(v^*_{i,j,H,t}\varepsilon_{i,t})=0.
\]
For block screening,
\[
\E(\mathbf V^*_{i,j,H,t}\varepsilon_{i,t})=\mathbf 0.
\]
\end{lemma}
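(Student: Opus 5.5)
The plan is to exploit the fact that both $\bar v^*_{i,j,H,t}$ and $\mathbf C^*_{i,j,H,t}$ are, by construction, fixed (non-random) linear combinations of the full stagewise instrument vector $\boldsymbol{\mathcal Z}_{i,j,H,t}$. From the definitions preceding \eqref{eq:v_star_def} we have $\bar v^*_{i,j,H,t}=\boldsymbol\pi_{v,i,j,H}'\boldsymbol{\mathcal Z}_{i,j,H,t}$ and $\mathbf C^*_{i,j,H,t}=\boldsymbol\Pi_{C,i,j,H}'\boldsymbol{\mathcal Z}_{i,j,H,t}$. The coefficient matrices $\boldsymbol\pi_{v,i,j,H}$, $\boldsymbol\Pi_{C,i,j,H}$ and $\boldsymbol\Gamma_{vC,i,j,H}$ are population moments, so they are deterministic and well defined. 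Well-definedness uses the eigenvalue bounds in Assumption~\ref{ass:iv}: $\mathbf Q^{\boldsymbol{\mathcal Z}}_{i,j,H}$ is invertible, and the fitted-control Gram matrix $\E(\mathbf C^*_{i,j,H,t}\mathbf C^{*\prime}_{i,j,H,t})$ is invertible. Substituting into \eqref{eq:v_star_def} gives
\[
v^*_{i,j,H,t}=\boldsymbol\kappa_{i,j,H}'\boldsymbol{\mathcal Z}_{i,j,H,t},
\qquad
\boldsymbol\kappa_{i,j,H}:=\boldsymbol\pi_{v,i,j,H}-\boldsymbol\Pi_{C,i,j,H}\boldsymbol\Gamma_{vC,i,j,H},
\]
where $\boldsymbol\kappa_{i,j,H}$ is a deterministic vector of bounded dimension.

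Taking expectations and pulling the deterministic vector outside gives $\E(v^*_{i,j,H,t}\varepsilon_{i,t})=\boldsymbol\kappa_{i,j,H}'\E(\boldsymbol{\mathcal Z}_{i,j,H,t}\varepsilon_{i,t})$. This vanishes by the exogeneity condition in Assumption~\ref{ass:iv}, which holds for every $i$, every $H\in\mathcal H_i$ and every $j\notin H$. Since those are exactly the admissible triples, the claim holds uniformly over admissible $(i,j,H)$.

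For the block case, the argument is identical column by column. We have $\bar{\mathbf V}^*_{i,j,H,t}=\boldsymbol\Pi_{V,i,j,H}'\boldsymbol{\mathcal Z}_{i,j,H,t}$, so
\[
\mathbf V^*_{i,j,H,t}=(\boldsymbol\Pi_{V,i,j,H}-\boldsymbol\Pi_{C,i,j,H}\boldsymbol\Gamma_{VC,i,j,H})'\boldsymbol{\mathcal Z}_{i,j,H,t}.
\]
Here the $d$-column coefficient matrix is deterministic, and the required invertibility is supplied by the blockwise version of Assumption~\ref{ass:iv} imposed in Assumption~\ref{ass:block_first_stage}. Hence $\E(\mathbf V^*_{i,j,H,t}\varepsilon_{i,t})=\mathbf 0$.

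There is no substantive obstacle. The only point requiring care is to confirm that the projection coefficients are population (non-random) quantities rather than sample estimates, so that they commute with the expectation. Time invariance of the moments, assumed throughout Section~\ref{sec:assumptions}, ensures these coefficients do not depend on $t$. The same linear-combination representation also yields the orthogonality $\E(v^*_{i,j,H,t}\mathbf C^{*\prime}_{i,j,H,t})=\mathbf 0$ noted in the main text, which follows from the definition of $\boldsymbol\Gamma_{vC,i,j,H}$ as a population least-squares coefficient.
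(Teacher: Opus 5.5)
Your proposal is correct and follows the paper's argument. You write $v^*_{i,j,H,t}$, and each column of $\mathbf V^*_{i,j,H,t}$, as a deterministic linear combination of $\boldsymbol{\mathcal Z}_{i,j,H,t}$ and then apply the exogeneity condition in Assumption~\ref{ass:iv}; your explicit coefficient $\boldsymbol\pi_{v,i,j,H}-\boldsymbol\Pi_{C,i,j,H}\boldsymbol\Gamma_{vC,i,j,H}$ simply makes concrete the vector the paper calls $a_{i,j,H}$.
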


\begin{proof}
By construction, \(\bar v^*_{i,j,H,t}\) and \(\mathbf C^*_{i,j,H,t}\) are linear combinations of
\(\boldsymbol{\mathcal Z}_{i,j,H,t}\). Therefore the partial fitted candidate
\[
v^*_{i,j,H,t}=\bar v^*_{i,j,H,t}-\boldsymbol\Gamma_{vC,i,j,H}'\mathbf C^*_{i,j,H,t}
\]
is also a linear combination of \(\boldsymbol{\mathcal Z}_{i,j,H,t}\). Hence there exists a vector
\(a_{i,j,H}\) such that \(v^*_{i,j,H,t}=a_{i,j,H}'\boldsymbol{\mathcal Z}_{i,j,H,t}\). Assumption~\ref{ass:iv}
gives \(\E(\boldsymbol{\mathcal Z}_{i,j,H,t}\varepsilon_{i,t})=0\), and therefore
\[
\E(v^*_{i,j,H,t}\varepsilon_{i,t})
= a_{i,j,H}'\E(\boldsymbol{\mathcal Z}_{i,j,H,t}\varepsilon_{i,t})=0.
\]
The block statement follows column by column because every component of \(\mathbf V^*_{i,j,H,t}\) is also a linear combination of \(\boldsymbol{\mathcal Z}_{i,j,H,t}\).
\end{proof}

\subsection{\textbf{Scalar screening lemmas}}\label{sec:scalar_screening_lemmas}

Recall that $\mu_{i,j,H,t}$ is the omitted structural component defined in
Section~\ref{sec:assumptions}. For the asymptotic analysis it is convenient to write
\begin{equation}
\Delta^t_{i,j,H}
=
\frac{
\left|
\E\!\left(
 v^*_{i,j,H,t}\mu_{i,j,H,t}
\right)
\right|
}
{
\sigma_{i,j,H}
\left\{
\E\!\left[(v^*_{i,j,H,t})^2\right]
\right\}^{1/2}
}.
\label{eq:delta_t}
\end{equation}
By the definitions in Section~\ref{sec:assumptions},
$\Delta^t_{i,j,H}=|m_{i,j}(H)|$.

\begin{lemma}\label{lem:app_t_rep}
The scalar stagewise IV/FWL statistic satisfies
\[
t_{i,j,H}
=
\frac{T^{-1/2}\widehat{\mathbf v}_{i,j,H}'\widetilde{\mathbf y}_{i,j,H}}
{\widehat\sigma_{i,j,H}(T^{-1}\widehat{\mathbf v}_{i,j,H}'\widehat{\mathbf v}_{i,j,H})^{1/2}}.
\]
\end{lemma}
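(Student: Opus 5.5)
The statement is essentially a restatement of the definition in Eq.~\eqref{eq:tstat}, so the plan is to verify that the IV/FWL objects defined in Section~\ref{BOLMT} produce exactly this ratio. I would also confirm that it coincides with the usual $t$-ratio of the candidate coefficient in the full stagewise IV regression. This identification is what later lemmas use when they replace the numerator and the Gram term by their population counterparts.

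\textbf{Step 1: partial IV estimator.} Write $\mathbf P_{\mathcal Z}$ for the projection onto the columns of $\boldsymbol{\mathcal Z}_{i,j,H}$. The full stagewise 2SLS estimator regresses $\mathbf y_i$ on the fitted regressors $(\widehat{\mathbf C}_{i,j,H},\overline{\mathbf v}_{i,j,H})=\mathbf P_{\mathcal Z}(\mathbf C_{i,H},\mathbf v_j)$. By the Frisch--Waugh--Lovell theorem applied to this OLS-on-fitted-values representation, the coefficient on the candidate equals
\[
\frac{\widehat{\mathbf v}_{i,j,H}'\mathbf y_i}{\widehat{\mathbf v}_{i,j,H}'\widehat{\mathbf v}_{i,j,H}},
\qquad
\widehat{\mathbf v}_{i,j,H}=\widehat{\mathbf M}^C_{i,j,H}\overline{\mathbf v}_{i,j,H}.
\]
Since $\widehat{\mathbf M}^C_{i,j,H}$ is symmetric and idempotent, $\widehat{\mathbf v}'\mathbf y_i=\widehat{\mathbf v}'\widehat{\mathbf M}^C\mathbf y_i=\widehat{\mathbf v}'\widetilde{\mathbf y}_{i,j,H}$. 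This recovers $\widehat\theta_{i,j,H}$.

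\textbf{Step 2: standard error.} Under the homoskedastic normalisation with residual variance $\widehat\sigma^2_{i,j,H}$ computed from the full IV regression using the original regressors, the variance of the coefficient is the corresponding diagonal element of $\widehat\sigma^2(\widehat{\mathbf X}'\widehat{\mathbf X})^{-1}$. By the partitioned-inverse formula, this element is $\widehat\sigma^2/(\widehat{\mathbf v}'\widehat{\mathbf v})$. Dividing $\widehat\theta$ by its standard error and rescaling numerator and denominator by $T^{-1/2}$ and $T^{-1}$ then gives the displayed expression.

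\textbf{Main obstacle.} The only delicate point is that $\widehat\sigma^2$ must be built from structural residuals $\mathbf y_i-\mathbf C\widehat{\boldsymbol\gamma}-\mathbf v_j\widehat\theta$, not from second-stage residuals. This does not affect the algebra of the ratio, because the statistic is defined with whatever $\widehat\sigma$ is used. It matters only for the invertibility of $\widehat{\mathbf C}'\widehat{\mathbf C}$ and $\widehat{\mathbf v}'\widehat{\mathbf v}>0$. I would ensure both hold with probability tending to one, uniformly over $(i,j,H)$, by combining the eigenvalue bounds and the nondegeneracy of $v^*$ in Assumption~\ref{ass:iv} with the uniform sample-moment convergence in Assumption~\ref{ass:first_stage} and Lemma~\ref{lem:app_first_stage}.
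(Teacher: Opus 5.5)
Your argument is correct and follows the paper's route: the paper also obtains $\widehat\theta_{i,j,H}$ by IV/FWL on the fitted regressors and then divides by the residual scale, and your partitioned-inverse step simply makes explicit why the standard error is $\widehat\sigma_{i,j,H}/(\widehat{\mathbf v}_{i,j,H}'\widehat{\mathbf v}_{i,j,H})^{1/2}$. The paper's proof also records the exact score identity $\widehat{\mathbf v}_{i,j,H}'\widetilde{\mathbf y}_{i,j,H}=\sum_{t}\widehat v_{i,j,H,t}(\mu_{i,j,H,t}+\varepsilon_{i,t})$, which is what Lemmas~\ref{lem:app_t_null} and~\ref{lem:app_t_signal} actually take from this lemma; you should add it, and it follows because $\widehat{\mathbf v}_{i,j,H}$ lies in the instrument column space and is orthogonal to $\widehat{\mathbf C}_{i,j,H}$, so $\widehat{\mathbf v}_{i,j,H}'\mathbf C_{i,H}=0$.
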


\begin{proof}
At intermediate model \(H\), the full stagewise IV regression treats the selected controls
\(\mathbf C_{i,H}\) and the current candidate \(\mathbf v_j\) jointly. By the IV/FWL theorem,
the coefficient on the current candidate equals the coefficient from regressing the residualised
outcome \(\widetilde{\mathbf y}_{i,j,H}\) on the partial fitted candidate
\(\widehat{\mathbf v}_{i,j,H}\), where both residualisations are with respect to the fitted selected
controls. Therefore
\[
\widehat\theta_{i,j,H}
=
\frac{\widehat{\mathbf v}_{i,j,H}'\widetilde{\mathbf y}_{i,j,H}}
{\widehat{\mathbf v}_{i,j,H}'\widehat{\mathbf v}_{i,j,H}}.
\]
Dividing the normalised numerator by the corresponding residual standard deviation gives the displayed statistic.
Moreover, $\widehat{\mathbf v}_{i,j,H}$ lies in the column space of the
stagewise instrument matrix and is orthogonal to $\widehat{\mathbf C}_{i,j,H}$.
Since $\mathbf C_{i,H}-\widehat{\mathbf C}_{i,j,H}$ is orthogonal to the same
instrument space, $\widehat{\mathbf v}_{i,j,H}'\mathbf C_{i,H}=0$. The
structural equation therefore gives the exact score identity
\[
\widehat{\mathbf v}_{i,j,H}'\widetilde{\mathbf y}_{i,j,H}
=
\sum_{t=1}^T\widehat v_{i,j,H,t}
\{\mu_{i,j,H,t}+\varepsilon_{i,t}\}.
\]
\end{proof}

\begin{lemma}\label{lem:app_t_denom}
Under Assumptions~\ref{ass:iv}--\ref{ass:tails},
\[
\inf_{i,j,H\in\mathcal H_i}T^{-1}\widehat{\mathbf v}_{i,j,H}'\widehat{\mathbf v}_{i,j,H}>c
\]
and
\[
\inf_{i,j,H\in\mathcal H_i}\widehat\sigma_{i,j,H}^2>c
\]
with probability tending to one, for some \(c>0\).
\end{lemma}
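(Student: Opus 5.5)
The plan is to move from the population partial fitted candidate \(v^*_{i,j,H,t}\) to its feasible counterpart using the replacement bounds in Assumption~\ref{ass:first_stage}, recorded in Lemma~\ref{lem:app_first_stage}. The population objects are bounded away from zero by Assumption~\ref{ass:iv}. Write
\[
T^{-1}\widehat{\mathbf v}_{i,j,H}'\widehat{\mathbf v}_{i,j,H}
=T^{-1}\sum_{t}(v^*_{i,j,H,t})^2
+T^{-1}\sum_t(\widehat v_{i,j,H,t}-v^*_{i,j,H,t})^2
+2T^{-1}\sum_t v^*_{i,j,H,t}(\widehat v_{i,j,H,t}-v^*_{i,j,H,t}).
\]
The second term is nonnegative and can be dropped for a lower bound. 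By Cauchy--Schwarz, the cross term is bounded by
\[
2\Bigl\{T^{-1}\sum_t(v^*_{i,j,H,t})^2\Bigr\}^{1/2}\Bigl\{T^{-1}\sum_t(\widehat v_{i,j,H,t}-v^*_{i,j,H,t})^2\Bigr\}^{1/2}.
\]
Uniformly over \((i,j,H)\), the second factor is \(o_p(1)\) by Lemma~\ref{lem:app_first_stage}. It therefore suffices to show that \(T^{-1}\sum_t(v^*_{i,j,H,t})^2\) is uniformly within \(o_p(1)\) of \(\E[(v^*_{i,j,H,t})^2]\). That expectation is bounded below by a positive constant by the nondegeneracy condition in Assumption~\ref{ass:iv}. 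It is also bounded above, since \(v^*\) is a linear combination of \(\boldsymbol{\mathcal Z}_{i,j,H,t}\) with coefficient vectors uniformly bounded by the eigenvalue bounds of Assumption~\ref{ass:iv}. The upper bound keeps the Cauchy--Schwarz cross term \(o_p(1)\) uniformly.

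For the uniform law of large numbers step, I would apply Lemma~\ref{lem:app_uniform_dgk} to the centred array \((v^*_{i,j,H,t})^2\). This array is among the processes covered by Assumptions~\ref{ass:sampling} and~\ref{ass:tails}. The index set runs over \(i\), all candidates \(j\neq i\) and \(H\in\mathcal H_i\). By the counting argument of Lemma~\ref{lem:app_count} applied to all candidates, not only irrelevant ones, its cardinality is at most \(N(N-1)2^{\bar K}=O(N^2)\). Lemma~\ref{lem:app_uniform_dgk} then gives
\[
\max\Bigl|T^{-1}\sum_t\{(v^*_{i,j,H,t})^2-\E(v^*_{i,j,H,t})^2\}\Bigr|=O_p\bigl(\sqrt{\log N/T}\bigr).
\]
I will check that this is \(o(1)\) under Assumption~\ref{ass:tails}. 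In Regime H it follows because \(N^2T^{-(\vartheta_0/2-1)}\to0\) forces \(\log N=o(T)\). In Regime E, the second exponential in \(f_T\) must vanish after multiplication by \(N^2\), which again needs \(\sqrt{\log N}\sqrt T/\log^2T\to\infty\) at a rate that implies \(\log N/T\to0\). Combining the three terms gives \(T^{-1}\widehat{\mathbf v}'\widehat{\mathbf v}\ge \inf\E(v^*)^2-o_p(1)\), which exceeds \(c\) for any \(c<\inf\E(v^*)^2\), with probability tending to one.

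The variance statement is more direct. Lemma~\ref{lem:app_first_stage} gives \(\sup|\widehat\sigma^2_{i,j,H}-\sigma^2_{i,j,H}|=o_p(1)\), and Assumption~\ref{ass:iv} gives \(\inf\sigma^2_{i,j,H}\ge c>0\). So with probability tending to one, \(\widehat\sigma^2_{i,j,H}>c/2\) uniformly.

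The main obstacle I expect is verifying that the tail-regime growth conditions really imply \(\log N/T\to0\), so that the uniform deviation of the Gram term is negligible. A second point to confirm is that the union bound may legitimately range over all candidates \(j\), and not just those in \(\mathcal S_i^d\). The argument stays valid as long as the count remains \(O(N^2)\).
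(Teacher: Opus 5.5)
Your proposal is essentially the paper's proof: the same three-term decomposition, Lemma~\ref{lem:app_uniform_dgk} applied to the centred $(v^*_{i,j,H,t})^2$ array, Lemma~\ref{lem:app_first_stage} for the replacement terms, and the uniform variance limit combined with the lower bound in Assumption~\ref{ass:iv}. Your checks that the union bound may run over all $j$ and that the growth conditions force $\log N/T\to0$ fill in steps the paper leaves implicit. One small fix: the upper bound on $\E[(v^*_{i,j,H,t})^2]$ does not follow from the eigenvalue bounds alone, since $\boldsymbol\pi_{v,i,j,H}$ also involves $\E(\boldsymbol{\mathcal Z}_{i,j,H,t}y_{j,t})$. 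You do not need that bound if you keep the nonnegative term: with $A=T^{-1}\sum_t(v^*_{i,j,H,t})^2$ and $B=T^{-1}\sum_t(\widehat v_{i,j,H,t}-v^*_{i,j,H,t})^2$, Cauchy--Schwarz gives $T^{-1}\widehat{\mathbf v}_{i,j,H}'\widehat{\mathbf v}_{i,j,H}\ge(\sqrt A-\sqrt B)^2$.
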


\begin{proof}
Decompose
\[
T^{-1}\widehat{\mathbf v}'\widehat{\mathbf v}
=
T^{-1}\sum_t(v_t^*)^2
+2T^{-1}\sum_t v_t^*(\widehat v_t-v_t^*)
+T^{-1}\sum_t(\widehat v_t-v_t^*)^2.
\]
The first term converges uniformly to \(\E[(v^*)^2]\), which is bounded away from zero by Assumption~\ref{ass:iv}.  Uniform convergence follows from Lemma~\ref{lem:app_uniform_dgk} applied to \((v_t^*)^2-\E[(v_t^*)^2]\).  Tail admissibility of the square is imposed by Assumption~\ref{ass:tails} for the scalar processes listed in Section~\ref{sec:assumptions}.  The second and third terms are \(\op(1)\) uniformly by Lemma~\ref{lem:app_first_stage}.  The residual variance bound follows from the uniform convergence of the residual variance estimator to its population limit \(\sigma_{i,j,H}^2\), together with the lower bound in Assumption~\ref{ass:iv}.
\end{proof}

\begin{lemma}\label{lem:app_t_null}
Under Assumptions~\ref{ass:sampling}--\ref{ass:tails} and
Assumption~\ref{ass:t_signal}, there is a finite $C_0$ such that, for every fixed
$C>C_0$,
\[
\Pr\!\left(
\max_i\max_{H\in\mathcal H_i}\max_{j\in\mathcal S_i^d}
|t_{i,j,H}|>C\sqrt{\log N}
\right)\to0.
\]
\end{lemma}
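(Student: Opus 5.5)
The plan is to bound the statistic through its numerator, using the exact score identity in Lemma~\ref{lem:app_t_rep}. For every admissible $(i,j,H)$ we can write
\[
T^{-1/2}\widehat{\mathbf v}_{i,j,H}'\widetilde{\mathbf y}_{i,j,H}
=
T^{-1/2}\sum_{t=1}^T \widehat v_{i,j,H,t}\,(\mu_{i,j,H,t}+\varepsilon_{i,t}).
\]
By Lemma~\ref{lem:app_t_denom}, with probability tending to one the denominator $\widehat\sigma_{i,j,H}(T^{-1}\widehat{\mathbf v}'\widehat{\mathbf v})^{1/2}$ is bounded below by a positive constant, uniformly in $i$, $H\in\mathcal H_i$ and all candidates. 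It therefore suffices to show that the numerator is at most $C'\sqrt{\log N}$ uniformly over $i$, $H\in\mathcal H_i$ and $j\in\mathcal S_i^d$, with probability tending to one. Here $C'$ is a fixed constant; $C_0$ is then taken to be $C'$ divided by the lower bound on the denominator.

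First, replace the feasible fitted candidate by its population counterpart. Write $\widehat v=v^*+(\widehat v-v^*)$. By Lemma~\ref{lem:app_first_stage}, the replacement terms $T^{-1/2}\sum_t(\widehat v_t-v_t^*)\varepsilon_{i,t}$ and $T^{-1/2}\sum_t(\widehat v_t-v_t^*)\mu_{i,j,H,t}$ are $o_p(\sqrt{\log N})$ uniformly. What remains is
\[
T^{-1/2}\sum_t v^*_{i,j,H,t}\varepsilon_{i,t}
+T^{-1/2}\sum_t\{v^*_{i,j,H,t}\mu_{i,j,H,t}-\E(v^*\mu)\}
+\sqrt T\,\E(v^*_{i,j,H,t}\mu_{i,j,H,t}).
\]
By Lemma~\ref{lem:primitive_stagewise_exogeneity}, the first term is a centred sum. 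Both of the first two terms are centred sums of processes that Assumptions~\ref{ass:sampling} and~\ref{ass:tails} cover directly. By Lemma~\ref{lem:app_count}, the number of arrays over $(i,j,H)$ with $j\in\mathcal S_i^d$ and $H\in\mathcal H_i$ is $O(N^2)$. Lemma~\ref{lem:app_uniform_dgk} therefore gives a constant $C_1$ such that the maximum of the two centred sums exceeds $C_1\sqrt{\log N}$ with probability tending to zero.

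The bias term is the main point to check. Using the decomposition~\eqref{eq:scalar_signal_decomp},
\[
\sqrt T\,|\E(v^*\mu)|
= \sqrt T\, d_{i,j,H}\,|m_{i,j}(H)|
\le \sqrt T\, d_{i,j,H}\sum_{k\in U_i(H)}|\alpha_{i,k}||\bar\gamma_{i,j,k}(H)|.
\]
The factor $d_{i,j,H}=\sigma_{i,j,H}\{\E(v^{*2})\}^{1/2}$ is uniformly bounded. This follows from the variance bounds in Assumption~\ref{ass:iv}: $v^*$ is a bounded linear combination of instruments with a bounded Gram matrix, given the bounded projection coefficients. Assumption~\ref{ass:t_signal}(iii) then makes the bias term $o(\sqrt{\log N})$ uniformly.

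The main obstacle is that $H$ ranges over the deterministic class $\mathcal H_i$ rather than the random selection path. Because of this, the uniform bounds can be obtained by a union bound over a set of only $O(N^2)$ elements, and no data-dependent conditioning is involved. Two further checks are needed. One is the uniform boundedness of $\E(v^{*2})$, which needs a short argument from the eigenvalue bounds. The other is that the $o_p$ terms do not interact with the choice of $C$; this holds because they are negligible relative to $\sqrt{\log N}$. Combining the pieces, any $C_0>C_1/c$ works, where $c$ is the lower bound from Lemma~\ref{lem:app_t_denom}.
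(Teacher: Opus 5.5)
Your proposal is correct and follows essentially the same route as the paper's proof. The shared steps are the score identity of Lemma~\ref{lem:app_t_rep}, replacement of $\widehat v$ by $v^*$ via Lemma~\ref{lem:app_first_stage}, the $O(N^2)$ union bound of Lemmas~\ref{lem:app_count} and~\ref{lem:app_uniform_dgk} for the centred $v^*\varepsilon$ and $v^*\mu$ sums, Assumption~\ref{ass:t_signal}(iii) for the deterministic term, and the denominator bound of Lemma~\ref{lem:app_t_denom}.

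You make explicit the upper bound on $d_{i,j,H}$, which the paper uses silently. That bound is not literally among the stated assumptions; it needs, for example, $\sup\|\E(\boldsymbol{\mathcal Z}_{i,j,H,t}y_{j,t})\|<\infty$. You can avoid it by dividing the deterministic term by the feasible denominator, exactly as in Lemma~\ref{lem:app_t_signal}. This gives $\sqrt T\,m_{i,j}(H)\{1+\op(1)\}$ using only the lower bounds in Assumption~\ref{ass:iv}.
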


\begin{proof}
For \(j\in\mathcal S_i^d\), Assumption~\ref{ass:t_signal}(iii) and Eq.
\eqref{eq:scalar_signal_decomp} give
$\sqrt T\,|m_{i,j}(H)|=o(\sqrt{\log N})$ uniformly. By Lemma~\ref{lem:app_t_rep},
\[
T^{-1/2}\widehat{\mathbf v}'\widetilde{\mathbf y}_{i,j,H}
=
T^{-1/2}\sum_t\widehat v_t\mu_t
+
T^{-1/2}\sum_t\widehat v_t\varepsilon_t.
\]
Replacing \(\widehat v_t\) by \(v_t^*\) is \(\op(\sqrt{\log N})\) uniformly by Lemma~\ref{lem:app_first_stage}. The deterministic component of the first sum is $o(\sqrt{\log N})$, while the centred sums involving \(v_t^*\mu_t\) and \(v_t^*\varepsilon_t\) satisfy the fixed-envelope bound in Lemma~\ref{lem:app_uniform_dgk}. The number of arrays is \(O(N^2)\) by Lemma~\ref{lem:app_count}. Division by the uniformly convergent denominator and choice of a sufficiently large $C$ give the result.
\end{proof}

\begin{lemma}\label{lem:app_t_signal}
Uniformly over \(i\), \(H\in\mathcal H_i\), and \(j\in\mathcal K_i\setminus H\),
\[
|t_{i,j,H}|=\sqrt T\,\Delta^t_{i,j,H}\{1+\op(1)\}
+\Op(\sqrt{\log N}).
\]
\end{lemma}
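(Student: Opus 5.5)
The plan is to mirror the proof of Lemma~\ref{lem:app_t_null}, but now keeping the deterministic signal rather than showing it is negligible. Start from the exact score identity in Lemma~\ref{lem:app_t_rep}:
\[
T^{-1/2}\widehat{\mathbf v}_{i,j,H}'\widetilde{\mathbf y}_{i,j,H}
=
T^{-1/2}\sum_{t=1}^T\widehat v_{i,j,H,t}\mu_{i,j,H,t}
+
T^{-1/2}\sum_{t=1}^T\widehat v_{i,j,H,t}\varepsilon_{i,t}.
\]
The first step replaces $\widehat v_{i,j,H,t}$ by $v^*_{i,j,H,t}$ in both sums. Lemma~\ref{lem:app_first_stage} bounds this replacement by $\op(\sqrt{\log N})$ uniformly over $i$, $H\in\mathcal H_i$ and $j$. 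The class $\mathcal K_i\setminus H$ has bounded cardinality per equation, so the suprema there are over a subset of the arrays already controlled.

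The second step centres the population sums. The $\varepsilon$ sum has mean zero by Lemma~\ref{lem:primitive_stagewise_exogeneity}. The $\mu$ sum equals
\[
\sqrt T\,\E(v^*_{i,j,H,t}\mu_{i,j,H,t})+T^{-1/2}\sum_t\{v^*_t\mu_t-\E(v^*_t\mu_t)\}.
\]
Both centred sums are $\Op(\sqrt{\log N})$ uniformly by Lemma~\ref{lem:app_uniform_dgk}. This uses the mixing and tail conditions of Assumptions~\ref{ass:sampling} and~\ref{ass:tails}, which are imposed directly on $v^*\varepsilon$ and $v^*\mu$. The number of arrays is at most $O(N^2)$, since there are $N$ equations, $\bar K$ candidates in $\mathcal K_i$ and $2^{\bar K}$ sets $H$. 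The numerator is therefore
\[
\sqrt T\,\E(v^*\mu)+\Op(\sqrt{\log N}).
\]

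The third step handles the denominator,
\[
D_{i,j,H}=\widehat\sigma_{i,j,H}\bigl(T^{-1}\widehat{\mathbf v}'\widehat{\mathbf v}\bigr)^{1/2}.
\]
The proof of Lemma~\ref{lem:app_t_denom} shows $T^{-1}\widehat{\mathbf v}'\widehat{\mathbf v}=\E[(v^*)^2]+\op(1)$ uniformly. Assumption~\ref{ass:first_stage} gives $\widehat\sigma^2=\sigma^2+\op(1)$ uniformly. Both population limits are bounded away from zero and infinity by Assumption~\ref{ass:iv}, so
\[
D_{i,j,H}=\sigma_{i,j,H}\{\E[(v^*)^2]\}^{1/2}\{1+\op(1)\},
\]
with the $\op(1)$ uniform and $D^{-1}=\Op(1)$ uniformly. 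Dividing the numerator by $D$ and applying the definition \eqref{eq:delta_t}, where the absolute value passes to $|t|$, gives
\[
|t_{i,j,H}|=\sqrt T\,\Delta^t_{i,j,H}\{1+\op(1)\}^{-1}+\Op(\sqrt{\log N})\,\Op(1).
\]
Since $\{1+\op(1)\}^{-1}=1+\op(1)$, this is the claimed form.

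The main obstacle is uniformity of the multiplicative $\op(1)$ term. The denominator factor multiplies a signal $\sqrt T\Delta^t$ that may be of order $\sqrt T$, so the denominator error must stay in relative form, as $1+\op(1)$, rather than be absorbed additively. I expect to handle this by taking suprema explicitly, using Lemma~\ref{lem:app_uniform_dgk} on $(v^*)^2$ together with the uniform $\widehat\sigma$ condition. The finite-class structure, bounded $\mathcal K_i$ and $|\mathcal H_i|\le 2^{\bar K}$, ensures the uniform bounds used in Lemma~\ref{lem:app_t_null} cover all $j\in\mathcal K_i\setminus H$ as well.
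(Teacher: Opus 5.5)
Your proposal is correct and follows essentially the same route as the paper. Both arguments use the exact score identity of Lemma~\ref{lem:app_t_rep}, the $\op(\sqrt{\log N})$ first-stage replacement, and the uniform DGK bound on the centred $v^*\mu$ and $v^*\varepsilon$ sums, then obtain the multiplicative $\{1+\op(1)\}$ factor from uniform convergence of the Gram and residual-variance factors via Lemma~\ref{lem:app_t_denom}. You are more explicit than the paper in keeping the denominator error in ratio form and in passing the absolute value through to $|t|$, which is justified by $\bigl|\,|a+R|-|a|\,\bigr|\le|R|$.
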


\begin{proof}
The numerator expansion is
\[
T^{-1/2}\widehat{\mathbf v}'\widetilde{\mathbf y}_{i,j,H}
=
T^{-1/2}\sum_t\widehat v_t\mu_t
+
T^{-1/2}\sum_t\widehat v_t\varepsilon_t.
\]
Replacing \(\widehat v_t\) by \(v_t^*\) contributes \(\op(\sqrt{\log N})\).  The first term equals
\[
\sqrt T\,\E(v_t^*\mu_t)+\Op(\sqrt{\log N})
\]
uniformly by Lemma~\ref{lem:app_uniform_dgk}. The second term is \(\Op(\sqrt{\log N})\) uniformly. Uniform convergence of the Gram and residual-scale factors in Assumption~\ref{ass:first_stage}, together with Lemma~\ref{lem:app_t_denom}, gives the multiplicative $\{1+\op(1)\}$ term in the leading signal and yields the stated expansion.
\end{proof}

\begin{lemma}
\label{lem:app_primitive_t_signal}
Let \(m_{i,j}(H)\) denote the normalised
population IV screening signal defined in Eq.
\eqref{eq:m_def}. Under
Assumptions~\ref{ass:t_signal} and~\ref{ass:t_dominance}, the following hold
uniformly over the stated sets:

\begin{enumerate}[label=(\roman*)]
\item For \(H\in\mathcal H_i^0\),
\[
\inf_{i,H\in\mathcal H_i^0}
\max_{j\in U_i(H)}
\frac{\sqrt T\,|m_{i,j}(H)|}{\sqrt{\log N}}
\to\infty .
\]

\item For \(H\in\mathcal H_i^0\),
\[
\inf_{i,H\in\mathcal H_i^0}
\frac{\sqrt T}{\sqrt{\log N}}
\left[
\max_{j\in U_i(H)}|m_{i,j}(H)|
-
\max_{p\in P_i(H)}|m_{i,p}(H)|
\right]
\to\infty .
\]

\item For \(H\in\mathcal H_i^1\),
\[
\sup_{i,H\in\mathcal H_i^1}
\max_{p\in P_i(H)}
\frac{\sqrt T\,|m_{i,p}(H)|}{\sqrt{\log N}}
\to0 .
\]
\end{enumerate}
\end{lemma}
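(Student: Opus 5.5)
The proof is purely deterministic. It works with the population decomposition \eqref{eq:scalar_signal_decomp},
\[
m_{i,j}(H)=\sum_{k\in U_i(H)}\alpha_{i,k}\bar\gamma_{i,j,k}(H),
\]
and combines it with the primitive bounds in Assumptions~\ref{ass:t_signal} and~\ref{ass:t_dominance}. No probabilistic step is needed, since all objects are population quantities indexed by the finite class $\mathcal H_i$.

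\emph{Part (i).} Fix $i$ and $H\in\mathcal H_i^0$, so that $U_i(H)\neq\varnothing$. For any $j\in U_i(H)$, the reverse triangle inequality gives
\[
|m_{i,j}(H)|\ge |\alpha_{i,j}||\bar\gamma_{i,j,j}(H)|-\sum_{k\in U_i(H),k\ne j}|\alpha_{i,k}||\bar\gamma_{i,j,k}(H)|.
\]
Taking the maximum over $j\in U_i(H)$ and applying Assumption~\ref{ass:t_signal}(ii) yields
\[
\max_{j\in U_i(H)}|m_{i,j}(H)|\ge c_t b_T,
\]
uniformly in $(i,H)$. Assumption~\ref{ass:t_signal}(i) gives $\sqrt T b_T/\sqrt{\log N}\to\infty$, so (i) follows.

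\emph{Part (ii).} The first half of Assumption~\ref{ass:t_dominance} is exactly
\[
\max_{p\in P_i(H)}|m_{i,p}(H)|\le\lambda\max_{s\in U_i(H)}|m_{i,s}(H)|.
\]
This holds because the displayed sums there are $m_{i,p}(H)$ and $m_{i,s}(H)$ by \eqref{eq:scalar_signal_decomp}. Hence the bracket in (ii) is at least
\[
(1-\lambda)\max_{j\in U_i(H)}|m_{i,j}(H)|\ge(1-\lambda)c_t b_T.
\]
Since $\lambda<1$ is fixed, part (i) gives divergence after multiplying by $\sqrt T/\sqrt{\log N}$. If $P_i(H)=\varnothing$, the convention that a maximum over the empty set is zero makes the claim reduce directly to (i).

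\emph{Part (iii).} This is a restatement of the second half of Assumption~\ref{ass:t_dominance}. The condition $\max_p|m_{i,p}(H)|=o(\sqrt{\log N/T})$ uniformly over $\mathcal H_i^1$ is equivalent to $\sqrt T\max_p|m_{i,p}(H)|/\sqrt{\log N}\to0$ uniformly. The empty-set convention again covers the case $P_i(H)=\varnothing$.

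\emph{Main obstacle.} The only delicate point is bookkeeping. The normalisation $d_{i,j,H}$ defining $\bar\gamma$ must coincide with the one in $m_{i,j}(H)$ of \eqref{eq:m_def}. This holds because $\E(\widetilde y_{j,H,t}y_{i,t})=\E(\widetilde y_{j,H,t}\mu_{i,j,H,t})$ by the orthogonality of $v^*$ to the controls and to $\varepsilon_{i,t}$. I would also check that all infima and suprema are uniform over $i$ and $H$, not merely pointwise. This is immediate because the constants $c_t$ and $\lambda$ and the sequence $b_T$ do not depend on $(i,H)$.
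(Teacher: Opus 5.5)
Your proposal is correct and follows the paper's proof essentially step for step. Part (i) applies the reverse triangle inequality to the decomposition $m_{i,j}(H)=\sum_{k\in U_i(H)}\alpha_{i,k}\bar\gamma_{i,j,k}(H)$ and uses Assumption~\ref{ass:t_signal}(ii) to get the uniform bound $c_t b_T$; part (ii) obtains the $(1-\lambda)$ gap from the first half of Assumption~\ref{ass:t_dominance}; part (iii) restates its post-completion half. Your extra remarks are sound bookkeeping that the paper leaves implicit: the empty-set convention, and the fact that the two definitions of $m_{i,j}(H)$ coincide because $v^*$ is orthogonal to the controls and to $\varepsilon_{i,t}$.
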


\begin{proof}
By the definition of \(m_{i,j}(H)\) and the population IV screening
decomposition,
\[
m_{i,j}(H)
=
\frac{E\!\left(\widetilde y_{j,H,t}y_{i,t}\right)}{d_{i,j,H}}
=
\sum_{k\in U_i(H)}
\alpha_{i,k}\bar\gamma_{i,j,k}(H).
\]

For any remaining true link \(j\in U_i(H)\), the triangle inequality gives
\[
|m_{i,j}(H)|
=
\left|
\sum_{k\in U_i(H)}
\alpha_{i,k}\bar\gamma_{i,j,k}(H)
\right|
\]
\[
\ge
|\alpha_{i,j}|\,|\bar\gamma_{i,j,j}(H)|
-
\sum_{\substack{k\in U_i(H)\\ k\neq j}}
|\alpha_{i,k}|\,|\bar\gamma_{i,j,k}(H)|.
\]
Taking the maximum over \(j\in U_i(H)\), then the infimum over
\((i,H)\) with \(H\in\mathcal H_i^0\), Assumption~\ref{ass:t_signal}(ii)
implies
\[
\inf_{i,H\in\mathcal H_i^0}
\max_{j\in U_i(H)}
|m_{i,j}(H)|
\ge c_t b_T .
\]
Since \(\sqrt T b_T/\sqrt{\log N}\to\infty\), part (i) follows.

For part (ii), Assumption~\ref{ass:t_dominance} gives, for every
\(H\in\mathcal H_i^0\),
\[
\max_{p\in P_i(H)}|m_{i,p}(H)|
\le
\lambda\max_{j\in U_i(H)}|m_{i,j}(H)|.
\]
Hence the gap in part (ii) is at least
$(1-\lambda)\max_{j\in U_i(H)}|m_{i,j}(H)|$. Part (i) and
$\lambda<1$ imply the required divergence.

Finally, part (iii) is exactly the post-completion condition in
Assumption~\ref{ass:t_dominance}.
\end{proof}

\begin{lemma}
\label{lem:app_t_proxy_dominance}
Under Assumptions~\ref{ass:t_signal} and~\ref{ass:t_dominance}, the scalar
population proxy-dominance conditions hold. Specifically, uniformly over
\(i\) and \(H\in\mathcal H_i^0\),
\[
\max_{s\in U_i(H)} |m_{i,s}(H)|
-
\max_{p\in P_i(H)} |m_{i,p}(H)|
=
\rho^t_{i,H},
\]
where
\[
\inf_{i,H\in\mathcal H_i^0}
\frac{\sqrt T\,\rho^t_{i,H}}{\sqrt{\log N}}
\to\infty.
\]
Moreover, uniformly over \(i\) and \(H\in\mathcal H_i^1\),
\[
\max_{p\in P_i(H)}
|m_{i,p}(H)|
=
o\!\left(\sqrt{\frac{\log N}{T}}\right).
\]
\end{lemma}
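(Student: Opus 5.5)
This statement essentially repackages Lemma~\ref{lem:app_primitive_t_signal}, so I will derive it from there rather than redo the population algebra. For $H\in\mathcal H_i^0$ I will simply \emph{define}
\[
\rho^t_{i,H}:=\max_{s\in U_i(H)}|m_{i,s}(H)|-\max_{p\in P_i(H)}|m_{i,p}(H)|,
\]
using the convention that a maximum over an empty set is zero. Note that $U_i(H)\neq\varnothing$ precisely because $H\in\mathcal H_i^0$. The displayed identity then holds by construction, and the only content is the uniform divergence of $\sqrt T\rho^t_{i,H}/\sqrt{\log N}$.

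For the divergence I invoke Lemma~\ref{lem:app_primitive_t_signal}(ii), which states exactly this. To make the argument self-contained, I would retrace its two steps.

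First, the decomposition \eqref{eq:scalar_signal_decomp}, together with the reverse triangle inequality and Assumption~\ref{ass:t_signal}(ii), gives
\[
\max_{s\in U_i(H)}|m_{i,s}(H)|\ge c_tb_T
\]
uniformly over $i$ and $H\in\mathcal H_i^0$.

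Second, the first part of Assumption~\ref{ass:t_dominance} is stated in terms of $\bigl|\sum_{k\in U_i(H)}\alpha_{i,k}\bar\gamma_{i,\cdot,k}(H)\bigr|$. By \eqref{eq:scalar_signal_decomp} this quantity equals $|m_{i,\cdot}(H)|$, so the assumption gives $\max_p|m_{i,p}(H)|\le\lambda\max_s|m_{i,s}(H)|$. Combining the two steps yields
\[
\rho^t_{i,H}\ge(1-\lambda)c_tb_T.
\]
Since $\lambda<1$ is fixed and $\sqrt Tb_T/\sqrt{\log N}\to\infty$ by Assumption~\ref{ass:t_signal}(i), the divergence follows uniformly. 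The bound is uniform because $c_t$, $\lambda$ and $b_T$ do not depend on $(i,H)$.

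The post-completion statement is the second part of Assumption~\ref{ass:t_dominance}, equivalently Lemma~\ref{lem:app_primitive_t_signal}(iii), once the notations $m_{i,p}(H)$ in \eqref{eq:m_def} and in the assumption are identified. These coincide because $\E(\widetilde y_{j,H,t}y_{i,t})=\E(\widetilde y_{j,H,t}\mu_{i,j,H,t})$. That equality follows from Lemma~\ref{lem:primitive_stagewise_exogeneity} and the orthogonality of $v^*$ to $\mathbf C_{i,H,t}$, which removes the contributions of $\boldsymbol\beta_i'\mathbf x_{i,t}$, of the selected outcomes, and of $\varepsilon_{i,t}$.

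The only point needing care is this identification of the two definitions of $m_{i,j}(H)$, in particular checking that the always-included controls, including $\eta_i$ and $\mathbf x_{i,t}$, sit in $\mathbf C_{i,H,t}$ so that they are projected out. Beyond that, the argument is bookkeeping.
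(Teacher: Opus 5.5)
Your proposal is correct and follows essentially the same route as the paper: the decomposition \eqref{eq:scalar_signal_decomp} turns the first part of Assumption~\ref{ass:t_dominance} into $\max_{p}|m_{i,p}(H)|\le\lambda\max_{s}|m_{i,s}(H)|$, so the gap is at least $(1-\lambda)\max_{s}|m_{i,s}(H)|$, which diverges on the $\sqrt{\log N/T}$ scale by Assumption~\ref{ass:t_signal} through Lemma~\ref{lem:app_primitive_t_signal}, while the post-completion statement is simply the second part of Assumption~\ref{ass:t_dominance}. Your explicit uniform lower bound $(1-\lambda)c_tb_T$ and your check that the two definitions of $m_{i,j}(H)$ coincide are slightly more careful than the paper's version, but they do not change the argument.
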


\begin{proof}
By the normalised population IV screening decomposition,
\[
m_{i,j}(H)
=
\sum_{k\in U_i(H)}
\alpha_{i,k}\bar\gamma_{i,j,k}(H).
\]
Therefore, for \(H\in\mathcal H_i^0\),
Assumption~\ref{ass:t_dominance} implies
\[
\max_{p\in P_i(H)} |m_{i,p}(H)|
\le
\lambda
\max_{s\in U_i(H)} |m_{i,s}(H)|.
\]
Hence
\[
\max_{s\in U_i(H)} |m_{i,s}(H)|
-
\max_{p\in P_i(H)} |m_{i,p}(H)|
\ge
(1-\lambda)
\max_{s\in U_i(H)} |m_{i,s}(H)|.
\]
By Assumption~\ref{ass:t_signal}, together with Lemma~\ref{lem:app_primitive_t_signal},
\[
\inf_{i,H\in\mathcal H_i^0}
\max_{s\in U_i(H)}
\frac{\sqrt T\,|m_{i,s}(H)|}{\sqrt{\log N}}
\to\infty.
\]
Since \(1-\lambda>0\), the same divergence holds for
\[
\rho^t_{i,H}
=
\max_{s\in U_i(H)} |m_{i,s}(H)|
-
\max_{p\in P_i(H)} |m_{i,p}(H)|.
\]
The post-completion statement is imposed directly by the second part of
Assumption~\ref{ass:t_dominance}.
\end{proof}

\begin{lemma}\label{lem:app_t_proxy}
Under Assumptions~\ref{ass:sampling}--\ref{ass:tails},
\ref{ass:t_signal}, and \ref{ass:t_dominance}, with probability tending to one,
the largest scalar statistic among \(\mathcal K_i\setminus H\) at every pre-completion conditioning set \(H\in\mathcal H_i^0\) is attained by a true link. At every post-completion conditioning set \(H\in\mathcal H_i^1\), remaining proxies fail to cross $C_t\sqrt{\log N}$ for $C_t$ sufficiently large.
\end{lemma}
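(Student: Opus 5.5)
The plan is to combine the uniform stochastic expansion of Lemma~\ref{lem:app_t_signal} with the population separation results of Lemmas~\ref{lem:app_primitive_t_signal} and~\ref{lem:app_t_proxy_dominance}. Everything is uniform over the finite class \(\mathcal H_i\), whose cardinality is at most \(2^{\bar K}\), and over \(j\in\mathcal K_i\). The number of arrays involved is therefore \(O(N^2)\) by Lemma~\ref{lem:app_count}, so the \(\Op(\sqrt{\log N})\) remainders hold simultaneously for all \((i,H,j)\). Concretely, Lemma~\ref{lem:app_t_signal} gives a random variable \(R_N=\Op(\sqrt{\log N})\) and a uniform \(\op(1)\) sequence \(r_N\) such that, uniformly over \(i\), \(H\in\mathcal H_i\) and \(j\in\mathcal K_i\setminus H\),
\[
\bigl|\,|t_{i,j,H}|-\sqrt T\,\Delta^t_{i,j,H}\bigr|\le r_N\sqrt T\,\Delta^t_{i,j,H}+R_N .
\]
I would work on the event \(\{r_N\le\epsilon,\ R_N\le M\sqrt{\log N}\}\). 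Its probability can be made arbitrarily close to one by choosing \(M\) large, with \(\epsilon>0\) small and fixed.

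\emph{Pre-completion case.} Fix \(H\in\mathcal H_i^0\) and let \(s^*\) attain \(\max_{s\in U_i(H)}\Delta^t_{i,s,H}=:D\). On the event above, \(|t_{i,s^*,H}|\ge(1-\epsilon)\sqrt T D-M\sqrt{\log N}\). For any proxy \(p\in P_i(H)\), Lemma~\ref{lem:app_t_proxy_dominance} gives \(\Delta^t_{i,p,H}\le\lambda D\), so \(|t_{i,p,H}|\le(1+\epsilon)\lambda\sqrt T D+M\sqrt{\log N}\). The difference is therefore at least \(\{(1-\epsilon)-(1+\epsilon)\lambda\}\sqrt T D-2M\sqrt{\log N}\). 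Choosing \(\epsilon<(1-\lambda)/(1+\lambda)\) makes the leading coefficient positive. By Lemma~\ref{lem:app_primitive_t_signal}(i), \(\sqrt T D/\sqrt{\log N}\to\infty\) uniformly over \(i\) and \(H\in\mathcal H_i^0\), so the difference is eventually positive uniformly. Hence the maximiser over \(\mathcal K_i\setminus H\) lies in \(U_i(H)\).

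This step is the main obstacle, because the gap in Lemma~\ref{lem:app_t_proxy_dominance} is additive while the expansion error is partly multiplicative. The multiplicative \(\op(1)\) factor must not eat the \((1-\lambda)\) margin, and this is the reason for using the ratio form \(\Delta^t_{i,p,H}\le\lambda D\) rather than the additive gap \(\rho^t_{i,H}\). A true link other than \(s^*\) may also attain the maximum, but this is harmless because the claim only asks for \emph{some} true link.

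\emph{Post-completion case.} For \(H\in\mathcal H_i^1\) and \(p\in P_i(H)\), Lemma~\ref{lem:app_t_proxy_dominance} gives \(\sqrt T\,\Delta^t_{i,p,H}=o(\sqrt{\log N})\) uniformly. The expansion then yields \(|t_{i,p,H}|\le(1+\epsilon)\,o(\sqrt{\log N})+M\sqrt{\log N}\). This is below \(C_t\sqrt{\log N}\) whenever \(C_t>M\).

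For a pure probability-to-zero statement, rather than one holding with probability close to one, I would instead invoke the fixed-envelope part of Lemma~\ref{lem:app_uniform_dgk}. The centred scores in \(v^*\varepsilon\) and \(v^*\mu\) exceed \(C\sqrt{\log N}\) with vanishing probability for every \(C>C_0\). The feasible-replacement error is \(\op(\sqrt{\log N})\) by Lemma~\ref{lem:app_first_stage}, and the denominators are bounded below by Lemma~\ref{lem:app_t_denom}. Dividing through and taking \(C_t\) large, say \(C_t>2C_0/\inf\{\widehat\sigma(T^{-1}\widehat{\mathbf v}'\widehat{\mathbf v})^{1/2}\}\) with the infimum bounded below by a constant, gives \(\Pr\to0\) uniformly over \(i\) and \(H\in\mathcal H_i^1\). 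This is the same device used in Lemma~\ref{lem:app_t_null}. Intersecting the two events, each of which has probability tending to one, completes the argument.
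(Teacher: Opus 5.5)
Your proposal is correct and follows essentially the same route as the paper. Before completion, both arguments combine the ratio bound $M_p\le\lambda M_s$, which the paper reads directly off Assumption~\ref{ass:t_dominance} via the decomposition \eqref{eq:scalar_signal_decomp}, with the multiplicative expansion of Lemma~\ref{lem:app_t_signal}. After completion, both use the fixed-envelope bound of Lemma~\ref{lem:app_uniform_dgk} together with the first-stage replacement and denominator bounds. Your explicit choice $\epsilon<(1-\lambda)/(1+\lambda)$ simply spells out the paper's $\{1-\lambda-\op(1)\}$ step.
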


\begin{proof}
For \(H\in\mathcal H_i^0\), write
$M_s=\max_{s\in U_i(H)}|m_{i,s}(H)|$ and
$M_p=\max_{p\in P_i(H)}|m_{i,p}(H)|$. By
Assumption~\ref{ass:t_dominance}, $M_p\le\lambda M_s$. Lemma~\ref{lem:app_t_signal}
therefore gives
\[
\max_{s\in U_i(H)}|t_{i,s,H}|
-\max_{p\in P_i(H)}|t_{i,p,H}|
\ge
\sqrt T\,M_s\{1-\lambda-\op(1)\}-\Op(\sqrt{\log N}),
\]
which is positive uniformly with probability tending to one by
Lemma~\ref{lem:app_primitive_t_signal}(i). For \(H\in\mathcal H_i^1\), the
post-completion condition and the fixed-envelope form of
Lemma~\ref{lem:app_uniform_dgk}, together with the first-stage replacement bounds,
show that remaining proxy statistics are below $C_t\sqrt{\log N}$ with probability
tending to one when $C_t$ is sufficiently large.
\end{proof}

\subsection{\textbf{Block screening lemmas}}

\begin{lemma}
\label{lem:primitive_F_signal}
Let $Q^F_{i,j,H}$ be defined by Eq. \eqref{eq:QF}.
Under Assumptions~\ref{ass:F_signal} and~\ref{ass:F_dominance}, the following hold uniformly over the stated sets:

\begin{enumerate}[label=(\roman*)]

\item For \(H\in\mathcal H_i^0\),
\[
\inf_{i,H\in\mathcal H_i^0}
\max_{j\in U_i(H)}
\frac{T\,Q^F_{i,j,H}}{\log N}
\to\infty .
\]

\item For \(H\in\mathcal H_i^1\),
\[
\sup_{i,H\in\mathcal H_i^1}
\max_{p\in P_i(H)}
\frac{T\,Q^F_{i,p,H}}{\log N}
\to0 .
\]

\end{enumerate}
\end{lemma}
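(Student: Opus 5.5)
This lemma is essentially a restatement of the block assumptions in the form used later, so the plan is to read off each part directly rather than derive anything new. This mirrors the scalar analogue, Lemma~\ref{lem:app_primitive_t_signal}, with the quadratic criterion $Q^F_{i,j,H}$ playing the role of $|m_{i,j}(H)|^2$. The normalisation changes accordingly: $T/\log N$ replaces $\sqrt T/\sqrt{\log N}$.

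For part (i), I will invoke the first display of Assumption~\ref{ass:F_signal}, which states exactly that
\[
\inf_{\{(i,H):H\in\mathcal H_i^0\}}\max_{j\in U_i(H)}TQ^F_{i,j,H}/\log N\to\infty.
\]
If $\mathcal H_i^0=\varnothing$ for some equation, the statement is vacuous there, consistent with the convention in that assumption. I will note that $Q^F_{i,j,H}$ is well defined and nonnegative, because $\sigma^2_{i,j,H}\ge c>0$ and $\lambda_{\min}(\mathbf G^*_{i,j,H})$ is bounded away from zero by Assumption~\ref{ass:block_first_stage}. This ensures the infimum is over a meaningful quantity. I will also record the corollary used later: combined with the first part of Assumption~\ref{ass:F_dominance}, we have $\max_p Q^F_{i,p,H}\le\lambda_F\max_s Q^F_{i,s,H}$. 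The gap $\max_s Q^F-\max_p Q^F\ge(1-\lambda_F)\max_sQ^F$ then also diverges after multiplication by $T/\log N$, as in Lemma~\ref{lem:app_t_proxy_dominance}.

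For part (ii), I will quote the second (post-completion) condition of Assumption~\ref{ass:F_dominance}, which holds uniformly over $i$ and $H\in\mathcal H_i^1$. Since $\mathcal H_i$ is finite with $|\mathcal H_i|\le 2^{\bar K}$ and $P_i(H)\subseteq\mathcal S_i^p$ has bounded cardinality, the uniform statement is exactly as required. Where $P_i(H)$ is empty, the maximum-over-empty-set convention gives zero.

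There is no real obstacle here. The only point needing care is the interpretation of ``uniformly'' in Assumption~\ref{ass:F_dominance} as a supremum over $(i,H)$. I will state this reading explicitly so that the supremum form in (ii) follows without extra argument.
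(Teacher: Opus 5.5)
Your proposal matches the paper's proof: part (i) is read off directly from the first display of Assumption~\ref{ass:F_signal}, and part (ii) from the post-completion half of Assumption~\ref{ass:F_dominance}. Your extra remarks are not needed here. The well-definedness point uses Assumption~\ref{ass:block_first_stage}, which is outside the lemma's stated hypotheses, and the paper records the $(1-\lambda_F)$ gap bound separately in Lemma~\ref{lem:app_F_proxy_dominance}.
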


\begin{proof}
Part (i) is exactly Assumption~\ref{ass:F_signal}. Part (ii) follows
directly from the second part of Assumption~\ref{ass:F_dominance}.
\end{proof}

\begin{lemma}
\label{lem:app_F_proxy_dominance}
Under Assumptions~\ref{ass:F_signal} and~\ref{ass:F_dominance}, the block
population proxy-dominance conditions hold. Specifically, uniformly over
\(i\) and \(H\in\mathcal H_i^0\),
\[
\max_{s\in U_i(H)}Q^F_{i,s,H}
-
\max_{p\in P_i(H)}Q^F_{i,p,H}
=
\rho^F_{i,H},
\]
where
\[
\inf_{i,H\in\mathcal H_i^0}
\frac{
T\rho^F_{i,H}
}{
\sqrt{T\bar Q^F_{i,H}\log N}+\log N
}
\to\infty,
\qquad
\bar Q^F_{i,H}
=
\max_{j\in\mathcal K_i\setminus H}Q^F_{i,j,H}.
\]
Moreover, uniformly over \(i\) and \(H\in\mathcal H_i^1\),
\[
\max_{p\in P_i(H)}
\frac{TQ^F_{i,p,H}}{\log N}
\to0 .
\]
\end{lemma}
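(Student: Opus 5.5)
Lemma~\ref{lem:app_F_proxy_dominance} is the block analogue of Lemma~\ref{lem:app_t_proxy_dominance}, and I will prove it the same way. I define $\rho^F_{i,H}$ as the displayed difference and derive its lower bound from the first part of Assumption~\ref{ass:F_dominance}. Write $M^F_s=\max_{s\in U_i(H)}Q^F_{i,s,H}$ and $M^F_p=\max_{p\in P_i(H)}Q^F_{i,p,H}$, with the convention that a maximum over an empty set is zero. For $H\in\mathcal H_i^0$ the assumption gives $M^F_p\le\lambda_F M^F_s$, hence
\[
\rho^F_{i,H}\ge(1-\lambda_F)M^F_s .
\]
Lemma~\ref{lem:primitive_F_signal}(i) then gives $\inf_{i,H\in\mathcal H_i^0}TM^F_s/\log N\to\infty$.

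The main step is to compare $T\rho^F_{i,H}$ with the normalisation $\sqrt{T\bar Q^F_{i,H}\log N}+\log N$. Here $\bar Q^F_{i,H}$ is the maximum over all of $\mathcal K_i\setminus H$, which is the union of $U_i(H)$ and $P_i(H)$. Using the dominance bound again,
\[
\bar Q^F_{i,H}=\max\{M^F_s,M^F_p\}=M^F_s .
\]
Set $x_{i,H}:=TM^F_s/\log N$, so that $\inf x_{i,H}\to\infty$. Then
\[
\frac{T\rho^F_{i,H}}{\sqrt{T\bar Q^F_{i,H}\log N}+\log N}
\ge\frac{(1-\lambda_F)\,x_{i,H}}{\sqrt{x_{i,H}}+1}
\ge\frac{1-\lambda_F}{2}\min\{\sqrt{x_{i,H}},\,x_{i,H}\}.
\]
The last inequality follows from $\sqrt{x}+1\le 2\max\{\sqrt{x},1\}$. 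The right-hand side diverges uniformly over $(i,H)$, which gives the first claim.

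The post-completion statement is exactly the second part of Assumption~\ref{ass:F_dominance}; equivalently, it is Lemma~\ref{lem:primitive_F_signal}(ii).

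I expect the only delicate point to be the identification $\bar Q^F_{i,H}=M^F_s$. This needs the maximum defining $\bar Q^F_{i,H}$ to range only over $\mathcal K_i\setminus H$, so that irrelevant candidates are excluded; that is the case by definition. It also needs $U_i(H)\neq\varnothing$, which holds because $H\in\mathcal H_i^0$. Uniformity is automatic because every bound is taken pointwise in $(i,H)$ with constants that do not depend on $(i,H)$.
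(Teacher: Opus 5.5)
Your proof is correct and follows the paper's argument: the dominance bound gives $\rho^F_{i,H}\ge(1-\lambda_F)\max_{s\in U_i(H)}Q^F_{i,s,H}$, Assumption~\ref{ass:F_signal} supplies the divergence, and the post-completion claim is read off directly from Assumption~\ref{ass:F_dominance}. Your identification $\bar Q^F_{i,H}=\max_{s\in U_i(H)}Q^F_{i,s,H}$ and the explicit bound through $x/(\sqrt{x}+1)$ make precise the step that the paper handles only by remarking that $\bar Q^F_{i,H}$ is of the same order as the strongest remaining true-link signal.
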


\begin{proof}
By Assumption~\ref{ass:F_dominance},
\[
\max_{p\in P_i(H)}Q^F_{i,p,H}
\le
\lambda_F\max_{s\in U_i(H)}Q^F_{i,s,H}.
\]
Hence
\[
\rho^F_{i,H}
\ge
(1-\lambda_F)
\max_{s\in U_i(H)}Q^F_{i,s,H}.
\]
Assumption~\ref{ass:F_signal} gives
\[
\inf_{i,H\in\mathcal H_i^0}
\frac{T\max_{s\in U_i(H)}Q^F_{i,s,H}}{\log N}
\to\infty.
\]
Together with the proxy attenuation condition, \(\bar Q^F_{i,H}\) is of the
same order as the strongest remaining true-link block signal before completion.
Therefore,
\[
\inf_{i,H\in\mathcal H_i^0}
\frac{
T\rho^F_{i,H}
}{
\sqrt{T\bar Q^F_{i,H}\log N}+\log N
}
\to\infty.
\]
The post-completion statement is imposed directly by the second part of
Assumption~\ref{ass:F_dominance}.
\end{proof}
\begin{lemma}\label{lem:app_F_rep}
The block statistic satisfies
\[
F_{i,j,H}
=
\frac{
\mathbf S_{i,j,H}'
\mathbf G_{i,j,H}^{-1}
\mathbf S_{i,j,H}
}
{\widehat\sigma_{i,j,H}^2}.
\]
\end{lemma}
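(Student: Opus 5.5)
This is the block analogue of Lemma~\ref{lem:app_t_rep}, and I would prove it the same way, using the IV/Frisch--Waugh--Lovell argument. At conditioning set \(H\), the full block-stage IV regression treats the selected controls \(\mathbf C_{i,H}\) and the candidate block \(\mathbf v_{i,j,H}\) jointly, with the common instrument matrix \(\boldsymbol{\mathcal Z}_{i,j,H}\).

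\emph{Step 1: the block coefficient.} Write \(\mathbf P_{\mathcal Z}\) for the projection onto the instrument space. Then 2SLS on \((\mathbf C_{i,H},\mathbf v_{i,j,H})\) is OLS of \(\mathbf y_i\) on the fitted regressors \((\widehat{\mathbf C}_{i,j,H},\overline{\mathbf V}_{i,j,H})\). Partitioned regression then gives the block coefficient as
\[
\widehat{\boldsymbol\vartheta}_{i,j,H}=(\widehat{\mathbf V}_{i,j,H}'\widehat{\mathbf V}_{i,j,H})^{-1}\widehat{\mathbf V}_{i,j,H}'\widetilde{\mathbf y}_{i,j,H},
\]
where \(\widehat{\mathbf V}_{i,j,H}=\widehat{\mathbf M}^C_{i,j,H}\overline{\mathbf V}_{i,j,H}\) and \(\widetilde{\mathbf y}_{i,j,H}=\widehat{\mathbf M}^C_{i,j,H}\mathbf y_i\). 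This uses the fact that \(\widehat{\mathbf M}^C\) is symmetric and idempotent.

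\emph{Step 2: the Wald form.} I would write the homoskedastic fitted-regressor covariance of the block coefficient as \(\widehat\sigma^2_{i,j,H}(\widehat{\mathbf V}'\widehat{\mathbf V})^{-1}\), which is the partitioned-inverse block of \(\widehat\sigma^2(\widehat{\mathbf X}'\widehat{\mathbf X})^{-1}\). The Wald quadratic is then
\[
\widehat\sigma^{-2}\,\widehat{\boldsymbol\vartheta}'\widehat{\mathbf V}'\widehat{\mathbf V}\widehat{\boldsymbol\vartheta}
=\widehat\sigma^{-2}\,\widetilde{\mathbf y}'\widehat{\mathbf V}(\widehat{\mathbf V}'\widehat{\mathbf V})^{-1}\widehat{\mathbf V}'\widetilde{\mathbf y}.
\]
Inserting the factors \(T^{-1/2}\) and \(T^{-1}\) turns this into \(\widehat\sigma^{-2}\mathbf S_{i,j,H}'\mathbf G_{i,j,H}^{-1}\mathbf S_{i,j,H}\), which is the form displayed in Eq.~\eqref{eq:Fstat}.

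\emph{Step 3: the score identity.} I would also record the score identity needed later, exactly as in the scalar case. Each column of \(\widehat{\mathbf V}_{i,j,H}\) lies in the instrument space and is orthogonal to \(\widehat{\mathbf C}_{i,j,H}\). In addition, \(\mathbf C_{i,H}-\widehat{\mathbf C}_{i,j,H}\) is orthogonal to the instrument space, so \(\widehat{\mathbf V}'\mathbf C_{i,H}=\mathbf 0\). Substituting the structural equation \(\mathbf y_i=\mathbf C_{i,H}\boldsymbol\gamma+\boldsymbol\mu^F_{i,H}+\boldsymbol\varepsilon_i\) then gives
\[
\mathbf S_{i,j,H}=T^{-1/2}\sum_t\widehat{\mathbf V}_{i,j,H,t}(\mu_{i,j,H,t}+\varepsilon_{i,t}).
\]

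\emph{Main obstacle.} The only delicate point is invertibility of \(\widehat{\mathbf V}'\widehat{\mathbf V}\) and of the fitted-control Gram matrix, so that the partitioned-inverse algebra is valid. Deterministically I would assume full column rank, as the statistic's definition implicitly does. With probability tending to one, invertibility follows from the blockwise analogue of Lemma~\ref{lem:app_t_denom}, using \(\inf\lambda_{\min}\E(\mathbf V^*\mathbf V^{*\prime})>0\) from Assumption~\ref{ass:block_first_stage} together with the componentwise first-stage replacement bounds.
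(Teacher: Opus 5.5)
Your proposal is correct and follows essentially the same route as the paper: IV/FWL residualisation reduces the block-stage IV regression to a regression of $\widetilde{\mathbf y}_{i,j,H}$ on $\widehat{\mathbf V}_{i,j,H}$, the homoskedastic Wald quadratic is $\widehat\sigma_{i,j,H}^{-2}\widetilde{\mathbf y}_{i,j,H}'\widehat{\mathbf V}_{i,j,H}(\widehat{\mathbf V}_{i,j,H}'\widehat{\mathbf V}_{i,j,H})^{-1}\widehat{\mathbf V}_{i,j,H}'\widetilde{\mathbf y}_{i,j,H}$, and inserting the $T^{-1/2}$ and $T^{-1}$ normalisations gives the displayed form. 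Your partitioned-inverse justification of the covariance block and the score identity in Step~3 are correct details that the paper leaves implicit in this proof; the score identity is what Lemma~\ref{lem:app_F_signal} later relies on.
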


\begin{proof}
After applying the IV/FWL residualisation, the IV regression is the regression of \(\widetilde{\mathbf y}_{i,j,H}\) on the fitted block \(\widehat{\mathbf V}_{i,j,H}\). The screening quadratic is
\[
\frac{1}{\widehat\sigma^2}
\widetilde{\mathbf y}_{i,j,H}'
\widehat{\mathbf V}_{i,j,H}
(\widehat{\mathbf V}_{i,j,H}'\widehat{\mathbf V}_{i,j,H})^{-1}
\widehat{\mathbf V}_{i,j,H}'
\widetilde{\mathbf y}_{i,j,H}.
\]
Substituting
\[
\mathbf S_{i,j,H}
=
T^{-1/2}
\widehat{\mathbf V}_{i,j,H}'
\widetilde{\mathbf y}_{i,j,H}
\]
and
\[
\mathbf G_{i,j,H}
=
T^{-1}
\widehat{\mathbf V}_{i,j,H}'
\widehat{\mathbf V}_{i,j,H}
\]
gives the displayed expression.
\end{proof}

\begin{lemma}\label{lem:app_F_denom}
Under Assumptions~\ref{ass:iv}--\ref{ass:tails} and~\ref{ass:block_first_stage},
\[
\inf_{i,j,H\in\mathcal H_i}
\lambda_{\min}(\mathbf G_{i,j,H})>c,
\qquad
\inf_{i,j,H\in\mathcal H_i}
\widehat\sigma_{i,j,H}^2>c
\]
with probability tending to one.
\end{lemma}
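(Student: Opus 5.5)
The plan is to mirror the proof of Lemma~\ref{lem:app_t_denom}, working with matrices instead of scalars. For the Gram matrix, decompose
\[
\mathbf G_{i,j,H}
=
T^{-1}\sum_t \mathbf V^*_{t}\mathbf V^{*\prime}_{t}
+T^{-1}\sum_t \big\{\mathbf V^*_t(\widehat{\mathbf V}_t-\mathbf V^*_t)'+(\widehat{\mathbf V}_t-\mathbf V^*_t)\mathbf V^{*\prime}_t\big\}
+T^{-1}\sum_t(\widehat{\mathbf V}_t-\mathbf V^*_t)(\widehat{\mathbf V}_t-\mathbf V^*_t)',
\]
with the indices $(i,j,H)$ suppressed. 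Weyl's inequality then gives
\[
\lambda_{\min}(\mathbf G_{i,j,H})\ge \lambda_{\min}\{\E(\mathbf V^*_t\mathbf V^{*\prime}_t)\}-\|\cdot\|,
\]
where $\|\cdot\|$ is the spectral norm of all the deviation terms. Because the block dimension $d$ is fixed, it is enough to bound each of these deviation terms entrywise.

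\textbf{Step 1 (population Gram term).} Each entry $V^*_{t,a}V^*_{t,b}$, after centring, is one of the arrays covered by Assumptions~\ref{ass:sampling} and~\ref{ass:tails}, as extended to the block case. The number of triples $(i,j,H)$ is $O(N^2)$ by Lemma~\ref{lem:app_count}, and the number of pairs $(a,b)$ is fixed. Lemma~\ref{lem:app_uniform_dgk} therefore gives a maximal deviation of $O_p(\sqrt{\log N/T})$. This is $o_p(1)$, because each growth condition in Assumption~\ref{ass:tails} forces $\log N/T\to0$.

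\textbf{Step 2 (first-stage replacement terms).} The last term is bounded in norm by $T^{-1}\sum_t\|\widehat{\mathbf V}_t-\mathbf V^*_t\|^2$, which is $o_p(1)$ uniformly by Assumption~\ref{ass:block_first_stage} (Lemma~\ref{lem:app_first_stage}). For the cross terms, apply Cauchy--Schwarz to get
\[
\Big\{T^{-1}\sum_t\|\mathbf V^*_t\|^2\Big\}^{1/2}\Big\{T^{-1}\sum_t\|\widehat{\mathbf V}_t-\mathbf V^*_t\|^2\Big\}^{1/2}.
\]
The first factor is $O_p(1)$ uniformly by Step 1, since $\E\|\mathbf V^*_t\|^2\le d\max_a \E(V^{*2}_{t,a})$ and these are bounded. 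Here I use that $\mathbf V^*_t$ is a bounded linear combination of the instruments, whose second moments are bounded by Assumption~\ref{ass:iv}; the coefficient matrices $\boldsymbol\Pi_V$ and $\boldsymbol\Gamma_{VC}$ are bounded by the eigenvalue bounds together with the bounded population cross-moments. Combining Steps 1 and 2 with $\inf\lambda_{\min}\{\E(\mathbf V^*\mathbf V^{*\prime})\}>0$ from Assumption~\ref{ass:block_first_stage} yields the first claim, taking $c$ equal to half that infimum.

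\textbf{Step 3 (residual variance).} Assumption~\ref{ass:block_first_stage} gives $\sup|\widehat\sigma^2_{i,j,H}-\sigma^2_{i,j,H}|=o_p(1)$. The block analogue of the variance bounds in Assumption~\ref{ass:iv} gives $\inf\sigma^2_{i,j,H}\ge c'>0$. Hence $\widehat\sigma^2_{i,j,H}>c'/2$ uniformly with probability tending to one.

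\textbf{Main obstacle.} The only delicate point is the uniformity of the bound on $T^{-1}\sum_t\|\mathbf V^*_t\|^2$ in Step 2, which needs a uniform bound on the population coefficients. I would obtain it directly from the uniform eigenvalue bounds on $\E(\boldsymbol{\mathcal Z}\boldsymbol{\mathcal Z}')$ and on the fitted-control Gram matrix, combined with bounded second moments of the regressors. If those moments are not explicitly available, I would instead rely on Step 1 applied to the diagonal entries, which gives $T^{-1}\sum_t V^{*2}_{t,a}=\E(V^{*2}_{t,a})+o_p(1)$. This needs only that $\E(V^{*2}_{t,a})$ be bounded, which follows from the upper bounds in the assumptions.
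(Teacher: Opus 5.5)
Your proposal follows the paper's proof essentially step for step. You replace the feasible fitted block by the population-fitted block using the first-stage bounds, get uniform convergence of the population-fitted Gram entries from Lemma~\ref{lem:app_uniform_dgk}, pass to operator norm using the fixed block dimension, apply Weyl's inequality, and take the residual-variance bound from Assumption~\ref{ass:block_first_stage}; you are also more explicit than the paper about the cross terms. The uniform bound on $\E\|\mathbf V^*_{i,j,H,t}\|^2$ that your Cauchy--Schwarz step needs is not in the stated assumptions, since only lower eigenvalue bounds on $\E(\mathbf V^*\mathbf V^{*\prime})$ are imposed and nothing bounds $\E(\boldsymbol{\mathcal Z}_{i,j,H,t}\mathbf v_{i,j,H,t}')$. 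The paper relies on the same bound implicitly. You can avoid it entirely: for a unit vector $\mathbf a$, Minkowski's inequality gives $(\mathbf a'\mathbf G_{i,j,H}\mathbf a)^{1/2}\ge\{T^{-1}\sum_t(\mathbf a'\mathbf V^*_t)^2\}^{1/2}-\{T^{-1}\sum_t\|\widehat{\mathbf V}_t-\mathbf V^*_t\|^2\}^{1/2}$, and this needs only the lower eigenvalue bound and the centred-array convergence.
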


\begin{proof}
Each element of \(\mathbf G_{i,j,H}\) is a sample average of a product of fitted block components. Replacing sample fitted components by population fitted components is \(\op(1)\) by Lemma~\ref{lem:app_first_stage}. The population-fitted sample averages converge uniformly to \(\mathbf G^*_{i,j,H}\) by Lemma~\ref{lem:app_uniform_dgk}. Since the block dimension is fixed, elementwise convergence implies operator-norm convergence. Weyl's inequality and Assumption~\ref{ass:block_first_stage} give the eigenvalue lower bound. The residual variance bound is the same as in Lemma~\ref{lem:app_t_denom}.
\end{proof}

\begin{lemma}\label{lem:app_F_signal}
Uniformly over \(i\), \(H\in\mathcal H_i\), and \(j\in\mathcal K_i\setminus H\),
\[
F_{i,j,H}
=
TQ^F_{i,j,H}\{1+\op(1)\}
+
\Op\!\left(
\sqrt{TQ^F_{i,j,H}\log N}
\right)
+
\Op(\log N).
\]
\end{lemma}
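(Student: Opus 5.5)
We prove the expansion by adapting the argument of Lemma~\ref{lem:app_t_signal} to the quadratic statistic. Write $\mathbf S_{i,j,H}$ as a deterministic signal plus noise. By the exact score identity in Lemma~\ref{lem:app_t_rep}, applied column by column to the block, $\widehat{\mathbf V}_{i,j,H}'\mathbf C_{i,H}=\mathbf 0$. Hence
\[
\mathbf S_{i,j,H}
=
T^{-1/2}\sum_{t=1}^T\widehat{\mathbf V}_{i,j,H,t}(\mu_{i,j,H,t}+\varepsilon_{i,t}).
\]
We replace $\widehat{\mathbf V}_{i,j,H,t}$ by $\mathbf V^*_{i,j,H,t}$. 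By Lemma~\ref{lem:app_first_stage} in its block form (Assumption~\ref{ass:block_first_stage}), this costs $\op(\sqrt{\log N})$ componentwise and uniformly.

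Next we centre. Lemma~\ref{lem:primitive_stagewise_exogeneity} gives $\E(\mathbf V^*\varepsilon)=\mathbf 0$, so
\[
\mathbf S_{i,j,H}=\sqrt T\,\mathbf m^F_{i,j,H}+\mathbf r_{i,j,H}.
\]
The remainder $\mathbf r_{i,j,H}$ collects the centred sums of $V^*_a\mu$ and $V^*_a\varepsilon$ together with the replacement error. By Lemma~\ref{lem:app_uniform_dgk}, with the $O(N^2)$ count of Lemma~\ref{lem:app_count} and the fixed block dimension, $\max_{i,j,H}\|\mathbf r_{i,j,H}\|=\Op(\sqrt{\log N})$.

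For the denominator, Lemma~\ref{lem:app_F_denom} and its proof give uniform bounds on the Gram matrix and the residual scale:
\[
\mathbf G_{i,j,H}^{-1}=(\mathbf G^*_{i,j,H})^{-1}+\op(1),
\qquad
\widehat\sigma^{-2}_{i,j,H}=\sigma^{-2}_{i,j,H}+\op(1).
\]
Both limits have eigenvalues bounded away from zero and infinity, by Assumption~\ref{ass:block_first_stage} and the variance bounds in Assumption~\ref{ass:iv}. Set $\mathbf B:=\widehat\sigma^{-2}\mathbf G^{-1}$ and $\mathbf B^*:=\sigma^{-2}(\mathbf G^*)^{-1}$. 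Expanding the quadratic form gives
\[
F_{i,j,H}
=
T\,\mathbf m^{F\prime}\mathbf B\,\mathbf m^F
+2\sqrt T\,\mathbf m^{F\prime}\mathbf B\,\mathbf r
+\mathbf r'\mathbf B\,\mathbf r .
\]
The three terms are handled as follows.
\begin{itemize}
\item \emph{Leading term.} Since $\mathbf B^*\asymp\mathbf I$, we have $T\mathbf m^{F\prime}\mathbf B\mathbf m^F=TQ^F_{i,j,H}+T\,\mathbf m^{F\prime}(\mathbf B-\mathbf B^*)\mathbf m^F$. The correction is bounded by $\op(1)\,T\|\mathbf m^F\|^2$, which is $\op(1)\,TQ^F_{i,j,H}$. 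This is the term $TQ^F\{1+\op(1)\}$.
\item \emph{Cross term.} By Cauchy--Schwarz in the $\mathbf B$-metric it is at most $2\{T\mathbf m^{F\prime}\mathbf B\mathbf m^F\}^{1/2}\{\mathbf r'\mathbf B\mathbf r\}^{1/2}$. This is $\Op(\sqrt{TQ^F\log N})$ uniformly.
\item \emph{Noise term.} $\mathbf r'\mathbf B\mathbf r\le\lambda_{\max}(\mathbf B)\|\mathbf r\|^2=\Op(\log N)$.
\end{itemize}

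The main obstacle is uniformity. The $\op(1)$ in $\mathbf B-\mathbf B^*$ and the $\Op$ bounds on $\mathbf r$ must hold simultaneously over all $(i,j,H)$ with $H\in\mathcal H_i$ and $j\in\mathcal K_i\setminus H$. That index set has cardinality $O(N)$, since $|\mathcal H_i|\le 2^{\bar K}$ and $|\mathcal K_i|\le\bar K$. The uniform statements in Assumption~\ref{ass:block_first_stage} and Lemma~\ref{lem:app_uniform_dgk}, which already cover $O(N^2)$ arrays, supply this.

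A subtler point is that $\mu_{i,j,H,t}$ involves endogenous $y_{k,t}$. Products $V^*_a\mu$ therefore need the tail and mixing assumptions, which are imposed explicitly on these arrays. The multiplicative $\op(1)$ must not depend on the size of $\mathbf m^F$, which is why we bound the correction relative to $TQ^F$ rather than additively.
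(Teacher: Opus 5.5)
Your proposal is correct and follows essentially the same route as the paper. Both arguments use the exact score identity and replace $\widehat{\mathbf V}$ by $\mathbf V^*$ at uniform cost $\op(\sqrt{\log N})$. Both then centre to get $\mathbf S_{i,j,H}=\sqrt T\,\mathbf m^F_{i,j,H}+\mathbf R_{i,j,H}$ with $\|\mathbf R_{i,j,H}\|=\Op(\sqrt{\log N})$ uniformly, and expand the quadratic form into leading, cross and noise terms. The only cosmetic difference is that you fold $\widehat\sigma^{-2}\mathbf G^{-1}$ into one feasible weight matrix and bound the cross term by Cauchy--Schwarz in that metric. The paper instead expands around $(\mathbf G^*)^{-1}$ and divides by $\widehat\sigma^2$ at the end.
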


\begin{proof}
The score vector has the expansion
\[
\mathbf S_{i,j,H}
=
T^{-1/2}
\sum_t
\widehat{\mathbf V}_{i,j,H,t}
\mu_{i,j,H,t}
+
T^{-1/2}
\sum_t
\widehat{\mathbf V}_{i,j,H,t}
\varepsilon_{i,t}.
\]
Replacing \(\widehat{\mathbf V}_{i,j,H,t}\) by
\(\mathbf V^*_{i,j,H,t}\) is negligible uniformly. The first component equals
\[
\sqrt T\,\mathbf m^F_{i,j,H}
+
\Op(\sqrt{\log N}),
\]
and the second component is \(\Op(\sqrt{\log N})\). Hence
\[
\mathbf S_{i,j,H}
=
\sqrt T\,\mathbf m^F_{i,j,H}
+
\mathbf R_{i,j,H},
\qquad
\|\mathbf R_{i,j,H}\|
=
\Op(\sqrt{\log N})
\]
uniformly.

Using Lemma~\ref{lem:app_F_denom}, expand the quadratic form:
\[
\begin{aligned}
\mathbf S_{i,j,H}'
\mathbf G_{i,j,H}^{-1}
\mathbf S_{i,j,H}
&=
T
\mathbf m_{i,j,H}^{F\prime}
(\mathbf G^*_{i,j,H})^{-1}
\mathbf m^F_{i,j,H}
+
2\sqrt T\,
\mathbf m_{i,j,H}^{F\prime}
(\mathbf G^*_{i,j,H})^{-1}
\mathbf R_{i,j,H}
\\
&+
\mathbf R_{i,j,H}'
(\mathbf G^*_{i,j,H})^{-1}
\mathbf R_{i,j,H}
+
o_p\!\left(TQ^F_{i,j,H}\right)
+o_p(\log N).
\end{aligned}
\]
By definition of \(Q^F_{i,j,H}\), the first term equals
\[
T\sigma_{i,j,H}^{2}Q^F_{i,j,H}.
\]
The cross term is
\[
\Op\!\left(
\sqrt{
T\sigma_{i,j,H}^{2}Q^F_{i,j,H}\log N
}
\right),
\]
and the final quadratic remainder is \(\Op(\log N)\). Division by
\(\widehat\sigma_{i,j,H}^{2}\), together with the uniform variance bounds, gives
\[
F_{i,j,H}
=
TQ^F_{i,j,H}\{1+\op(1)\}
+
\Op\!\left(
\sqrt{TQ^F_{i,j,H}\log N}
\right)
+
\Op(\log N).
\]
\end{proof}

\begin{lemma}\label{lem:app_F_null}
Under Assumptions~\ref{ass:sampling}--\ref{ass:tails},
\ref{ass:block_first_stage}, and \ref{ass:F_signal}, there is a finite $C_0^F$ such
that, for every fixed $C>C_0^F$,
\[
\Pr\!\left(
\max_i\max_{H\in\mathcal H_i}\max_{j\in\mathcal S_i^d}
F_{i,j,H}>C\log N
\right)\to0.
\]
\end{lemma}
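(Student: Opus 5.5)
The proof will mirror the scalar null bound in Lemma~\ref{lem:app_t_null}, with the linear score replaced by the block quadratic. Start from Lemma~\ref{lem:app_F_rep}, which writes $F_{i,j,H}=\widehat\sigma_{i,j,H}^{-2}\mathbf S_{i,j,H}'\mathbf G_{i,j,H}^{-1}\mathbf S_{i,j,H}$. By Lemma~\ref{lem:app_F_denom}, with probability tending to one the quantities $\lambda_{\min}(\mathbf G_{i,j,H})$ and $\widehat\sigma^2_{i,j,H}$ are bounded below by $c>0$, uniformly over $i$, $H\in\mathcal H_i$ and $j\in\mathcal S_i^d$. On that event
\[
F_{i,j,H}\le c^{-2}\|\mathbf S_{i,j,H}\|^2\le c^{-2}d\max_{1\le a\le d}S_{i,j,H,a}^2 .
\]
It therefore suffices to show that $\max_{i,H,j\in\mathcal S_i^d}\max_a|S_{i,j,H,a}|\le C'\sqrt{\log N}$ with probability tending to one for every $C'$ above some finite constant. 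Setting $C_0^F=c^{-2}d\,C_0'^2$ then gives the claim.

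Each component is handled with the exact score identity, the block analogue of Lemma~\ref{lem:app_t_rep}. The fitted block lies in the instrument space and is orthogonal to $\widehat{\mathbf C}_{i,j,H}$, so the argument there gives
\[
S_{i,j,H,a}=T^{-1/2}\sum_t\widehat V_{i,j,H,t,a}(\mu_{i,j,H,t}+\varepsilon_{i,t}).
\]
Next, replace $\widehat V_{t,a}$ by $V^*_{t,a}$. By Lemma~\ref{lem:app_first_stage} and Assumption~\ref{ass:block_first_stage}, this costs $o_p(\sqrt{\log N})$ uniformly, for both the $\varepsilon$ and the $\mu$ products. The population-fitted part then splits into three pieces:
\begin{enumerate}[label=(\alph*)]
\item the mean $\sqrt T\,\E(V^*_{t,a}\mu_t)$;
\item the centred sum $T^{-1/2}\sum_t\{V^*_{t,a}\mu_t-\E(V^*_{t,a}\mu_t)\}$;
\item the term $T^{-1/2}\sum_t V^*_{t,a}\varepsilon_t$, which is centred by Lemma~\ref{lem:primitive_stagewise_exogeneity}.
\end{enumerate}
The arrays in (b) and (c) number $O(N^2)$ by Lemma~\ref{lem:app_count}. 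They satisfy the mixing and tail conditions, which are imposed blockwise on $V^*_a\varepsilon$ and $V^*_a\mu$. The fixed-envelope part of Lemma~\ref{lem:app_uniform_dgk} therefore bounds both by $C\sqrt{\log N}$ with probability tending to one, for every $C>C_0$.

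The main obstacle is the deterministic term (a). Assumption~\ref{ass:F_signal} controls the normalised quadratic $TQ^F_{i,j,H}/\log N\to0$ for irrelevant blocks, not the individual components. Convert it using the Gram bounds: $\mathbf G^*$ has eigenvalues bounded above by $C$, since the block Gram conditions impose bounded $\lambda_{\max}$ through Assumption~\ref{ass:iv}, and $\sigma^2\le C$. Hence
\[
\|\sqrt T\,\mathbf m^F_{i,j,H}\|^2\le \lambda_{\max}(\mathbf G^*)\,\sigma^2_{i,j,H}\,TQ^F_{i,j,H}=o(\log N)
\]
uniformly, so (a) is $o(\sqrt{\log N})$ componentwise. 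If an upper eigenvalue bound on $\mathbf G^*$ were not directly available, I would derive it instead from the bounded second moments of $\boldsymbol{\mathcal Z}$ together with bounded first-stage coefficients. Collecting (a)--(c) with the replacement error gives the uniform $\sqrt{\log N}$ bound on $\mathbf S$, and squaring yields the $C\log N$ threshold.
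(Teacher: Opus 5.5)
Your argument follows the paper's route and spells out what the paper leaves implicit. The paper also starts from the score expansion of Lemma~\ref{lem:app_F_signal}, $\mathbf S_{i,j,H}=\sqrt T\,\mathbf m^F_{i,j,H}+\mathbf R_{i,j,H}$. It bounds $\mathbf R_{i,j,H}$ through the block replacement conditions and the fixed-envelope DGK bound over $O(N^2)$ arrays, disposes of the signal with $TQ^F_{i,j,H}=o(\log N)$, and closes with Lemma~\ref{lem:app_F_denom}.

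The one step whose stated justification does not hold is step (a). Assumption~\ref{ass:iv} gives two-sided eigenvalue bounds only for $\E(\boldsymbol{\mathcal Z}_{i,j,H,t}\boldsymbol{\mathcal Z}_{i,j,H,t}')$ and $\E(\mathbf C^*_{i,j,H,t}\mathbf C^{*\prime}_{i,j,H,t})$. For the partial fitted candidate it imposes only $\inf\E[(v^*_{i,j,H,t})^2]>0$, and Assumption~\ref{ass:block_first_stage} adds only $\inf\lambda_{\min}(\mathbf G^*_{i,j,H})>0$. So $\sup_{i,j,H}\lambda_{\max}(\mathbf G^*_{i,j,H})<\infty$ is not a maintained condition. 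Your fallback, bounded first-stage coefficients, amounts to a bound on $\E(\boldsymbol{\mathcal Z}_{i,j,H,t}\mathbf v_{i,j,H,t}')$, which is not assumed either; the tail conditions concern centred arrays and say nothing about means. Without such a bound, $TQ^F_{i,j,H}=o(\log N)$ does not control the Euclidean norm of $\sqrt T\,\mathbf m^F_{i,j,H}$.

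The paper passes over this conversion silently, so you are not missing an idea it supplies. You can close the step without new assumptions by staying in the $(\mathbf G^*_{i,j,H})^{-1}$ metric, which is the metric in which $Q^F$ is defined. With $\mathbf A=(\mathbf G^*_{i,j,H})^{-1}$,
\[
\mathbf S_{i,j,H}'\mathbf A\,\mathbf S_{i,j,H}
\le 2T\,\mathbf m^{F\prime}_{i,j,H}\mathbf A\,\mathbf m^F_{i,j,H}+2\mathbf R_{i,j,H}'\mathbf A\,\mathbf R_{i,j,H}
\le 2\sigma^2_{i,j,H}\,TQ^F_{i,j,H}+2\|\mathbf R_{i,j,H}\|^2/\lambda_{\min}(\mathbf G^*_{i,j,H}).
\]
This is below $C\log N$ with probability tending to one, using only the lower eigenvalue bound and $\sigma^2_{i,j,H}\le C$.

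To replace $\mathbf G_{i,j,H}^{-1}$ by $\mathbf A$ at the cost of a factor $1+o_p(1)$, show that $\mathbf G_{i,j,H}\succeq\{1-o_p(1)\}\mathbf G^*_{i,j,H}$ uniformly. For any $\mathbf u$,
\[
\|T^{-1/2}\widehat{\mathbf V}\mathbf u\|\ge\|T^{-1/2}\mathbf V^*\mathbf u\|-\|T^{-1/2}(\widehat{\mathbf V}-\mathbf V^*)\mathbf u\|.
\]
The square of the first term is $\mathbf u'\mathbf G^*\mathbf u\{1+o_p(1)\}$, by the DGK bound on the centred $V^*_aV^*_b$ arrays together with $\lambda_{\min}(\mathbf G^*)\ge c$. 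The second term is $o_p(1)\|\mathbf u\|\le o_p(1)(\mathbf u'\mathbf G^*\mathbf u)^{1/2}$, by the block replacement bound.

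Alternatively, state $\sup\lambda_{\max}(\mathbf G^*_{i,j,H})<\infty$ as an explicit regularity condition. The cross-term steps in Lemmas~\ref{lem:app_t_denom} and~\ref{lem:app_F_denom} already tacitly rely on bounded second moments of the population fitted regressors.
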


\begin{proof}
For an irrelevant block, the last part of Assumption~\ref{ass:F_signal} gives
$TQ^F_{i,j,H}=o(\log N)$ uniformly. The score expansion in the proof of
Lemma~\ref{lem:app_F_signal}, the fixed-envelope form of
Lemma~\ref{lem:app_uniform_dgk}, and the block first-stage replacement bounds imply
that the centred score remainder has squared norm below $C\log N$ uniformly with
probability tending to one for a sufficiently large $C$. Lemma~\ref{lem:app_F_denom}
then gives the displayed probability bound.
\end{proof}

\begin{lemma}\label{lem:app_F_proxy}
Under Lemmas~\ref{lem:primitive_F_signal},
\ref{lem:app_F_proxy_dominance},
and~\ref{lem:app_F_signal},
with probability tending to one, the largest block statistic among
\(\mathcal K_i\setminus H\) at every pre-completion conditioning set
\(H\in\mathcal H_i^0\) is attained by a true link. At every post-completion conditioning set \(H\in\mathcal H_i^1\), remaining proxy links fail to cross $C_F\log N$ for $C_F$ sufficiently large.
\end{lemma}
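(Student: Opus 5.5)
We need to prove Lemma \ref{lem:app_F_proxy}, the block analogue of Lemma \ref{lem:app_t_proxy}. The plan is to combine the uniform expansion of Lemma \ref{lem:app_F_signal} with the population gap of Lemma \ref{lem:app_F_proxy_dominance}. Everything is first stated on the finite class of triples $(i,H,j)$ with $H\in\mathcal H_i$ and $j\in\mathcal K_i\setminus H$. There are at most $N\,2^{\bar K}\bar K$ such triples, so the $O_p$ and $o_p$ terms in Lemma \ref{lem:app_F_signal} hold uniformly, by Lemma \ref{lem:app_uniform_dgk}.

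\emph{Pre-completion step.} Fix $H\in\mathcal H_i^0$ and let $s^\ast\in\arg\max_{s\in U_i(H)}Q^F_{i,s,H}$. From Lemma \ref{lem:app_F_signal} we get a lower bound for the best true link,
\[
F_{i,s^\ast,H}\ge TQ^F_{i,s^\ast,H}\{1-o_p(1)\}-O_p\bigl(\sqrt{T\bar Q^F_{i,H}\log N}\bigr)-O_p(\log N).
\]
For every proxy $p\in P_i(H)$ we get the corresponding upper bound,
\[
F_{i,p,H}\le TQ^F_{i,p,H}\{1+o_p(1)\}+O_p\bigl(\sqrt{T\bar Q^F_{i,H}\log N}\bigr)+O_p(\log N).
\]
Here we use $Q^F_{i,j,H}\le\bar Q^F_{i,H}$ for all $j\in\mathcal K_i\setminus H$. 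Subtracting the two bounds gives
\[
\min_{p}\bigl(F_{i,s^\ast,H}-F_{i,p,H}\bigr)\ge T\rho^F_{i,H}-o_p(T\bar Q^F_{i,H})-O_p\bigl(\sqrt{T\bar Q^F_{i,H}\log N}+\log N\bigr).
\]
The term $o_p(T\bar Q^F_{i,H})$ is controlled through $\rho^F_{i,H}\ge(1-\lambda_F)\max_sQ^F_{i,s,H}$ and through $\bar Q^F_{i,H}=\max_sQ^F_{i,s,H}$, which holds because $\lambda_F<1$ makes proxies weaker than the strongest true link. So this term is $o_p(T\rho^F_{i,H})$. The remaining terms are $o_p(T\rho^F_{i,H})$ uniformly, by the divergence ratio in Lemma \ref{lem:app_F_proxy_dominance}. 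Hence the gap is positive uniformly with probability tending to one, and the maximiser over $\mathcal K_i\setminus H$ is a true link.

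\emph{Post-completion step.} Take $H\in\mathcal H_i^1$ and $p\in P_i(H)$. Lemma \ref{lem:primitive_F_signal}(ii) gives $TQ^F_{i,p,H}=o(\log N)$ uniformly. Substituting this into Lemma \ref{lem:app_F_signal} gives
\[
F_{i,p,H}=o(\log N)+o_p\bigl(\sqrt{\log N\cdot\log N}\bigr)+O_p(\log N).
\]
The $O_p(\log N)$ term comes from $\mathbf R'\mathbf G^{-1}\mathbf R/\widehat\sigma^2$. Using the fixed-envelope version of Lemma \ref{lem:app_uniform_dgk} (as in Lemma \ref{lem:app_F_null}), $\|\mathbf R\|^2\le C\log N$ with probability tending to one for $C$ large. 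Combined with the eigenvalue and variance bounds of Lemma \ref{lem:app_F_denom}, this shows $\max F_{i,p,H}<C_F\log N$ for $C_F$ large.

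\emph{Main obstacle.} The delicate point is the uniformity of the $O_p$ remainders. The generic $O_p$ statements must be upgraded to explicit fixed-constant envelopes holding simultaneously over all $(i,H,j)$, and this must happen before the differences are compared. This is needed because the proxy and true-link remainders are not independent, and a remainder of order $\log N$ has to be beaten by $T\rho^F_{i,H}$ with a margin that does not depend on $i$. I would handle this by intersecting, for a large constant $C$, the single high-probability event on which every centred score component is bounded by $C\sqrt{\log N}$ and every Gram/variance factor lies within $o(1)$ of its population value. All of the comparisons above would then be carried out deterministically on that event.
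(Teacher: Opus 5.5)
Your proposal is correct and follows essentially the same route as the paper. Before completion, you compare the expansion of Lemma~\ref{lem:app_F_signal} for the strongest true link against that for every proxy, using $\max_p Q^F_{i,p,H}\le\lambda_F\max_s Q^F_{i,s,H}$ and the divergence of the true-link signal relative to $\sqrt{T\bar Q^F_{i,H}\log N}+\log N$; after completion, you substitute $TQ^F_{i,p,H}=o(\log N)$ and bound the quadratic score remainder through the fixed-envelope form of Lemma~\ref{lem:app_uniform_dgk}, and your single high-probability envelope event just spells out what the paper compresses into ``together with the fixed-envelope score bound.''
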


\begin{proof}
For \(H\in\mathcal H_i^0\), write
$Q_s=\max_{s\in U_i(H)}Q^F_{i,s,H}$ and
$Q_p=\max_{p\in P_i(H)}Q^F_{i,p,H}$. Then $Q_p\le\lambda_FQ_s$ and the
multiplicative expansion in Lemma~\ref{lem:app_F_signal} gives
\[
\max_{s\in U_i(H)}F_{i,s,H}
-\max_{p\in P_i(H)}F_{i,p,H}
\ge
TQ_s\{1-\lambda_F-\op(1)\}
-\Op\!\left(\sqrt{TQ_s\log N}+\log N\right).
\]
Assumption~\ref{ass:F_signal} makes the right-hand side positive uniformly with
probability tending to one. Hence, before completion, the largest block statistic
is attained by a true link.

For \(H\in\mathcal H_i^1\), Lemma~\ref{lem:app_F_proxy_dominance} and
Lemma~\ref{lem:app_F_signal}, together with the fixed-envelope score bound, imply
that all remaining proxy block statistics are below $C_F\log N$ with probability
tending to one when $C_F$ is sufficiently large.
\end{proof}

\subsection{\textbf{Proofs of the Main Theorems}}

\begin{proof}[\textbf{Proof of Theorem~\ref{thm:t_exact}}]

Let
\[
c_N^t
=
C_t\sqrt{\log N},
\qquad C_t\ \text{sufficiently large}.
\]

By Lemma~\ref{lem:app_t_null},
\[
\Pr
\left(
\exists i,\,
H\in\mathcal H_i,\,
j\in\mathcal S_i^d:
|t_{i,j,H}|>c_N^t
\right)
\to0
\]
for a sufficiently large threshold constant.

At every admissible pre-completion conditioning set
\(
H\in\mathcal H_i^0,
\)
Lemma~\ref{lem:app_primitive_t_signal}
implies that at least one remaining true link has a population screening signal
that dominates the stochastic screening error. Combined with the statistic
expansion in Lemma~\ref{lem:app_t_signal}, this implies that at least one
remaining true link has a statistic exceeding the stopping threshold.
Hence, for equations with at least one unselected true link, the procedure
cannot terminate before all true links have been selected.

Moreover, Lemma~\ref{lem:app_t_proxy}
implies that, with probability tending to one, the largest statistic among the
remaining true links exceeds the largest statistic among the remaining proxy
links at every pre-completion conditioning set. Since irrelevant candidates do
not cross the threshold asymptotically, every selected candidate before
completion belongs to
\[
U_i(H)
=
\mathcal S_i^n\setminus H.
\]

By Lemma~\ref{lem:app_path_induction}, the realised selection path remains
inside the deterministic low-dimensional conditioning-set class with
probability tending to one. Consequently, no proxy or irrelevant link is
selected before all true links have been selected. If
\(\mathcal S_i^n=\varnothing\), then the initial conditioning set is already
post-completion and the preceding argument is vacuous.

After completion, that is, for
\(
H\in\mathcal H_i^1,
\)
Lemma~\ref{lem:app_t_proxy}
implies that the statistics of all remaining proxy links lie below the stopping
threshold with probability tending to one. Irrelevant candidates are also below
threshold by Lemma~\ref{lem:app_t_null}. Therefore, the algorithm stops exactly
when the true-link set has been recovered. If
\(\mathcal S_i^n=\varnothing\), it stops immediately with probability tending
to one.

Hence,
\[
\widehat{\mathcal S}_i^t
=
\mathcal S_i^n
\]
for every equation \(i\) with probability tending to one, and therefore
\[
\Pr(\mathcal A_1^t)
\to
1.
\]
\end{proof}

\begin{proof}[\textbf{Proof of Proposition~\ref{prop:containment}}]
Let
\[
\mathcal C
=
\left\{
\mathcal S_i^n\subseteq \widehat{\mathcal S}_i^t
\subseteq\mathcal S_i^n\cup\mathcal S_i^p
\text{ for all }i
\right\}.
\]
By assumption, $\Pr(\mathcal C)\to1$. On $\mathcal C$, the post-selection
equation contains all variables that enter the structural equation for unit
$i$. Hence the selected model can be written as
\[
y_{i,t}
=
\eta_i
+
\boldsymbol{\beta}_i'\boldsymbol{x}_{i,t}
+
\sum_{j\in \mathcal S_i^n}
\alpha_{i,j}y_{j,t}
+
\sum_{j\in \widehat{\mathcal S}_i^t\setminus \mathcal S_i^n}
0\cdot y_{j,t}
+
\varepsilon_{i,t}.
\]
Thus, on $\mathcal C$, the coefficients associated with selected
proxy links are zero in the population, while the coefficients associated with
true links are the structural coefficients $\{\alpha_{i,j}:j\in\mathcal S_i^n\}$.

The maintained uniform post-selection IV regularity conditions imply consistency of
the IV estimator over all admissible augmented equations. Therefore,
\[
\widehat{\alpha}_{i,j}
\stackrel{p}{\to}
\alpha_{i,j},
\qquad j\in\mathcal S_i^n,
\]
and, for any selected proxy link
$j\in\widehat{\mathcal S}_i^t\setminus\mathcal S_i^n$,
\[
\widehat{\alpha}_{i,j}
\stackrel{p}{\to}
0.
\]
Since $\Pr(\mathcal C)\to1$, the same conclusions hold unconditionally. This
proves the result.
\end{proof}

\begin{proof}[\textbf{Proof of Theorem~\ref{thm:dualrecover}}]

By Theorem~\ref{thm:t_exact},
\[
\Pr
\left(
\widehat{\mathcal S}_i^t
=
\mathcal S_i^n
\text{ for all }i
\right)
\to1.
\]
On the exact-recovery event, condition Eq. \eqref{eq:post_sign_rate} and the beta-min
condition in Assumption~\ref{ass:t_signal}(i) imply that
$\operatorname{sign}(\widehat\alpha_{i,j})=\operatorname{sign}(\alpha_{i,j})$
uniformly over all true links with probability tending to one. This proves exact
recovery of both signed supports.

Since each row contains at most $\bar k$ nonzero entries,
\[
\max_i|\widehat\rho_{\ell,i}-\rho_{\ell,i}|
\le
\bar k\max_i\max_{j\in\mathcal S_i^n}
|\widehat\alpha_{i,j}-\alpha_{i,j}|=o_p(1).
\]
On active rows the denominator $|\rho_{\ell,i}|=|\rho_\ell|$ is bounded away
from zero under Assumption~\ref{ass:dual}; on inactive rows both the population
and estimated signed rows are zero with probability tending to one. The stated
uniform row-$\ell_1$ convergence of the normalised weights follows by the
continuous-mapping theorem.

\end{proof}

\begin{proof}[\textbf{Proof of Theorem~\ref{thm:eqoracle}}]

Let $\mathcal B_1^t$ be the event that all estimated coefficients on true links
have the correct signs. Condition Eq. \eqref{eq:post_sign_rate} and the beta-min
condition imply $\Pr(\mathcal B_1^t)\to1$. On
\(\mathcal A_1^t\cap\mathcal B_1^t\), the feasible post-selection IV estimator
and the oracle IV estimator use the same regressors, instruments, and signed
parameter map, up to column ordering. Hence
\[
\widehat{\boldsymbol{\theta}}_i
=
\widehat{\boldsymbol{\theta}}_i^o
\]
on \(\mathcal A_1^t\cap\mathcal B_1^t\). Since this event has probability tending to one,
the stated $o_p(T^{-1/2})$
equivalence follows.
\end{proof}

\begin{proof}[\textbf{Proof of Theorem~\ref{thm:t_panel}}]

Let $\mathcal B_1^t$ be the sign-recovery event defined in the preceding proof.
On \(\mathcal A_1^t\cap\mathcal B_1^t\),
$\widehat{\boldsymbol{\theta}}_{MG}
= \widehat{\boldsymbol{\theta}}_{MG}^o$.
Since this event has probability tending to one,
\[
\widehat{\boldsymbol{\theta}}_{MG}
-
\widehat{\boldsymbol{\theta}}_{MG}^o
=
o_p((NT)^{-1/2}).
\]

Assumption~\ref{ass:panel} gives
\[
\sqrt{NT}
\left(
\widehat{\boldsymbol{\theta}}_{MG}^o
-
\boldsymbol{\theta}_N
\right)
\Rightarrow
\mathcal N(\mathbf 0,\mathbf V).
\]

Slutsky's theorem yields the result.

\end{proof}

\begin{proof}[\textbf{Proof of Theorem~\ref{thm:F_exact}}]

Let
\[
c_N^F=C_F\log N,
\qquad C_F\ \text{sufficiently large}.
\]

By Lemma~\ref{lem:app_F_null},
\[
\Pr
\left(
\exists i,\,
H\in\mathcal H_i,\,
j\in\mathcal S_i^d:
F_{i,j,H}>c_N^F
\right)
\to0
\]
for a sufficiently large threshold constant.

At every admissible pre-completion model
\(H\in\mathcal H_i^0\),
Assumption~\ref{ass:F_signal}
and Lemma~\ref{lem:app_F_signal}
imply that at least one remaining true link exceeds the threshold. Hence, for equations with at least one unselected true link, the procedure does not stop before all true links have been selected.

Moreover, Lemma~\ref{lem:app_F_proxy}
implies that, with probability tending to one, the largest block statistic among remaining true links exceeds the largest block statistic among remaining proxy links at any pre-completion conditioning set. Since irrelevant candidates do not cross the threshold asymptotically, any selected candidate before completion belongs to
\[
\mathcal S_i^n\setminus H.
\]

By Lemma~\ref{lem:app_path_induction}, the realised selection path remains inside the low-dimensional conditioning-set class. Consequently, no proxy or irrelevant link is selected before all true links have been selected. If \(\mathcal S_i^n=\varnothing\), the initial model is already post-completion and the preceding pre-completion argument is vacuous.

After completion, that is, for
\(H\in\mathcal H_i^1\),
the post-completion part of Assumption~\ref{ass:F_dominance}
and Lemma~\ref{lem:app_F_proxy}
imply that the statistics of all remaining proxy links lie below the stopping threshold with probability tending to one. Irrelevant candidates are also below threshold by the null bound. Therefore the algorithm stops once the true link set has been recovered; if the true link set is empty, it stops at the initial model with probability tending to one.

Hence
\[
\widehat{\mathcal S}_i^F
=
\mathcal S_i^n
\]
for every equation \(i\) with probability tending to one, and therefore
\[
\Pr(\mathcal A_1^F)\to1.
\]
\end{proof}

\subsection{Factor-purged variables}

Suppose
\[
\varepsilon_{i,t}
=
\boldsymbol{\lambda}_i'
\mathbf F_t
+
u_{i,t}.
\]
Let \(\widehat{\mathbf F}\) denote observed or estimated factor proxies included among the always-included regressors, and set
\[
\mathbf M_{\widehat{\mathbf F}}
=
\mathbf I_T
-
\widehat{\mathbf F}
(
\widehat{\mathbf F}'
\widehat{\mathbf F}
)^{-1}
\widehat{\mathbf F}'.
\]

\begin{lemma}\label{lem:app_factor}
Assume the residual factor component is negligible on the screening-score scale:
\[
\sup_{i,j,H}
\left|
T^{-1/2}
\sum_{t=1}^T
v^*_{i,j,H,t}
(\mathbf M_{\widehat{\mathbf F}}\mathbf F\boldsymbol{\lambda}_i)_t
\right|
=
o_p(\sqrt{\log N}),
\]
and, for block screening,
\[
\sup_{i,j,H}\max_{1\le a\le d}
\left|
T^{-1/2}
\sum_{t=1}^T
V^*_{i,j,H,t,a}
(\mathbf M_{\widehat{\mathbf F}}\mathbf F\boldsymbol{\lambda}_i)_t
\right|
=
o_p(\sqrt{\log N}).
\]
The same two bounds hold with $v^*_{i,j,H,t}$ and $V^*_{i,j,H,t,a}$ replaced,
respectively, by $\widehat v_{i,j,H,t}-v^*_{i,j,H,t}$ and
$\widehat V_{i,j,H,t,a}-V^*_{i,j,H,t,a}$.
Assume also that, after applying
\(\mathbf M_{\widehat{\mathbf F}}\),
the idiosyncratic component \(u_{i,t}\) satisfies the same mixing, tail, first-stage, relevance, and signal conditions used above. Then the scalar and block selection results continue to hold for the factor-purged variables.
\end{lemma}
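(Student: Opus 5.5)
The plan is to reduce the factor-augmented case to the baseline scalar and block recovery arguments (Theorems~\ref{thm:t_exact} and~\ref{thm:F_exact}). Since $\widehat{\mathbf F}$ is among the always-included regressors and in the first-stage projections, every stagewise residualisation $\widehat{\mathbf M}^{C}_{i,j,H}$ annihilates $\widehat{\mathbf F}$. Hence the screening numerator equals the numerator computed from factor-purged data. I would first write the exact score identity of Lemma~\ref{lem:app_t_rep} for the purged system:
\[
T^{-1/2}\widehat{\mathbf v}_{i,j,H}'\widetilde{\mathbf y}_{i,j,H}
=T^{-1/2}\sum_{t=1}^T\widehat v_{i,j,H,t}\big\{\mu_{i,j,H,t}+u_{i,t}+(\mathbf M_{\widehat{\mathbf F}}\mathbf F\boldsymbol\lambda_i)_t\big\}.
\]
This holds because $\mathbf M_{\widehat{\mathbf F}}\varepsilon_i=\mathbf M_{\widehat{\mathbf F}}\mathbf F\boldsymbol\lambda_i+\mathbf M_{\widehat{\mathbf F}}u_i$, and $\widehat{\mathbf v}$ is orthogonal to $\widehat{\mathbf F}$, so $\widehat{\mathbf v}'\mathbf M_{\widehat{\mathbf F}}u_i=\widehat{\mathbf v}'u_i$.

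The second step controls the new factor term. I would split $\widehat v_t=v^*_t+(\widehat v_t-v^*_t)$. Both resulting pieces are $o_p(\sqrt{\log N})$ uniformly over $(i,j,H)$ by the two displayed hypotheses of the lemma. The block case is handled componentwise with $V^*_{i,j,H,t,a}$, and since the block dimension is fixed, the squared norm of this perturbation is $o_p(\log N)$. The perturbation is therefore of smaller order than the $\Op(\sqrt{\log N})$ stochastic remainders in Lemmas~\ref{lem:app_t_signal} and~\ref{lem:app_F_signal}. It is also smaller than the thresholds $C_t\sqrt{\log N}$ and $C_F\log N$. In the block case, the cross term $2\sqrt T\,\mathbf m^{F\prime}\mathbf G^{-1}\mathbf r$ it induces is $o_p(\sqrt{TQ^F\log N})$, so it is absorbed as well.

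The remaining terms are the ones treated in the baseline theory, with $u_{i,t}$ in place of $\varepsilon_{i,t}$. By the assumption that the purged idiosyncratic errors satisfy the mixing, tail, first-stage, relevance and signal conditions, Lemmas~\ref{lem:app_uniform_dgk}, \ref{lem:app_first_stage}, \ref{lem:app_t_denom}/\ref{lem:app_F_denom} and the null, signal and proxy lemmas all apply verbatim. The denominators $\widehat\sigma^2$ and $T^{-1}\widehat{\mathbf v}'\widehat{\mathbf v}$ are covered by the same imposed conditions. Adding an $o_p(\sqrt{\log N})$ term preserves each conclusion. Null statistics stay below threshold for a possibly enlarged constant, true-link statistics still diverge relative to $\sqrt{\log N}$, and the pre-completion true-versus-proxy gap, which diverges faster than $\sqrt{\log N}$, remains positive. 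The proofs of Theorems~\ref{thm:t_exact} and~\ref{thm:F_exact}, including the path induction of Lemma~\ref{lem:app_path_induction}, then go through unchanged.

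The main obstacle is bookkeeping rather than substance. The population objects (signals $m_{i,j}(H)$, $Q^F$, $\sigma^2$) must be reinterpreted for the purged variables, and the $o_p$ perturbation must be uniform over the $O(N^2)$ triples $(i,j,H)$. The displayed hypotheses are stated with a supremum over $(i,j,H)$, so uniformity is available. I would therefore only need to check that the $\op(1)$ multiplicative factors and the thresholds tolerate an additive uniform $o_p(\sqrt{\log N})$ term, which I would verify by a single triangle inequality in each lemma.
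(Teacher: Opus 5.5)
Your proposal is correct and follows essentially the same route as the paper: split $\mathbf M_{\widehat{\mathbf F}}\boldsymbol\varepsilon_i$ into the residual factor part, which the lemma's displayed hypotheses make $o_p(\sqrt{\log N})$ on the score scale, and the purged idiosyncratic part, to which the baseline null, signal and proxy arguments apply with $\mathbf u_i$ in place of $\boldsymbol\varepsilon_i$. Your extra bookkeeping makes explicit what the paper's three-line proof leaves implicit. This covers the orthogonality $\widehat{\mathbf v}'\widehat{\mathbf F}=\mathbf 0$ that lets the score see $\mathbf u_i$ directly, the $v^*$ versus $\widehat v-v^*$ split that uses both hypotheses, and the absorption of the block cross term.
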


\begin{proof}
Every score term in the preceding arguments is an inner product between a fitted regressor or fitted block and the regression disturbance. After factor partialling-out,
\[
\mathbf M_{\widehat{\mathbf F}}
\boldsymbol{\varepsilon}_i
=
\mathbf M_{\widehat{\mathbf F}}
\mathbf F
\boldsymbol{\lambda}_i
+
\mathbf M_{\widehat{\mathbf F}}
\mathbf u_i.
\]
The assumptions in Lemma~\ref{lem:app_factor} state precisely that the first component is negligible after multiplication by every feasible partial fitted scalar candidate and by every component of every feasible partial fitted block. Thus it is negligible on the \(\sqrt{\log N}\) score scale used in the null and signal bounds. The second component satisfies the same DGK admissibility conditions as the baseline disturbance. Therefore the probability bounds and selection arguments apply with \(\mathbf u_i\) replacing \(\boldsymbol{\varepsilon}_i\).
\end{proof}

\clearpage

\renewcommand{\thetable}{B.\arabic{table}}
\setcounter{table}{0}
\renewcommand{\thefigure}{B.\arabic{figure}}
\setcounter{figure}{0}

\section*{Appendix B: Futher Simulation Results}
\addcontentsline{toc}{section}{Appendix B: Simulation Results}
\phantomsection
\label{sec:AppendixB}

\clearpage
\begingroup
\pagestyle{empty}
\thispagestyle{empty}
\begin{figure}[p]
\centering
\begin{minipage}{0.8\linewidth}
\centering
\includegraphics[width=\linewidth]{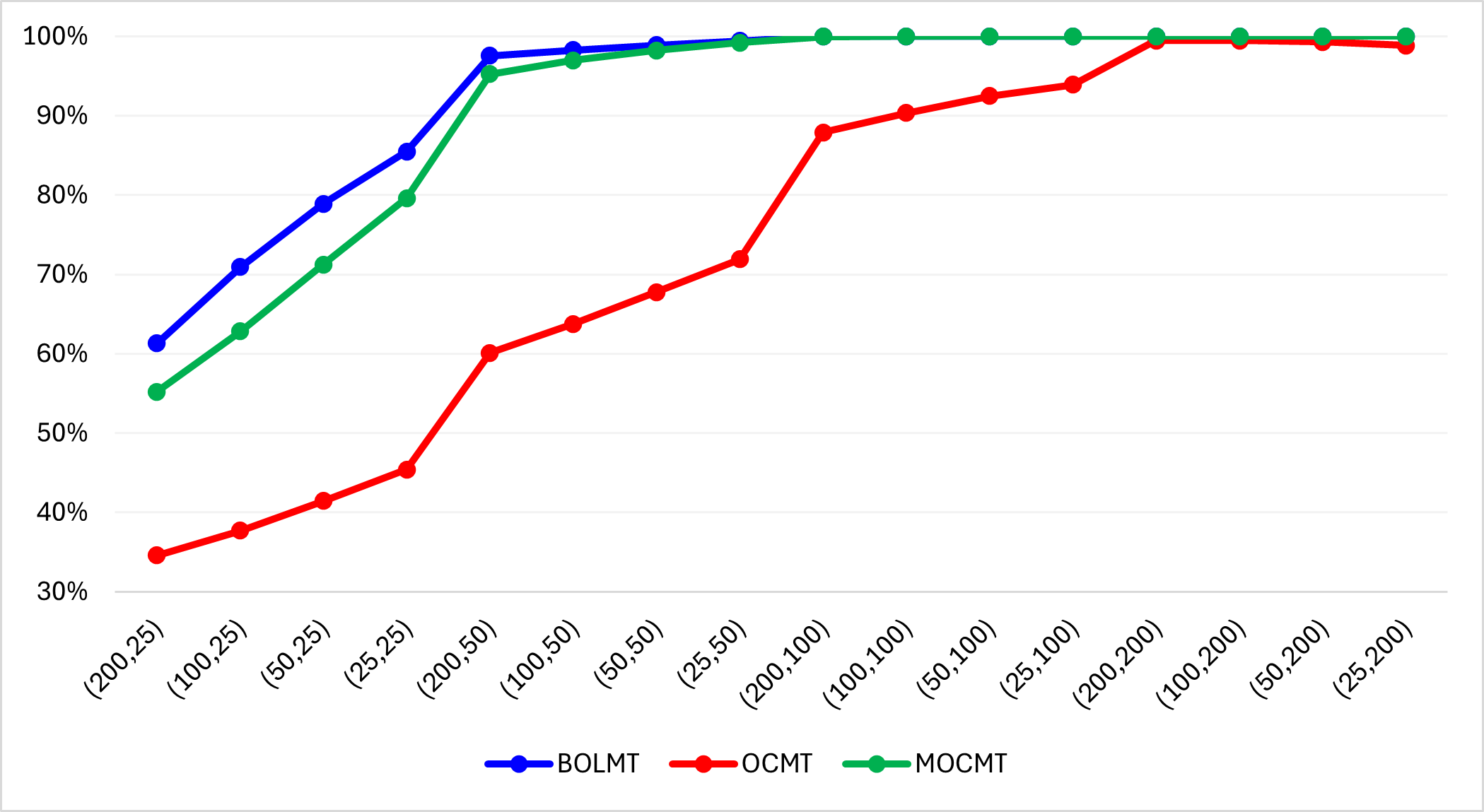}\\
\par\vspace{-0.5em}
\textbf{Circular $\boldsymbol{W}$}
\end{minipage}
\par\vspace{1.1em}
\begin{minipage}{0.8\linewidth}
\centering
\includegraphics[width=\linewidth]{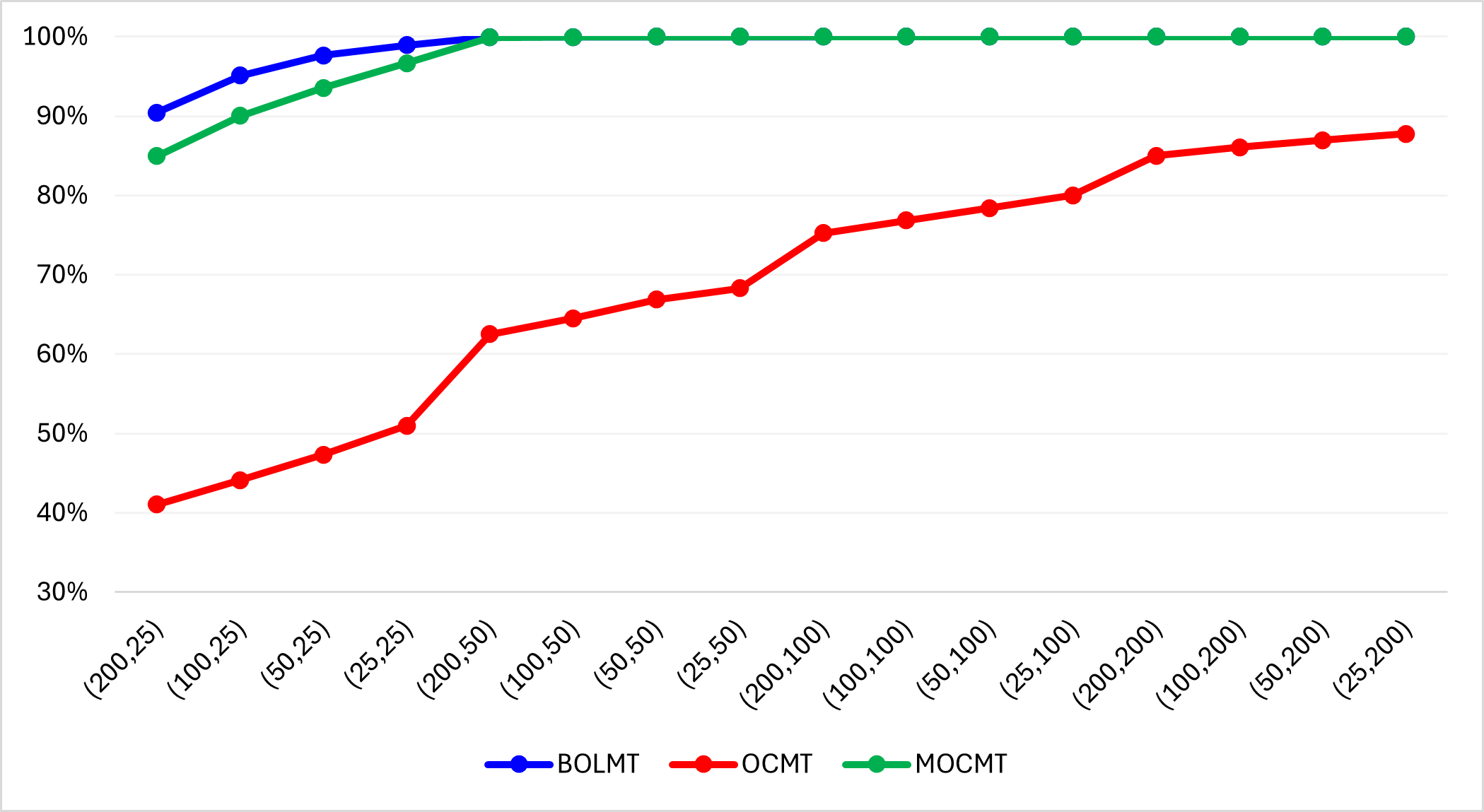}\\
\par\vspace{-0.5em}
\textbf{Block-Diagonal $\boldsymbol{W}$}
\end{minipage}
\par\vspace{1.1em}
\begin{minipage}{0.8\linewidth}
\centering
\includegraphics[width=\linewidth]{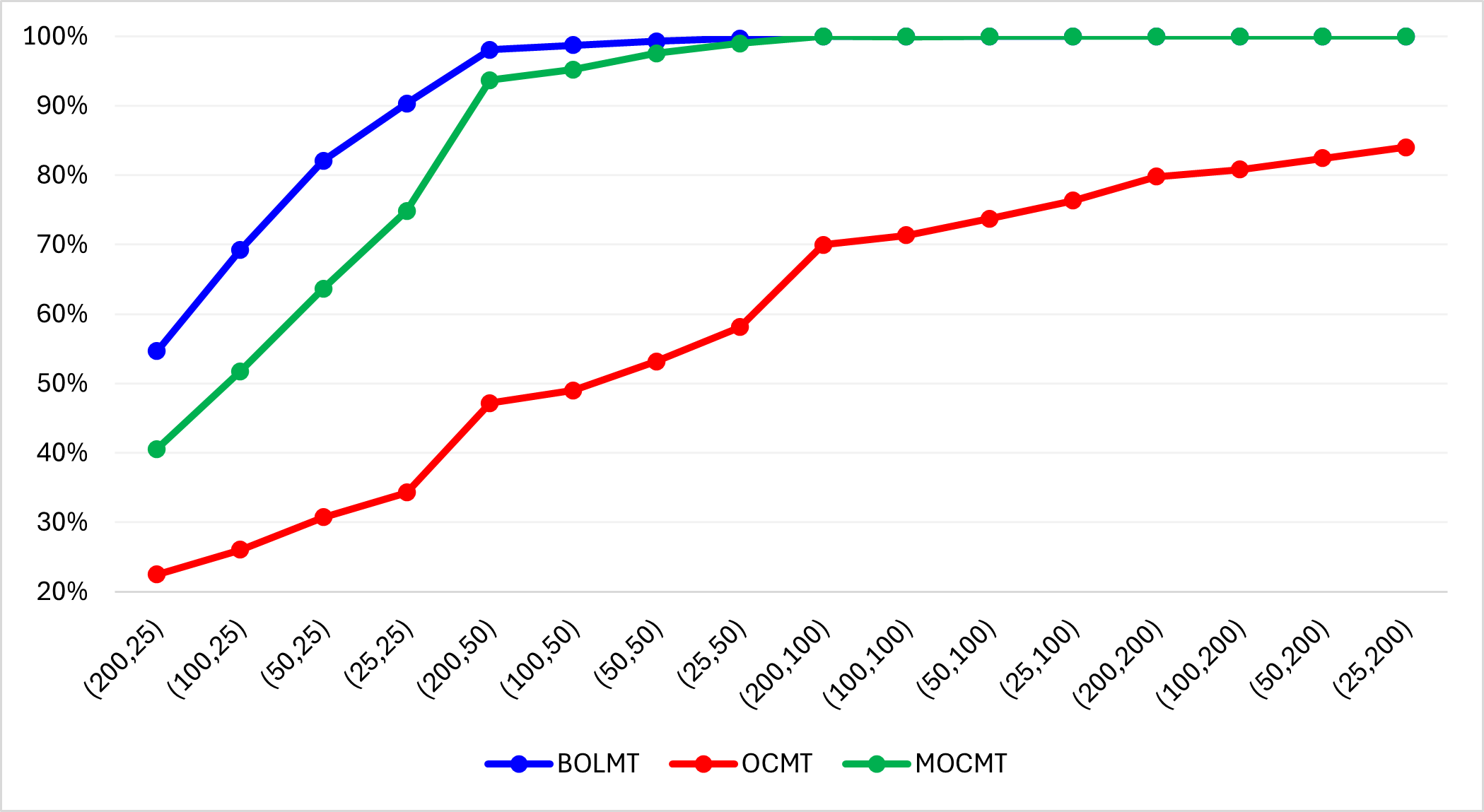}\\
\par\vspace{-0.5em}
\textbf{Random $\boldsymbol{W}$ with Dominant Unit}
\end{minipage}
\vspace{-0.4em}
\captionsetup{font=large}
\caption{TPR performance evaluation over different $(T,N)$ values}
\label{fig:TPR}
\end{figure}
\clearpage
\endgroup
\pagestyle{plain}

\clearpage
\begingroup
\pagestyle{empty}
\thispagestyle{empty}
\begin{figure}[p]
\centering
\begin{minipage}{0.8\linewidth}
\centering
\includegraphics[width=\linewidth]{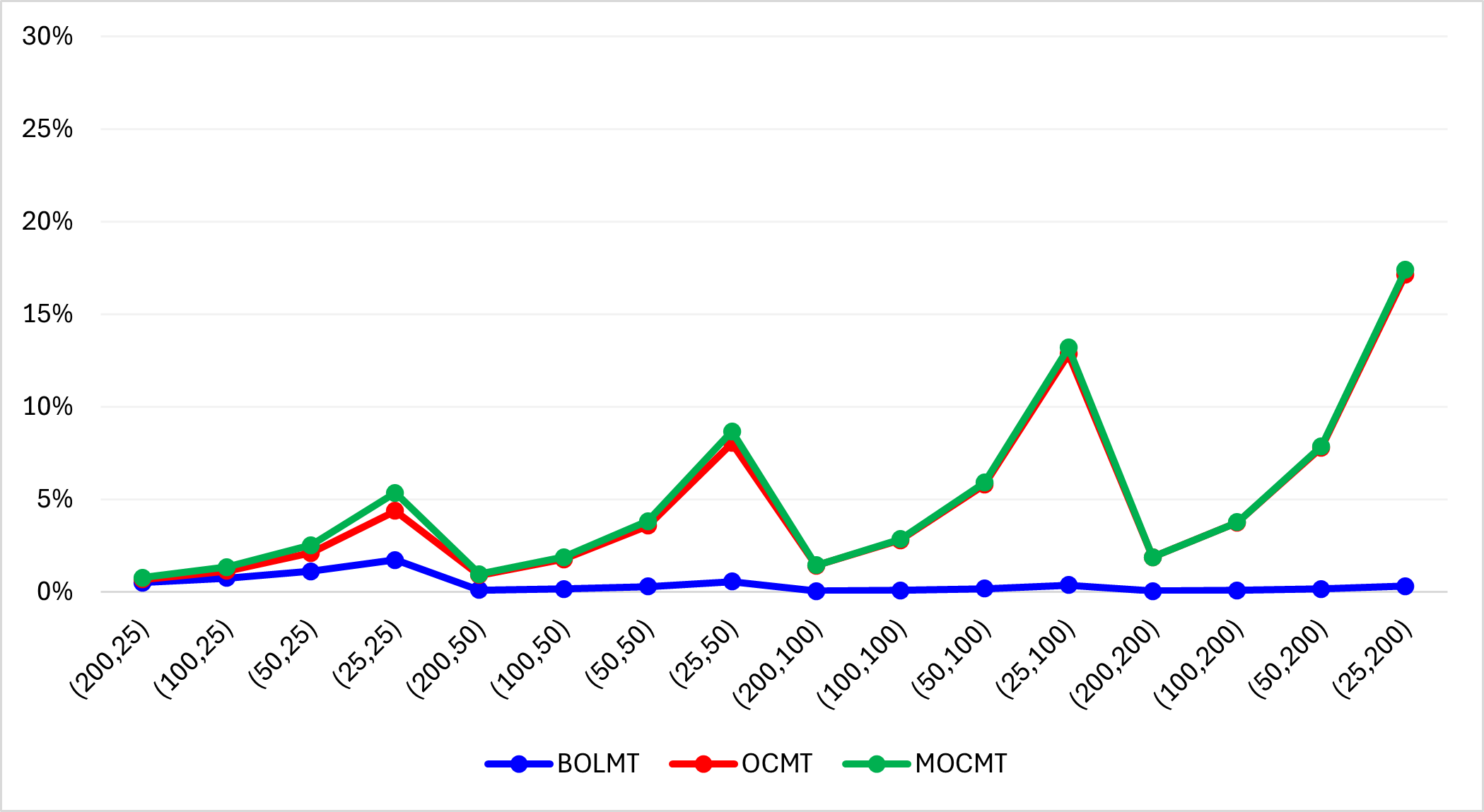}\\
\par\vspace{-0.5em}
\textbf{Circular $\boldsymbol{W}$}
\end{minipage}
\par\vspace{1.1em}
\begin{minipage}{0.8\linewidth}
\centering
\includegraphics[width=\linewidth]{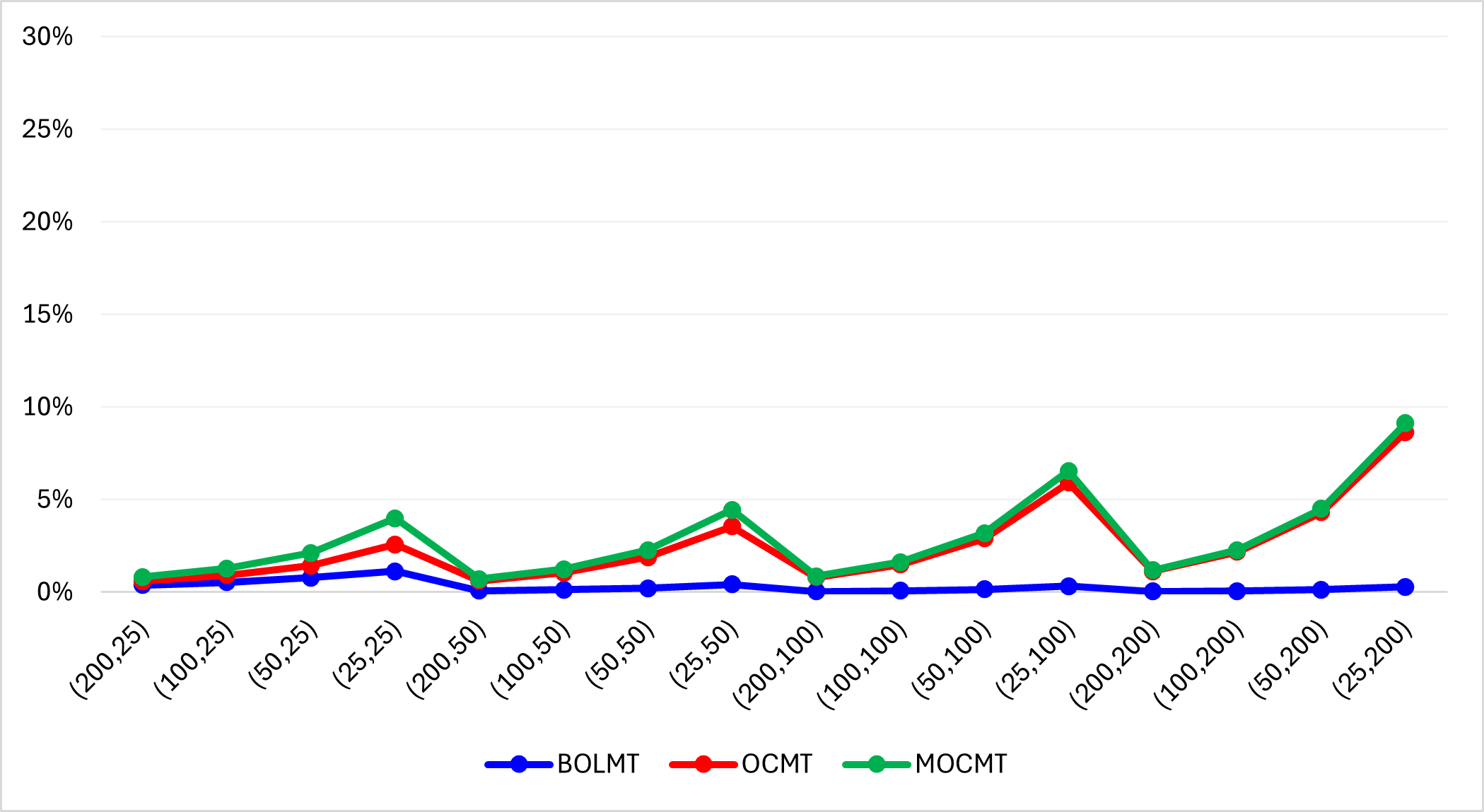}\\
\par\vspace{-0.5em}
\textbf{Block-Diagonal $\boldsymbol{W}$}
\end{minipage}
\par\vspace{1.1em}
\begin{minipage}{0.8\linewidth}
\centering
\includegraphics[width=\linewidth]{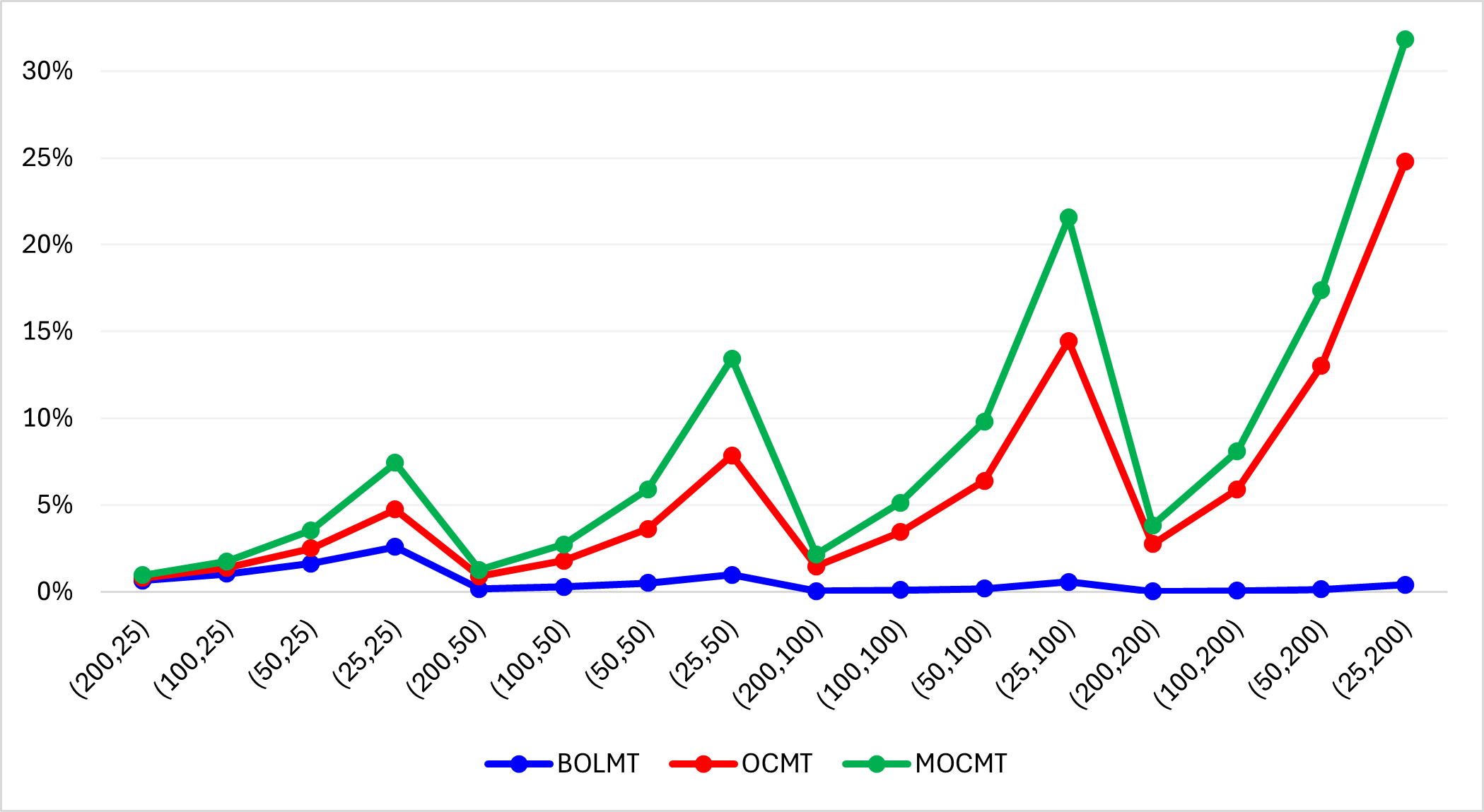}\\
\par\vspace{-0.5em}
\textbf{Random $\boldsymbol{W}$ with Dominant Unit}
\end{minipage}
\vspace{-0.4em}
\captionsetup{font=large}
\caption{FPR performance evaluation over different $(T,N)$ values}
\label{fig:FPR}
\end{figure}
\clearpage
\endgroup
\pagestyle{plain}

\clearpage
\begingroup
\pagestyle{empty}
\thispagestyle{empty}
\begin{figure}[p]
\centering
\begin{minipage}{0.8\linewidth}
\centering
\includegraphics[width=\linewidth]{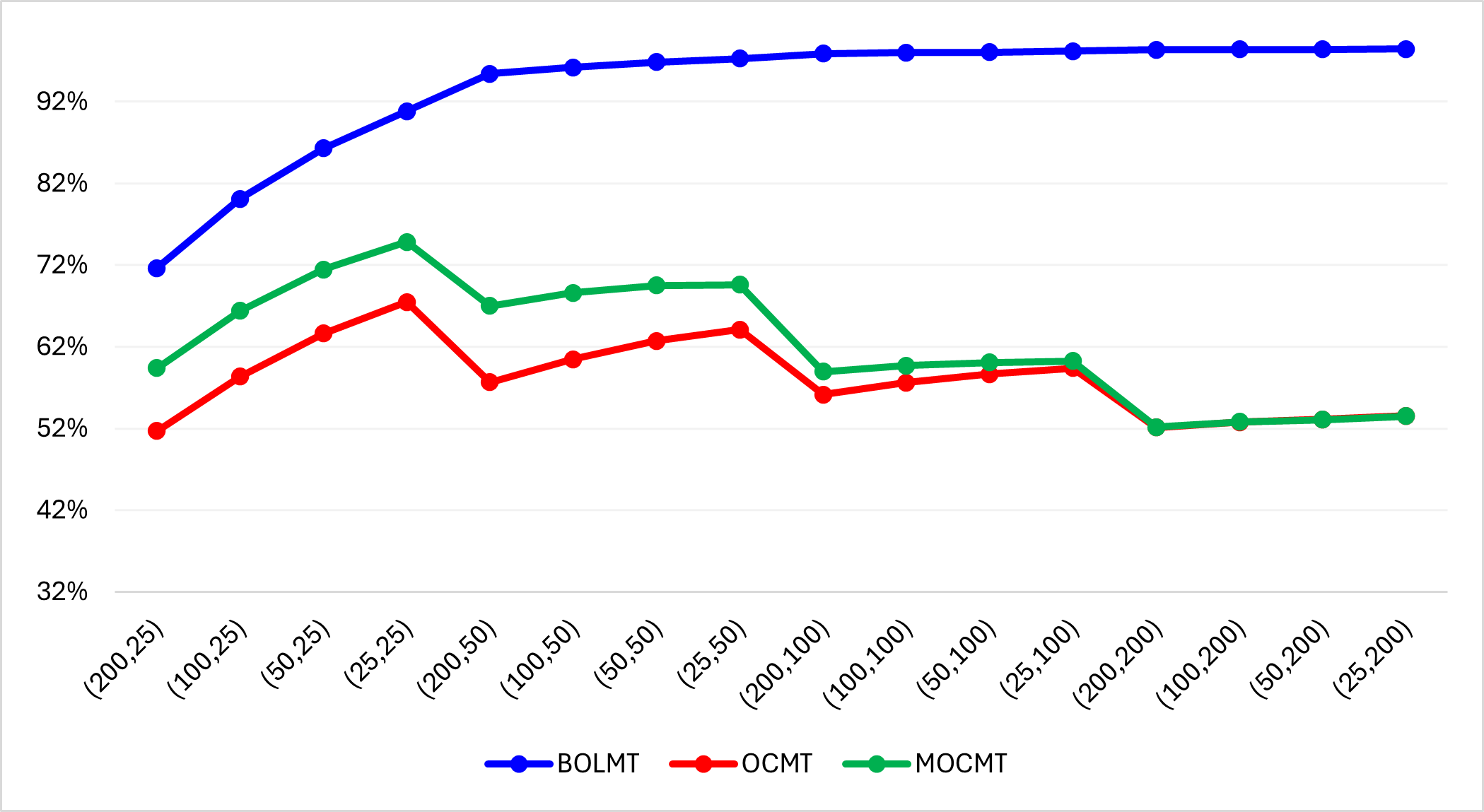}\\
\par\vspace{-0.5em}
\textbf{Circular $\boldsymbol{W}$}
\end{minipage}
\par\vspace{1.1em}
\begin{minipage}{0.8\linewidth}
\centering
\includegraphics[width=\linewidth]{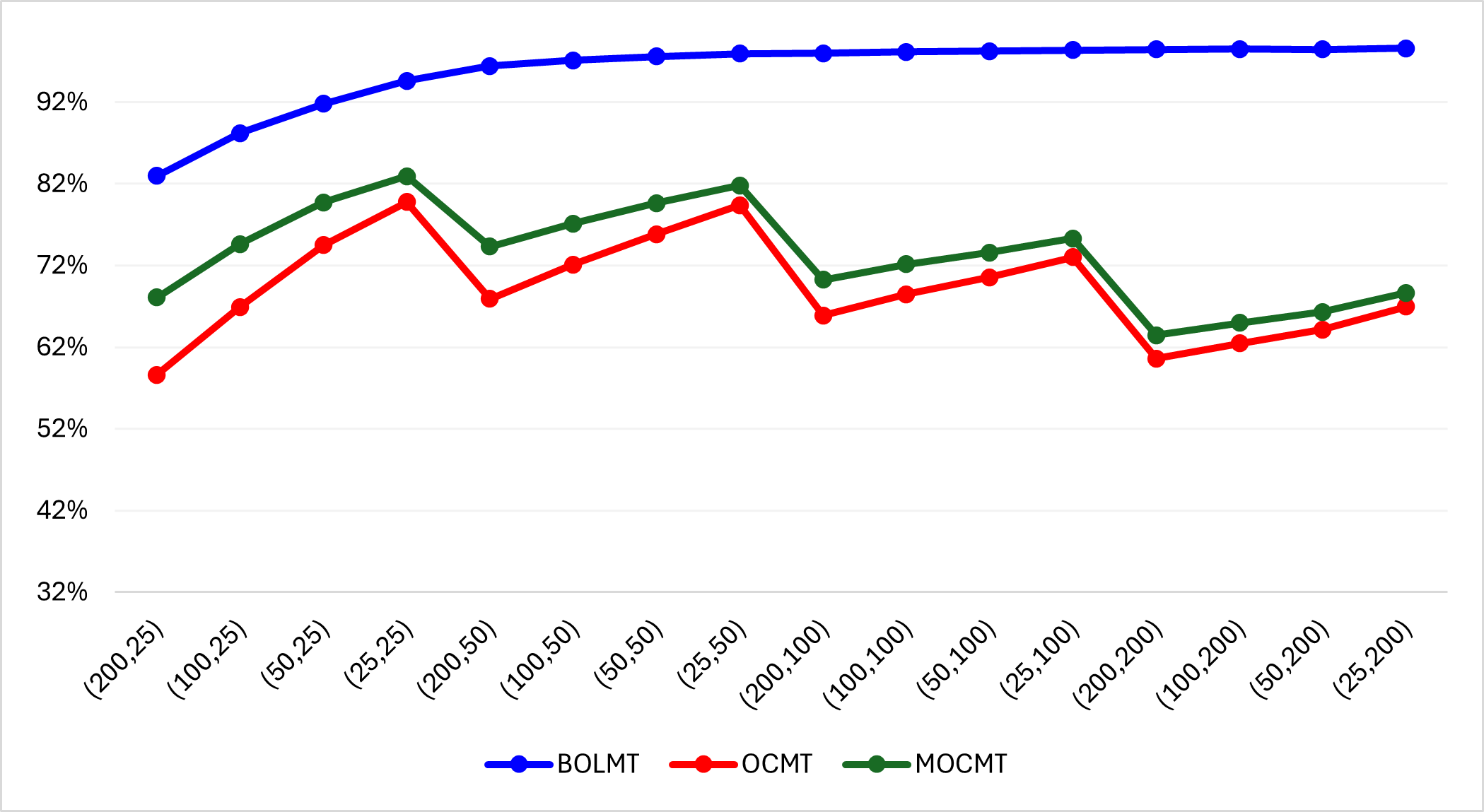}\\
\par\vspace{-0.5em}
\textbf{Block-Diagonal $\boldsymbol{W}$}
\end{minipage}
\par\vspace{1.1em}
\begin{minipage}{0.8\linewidth}
\centering
\includegraphics[width=\linewidth]{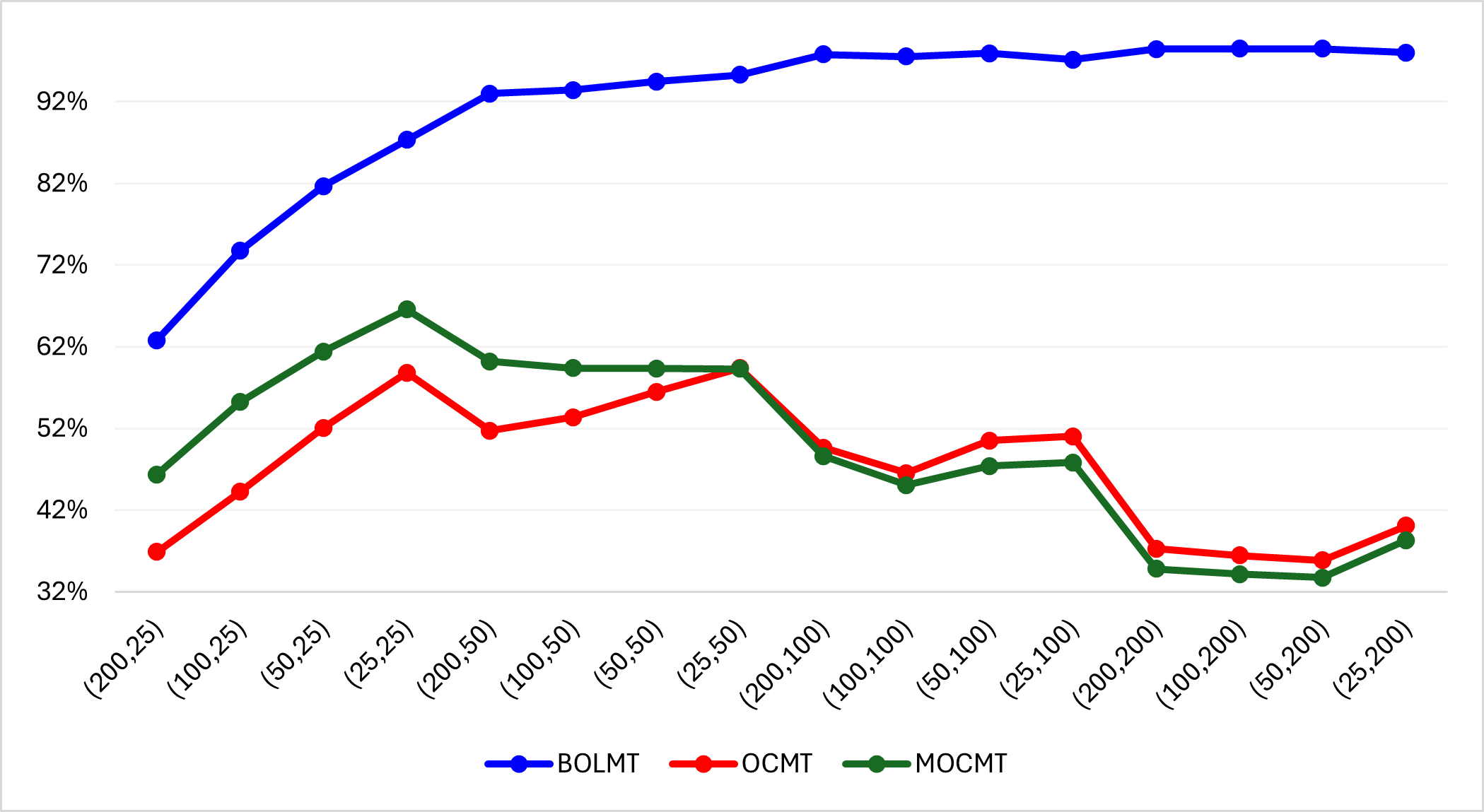}\\
\par\vspace{-0.5em}
\textbf{Random $\boldsymbol{W}$ with Dominant Unit}
\end{minipage}
\vspace{-0.4em}
\captionsetup{font=large}
\caption{TDR performance evaluation over different $(T,N)$ values}
\label{fig:TDR}
\end{figure}
\clearpage
\endgroup
\pagestyle{plain}

\clearpage
\begingroup
\pagestyle{empty}
\thispagestyle{empty}
\begin{figure}[p]
\centering
\begin{minipage}{0.8\linewidth}
\centering
\includegraphics[width=\linewidth]{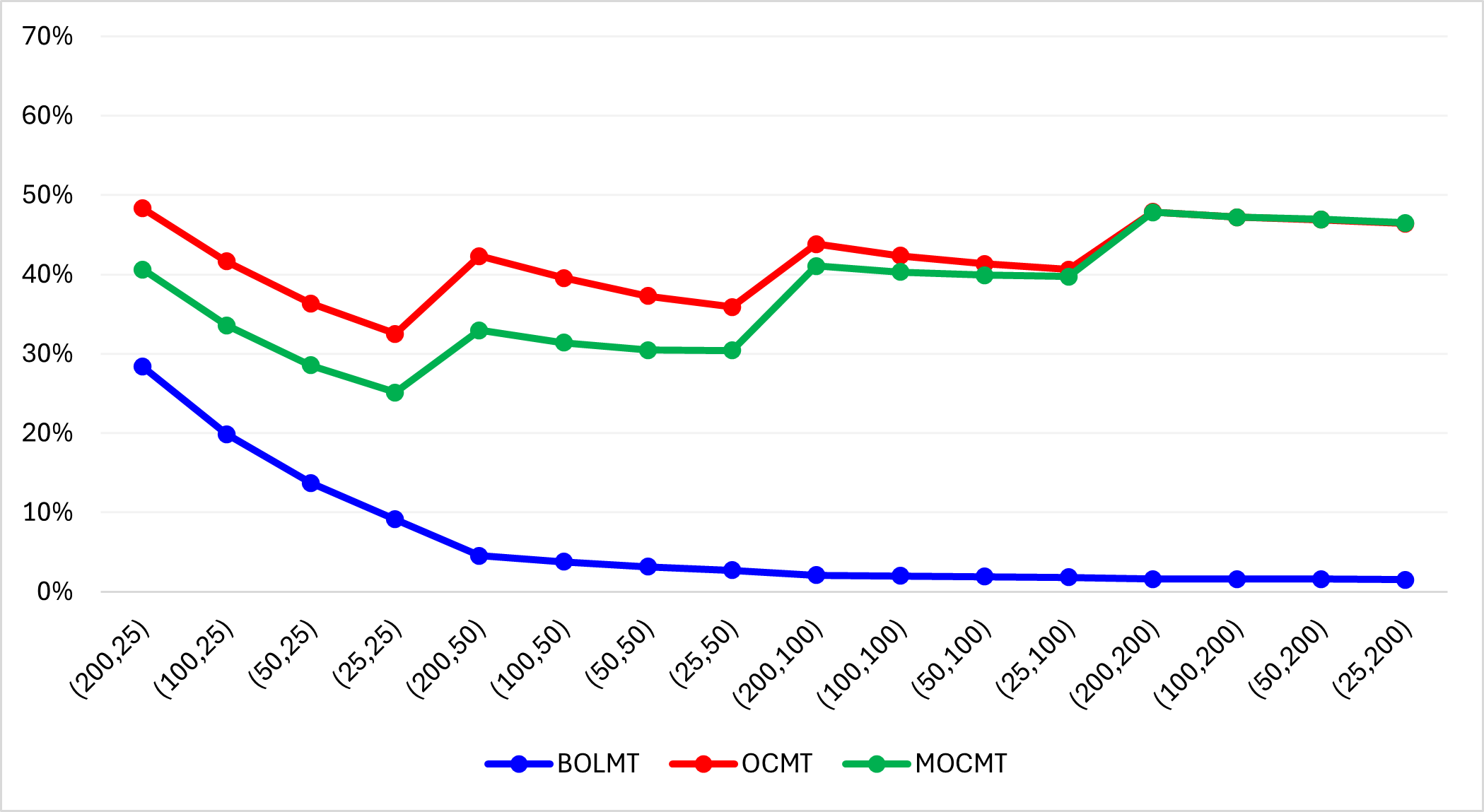}\\
\par\vspace{-0.5em}
\textbf{Circular $\boldsymbol{W}$}
\end{minipage}
\par\vspace{1.1em}
\begin{minipage}{0.8\linewidth}
\centering
\includegraphics[width=\linewidth]{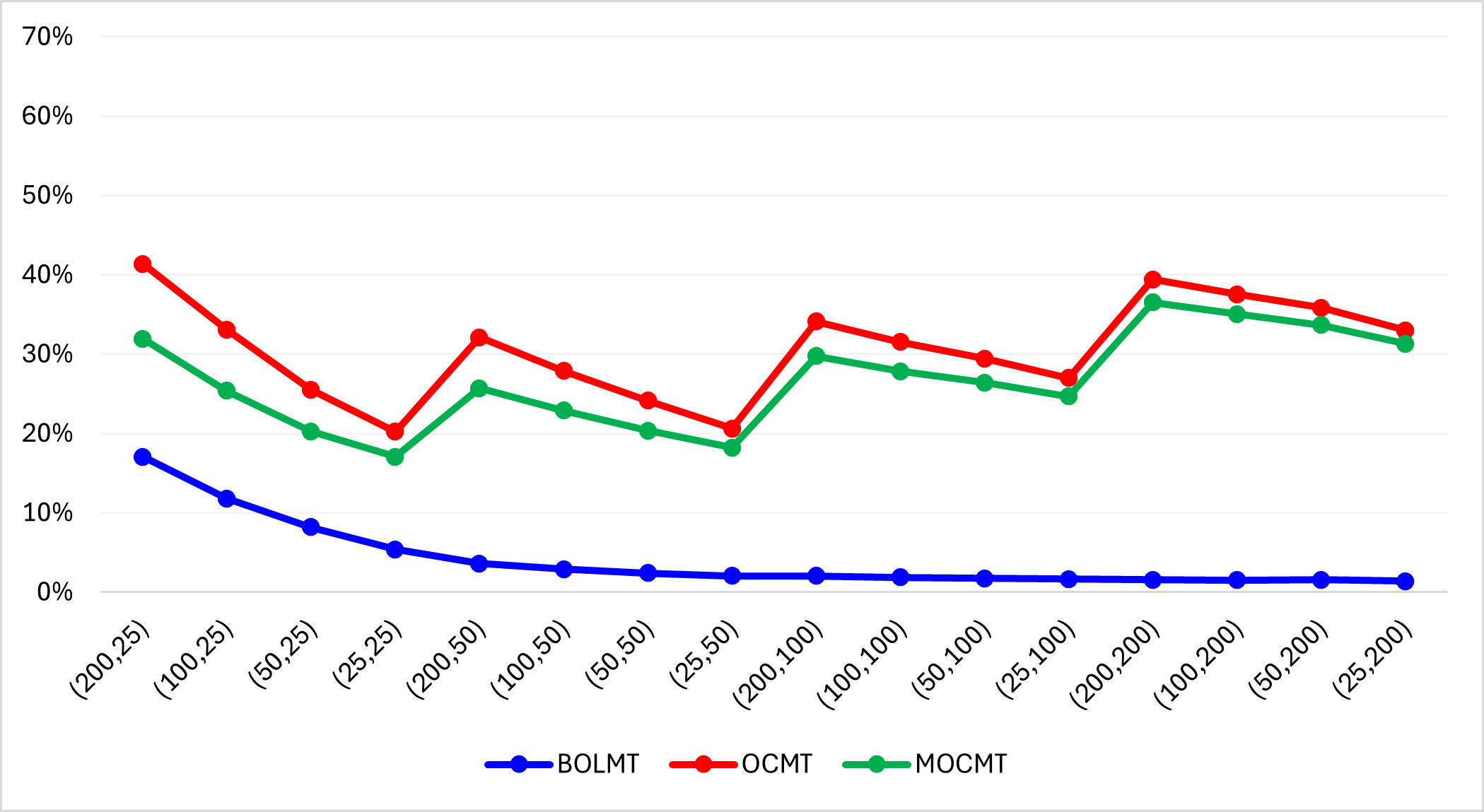}\\
\par\vspace{-0.5em}
\textbf{Block-Diagonal $\boldsymbol{W}$}
\end{minipage}
\par\vspace{1.1em}
\begin{minipage}{0.8\linewidth}
\centering
\includegraphics[width=\linewidth]{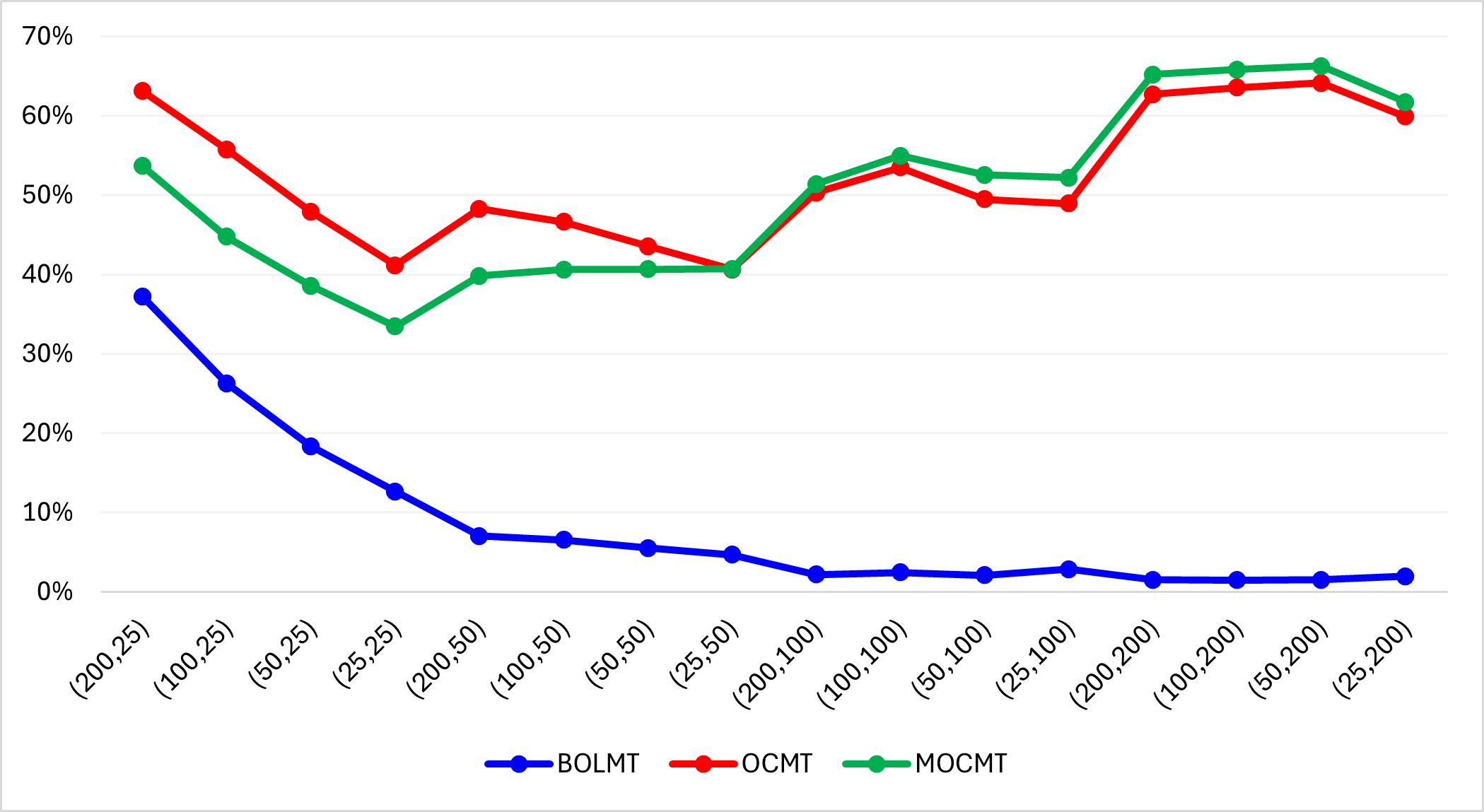}\\
\par\vspace{-0.5em}
\textbf{Random $\boldsymbol{W}$ with Dominant Unit}
\end{minipage}
\vspace{-0.4em}
\captionsetup{font=large}
\caption{FDR performance evaluation over different $(T,N)$ values}
\label{fig:FDR}
\end{figure}
\clearpage
\endgroup
\pagestyle{plain}

\begin{landscape}
\begin{table}[H]
\centering
\captionsetup{font=large}
\caption{Finite-sample performance of BOLMT for $\beta_1=2$}
\label{tab:beta1_BOLMT}

\renewcommand{\arraystretch}{1.1}
\setlength{\tabcolsep}{3.5pt}

\begin{tabular}{l ccccc|cccc|cccc|cccc}
\hline
& & \multicolumn{4}{c|}{\textbf{Mean}}
& \multicolumn{4}{c|}{\textbf{Bias ($\times$ 100)}}
& \multicolumn{4}{c|}{\textbf{Std. Dev.}}
& \multicolumn{4}{c}{\textbf{RMSE}} \\
\cline{3-18}

\textbf{Network} & \textbf{$N$} / \textbf{$T$}
& 25 & 50 & 100 & 200
& 25 & 50 & 100 & 200
& 25 & 50 & 100 & 200
& 25 & 50 & 100 & 200 \\
\hline

\multirow{4}{*}{Circular}
& 25
& 1.948 & 1.998 & 2.000 & 2.000
& 5.182 & 0.170 & 0.019 & 0.011
& 0.027 & 0.008 & 0.005 & 0.003
& 0.059 & 0.008 & 0.005 & 0.003 \\

& 50
& 1.923 & 1.996 & 2.000 & 2.000
& 7.711 & 0.369 & 0.035 & 0.013
& 0.023 & 0.007 & 0.003 & 0.002
& 0.080 & 0.008 & 0.003 & 0.002 \\

& 100
& 1.895 & 1.995 & 2.000 & 2.000
& 10.480 & 0.543 & 0.006 & 0.005
& 0.019 & 0.005 & 0.003 & 0.002
& 0.107 & 0.007 & 0.003 & 0.002 \\

& 200
& 1.858 & 1.992 & 2.000 & 2.000
& 14.189 & 0.805 & 0.014 & 0.000
& 0.015 & 0.004 & 0.002 & 0.001
& 0.143 & 0.009 & 0.002 & 0.001 \\

\hline

\multirow{4}{*}{Block}
& 25
& 2.001 & 2.000 & 2.000 & 2.000
& 0.119 & 0.007 & 0.008 & 0.001
& 0.012 & 0.003 & 0.002 & 0.001
& 0.012 & 0.003 & 0.002 & 0.001 \\

& 50
& 2.002 & 2.000 & 2.000 & 2.000
& 0.163 & 0.006 & 0.006 & 0.006
& 0.012 & 0.002 & 0.001 & 0.001
& 0.012 & 0.002 & 0.001 & 0.001 \\

& 100
& 2.005 & 2.000 & 2.000 & 2.000
& 0.478 & 0.004 & 0.001 & 0.003
& 0.011 & 0.001 & 0.001 & 0.001
& 0.012 & 0.001 & 0.001 & 0.001 \\

& 200
& 2.011 & 2.000 & 2.000 & 2.000
& 1.085 & 0.001 & 0.002 & 0.003
& 0.010 & 0.001 & 0.001 & 0.000
& 0.015 & 0.001 & 0.001 & 0.000 \\

\hline

\multirow{4}{*}{Random}
& 25
& 1.995 & 2.001 & 2.000 & 2.000
& 0.537 & 0.074 & 0.002 & 0.000
& 0.023 & 0.005 & 0.002 & 0.001
& 0.024 & 0.005 & 0.002 & 0.001 \\

& 50
& 2.001 & 1.999 & 2.000 & 2.000
& 0.056 & 0.063 & 0.011 & 0.002
& 0.020 & 0.004 & 0.001 & 0.001
& 0.020 & 0.004 & 0.001 & 0.001 \\

& 100
& 1.996 & 2.000 & 2.000 & 2.000
& 0.371 & 0.042 & 0.001 & 0.003
& 0.020 & 0.003 & 0.001 & 0.001
& 0.020 & 0.003 & 0.001 & 0.001 \\

& 200
& 1.995 & 2.000 & 2.000 & 2.000
& 0.548 & 0.028 & 0.007 & 0.001
& 0.016 & 0.003 & 0.001 & 0.000
& 0.017 & 0.003 & 0.001 & 0.000 \\

\hline
\end{tabular}

\vspace{0.2cm}
\footnotesize
\textit{Notes:} ``Block'' refers to a block-diagonal network, while ``Random''
denotes a random network with a dominant unit.

\end{table}
\end{landscape}

\begin{landscape}
\begin{table}[H]
\centering
\captionsetup{font=large}
\caption{Finite-sample performance of BOLMT for $\beta_2=-1$}
\label{tab:beta2_BOLMT}

\renewcommand{\arraystretch}{1.1}
\setlength{\tabcolsep}{3.5pt}

\begin{tabular}{l ccccc|cccc|cccc|cccc}
\hline
& & \multicolumn{4}{c|}{\textbf{Mean}}
& \multicolumn{4}{c|}{\textbf{Bias ($\times$ 100)}}
& \multicolumn{4}{c|}{\textbf{Std. Dev.}}
& \multicolumn{4}{c}{\textbf{RMSE}} \\
\cline{3-18}

\textbf{Network} & \textbf{$N$} / \textbf{$T$}
& 25 & 50 & 100 & 200
& 25 & 50 & 100 & 200
& 25 & 50 & 100 & 200
& 25 & 50 & 100 & 200 \\
\hline

\multirow{4}{*}{Circular}
& 25
& -0.975 & -0.999 & -1.000 & -1.000
& 2.514 & 0.064 & 0.004 & 0.004
& 0.021 & 0.006 & 0.004 & 0.003
& 0.033 & 0.006 & 0.004 & 0.003 \\

& 50
& -0.963 & -0.998 & -1.000 & -1.000
& 3.748 & 0.163 & 0.028 & 0.004
& 0.017 & 0.005 & 0.003 & 0.002
& 0.041 & 0.005 & 0.003 & 0.002 \\

& 100
& -0.947 & -0.997 & -1.000 & -1.000
& 5.259 & 0.285 & 0.001 & 0.003
& 0.015 & 0.003 & 0.002 & 0.001
& 0.055 & 0.004 & 0.002 & 0.001 \\

& 200
& -0.929 & -0.996 & -1.000 & -1.000
& 7.057 & 0.397 & 0.007 & 0.001
& 0.012 & 0.003 & 0.001 & 0.001
& 0.072 & 0.005 & 0.001 & 0.001 \\

\hline

\multirow{4}{*}{Block}
& 25
& -1.001 & -1.000 & -1.000 & -1.000
& 0.077 & 0.030 & 0.007 & 0.003
& 0.009 & 0.002 & 0.001 & 0.001
& 0.009 & 0.002 & 0.001 & 0.001 \\

& 50
& -1.001 & -1.000 & -1.000 & -1.000
& 0.110 & 0.002 & 0.005 & 0.001
& 0.009 & 0.002 & 0.001 & 0.001
& 0.009 & 0.002 & 0.001 & 0.001 \\

& 100
& -1.002 & -1.000 & -1.000 & -1.000
& 0.210 & 0.005 & 0.001 & 0.002
& 0.008 & 0.001 & 0.001 & 0.000
& 0.009 & 0.001 & 0.001 & 0.000 \\

& 200
& -1.005 & -1.000 & -1.000 & -1.000
& 0.515 & 0.011 & 0.001 & 0.001
& 0.008 & 0.001 & 0.000 & 0.000
& 0.010 & 0.001 & 0.000 & 0.000 \\

\hline

\multirow{4}{*}{Random}
& 25
& -0.997 & -1.000 & -1.000 & -1.000
& 0.288 & 0.039 & 0.019 & 0.004
& 0.022 & 0.004 & 0.002 & 0.001
& 0.022 & 0.004 & 0.002 & 0.001 \\

& 50
& -0.998 & -1.000 & -1.000 & -1.000
& 0.237 & 0.039 & 0.009 & 0.002
& 0.020 & 0.004 & 0.001 & 0.001
& 0.020 & 0.004 & 0.001 & 0.001 \\

& 100
& -0.998 & -1.000 & -1.000 & -1.000
& 0.205 & 0.018 & 0.005 & 0.004
& 0.019 & 0.003 & 0.001 & 0.001
& 0.019 & 0.003 & 0.001 & 0.001 \\

& 200
& -0.997 & -1.000 & -1.000 & -1.000
& 0.299 & 0.005 & 0.000 & 0.000
& 0.016 & 0.003 & 0.001 & 0.000
& 0.016 & 0.003 & 0.001 & 0.000 \\

\hline
\end{tabular}

\vspace{0.2cm}
\footnotesize
\textit{Notes:} ``Block'' refers to a block-diagonal network, while ``Random''
denotes a random network with a dominant unit.
\end{table}
\end{landscape}

\begin{landscape}
\begin{table}[H]
\centering
\captionsetup{font=large}
\caption{Finite-sample performance of BOLMT for $\rho_1=0.5$}
\label{tab:rho1_BOLMT}

\renewcommand{\arraystretch}{1.1}
\setlength{\tabcolsep}{3.5pt}

\begin{tabular}{l ccccc|cccc|cccc|cccc}
\hline
& & \multicolumn{4}{c|}{\textbf{Mean}}
& \multicolumn{4}{c|}{\textbf{Bias ($\times$ 100)}}
& \multicolumn{4}{c|}{\textbf{Std. Dev.}}
& \multicolumn{4}{c}{\textbf{RMSE}} \\
\cline{3-18}

\textbf{Network} & \textbf{$N$} / \textbf{$T$}
& 25 & 50 & 100 & 200
& 25 & 50 & 100 & 200
& 25 & 50 & 100 & 200
& 25 & 50 & 100 & 200 \\
\hline

\multirow{4}{*}{Circular}
& 25
& 0.482 & 0.501 & 0.501 & 0.501
& 1.791 & 0.084 & 0.101 & 0.061
& 0.014 & 0.004 & 0.002 & 0.002
& 0.022 & 0.004 & 0.003 & 0.002 \\

& 50
& 0.474 & 0.501 & 0.501 & 0.501
& 2.630 & 0.080 & 0.095 & 0.056
& 0.013 & 0.003 & 0.002 & 0.001
& 0.029 & 0.003 & 0.002 & 0.001 \\

& 100
& 0.462 & 0.501 & 0.501 & 0.501
& 3.811 & 0.096 & 0.113 & 0.065
& 0.012 & 0.002 & 0.001 & 0.001
& 0.040 & 0.002 & 0.002 & 0.001 \\

& 200
& 0.447 & 0.501 & 0.501 & 0.501
& 5.293 & 0.058 & 0.123 & 0.070
& 0.010 & 0.002 & 0.001 & 0.001
& 0.054 & 0.002 & 0.002 & 0.001 \\

\hline

\multirow{4}{*}{Block}
& 25
& 0.497 & 0.500 & 0.500 & 0.500
& 0.328 & 0.050 & 0.030 & 0.018
& 0.007 & 0.001 & 0.001 & 0.001
& 0.008 & 0.001 & 0.001 & 0.001 \\

& 50
& 0.493 & 0.501 & 0.500 & 0.500
& 0.709 & 0.052 & 0.030 & 0.015
& 0.008 & 0.001 & 0.001 & 0.000
& 0.011 & 0.001 & 0.001 & 0.000 \\

& 100
& 0.487 & 0.501 & 0.500 & 0.500
& 1.331 & 0.059 & 0.033 & 0.019
& 0.007 & 0.001 & 0.000 & 0.000
& 0.015 & 0.001 & 0.001 & 0.000 \\

& 200
& 0.476 & 0.501 & 0.500 & 0.500
& 2.403 & 0.064 & 0.039 & 0.020
& 0.007 & 0.001 & 0.000 & 0.000
& 0.025 & 0.001 & 0.000 & 0.000 \\

\hline

\multirow{4}{*}{Random}
& 25
& 0.458 & 0.499 & 0.501 & 0.500
& 4.192 & 0.073 & 0.054 & 0.037
& 0.021 & 0.005 & 0.001 & 0.001
& 0.047 & 0.005 & 0.001 & 0.001 \\

& 50
& 0.441 & 0.497 & 0.501 & 0.500
& 5.922 & 0.259 & 0.051 & 0.033
& 0.022 & 0.004 & 0.001 & 0.001
& 0.063 & 0.005 & 0.001 & 0.001 \\

& 100
& 0.422 & 0.496 & 0.501 & 0.500
& 7.849 & 0.443 & 0.055 & 0.033
& 0.022 & 0.004 & 0.001 & 0.000
& 0.082 & 0.006 & 0.001 & 0.001 \\

& 200
& 0.395 & 0.494 & 0.501 & 0.500
& 10.497 & 0.600 & 0.055 & 0.032
& 0.019 & 0.004 & 0.000 & 0.000
& 0.107 & 0.007 & 0.001 & 0.000 \\

\hline
\end{tabular}

\vspace{0.2cm}
\footnotesize
\textit{Notes:} ``Block'' refers to a block-diagonal network, while ``Random''
denotes a random network with a dominant unit.
\end{table}
\end{landscape}

\begin{landscape}
\begin{table}[H]
\centering
\captionsetup{font=large}
\caption{Finite-sample performance of BOLMT for $\rho_2=-0.4$}
\label{tab:rho2_BOLMT}

\renewcommand{\arraystretch}{1.1}
\setlength{\tabcolsep}{3.5pt}

\begin{tabular}{l ccccc|cccc|cccc|cccc}
\hline
& & \multicolumn{4}{c|}{\textbf{Mean}}
& \multicolumn{4}{c|}{\textbf{Bias ($\times$ 100)}}
& \multicolumn{4}{c|}{\textbf{Std. Dev.}}
& \multicolumn{4}{c}{\textbf{RMSE}} \\
\cline{3-18}

\textbf{Network} & \textbf{$N$} / \textbf{$T$}
& 25 & 50 & 100 & 200
& 25 & 50 & 100 & 200
& 25 & 50 & 100 & 200
& 25 & 50 & 100 & 200 \\
\hline

\multirow{4}{*}{Circular}
& 25
& -0.355 & -0.401 & -0.401 & -0.401
& 4.508 & 0.058 & 0.083 & 0.056
& 0.020 & 0.005 & 0.003 & 0.002
& 0.049 & 0.005 & 0.003 & 0.002 \\

& 50
& -0.336 & -0.399 & -0.401 & -0.401
& 6.407 & 0.100 & 0.089 & 0.052
& 0.017 & 0.004 & 0.002 & 0.002
& 0.066 & 0.004 & 0.002 & 0.002 \\

& 100
& -0.316 & -0.398 & -0.401 & -0.401
& 8.422 & 0.217 & 0.104 & 0.065
& 0.014 & 0.003 & 0.002 & 0.001
& 0.085 & 0.004 & 0.002 & 0.001 \\

& 200
& -0.293 & -0.396 & -0.401 & -0.401
& 10.654 & 0.365 & 0.118 & 0.070
& 0.011 & 0.002 & 0.001 & 0.001
& 0.107 & 0.004 & 0.002 & 0.001 \\

\hline

\multirow{4}{*}{Block}
& 25
& -0.397 & -0.400 & -0.400 & -0.400
& 0.314 & 0.046 & 0.023 & 0.018
& 0.008 & 0.002 & 0.001 & 0.001
& 0.009 & 0.002 & 0.001 & 0.001 \\

& 50
& -0.393 & -0.401 & -0.400 & -0.400
& 0.680 & 0.064 & 0.030 & 0.019
& 0.009 & 0.001 & 0.001 & 0.001
& 0.011 & 0.001 & 0.001 & 0.001 \\

& 100
& -0.387 & -0.401 & -0.400 & -0.400
& 1.309 & 0.058 & 0.033 & 0.019
& 0.008 & 0.001 & 0.001 & 0.000
& 0.015 & 0.001 & 0.001 & 0.000 \\

& 200
& -0.378 & -0.401 & -0.400 & -0.400
& 2.228 & 0.064 & 0.038 & 0.021
& 0.008 & 0.001 & 0.000 & 0.000
& 0.024 & 0.001 & 0.000 & 0.000 \\

\hline

\multirow{4}{*}{Random}
& 25
& -0.356 & -0.400 & -0.400 & -0.400
& 4.372 & 0.032 & 0.048 & 0.028
& 0.024 & 0.004 & 0.001 & 0.001
& 0.050 & 0.004 & 0.001 & 0.001 \\

& 50
& -0.339 & -0.397 & -0.401 & -0.400
& 6.057 & 0.277 & 0.056 & 0.029
& 0.023 & 0.005 & 0.001 & 0.001
& 0.065 & 0.006 & 0.001 & 0.001 \\

& 100
& -0.298 & -0.395 & -0.401 & -0.400
& 10.152 & 0.546 & 0.057 & 0.033
& 0.026 & 0.005 & 0.001 & 0.000
& 0.105 & 0.007 & 0.001 & 0.001 \\

& 200
& -0.271 & -0.393 & -0.401 & -0.400
& 12.851 & 0.724 & 0.053 & 0.032
& 0.028 & 0.004 & 0.001 & 0.000
& 0.132 & 0.008 & 0.001 & 0.000 \\

\hline
\end{tabular}

\vspace{0.2cm}
\footnotesize
\textit{Notes:} ``Block'' refers to a block-diagonal network, while ``Random''
denotes a random network with a dominant unit.
\end{table}
\end{landscape}

\begin{landscape}
\begin{table}[H]
\centering
\captionsetup{font=large}
\caption{Finite-sample performance of OCMT for $\beta_1=2$}
\label{tab:beta1_OCMT}

\renewcommand{\arraystretch}{1.1}
\setlength{\tabcolsep}{3.5pt}

\begin{tabular}{l ccccc|cccc|cccc|cccc}
\hline
& & \multicolumn{4}{c|}{\textbf{Mean}}
& \multicolumn{4}{c|}{\textbf{Bias ($\times$ 100)}}
& \multicolumn{4}{c|}{\textbf{Std. Dev.}}
& \multicolumn{4}{c}{\textbf{RMSE}} \\
\cline{3-18}

\textbf{Network} & \textbf{$N$} / \textbf{$T$}
& 25 & 50 & 100 & 200
& 25 & 50 & 100 & 200
& 25 & 50 & 100 & 200
& 25 & 50 & 100 & 200 \\
\hline

\multirow{4}{*}{Circular}
& 25
& 1.697 & 1.878 & 1.981 & 1.999
& 30.333 & 12.234 & 1.947 & 0.148
& 0.076 & 0.035 & 0.013 & 0.004
& 0.313 & 0.127 & 0.024 & 0.005 \\

& 50
& 1.649 & 1.848 & 1.970 & 1.999
& 35.149 & 15.225 & 2.963 & 0.148
& 0.062 & 0.028 & 0.011 & 0.003
& 0.357 & 0.155 & 0.032 & 0.003 \\

& 100
& 1.601 & 1.820 & 1.959 & 1.999
& 39.870 & 17.984 & 4.097 & 0.142
& 0.050 & 0.021 & 0.009 & 0.002
& 0.402 & 0.181 & 0.042 & 0.003 \\

& 200
& 1.566 & 1.797 & 1.946 & 1.998
& 43.373 & 20.311 & 5.449 & 0.202
& 0.038 & 0.015 & 0.008 & 0.002
& 0.435 & 0.204 & 0.055 & 0.003 \\

\hline

\multirow{4}{*}{Block}
& 25
& 2.009 & 2.029 & 2.004 & 1.999
& 0.942 & 2.949 & 0.387 & 0.112
& 0.085 & 0.034 & 0.020 & 0.008
& 0.086 & 0.045 & 0.020 & 0.008 \\

& 50
& 1.988 & 2.030 & 2.005 & 1.999
& 1.184 & 2.957 & 0.518 & 0.128
& 0.072 & 0.023 & 0.016 & 0.006
& 0.073 & 0.038 & 0.017 & 0.006 \\

& 100
& 1.968 & 2.035 & 2.005 & 1.998
& 3.234 & 3.533 & 0.462 & 0.175
& 0.055 & 0.019 & 0.012 & 0.005
& 0.064 & 0.040 & 0.012 & 0.005 \\

& 200
& 1.947 & 2.040 & 2.005 & 1.998
& 5.308 & 3.988 & 0.542 & 0.163
& 0.046 & 0.014 & 0.008 & 0.004
& 0.070 & 0.042 & 0.010 & 0.004 \\

\hline

\multirow{4}{*}{Random}
& 25
& 1.804 & 2.003 & 2.005 & 1.987
& 19.603 & 0.311 & 0.535 & 1.259
& 0.118 & 0.034 & 0.012 & 0.006
& 0.229 & 0.034 & 0.013 & 0.014 \\

& 50
& 1.792 & 1.978 & 1.997 & 1.996
& 20.766 & 2.157 & 0.331 & 0.425
& 0.099 & 0.027 & 0.007 & 0.004
& 0.230 & 0.034 & 0.008 & 0.006 \\

& 100
& 1.726 & 1.979 & 1.998 & 1.998
& 27.429 & 2.079 & 0.200 & 0.183
& 0.081 & 0.025 & 0.006 & 0.002
& 0.286 & 0.032 & 0.006 & 0.003 \\

& 200
& 1.690 & 1.973 & 1.998 & 1.998
& 30.994 & 2.655 & 0.205 & 0.163
& 0.072 & 0.019 & 0.004 & 0.002
& 0.318 & 0.032 & 0.005 & 0.002 \\

\hline
\end{tabular}

\vspace{0.2cm}
\footnotesize
\textit{Notes:} ``Block'' refers to a block-diagonal network, while ``Random''
denotes a random network with a dominant unit.
\end{table}
\end{landscape}

\begin{landscape}
\begin{table}[H]
\centering
\captionsetup{font=large}
\caption{Finite-sample performance of OCMT for $\beta_2=-1$}
\label{tab:beta2_OCMT}

\renewcommand{\arraystretch}{1.1}
\setlength{\tabcolsep}{3.5pt}

\begin{tabular}{l ccccc|cccc|cccc|cccc}
\hline
& & \multicolumn{4}{c|}{\textbf{Mean}}
& \multicolumn{4}{c|}{\textbf{Bias ($\times$ 100)}}
& \multicolumn{4}{c|}{\textbf{Std. Dev.}}
& \multicolumn{4}{c}{\textbf{RMSE}} \\
\cline{3-18}

\textbf{Network} & \textbf{$N$} / \textbf{$T$}
& 25 & 50 & 100 & 200
& 25 & 50 & 100 & 200
& 25 & 50 & 100 & 200
& 25 & 50 & 100 & 200 \\
\hline

\multirow{4}{*}{Circular}
& 25
& -0.848 & -0.940 & -0.990 & -0.999
& 15.204 & 6.028 & 0.990 & 0.080
& 0.047 & 0.022 & 0.009 & 0.003
& 0.159 & 0.064 & 0.013 & 0.003 \\

& 50
& -0.825 & -0.924 & -0.985 & -0.999
& 17.497 & 7.567 & 1.484 & 0.070
& 0.038 & 0.018 & 0.007 & 0.002
& 0.179 & 0.078 & 0.016 & 0.002 \\

& 100
& -0.799 & -0.910 & -0.980 & -0.999
& 20.076 & 9.031 & 2.046 & 0.068
& 0.029 & 0.013 & 0.005 & 0.002
& 0.203 & 0.091 & 0.021 & 0.002 \\

& 200
& -0.783 & -0.898 & -0.973 & -0.999
& 21.672 & 10.165 & 2.715 & 0.101
& 0.022 & 0.009 & 0.004 & 0.001
& 0.218 & 0.102 & 0.028 & 0.001 \\

\hline

\multirow{4}{*}{Block}
& 25
& -1.006 & -1.015 & -1.002 & -0.999
& 0.621 & 1.510 & 0.229 & 0.050
& 0.051 & 0.022 & 0.011 & 0.005
& 0.052 & 0.027 & 0.012 & 0.005 \\

& 50
& -0.995 & -1.015 & -1.002 & -0.999
& 0.543 & 1.517 & 0.248 & 0.060
& 0.042 & 0.016 & 0.009 & 0.004
& 0.042 & 0.022 & 0.010 & 0.004 \\

& 100
& -0.984 & -1.017 & -1.002 & -0.999
& 1.618 & 1.728 & 0.216 & 0.089
& 0.033 & 0.012 & 0.006 & 0.003
& 0.037 & 0.021 & 0.007 & 0.003 \\

& 200
& -0.974 & -1.020 & -1.003 & -0.999
& 2.600 & 2.028 & 0.272 & 0.082
& 0.026 & 0.009 & 0.005 & 0.002
& 0.037 & 0.022 & 0.006 & 0.002 \\

\hline

\multirow{4}{*}{Random}
& 25
& -0.901 & -1.001 & -1.004 & -0.994
& 9.870 & 0.077 & 0.364 & 0.628
& 0.073 & 0.028 & 0.011 & 0.004
& 0.123 & 0.028 & 0.011 & 0.008 \\

& 50
& -0.894 & -0.990 & -0.999 & -0.998
& 10.553 & 1.028 & 0.130 & 0.234
& 0.056 & 0.021 & 0.007 & 0.003
& 0.120 & 0.023 & 0.007 & 0.004 \\

& 100
& -0.863 & -0.990 & -0.999 & -0.999
& 13.718 & 1.039 & 0.110 & 0.101
& 0.045 & 0.018 & 0.005 & 0.002
& 0.144 & 0.020 & 0.006 & 0.003 \\

& 200
& -0.846 & -0.987 & -0.999 & -0.999
& 15.446 & 1.277 & 0.089 & 0.078
& 0.039 & 0.013 & 0.004 & 0.002
& 0.159 & 0.018 & 0.004 & 0.002 \\

\hline
\end{tabular}

\vspace{0.2cm}
\footnotesize
\textit{Notes:} ``Block'' refers to a block-diagonal network, while ``Random''
denotes a random network with a dominant unit.
\end{table}
\end{landscape}

\begin{landscape}
\begin{table}[H]
\centering
\captionsetup{font=large}
\caption{Finite-sample performance of OCMT for $\rho_1=0.5$}
\label{tab:rho1_OCMT}

\renewcommand{\arraystretch}{1.1}
\setlength{\tabcolsep}{3.5pt}

\begin{tabular}{l ccccc|cccc|cccc|cccc}
\hline
& & \multicolumn{4}{c|}{\textbf{Mean}}
& \multicolumn{4}{c|}{\textbf{Bias ($\times$ 100)}}
& \multicolumn{4}{c|}{\textbf{Std. Dev.}}
& \multicolumn{4}{c}{\textbf{RMSE}} \\
\cline{3-18}

\textbf{Network} & \textbf{$N$} / \textbf{$T$}
& 25 & 50 & 100 & 200
& 25 & 50 & 100 & 200
& 25 & 50 & 100 & 200
& 25 & 50 & 100 & 200 \\
\hline

\multirow{4}{*}{Circular}
& 25
& 0.378 & 0.458 & 0.502 & 0.509
& 12.162 & 4.188 & 0.163 & 0.921
& 0.032 & 0.014 & 0.006 & 0.003
& 0.126 & 0.044 & 0.006 & 0.010 \\

& 50
& 0.367 & 0.451 & 0.500 & 0.510
& 13.293 & 4.851 & 0.033 & 0.978
& 0.023 & 0.011 & 0.004 & 0.002
& 0.135 & 0.050 & 0.004 & 0.010 \\

& 100
& 0.354 & 0.445 & 0.497 & 0.510
& 14.561 & 5.503 & 0.322 & 0.979
& 0.017 & 0.008 & 0.003 & 0.001
& 0.147 & 0.056 & 0.005 & 0.010 \\

& 200
& 0.344 & 0.438 & 0.493 & 0.510
& 15.603 & 6.223 & 0.665 & 0.987
& 0.014 & 0.006 & 0.003 & 0.001
& 0.157 & 0.063 & 0.007 & 0.010 \\

\hline

\multirow{4}{*}{Block}
& 25
& 0.318 & 0.398 & 0.454 & 0.478
& 18.223 & 10.240 & 4.644 & 2.228
& 0.031 & 0.020 & 0.012 & 0.006
& 0.185 & 0.104 & 0.048 & 0.023 \\

& 50
& 0.305 & 0.393 & 0.448 & 0.476
& 19.451 & 10.741 & 5.189 & 2.355
& 0.025 & 0.015 & 0.009 & 0.004
& 0.196 & 0.108 & 0.053 & 0.024 \\

& 100
& 0.297 & 0.382 & 0.443 & 0.475
& 20.282 & 11.802 & 5.650 & 2.516
& 0.018 & 0.011 & 0.007 & 0.003
& 0.204 & 0.119 & 0.057 & 0.025 \\

& 200
& 0.292 & 0.374 & 0.438 & 0.473
& 20.774 & 12.633 & 6.210 & 2.734
& 0.013 & 0.009 & 0.005 & 0.002
& 0.208 & 0.127 & 0.062 & 0.027 \\

\hline

\multirow{4}{*}{Random}
& 25
& 0.272 & 0.356 & 0.447 & 0.490
& 22.828 & 14.362 & 5.252 & 0.988
& 0.050 & 0.036 & 0.020 & 0.010
& 0.234 & 0.148 & 0.056 & 0.014 \\

& 50
& 0.252 & 0.346 & 0.459 & 0.497
& 24.779 & 15.356 & 4.059 & 0.265
& 0.045 & 0.035 & 0.016 & 0.006
& 0.252 & 0.157 & 0.044 & 0.006 \\

& 100
& 0.251 & 0.333 & 0.453 & 0.497
& 24.903 & 16.692 & 4.711 & 0.318
& 0.047 & 0.034 & 0.015 & 0.005
& 0.253 & 0.170 & 0.050 & 0.006 \\

& 200
& 0.239 & 0.323 & 0.442 & 0.495
& 26.051 & 17.748 & 5.758 & 0.539
& 0.041 & 0.035 & 0.019 & 0.004
& 0.264 & 0.181 & 0.061 & 0.007 \\

\hline
\end{tabular}

\vspace{0.2cm}
\footnotesize
\textit{Notes:} ``Block'' refers to a block-diagonal network, while ``Random''
denotes a random network with a dominant unit.
\end{table}
\end{landscape}

\begin{landscape}
\begin{table}[H]
\centering
\captionsetup{font=large}
\caption{Finite-sample performance of OCMT for $\rho_2=-0.4$}
\label{tab:rho2_OCMT}

\renewcommand{\arraystretch}{1.1}
\setlength{\tabcolsep}{3.5pt}

\begin{tabular}{l ccccc|cccc|cccc|cccc}
\hline
& & \multicolumn{4}{c|}{\textbf{Mean}}
& \multicolumn{4}{c|}{\textbf{Bias ($\times$ 100)}}
& \multicolumn{4}{c|}{\textbf{Std. Dev.}}
& \multicolumn{4}{c}{\textbf{RMSE}} \\
\cline{3-18}

\textbf{Network} & \textbf{$N$} / \textbf{$T$}
& 25 & 50 & 100 & 200
& 25 & 50 & 100 & 200
& 25 & 50 & 100 & 200
& 25 & 50 & 100 & 200 \\
\hline

\multirow{4}{*}{Circular}
& 25
& -0.142 & -0.279 & -0.389 & -0.406
& 25.805 & 12.053 & 1.102 & 0.648
& 0.035 & 0.030 & 0.011 & 0.003
& 0.260 & 0.124 & 0.016 & 0.007 \\

& 50
& -0.126 & -0.252 & -0.381 & -0.408
& 27.362 & 14.811 & 1.869 & 0.790
& 0.025 & 0.023 & 0.010 & 0.002
& 0.275 & 0.150 & 0.021 & 0.008 \\

& 100
& -0.119 & -0.228 & -0.371 & -0.409
& 28.110 & 17.191 & 2.854 & 0.855
& 0.018 & 0.017 & 0.008 & 0.002
& 0.282 & 0.173 & 0.030 & 0.009 \\

& 200
& -0.120 & -0.209 & -0.360 & -0.408
& 27.975 & 19.116 & 4.011 & 0.847
& 0.012 & 0.012 & 0.007 & 0.001
& 0.280 & 0.192 & 0.041 & 0.009 \\

\hline

\multirow{4}{*}{Block}
& 25
& -0.215 & -0.299 & -0.344 & -0.368
& 18.495 & 10.051 & 5.633 & 3.215
& 0.039 & 0.021 & 0.013 & 0.007
& 0.189 & 0.103 & 0.058 & 0.033 \\

& 50
& -0.204 & -0.293 & -0.339 & -0.366
& 19.636 & 10.659 & 6.146 & 3.422
& 0.029 & 0.015 & 0.011 & 0.005
& 0.198 & 0.108 & 0.062 & 0.035 \\

& 100
& -0.196 & -0.285 & -0.333 & -0.363
& 20.366 & 11.475 & 6.724 & 3.667
& 0.021 & 0.011 & 0.007 & 0.004
& 0.205 & 0.115 & 0.068 & 0.037 \\

& 200
& -0.192 & -0.280 & -0.327 & -0.361
& 20.761 & 12.034 & 7.297 & 3.916
& 0.016 & 0.009 & 0.006 & 0.003
& 0.208 & 0.121 & 0.073 & 0.039 \\

\hline

\multirow{4}{*}{Random}
& 25
& -0.150 & -0.270 & -0.342 & -0.355
& 24.953 & 13.003 & 5.798 & 4.468
& 0.048 & 0.037 & 0.015 & 0.012
& 0.254 & 0.135 & 0.060 & 0.046 \\

& 50
& -0.174 & -0.244 & -0.319 & -0.361
& 22.623 & 15.575 & 8.060 & 3.911
& 0.043 & 0.032 & 0.013 & 0.007
& 0.230 & 0.159 & 0.082 & 0.040 \\

& 100
& -0.149 & -0.227 & -0.316 & -0.348
& 25.102 & 17.251 & 8.398 & 5.240
& 0.037 & 0.028 & 0.009 & 0.005
& 0.254 & 0.175 & 0.084 & 0.053 \\

& 200
& -0.150 & -0.231 & -0.312 & -0.340
& 25.046 & 16.932 & 8.767 & 5.951
& 0.031 & 0.027 & 0.007 & 0.004
& 0.252 & 0.171 & 0.088 & 0.060 \\

\hline
\end{tabular}

\vspace{0.2cm}
\footnotesize
\textit{Notes:} ``Block'' refers to a block-diagonal network, while ``Random''
denotes a random network with a dominant unit.
\end{table}
\end{landscape}

\begin{landscape}
\begin{table}[H]
\centering
\captionsetup{font=large}
\caption{Finite-sample performance of MOCMT for $\beta_1=2$}
\label{tab:beta1_MOCMT}

\renewcommand{\arraystretch}{1.1}
\setlength{\tabcolsep}{3.5pt}

\begin{tabular}{l ccccc|cccc|cccc|cccc}
\hline
& & \multicolumn{4}{c|}{\textbf{Mean}}
& \multicolumn{4}{c|}{\textbf{Bias ($\times$ 100)}}
& \multicolumn{4}{c|}{\textbf{Std. Dev.}}
& \multicolumn{4}{c}{\textbf{RMSE}} \\
\cline{3-18}

\textbf{Network} & \textbf{$N$} / \textbf{$T$}
& 25 & 50 & 100 & 200
& 25 & 50 & 100 & 200
& 25 & 50 & 100 & 200
& 25 & 50 & 100 & 200 \\
\hline

\multirow{4}{*}{Circular}
& 25
& 1.878 & 1.996 & 1.999 & 2.000
& 12.245 & 0.365 & 0.055 & 0.009
& 0.072 & 0.009 & 0.005 & 0.003
& 0.142 & 0.010 & 0.005 & 0.003 \\

& 50
& 1.818 & 1.992 & 1.999 & 2.000
& 18.236 & 0.787 & 0.075 & 0.041
& 0.061 & 0.009 & 0.003 & 0.003
& 0.192 & 0.012 & 0.004 & 0.003 \\

& 100
& 1.753 & 1.987 & 1.999 & 2.000
& 24.654 & 1.255 & 0.072 & 0.033
& 0.050 & 0.007 & 0.003 & 0.002
& 0.252 & 0.014 & 0.003 & 0.002 \\

& 200
& 1.699 & 1.980 & 1.999 & 2.000
& 30.114 & 1.985 & 0.104 & 0.038
& 0.041 & 0.006 & 0.002 & 0.001
& 0.304 & 0.021 & 0.002 & 0.001 \\

\hline

\multirow{4}{*}{Block}
& 25
& 1.965 & 2.000 & 2.000 & 2.000
& 3.527 & 0.009 & 0.007 & 0.003
& 0.059 & 0.003 & 0.002 & 0.001
& 0.068 & 0.003 & 0.002 & 0.001 \\

& 50
& 1.940 & 2.000 & 2.000 & 2.000
& 6.017 & 0.010 & 0.007 & 0.007
& 0.055 & 0.002 & 0.001 & 0.001
& 0.081 & 0.002 & 0.001 & 0.001 \\

& 100
& 1.921 & 2.000 & 2.000 & 2.000
& 7.906 & 0.020 & 0.002 & 0.004
& 0.045 & 0.003 & 0.001 & 0.001
& 0.091 & 0.003 & 0.001 & 0.001 \\

& 200
& 1.905 & 1.999 & 2.000 & 2.000
& 9.470 & 0.069 & 0.005 & 0.002
& 0.039 & 0.003 & 0.001 & 0.000
& 0.102 & 0.003 & 0.001 & 0.000 \\

\hline

\multirow{4}{*}{Random}
& 25
& 1.827 & 1.996 & 2.000 & 2.000
& 17.299 & 0.365 & 0.006 & 0.001
& 0.110 & 0.020 & 0.002 & 0.002
& 0.205 & 0.021 & 0.002 & 0.002 \\

& 50
& 1.793 & 1.990 & 2.000 & 2.000
& 20.677 & 1.036 & 0.012 & 0.008
& 0.095 & 0.020 & 0.002 & 0.001
& 0.227 & 0.022 & 0.002 & 0.001 \\

& 100
& 1.730 & 1.979 & 2.000 & 2.000
& 26.967 & 2.052 & 0.004 & 0.005
& 0.080 & 0.021 & 0.001 & 0.001
& 0.281 & 0.029 & 0.001 & 0.001 \\

& 200
& 1.692 & 1.975 & 2.000 & 2.000
& 30.770 & 2.530 & 0.008 & 0.002
& 0.070 & 0.016 & 0.001 & 0.000
& 0.316 & 0.030 & 0.001 & 0.000 \\

\hline
\end{tabular}

\vspace{0.2cm}
\footnotesize
\textit{Notes:} ``Block'' refers to a block-diagonal network, while ``Random''
denotes a random network with a dominant unit.
\end{table}
\end{landscape}

\begin{landscape}
\begin{table}[H]
\centering
\captionsetup{font=large}
\caption{Finite-sample performance of MOCMT for $\beta_2=-1$}
\label{tab:beta2_MOCMT}

\renewcommand{\arraystretch}{1.1}
\setlength{\tabcolsep}{3.5pt}

\begin{tabular}{l ccccc|cccc|cccc|cccc}
\hline
& & \multicolumn{4}{c|}{\textbf{Mean}}
& \multicolumn{4}{c|}{\textbf{Bias ($\times$ 100)}}
& \multicolumn{4}{c|}{\textbf{Std. Dev.}}
& \multicolumn{4}{c}{\textbf{RMSE}} \\
\cline{3-18}

\textbf{Network} & \textbf{$N$} / \textbf{$T$}
& 25 & 50 & 100 & 200
& 25 & 50 & 100 & 200
& 25 & 50 & 100 & 200
& 25 & 50 & 100 & 200 \\
\hline

\multirow{4}{*}{Circular}
& 25
& -0.938 & -0.998 & -1.000 & -1.000
& 6.185 & 0.183 & 0.024 & 0.013
& 0.039 & 0.007 & 0.004 & 0.003
& 0.073 & 0.007 & 0.004 & 0.003 \\

& 50
& -0.911 & -0.996 & -1.000 & -1.000
& 8.933 & 0.381 & 0.047 & 0.017
& 0.032 & 0.005 & 0.003 & 0.002
& 0.095 & 0.007 & 0.003 & 0.002 \\

& 100
& -0.877 & -0.994 & -1.000 & -1.000
& 12.336 & 0.647 & 0.033 & 0.014
& 0.027 & 0.004 & 0.002 & 0.001
& 0.126 & 0.008 & 0.002 & 0.001 \\

& 200
& -0.850 & -0.990 & -0.999 & -1.000
& 15.018 & 0.990 & 0.050 & 0.018
& 0.022 & 0.004 & 0.001 & 0.001
& 0.152 & 0.011 & 0.001 & 0.001 \\

\hline

\multirow{4}{*}{Block}
& 25
& -0.982 & -1.000 & -1.000 & -1.000
& 1.756 & 0.029 & 0.008 & 0.004
& 0.030 & 0.002 & 0.001 & 0.001
& 0.034 & 0.002 & 0.001 & 0.001 \\

& 50
& -0.970 & -1.000 & -1.000 & -1.000
& 2.974 & 0.001 & 0.004 & 0.001
& 0.028 & 0.002 & 0.001 & 0.001
& 0.041 & 0.002 & 0.001 & 0.001 \\

& 100
& -0.961 & -1.000 & -1.000 & -1.000
& 3.946 & 0.013 & 0.000 & 0.002
& 0.023 & 0.002 & 0.001 & 0.000
& 0.046 & 0.002 & 0.001 & 0.000 \\

& 200
& -0.952 & -1.000 & -1.000 & -1.000
& 4.772 & 0.026 & 0.001 & 0.001
& 0.020 & 0.002 & 0.000 & 0.000
& 0.052 & 0.002 & 0.000 & 0.000 \\

\hline

\multirow{4}{*}{Random}
& 25
& -0.914 & -0.998 & -1.000 & -1.000
& 8.573 & 0.192 & 0.017 & 0.004
& 0.057 & 0.011 & 0.002 & 0.001
& 0.103 & 0.011 & 0.002 & 0.001 \\

& 50
& -0.895 & -0.995 & -1.000 & -1.000
& 10.455 & 0.515 & 0.010 & 0.005
& 0.050 & 0.010 & 0.002 & 0.001
& 0.116 & 0.012 & 0.002 & 0.001 \\

& 100
& -0.865 & -0.990 & -1.000 & -1.000
& 13.538 & 0.998 & 0.006 & 0.007
& 0.043 & 0.011 & 0.001 & 0.001
& 0.142 & 0.015 & 0.001 & 0.001 \\

& 200
& -0.847 & -0.987 & -1.000 & -1.000
& 15.321 & 1.264 & 0.001 & 0.001
& 0.037 & 0.008 & 0.001 & 0.000
& 0.158 & 0.015 & 0.001 & 0.000 \\

\hline
\end{tabular}

\vspace{0.2cm}
\footnotesize
\textit{Notes:} ``Block'' refers to a block-diagonal network, while ``Random''
denotes a random network with a dominant unit.
\end{table}
\end{landscape}

\begin{landscape}
\begin{table}[H]
\centering
\captionsetup{font=large}
\caption{Finite-sample performance of MOCMT for $\rho_1=0.5$}
\label{tab:rho1_MOCMT}

\renewcommand{\arraystretch}{1.1}
\setlength{\tabcolsep}{3.5pt}

\begin{tabular}{l ccccc|cccc|cccc|cccc}
\hline
& & \multicolumn{4}{c|}{\textbf{Mean}}
& \multicolumn{4}{c|}{\textbf{Bias ($\times$ 100)}}
& \multicolumn{4}{c|}{\textbf{Std. Dev.}}
& \multicolumn{4}{c}{\textbf{RMSE}} \\
\cline{3-18}

\textbf{Network} & \textbf{$N$} / \textbf{$T$}
& 25 & 50 & 100 & 200
& 25 & 50 & 100 & 200
& 25 & 50 & 100 & 200
& 25 & 50 & 100 & 200 \\
\hline

\multirow{4}{*}{Circular}
& 25
& 0.468 & 0.511 & 0.512 & 0.510
& 3.210 & 1.075 & 1.173 & 1.039
& 0.025 & 0.005 & 0.003 & 0.002
& 0.041 & 0.012 & 0.012 & 0.011 \\

& 50
& 0.447 & 0.509 & 0.512 & 0.511
& 5.330 & 0.939 & 1.164 & 1.062
& 0.020 & 0.004 & 0.002 & 0.002
& 0.057 & 0.010 & 0.012 & 0.011 \\

& 100
& 0.424 & 0.508 & 0.512 & 0.511
& 7.593 & 0.821 & 1.156 & 1.057
& 0.016 & 0.003 & 0.002 & 0.001
& 0.078 & 0.009 & 0.012 & 0.011 \\

& 200
& 0.403 & 0.506 & 0.512 & 0.511
& 9.673 & 0.601 & 1.159 & 1.065
& 0.014 & 0.003 & 0.001 & 0.001
& 0.098 & 0.007 & 0.012 & 0.011 \\

\hline

\multirow{4}{*}{Block}
& 25
& 0.487 & 0.502 & 0.502 & 0.502
& 1.254 & 0.184 & 0.186 & 0.178
& 0.017 & 0.002 & 0.001 & 0.001
& 0.022 & 0.002 & 0.002 & 0.002 \\

& 50
& 0.475 & 0.502 & 0.502 & 0.502
& 2.462 & 0.195 & 0.185 & 0.185
& 0.016 & 0.001 & 0.001 & 0.001
& 0.029 & 0.002 & 0.002 & 0.002 \\

& 100
& 0.462 & 0.502 & 0.502 & 0.502
& 3.751 & 0.202 & 0.196 & 0.195
& 0.014 & 0.001 & 0.001 & 0.000
& 0.040 & 0.002 & 0.002 & 0.002 \\

& 200
& 0.447 & 0.502 & 0.502 & 0.502
& 5.344 & 0.208 & 0.205 & 0.201
& 0.013 & 0.001 & 0.000 & 0.000
& 0.055 & 0.002 & 0.002 & 0.002 \\

\hline

\multirow{4}{*}{Random}
& 25
& 0.411 & 0.504 & 0.509 & 0.510
& 8.914 & 0.432 & 0.879 & 0.984
& 0.036 & 0.009 & 0.002 & 0.001
& 0.096 & 0.010 & 0.009 & 0.010 \\

& 50
& 0.374 & 0.499 & 0.509 & 0.511
& 12.602 & 0.076 & 0.933 & 1.135
& 0.033 & 0.008 & 0.002 & 0.001
& 0.130 & 0.008 & 0.010 & 0.011 \\

& 100
& 0.342 & 0.490 & 0.509 & 0.510
& 15.766 & 1.018 & 0.942 & 1.046
& 0.033 & 0.009 & 0.002 & 0.001
& 0.161 & 0.014 & 0.010 & 0.011 \\

& 200
& 0.310 & 0.484 & 0.508 & 0.509
& 18.956 & 1.598 & 0.769 & 0.949
& 0.031 & 0.008 & 0.001 & 0.001
& 0.192 & 0.018 & 0.008 & 0.010 \\

\hline
\end{tabular}

\vspace{0.2cm}
\footnotesize
\textit{Notes:} ``Block'' refers to a block-diagonal network, while ``Random''
denotes a random network with a dominant unit.
\end{table}
\end{landscape}

\begin{landscape}
\begin{table}[H]
\centering
\captionsetup{font=large}
\caption{Finite-sample performance of MOCMT for $\rho_2=-0.4$}
\label{tab:rho2_MOCMT}

\renewcommand{\arraystretch}{1.1}
\setlength{\tabcolsep}{3.5pt}

\begin{tabular}{l ccccc|cccc|cccc|cccc}
\hline
& & \multicolumn{4}{c|}{\textbf{Mean}}
& \multicolumn{4}{c|}{\textbf{Bias ($\times$ 100)}}
& \multicolumn{4}{c|}{\textbf{Std. Dev.}}
& \multicolumn{4}{c}{\textbf{RMSE}} \\
\cline{3-18}

\textbf{Network} & \textbf{$N$} / \textbf{$T$}
& 25 & 50 & 100 & 200
& 25 & 50 & 100 & 200
& 25 & 50 & 100 & 200
& 25 & 50 & 100 & 200 \\
\hline

\multirow{4}{*}{Circular}
& 25
& -0.341 & -0.409 & -0.411 & -0.410
& 5.899 & 0.876 & 1.085 & 0.988
& 0.026 & 0.007 & 0.004 & 0.003
& 0.064 & 0.011 & 0.011 & 0.010 \\

& 50
& -0.310 & -0.405 & -0.411 & -0.410
& 9.002 & 0.498 & 1.089 & 1.004
& 0.021 & 0.005 & 0.003 & 0.002
& 0.092 & 0.007 & 0.011 & 0.010 \\

& 100
& -0.281 & -0.402 & -0.411 & -0.410
& 11.888 & 0.172 & 1.090 & 1.023
& 0.017 & 0.004 & 0.002 & 0.001
& 0.120 & 0.005 & 0.011 & 0.010 \\

& 200
& -0.258 & -0.397 & -0.411 & -0.410
& 14.170 & 0.346 & 1.097 & 1.032
& 0.013 & 0.004 & 0.001 & 0.001
& 0.142 & 0.005 & 0.011 & 0.010 \\

\hline

\multirow{4}{*}{Block}
& 25
& -0.391 & -0.402 & -0.402 & -0.402
& 0.859 & 0.174 & 0.176 & 0.178
& 0.015 & 0.002 & 0.001 & 0.001
& 0.017 & 0.003 & 0.002 & 0.002 \\

& 50
& -0.381 & -0.402 & -0.402 & -0.402
& 1.852 & 0.200 & 0.182 & 0.189
& 0.014 & 0.001 & 0.001 & 0.001
& 0.023 & 0.002 & 0.002 & 0.002 \\

& 100
& -0.372 & -0.402 & -0.402 & -0.402
& 2.825 & 0.199 & 0.190 & 0.191
& 0.012 & 0.001 & 0.001 & 0.000
& 0.031 & 0.002 & 0.002 & 0.002 \\

& 200
& -0.360 & -0.402 & -0.402 & -0.402
& 4.007 & 0.209 & 0.202 & 0.200
& 0.012 & 0.001 & 0.000 & 0.000
& 0.042 & 0.002 & 0.002 & 0.002 \\

\hline

\multirow{4}{*}{Random}
& 25
& -0.315 & -0.405 & -0.409 & -0.410
& 8.538 & 0.459 & 0.881 & 0.978
& 0.034 & 0.008 & 0.002 & 0.002
& 0.092 & 0.009 & 0.009 & 0.010 \\

& 50
& -0.287 & -0.399 & -0.410 & -0.411
& 11.269 & 0.122 & 0.953 & 1.132
& 0.029 & 0.008 & 0.002 & 0.001
& 0.116 & 0.008 & 0.010 & 0.011 \\

& 100
& -0.242 & -0.390 & -0.409 & -0.410
& 15.786 & 0.974 & 0.944 & 1.045
& 0.028 & 0.008 & 0.002 & 0.001
& 0.160 & 0.012 & 0.010 & 0.010 \\

& 200
& -0.215 & -0.385 & -0.408 & -0.409
& 18.503 & 1.487 & 0.754 & 0.938
& 0.027 & 0.007 & 0.001 & 0.001
& 0.187 & 0.016 & 0.008 & 0.009 \\

\hline
\end{tabular}

\vspace{0.2cm}
\footnotesize
\textit{Notes:} ``Block'' refers to a block-diagonal network, while ``Random''
denotes a random network with a dominant unit.
\end{table}
\end{landscape}

\end{document}